\newcommand{\OO}{O}
\newcommand{\oo}{o}
\newcommand{\cyc}{C}
\newcommand{\cycA}{C}
\newcommand{\cycB}{C'}
\newcommand{\ntd}[1]{\operatorname{npd}(#1)}
\newcommand{\bag}[1]{X_{#1}}
\newcommand{\aspace}[1]{Y_{#1}}
\newcommand{\tdspace}[1]{\operatorname{pd}(Y_{#1})}
\newcommand{\DCP}{\textsc{Directed Cycle Packing }} 
\newcommand{\CSR}{\textsc{Connected Subsurface Recognition }}
\newcommand{\CSRdot}{\textsc{Connected Subsurface Recognition}}
\newcommand{\SR}{\textsc{Subsurface Recognition }}
\newcommand{\SRdot}{\textsc{Subsurface Recognition}}
\newcommand{\SoGSR}{\textsc{Sum-of-Genus Subsurface Recognition }}
\newcommand{\SoGSRdot}{\textsc{Sum-of-Genus Subsurface Recognition}}
\newcommand{\SP}{\textsc{Subsurface Packing }}
\newcommand{\D}{\mathbf{D}} 
\newcommand{\V}{\mathbf{V}} 
\newcommand{\R}{\mathbb{R}} 
\newcommand{\boundary}{\partial}
\newcommand{\restr}[2]{\left.{#1}\right|_{#2}} 
\newcommand{\overbar}[1]{\overline{\vphantom{V}#1}} 
\DeclareMathOperator{\im}{im}     
\DeclareMathOperator{\cl}{cl}     
\DeclareMathOperator{\lk}{lk}     
\DeclareMathOperator{\poly}{poly} 
\newcommand{\str}[2]{\text{st}_{#1}\,#2} 
\newtheorem{problem}{Problem}
\newtheorem{hypothesis}{Hypothesis}
\title{ETH-tight algorithms for finding surfaces in simplicial complexes of bounded treewidth}
\author{Mitchell Black}{School of Electrical Engineering and Computer Science, Oregon State University, USA}{blackmit@oregonstate.edu}{}{}
\author{Nello Blaser}{Department of Informatics, University of Bergen, Norway }{nello.blaser@uib.no}{https://orcid.org/0000-0001-9489-1657}{}
\author{Amir Nayyeri}{School of Electrical Engineering and Computer Science, Oregon State University, USA}{nayyeria@eecs.oregonstate.edu }{}{}
\author{Erlend Raa Vågset}{Department of Informatics, University of Bergen, Norway }{erlend.vagset@uib.no}{}{}
\authorrunning{M. Black and N. Blaser and A. Nayyeri and E.\,R. Vågset} 
\keywords{Computational Geometry, Surface Recognition, Treewidth, Hasse Diagram, Parameterized Complexity} 
\begin{document}

\maketitle

\begin{abstract}
    Given a simplicial complex with \(n\) simplices, we consider the \CSR (c-SR) problem of finding a subcomplex that is homeomorphic to a given connected surface with a fixed boundary. We also study the related \SoGSR (SoG) problem, where we instead search for a surface whose boundary, number of connected components, and total genus are given. 
    For both of these problems, we give parameterized algorithms with respect to the treewidth \(k\) of the Hasse diagram that run in \(2^{\OO(k \log k)}n^{\OO(1)}\) time. For the SoG problem, we also prove that our algorithm is optimal assuming the exponential-time hypothesis. In fact, we prove the stronger result that our algorithm is ETH-tight even without restriction on the total genus. 
\end{abstract}


\section{Introduction}
Simplicial complexes are a generalization of graphs that give a discrete representation of higher-dimensional spaces. A natural and interesting class of such spaces are manifolds. A $d$-manifold is a space that is ``locally $d$-dimensional'', meaning each point has a neighborhood homeomorphic to $\R^{d}$. In particular, circles are 1-manifolds and spheres are 2-manifolds. Manifolds are important in both mathematics and computer science. For example, triangular meshes in computer graphics are typically 2-manifolds, and the manifold hypothesis in machine learning is the assumption that real-world data often lie on low-dimensional submanifolds of high-dimensional spaces. 
\par 
Since manifolds are so important, it is natural to ask if a given simplicial complex is a manifold, or whether two manifolds are homeomorphic. There are fascinating complexity results on these problems.  While both recognizing and classifying a $2$-manifold have polynomial algorithms, this problem becomes much harder for arbitrary $d$-manifolds. Deciding whether two manifolds are homeomorphic is undecidable for $d\geq 4$ \cite{markov_insolubility}. Deciding whether or not a simplicial complex is homeomorphic to the $d$-sphere is undecidable for $d\geq 5$ (see \cite{chernavsky_unrecognizability}), which implies deciding whether or not a simplicial complex is an $n$-manifold is undecidable for $d\geq 6$.

\par
We consider several variants of the problem of finding subcomplexes homeomorphic to 2-manifolds, or \textit{surfaces}, in simplicial complexes. While there are polynomial time algorithms for deciding if a simplicial complex is homeomorphic to a surface or deciding the homeomorphism class of a surface, it is a hard problem deciding whether or not a simplicial complex contains a surface as a subcomplex. In particular, Ivanov proved that it is NP-Hard to decide if a simplicial complex contains a 2-sphere \cite{ivanov-hardness}, and Burton et al.~proved that finding a 2-sphere is W[1]-hard when parameterized by solution size \cite{burton-finding-two-sphere}.  The complexity of this problem is analogous to the graph isomorphism problem. While there is a quasipolynomial algorithm to determine if two graphs are isomorphic~\cite{babai_isomorphism}, it is NP-Hard to determine if one graph contains a subgraph isomorphic to another graph~\cite{cook_complexity}.
\par 
As this problem is NP-Hard, it is natural to ask whether there is any class of simplicial complexes for which polynomial time algorithms exist. In this paper, we consider the parameterized complexity of this problem and related problems with respect to the treewidth of the Hasse diagram. A tree decomposition of the Hasse diagram defines a recursively nested series of subcomplexes of $K$ that we can use to incrementally build our surfaces. We also give tight lower bounds for a subset of our algorithms based on the Exponential Time Hypothesis.

\subsection{Subsurface Recognition Problems}

\begin{figure}[!ht]
    \centering
    \includegraphics[width = 0.9\textwidth]{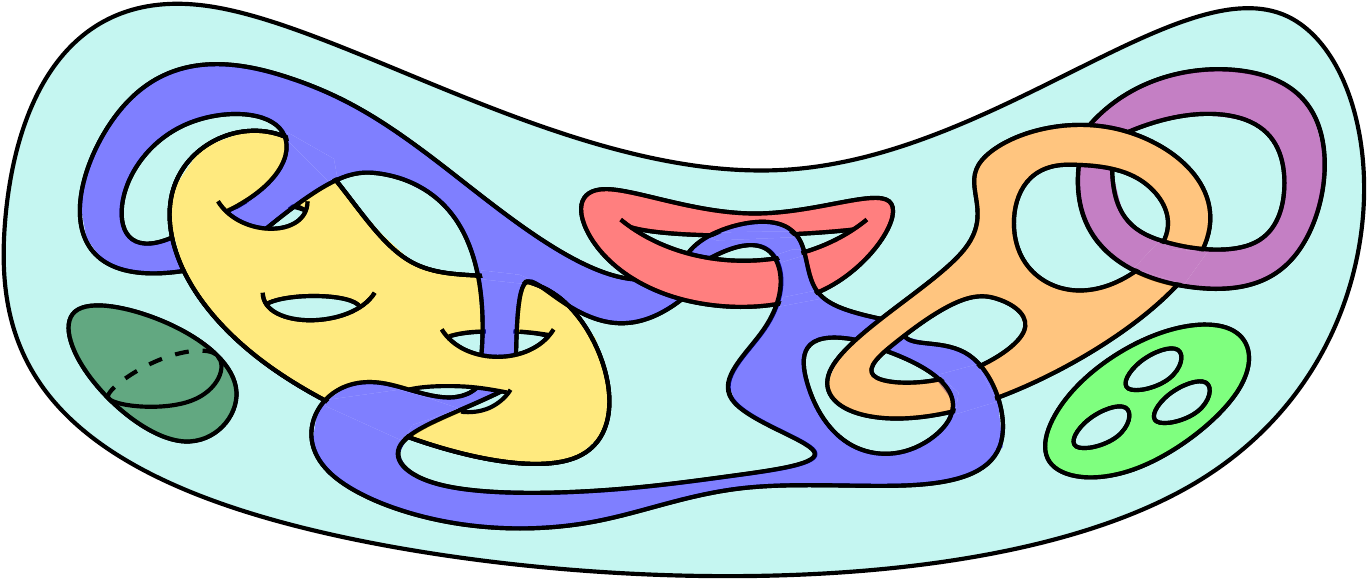}
    \caption{A solution to an instance of the \SR problem where we have found an orientable surface consisting of seven connected components with genus {\color[rgb]{0.01, 0.5, 0.1}\textbf{0}}, {\color{red}\textbf{1}}, {\color{purple}\textbf{1}}, {\color{orange}\textbf{2}}, {\color{blue}\textbf{3}}, {\color{green}\textbf{3}} and {\color[rgb]{0.9, 0.8, 0.15}\textbf{4}} respectively.}
    \label{fig:introduction}
\end{figure}

We consider several variants of the following generic problem: given a 2-dimensional simplicial complex $K$ and a 1-dimensional subcomplex $B\subset K$, does $K$ contain a subcomplex homeomorphic to a surface with boundary $B$? Note that this includes finding surfaces without boundary, as we can set $B=\emptyset$.
\par

The Subsurface Recognition (SR) problem places the most restrictions on the manifold we are looking for. In this problem, we are asked to find a subcomplex of $K$ homeomorphic to a given (possibly disconnected) surface $X$. \cref{fig:introduction} shows an example of SR. 

\begin{problem}\textsc{The Subsurface Recognition} (SR) problem: \\
Input: A simplicial complex $K$, a subcomplex $B\subset K$, and a surface $X$.\\
Question: Does $K$ contain a subcomplex homeomorphic to $X$ with boundary $B$?
\end{problem}

Although there is no known FPT algorithm for SR, several variants of SR with looser requirements admit FPT algorithms. One special case of SR requires the surface $X$ to be connected. This variant is called the \CSR (c-SR) problem. The extra requirement of connectivity allows us to find an FPT algorithm.

\begin{problem}The \CSR (c-SR)  problem: \\
Input: A simplicial complex $K$, a subcomplex $B\subset K$, and a connected surface $X$.\\
Question: Does $K$ contain a subcomplex homeomorphic to $X$ with boundary $B$?
\end{problem}

We can also ask for a surface of a certain genus and orientability in $K$, which is a slightly weaker criterion than finding a surface up to homeomorphism. For a disconnected surface, we define its \textit{\textbf{total genus}} to be the sum of the genus of each of its connected components\footnote{If any connected component of a surface is non-orientable, we will add twice the genus of any orientable components.}. While a connected surface is characterized up to homeomorphism by its genus and orientability, this is not true for disconnected surfaces. As an example, consider a surface $X$ that is a genus 2 surface and a surface $Y$ that is the disjoint union of two tori. The two surfaces both have total genus 2, but they are not homeomorphic.

\begin{problem}The \SoGSR (SoG) problem: \\
Input: A simplicial complex $K$, a subcomplex $B\subset K$, and integers $g$ and $c$.\\
Question: Does $K$ contain a surface $X$ of total genus $g$ with $c$ connected components and with boundary $B$? 
\end{problem}

The \SP problem asks to find \textit{any} set of $c$ disjoint surfaces. In particular, no restriction is placed on the genus or orientability of these surfaces.

\begin{problem}The \SP (SP) problem: \\
Input: A simplicial complex $K$, a subcomplex $B$, and an integer $c$.\\
Question: Does $K$ contain a surface $X$ with $c$ connected components and boundary $B$?
\end{problem}

\subsection{Our Results}

\begin{table}[h!]
\centering
\begin{tabular}{|c||c|c|c|c|}
\hline
 Problem &  SR & c-SR & SoG & SP \\ \hline
 Upper & $2^{\OO(n)}$ &  
 \cellcolor{yellow!25} $\mathbf{2^{\OO(k\log k)}n^{\OO(1)}}$ &  
 \cellcolor{yellow!25} $\mathbf{2^{\OO(k\log k)}n^{\OO(1)}}$ &  
 \cellcolor{yellow!25} $\mathbf{2^{\OO(k\log k)}n^{\OO(1)}}$  \\ \hline
 Lower & 
 \cellcolor{yellow!25} $\mathbf{2^{\oo(k\log k)}n^{\OO(1)}}$ & 
 NP-Hard \cite{ivanov-hardness} & 
 \cellcolor{yellow!25} $\mathbf{2^{\oo(k\log k)}n^{\OO(1)}}$ & 
 \cellcolor{yellow!25} $\mathbf{2^{\oo(k\log k)}n^{\OO(1)}}$ \\ \hline
\end{tabular}
\caption{Upper and ETH lower bounds for times to solve the different problems considered in this manuscript. Here \(n\) is the number of simplices and \(k\) is the treewidth of the Hasse diagram. The results of this paper are highlighted.}
\label{tab:bounds}
\end{table}

We consider the parameterized complexity of the above problems with respect to the treewidth \(k\) of the Hasse diagram. \cref{tab:bounds} summarizes the known upper and lower bounds. We give FPT algorithms for c-SR, SoG, and SP, and ETH-based lower bounds for SR, SP, and SoG. In fact, we show that these lower bounds are true even when $k$ is the pathwidth of the Hasse diagram. 
The algorithms for SoG and SP are ETH-tight. 

\subsection{Related Work}

\paragraph*{Tree Decompositions and Simplicial Complexes} Tree decompositions have seen much success as an algorithmic tool on graphs. Often, graphs having tree decompositions of bounded-width admit polynomial-time solutions to otherwise hard problems. A highlight of the algorithmic application of tree decompositions is Courcelle's Theorem \cite{courcelle_msol}, which states that any problem that can be stated in monadic second order logic can be solved in linear time on graphs with bounded treewidth.
We recommend \cite[Chapter 7]{Cygan_2015} for an introduction to the algorithmic use of tree decompositions.
\par
While tree decompositions have long been successful for algorithms on graphs, they have only recently seen attention for algorithms on simplicial complexes. Existing algorithms use tree decompositions of a variety of graphs associated with a simplicial complex. The most commonly used graph is the dual graph of combinatorial $d$-manifolds \cite{bagchi_tightness, burton_courcelle, Burton_morse, burton_taut}. Other graphs that have been used are level $d$ of the Hasse diagram \cite{Burton_morse, blaser_hl, vaagset2021mbc}, the adjacency graph of the $d$-simplices \cite{blaser_hl}, and the 1-skeleton \cite{bagchi_tightness}. Our algorithm uses a tree decomposition of the entire Hasse diagram. As far as we know, we are the first to consider tree decompositions of the full Hasse diagram. The condition on vertex links that makes a simplicial complex a surface is dependent on the incidence of vertices and triangles (see Section \ref{sec:background_surfaces}), so considering only one level of the Hasse diagram would likely not be sufficient for our problem.

\paragraph*{Normal Surface Theory} 

Normal surface theory is the study of which surfaces exist as submanifolds of a given 3-manifold. Many algorithms on 3-manifolds, like those for unknot recognition~\cite{haken_normal} and 3-sphere recognition~\cite{rubinstein_sphere, thompson_sphere}, use normal surface theory. While normal surface theory appears to be similar to our problems, the distinction is that the surfaces in normal surface theory are not subcomplexes of the 3-manifold and can instead intersect 3-simplices in the manifold. Accordingly, the techniques in normal surface theory are quite different from the algorithms we present in this paper. 

\section{Background}

\subsection{Simplicial Complexes and Directed Graphs}

A \textit{\textbf{simplicial complex}} is a set $K$ such that (1) each element $\sigma\in K$ is a finite set and (2) for each $\sigma\in K$, if $\tau\subset\sigma$, then $\tau\in K$. An element $\sigma\in K$ is a \textit{\textbf{simplex}}. A simplex $\sigma$ is a \textit{\textbf{face}} of a simplex $\tau$ if $\sigma\subset\tau$. Likewise, $\tau$ is a \textit{\textbf{coface}} of $\sigma$. The simplices $\sigma$ and $\tau$ are \textit{\textbf{incident}}. Two simplices $\sigma_1$ and $\sigma_2$ are \textit{\textbf{adjacent}} if they are both the face or coface of a simplex $\tau$.
\par
A simplex $\sigma$ with $|\sigma|=d+1$ is a \textit{\textbf{d-simplex}}. The set of all $d$-simplices in $K$ is denoted $K_d$. The \textit{\textbf{dimension}} of a simplicial complex is the largest integer $d$ such that $K$ contains a $d$-simplex. A $d$-dimensional simplicial complex $K$ is \textit{\textbf{pure}} if each simplex in $K$ is a face of $d$-simplex. We call a 0-simplex a \textit{\textbf{vertex}}, a 1-simplex an \textit{\textbf{edge}}, and a 2-simplex a \textit{\textbf{triangle}}. 
\par 
The \textit{\textbf{Hasse diagram}} of $K$ is a graph $H$ with vertex set $K$ and edges between each $d$-simplex $\sigma\in K$ and each $(d-1)$-dimensional face of $\sigma$ for all $d>0$. 
\par 
Let $\Sigma\subset K$. The \textit{\textbf{closure}} of $\Sigma$ is $\cl\Sigma:=\{\tau\subset\sigma\mid\sigma\in\Sigma\}$. Note that the closure of $\Sigma$ is a simplicial complex, even if $\Sigma$ is not. Note also that the closure $\cl\Sigma$ is defined only by the set $\Sigma$ and not the complex $K$. The \textit{\textbf{star}} of $\Sigma$ is $\str{K}{\Sigma}:=\{\sigma\in K\mid \exists\,\tau\in \Sigma\text{ such that }\tau\subset\sigma\}$. The \textit{\textbf{link}} of a simplex $\sigma$ is $\lk_{K}{\sigma}=\cl\str{K}{\sigma}-\str{K}{\cl\sigma}$. Alternatively, the link $\lk_{K}{\sigma}$ is all simplices in $\cl\str{K}{\sigma}$ that do not intersect $\sigma$. Note that for any simplex $\tau\in\lk_{K}{\sigma}$ that $\sigma$ and $\tau$ are incident to a common coface in $\str{K}{\sigma}$.
\par
A \textit{\textbf{simple path}} is a 1-dimensional simplicial complex $P=\{\{v_1\},\{v_1,v_2\},\{v_2\},\ldots,\{v_l\}\}$ such that the vertices $\{v_i\}$ are distinct. The vertices $\{v_1\},\{v_l\}$ are the \textit{\textbf{endpoints}} of $P$. We will denote a simple cycle as a tuple $P=(v_1,\ldots,v_l)$ as the edges are implied by the vertices. A \textit{\textbf{simple cycle}} is a simple path, with the exception that the endpoints $v_1=v_l$. We denote a simple cycle with an overline, e.g. $(\overbar{v_1,\ldots,v_l})$.
\par
A directed graph $D$ consists of a set of vertices and a set of directed edges, i.e. ordered pairs of vertices $(u,v) := uv$ so that $uv \neq vu$. A \textit{\textbf{directed simple cycle}} $\cyc$ in $D$ (not to be confused with a simple cycle) is a sequence of directed edges $(v_1v_2, v_2v_3,\dots, v_{l}v_1)$ where all the vertices $v_i$ are distinct. We say that $\cyc$ has the vertex set $\{v_1,\dots, v_{l}\}$. Two cycles, $\cycA$ and $\cycB$, are said to be \textit{\textbf{vertex disjoint}} if their vertex sets are disjoint. A family of cycles is said to be vertex disjoint if they are pairwise vertex disjoint.
\begin{figure}[!ht]
    \centering
    \begin{subfigure}{0.35\textwidth}
        \centering
        \includegraphics[height=1in]{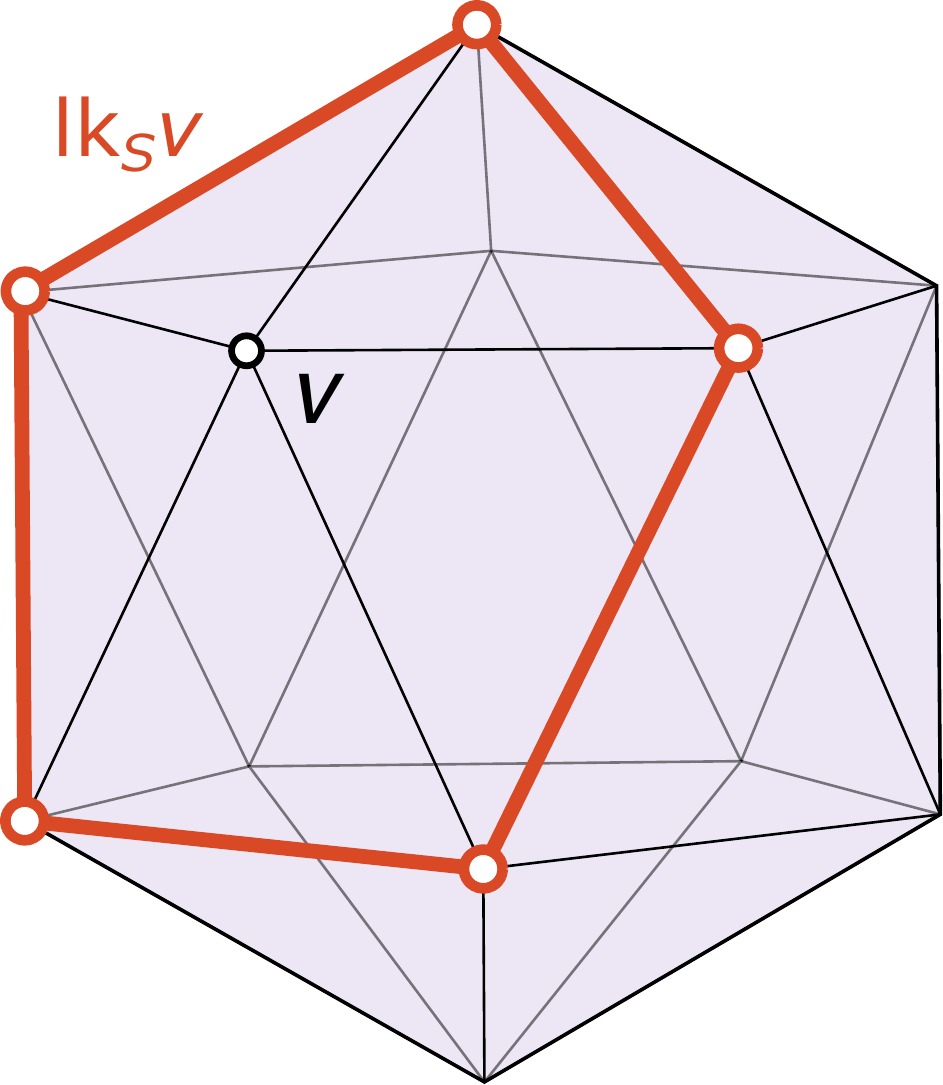}
    \end{subfigure}
    \begin{subfigure}{0.35\textwidth}
        \centering
        \includegraphics[height=1in]{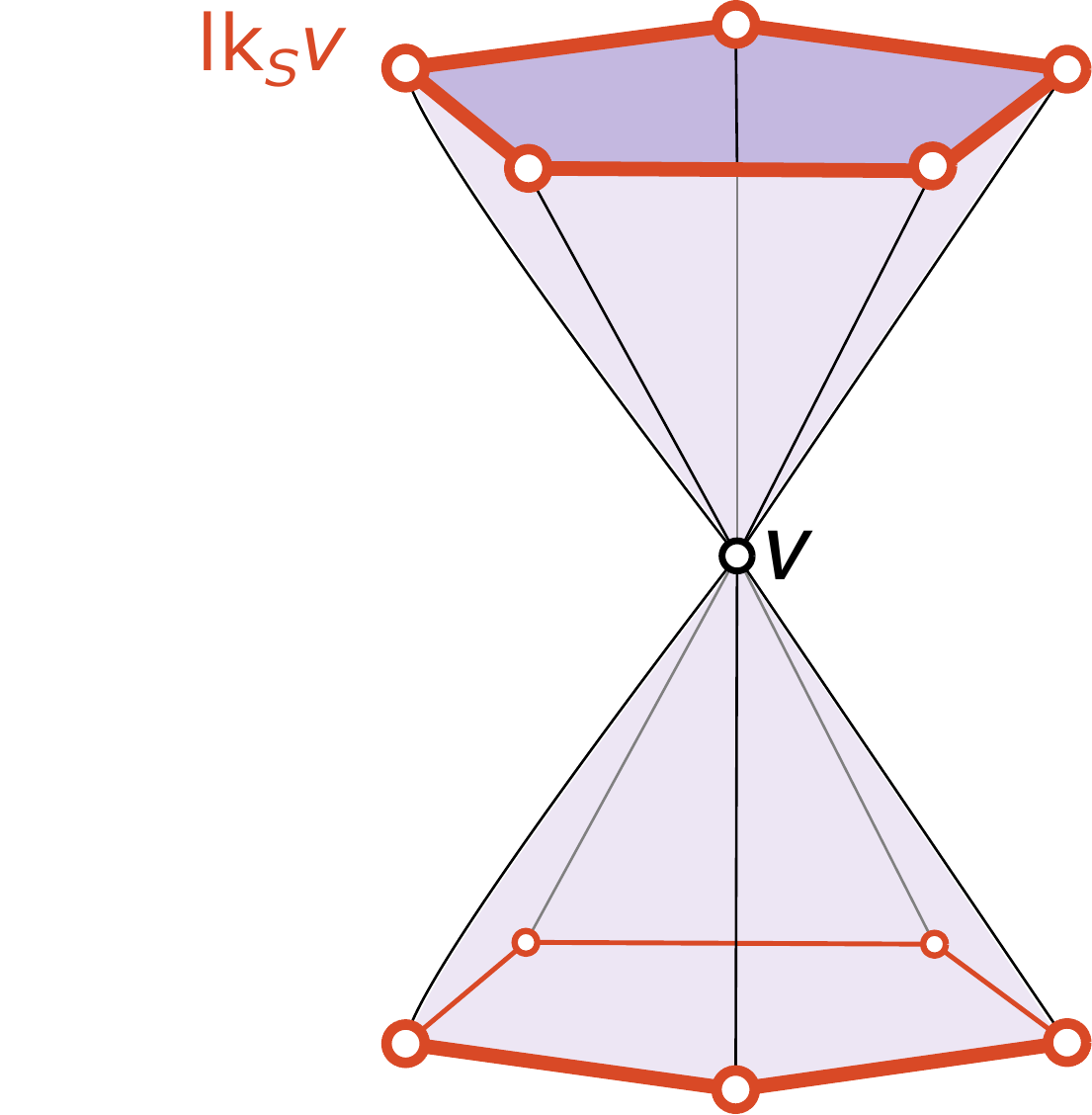}
    \end{subfigure}
    \caption{Left: A combinatorial surface. The vertex $v$ is an interior vertex. Right: A vertex $v$ with link that is neither a simple path or cycle. We conclude that $S$ is not a combinatorial surface. The point $v$ has no neighborhood homeomorphic to the plane or half-plane, so $S$ is not ``locally 2-dimensional'' at $v$.}
    \label{fig:links}
\end{figure}

\subsection{Surfaces}
\label{sec:background_surfaces}

Informally, a \textit{\textbf{surface with boundary}} is a compact topological space where each point has a neighborhood homeomorphic to the plane or the half plane, and the \textit{\textbf{boundary}} of the surface is all points with a neighborhood homeomorphic to the half plane. Intuitively, a surface is ``locally 2-dimensional''.
\par
Any connected surface with boundary can be constructed by adding handles, crosscaps, and boundary components to a sphere. A \textit{\textbf{handle}} is constructed by removing two disjoint disks from a surface and identifying the boundaries of the removed disks. A \textit{\textbf{crosscap}} is constructed by taking the disjoint union of the surface and the real projective plane, removing a disk from each, and identifying the boundaries of the removed disks. A \textit{\textbf{boundary component}} is constructed by removing a disk from a surface. A surface is \textit{\textbf{non-orientable}} if it has a crosscap and \textit{\textbf{orientable}} otherwise. The \textbf{\textit{genus}} of an orientable surface is the number of handles on the surface, and the genus of a non-orientable surface is the number of crosscaps plus twice the number of handles.
\par 
In this paper, we are only concerned with surfaces that are also simplicial complexes, which we call combinatorial surfaces. A \textit{\textbf{combinatorial surface with boundary}} is a pure 2-dimensional simplicial complex $S$ such that the link of each vertex is a simple path or a simple cycle. The condition on the link of the vertices is the combinatorial way of saying that a combinatorial surface is ``locally 2-dimensional''.  A vertex $v\in S$ such that $\lk_{S}{v}$ is a simple path is a \textit{\textbf{boundary vertex}}. A vertex $v\in S$ such that $\lk_{S}{v}$ is a simple cycle is an \textit{\textbf{interior vertex}}. Figure \ref{fig:links} shows examples of an interior vertex and a vertex that is neither an interior or boundary vertex. It follows from the condition on the links of the vertices that each edge $e\in S$ has link $\lk_{S}{e}$ that is either one or two vertices. An edge $e\in S$ such that $\lk_{S}{e}$ is a single vertex is a \textit{\textbf{boundary edge}}. An edge $e\in S$ such that $\lk_{S}{e}$ is two vertices is an \textit{\textbf{interior edge}}. A triangle $t\in S$ has empty link $\lk_{S}{t}=\emptyset$ as $S$ is a 2-dimensional simplicial complex. We denote the set of boundary vertices and boundary edges $\boundary S$. The boundary $\boundary S$ is a collection of simple cycles.
\par

\subsection{Tree Decompositions}

Let $G=(V,E)$ be a graph. A \textit{\textbf{tree decomposition}} of $G$ is a tuple $(T,X)$, where $T=(I,F)$ is a tree with nodes $I$ and edges $F$, and $X=\{X_t\subset V\,|\,t\in I\}$ such that (1) $\cup_{t\in I}X_t=V$, (2) for any $\{v_1,v_2\}\in E$, $\{v_1,v_2\}\subset X_t$ for some $t\in I$, and (3) for any $v\in V$, the subtree of $T$ induced by the nodes $\{t\in I\mid v\in X_t\}$ is connected. A set $X_t$ is the \textit{\textbf{bag}} of $T$. The \textit{\textbf{width}} of $(T,X)$ is ${\max}_{t\in I}|X_t|-1$. The \textit{\textbf{treewidth}} of a graph $G$ is the minimum width of any tree decomposition of $G$. Computing the treewidth of a graph is NP-hard \cite{arnborg_treewidth_hardness}, but there are algorithms to compute tree decompositions that are within a constant factor of the treewidth, e.g. \cite{bodlaender2013ock}. 
\par
Tree decompositions are used to perform dynamic programs on graphs, and a certain type of tree decomposition, called a nice tree decomposition, makes defining dynamic programs easier. A \textit{\textbf{nice tree decomposition}} is a tree decomposition with a specified root $r\in I$ such that (1) $X_r=\emptyset$, (2) $X_l=\emptyset$ for all leaves $l\in I$, and  (3) all non-leaf nodes are either an introduce node, a forget node, or a join node, which are defined as follows. An \textit{\textbf{introduce node}} is a node $t\in I$ with exactly one child $t'$, and for some $w\in V$, $w\notin X_{t'}$ and $X_t=X_{t'}\cup\{w\}$. We say $t$ \textit{\textbf{introduces}} $w$. A \textit{\textbf{forget node}} is a node $t\in I$ with exactly one child $t'$, and for some $w\in V$, $w\notin X_t$ and $X_t\cup\{w\}=X_{t'}$. We say $t$ \textit{\textbf{forgets}} $w$. A \textit{\textbf{join node}} is a node $t\in I$ with exactly two children $t'$ and $t''$ where $X_{t}=X_{t'}=X_{t''}$. The following lemma proves that we can convert any tree decomposition to a nice tree decomposition without increasing width.
\begin{lemma}[Lemma 7.4 of \cite{Cygan_2015}] 
\label{lem:kn-nodes}
Given a tree decomposition $(T=(I,F),X)$ of width $k$ of a graph $G=(V,E)$, a nice tree decomposition of width $k$ with $\OO(kn)$ nodes can be computed in $\OO(k^2\max\{|V|,|I|\})$ time.
\end{lemma}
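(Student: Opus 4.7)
The plan is to transform $(T,X)$ step by step into a nice tree decomposition while controlling the number of new nodes added. First I would preprocess $(T,X)$ to reduce $|I|$ to at most $|V|$: while there is an edge $\{t,t'\}\in F$ with $X_t\subseteq X_{t'}$, contract it and give the contracted node bag $X_{t'}$. Each contraction preserves the three tree-decomposition axioms, does not increase the width, and can be executed in $\OO(k)$ time; the standard observation that an inclusion-free tree decomposition has at most $|V|$ nodes then yields $|I|\le |V|$ in total preprocessing time $\OO(k|I|)$.

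Next I would root the resulting tree at an arbitrary node $r$. Above $r$ I append a chain of forget nodes that peels off the $\le k+1$ vertices of $X_r$ one at a time, ending in a new root with empty bag; below each leaf $\ell$ I attach a chain of introduce nodes that builds $X_\ell$ from a new empty-bag leaf. For every remaining internal edge $\{t,t'\}$ with $t'$ the parent of $t$, I subdivide the edge by a chain that first forgets the vertices in $X_t\setminus X_{t'}$ one at a time, then introduces those of $X_{t'}\setminus X_t$; the chain has length at most $|X_t\triangle X_{t'}|\le 2(k+1)$. Finally, whenever a node $t$ has $d>2$ children in the current tree, I replace $t$ by a balanced binary tree of $d-1$ join nodes all carrying the bag $X_t$, each with two children.

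Counting is then straightforward. Each of the $\OO(|V|)$ remaining original nodes contributes $\OO(k)$ chain nodes and $\OO(\deg(t))$ join-tree nodes, yielding $\sum_t \OO(k+\deg(t))=\OO(k|V|)=\OO(kn)$ new nodes in total. Constructing each inserted bag takes $\OO(k)$ time, so the full construction runs in $\OO(k^2|V|)$ after preprocessing, i.e.\ $\OO(k^2\max\{|V|,|I|\})$ overall.

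The main obstacle is verifying axiom (3) after every local modification. Axioms (1) and (2) are immediate because the bags of the original nodes are unchanged and no edge of $G$ loses a witness bag. Axiom (3) is more delicate, but the ``forget before introduce'' ordering inside each chain, together with the uniform bag $X_t$ maintained throughout each binary join tree, guarantees that the set of nodes whose bag contains any given $v\in V$ remains a connected subtree after every replacement.
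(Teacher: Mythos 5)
Your proof is correct and follows essentially the same standard construction that the paper invokes (citing Lemma 7.4 of Cygan et al.) and re-sketches for the closed variant: prune to a non-redundant decomposition, root, insert forget-then-introduce chains along each tree edge, and replace high-degree nodes with binary trees of join nodes carrying the same bag. Your added care with the preprocessing step (contracting nested-bag edges so $|I|\le|V|$ before adding chains, giving the $\OO(kn)$ node count) is exactly the detail needed to justify the stated bound and matches the textbook argument the paper relies on.
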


\begin{figure}[!ht]
    \centering
    \begin{subfigure}{0.3\textwidth}
        \centering
        \includegraphics[height=1in]{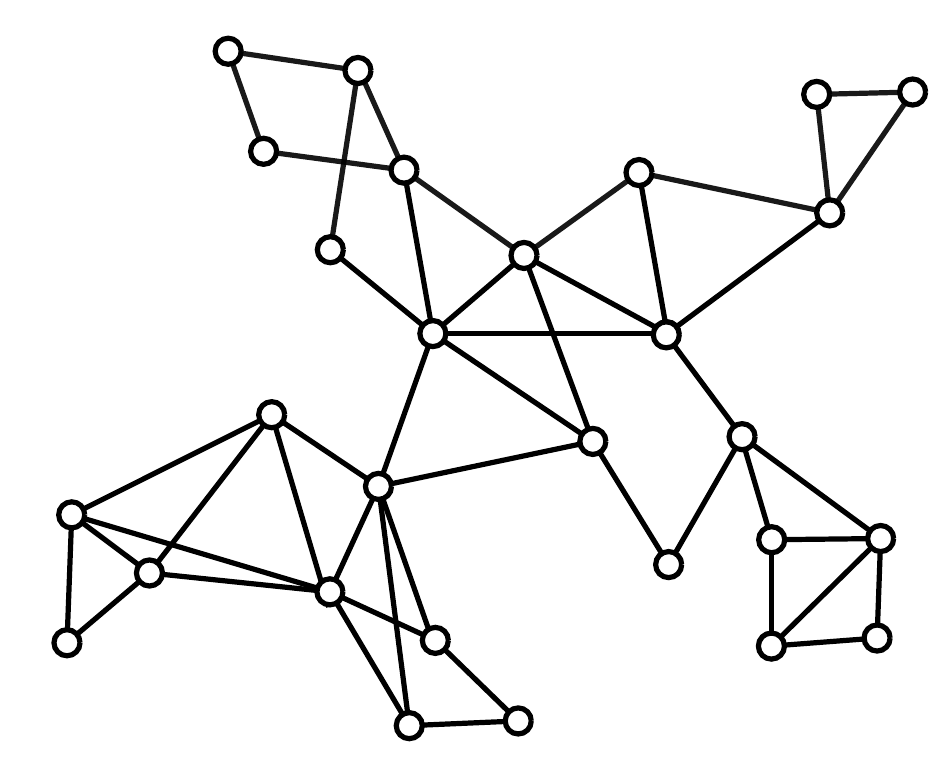}
    \end{subfigure}
    \begin{subfigure}{0.3\textwidth}
        \centering
        \includegraphics[height=1in]{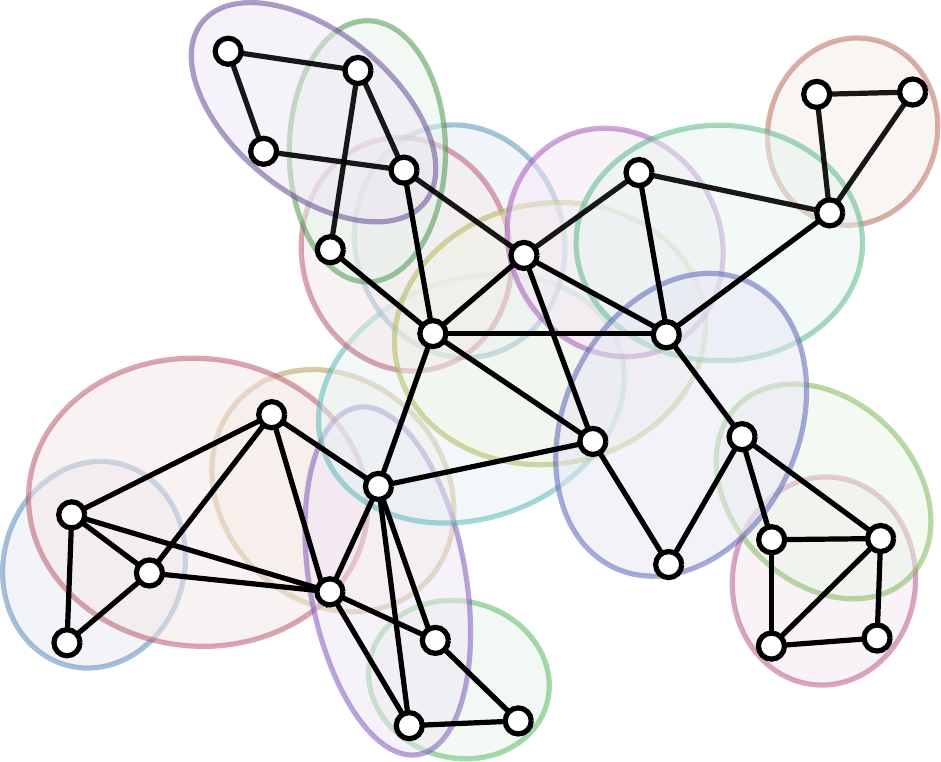}
    \end{subfigure}
    \begin{subfigure}{0.3\textwidth}
        \centering
        \includegraphics[height=1in]{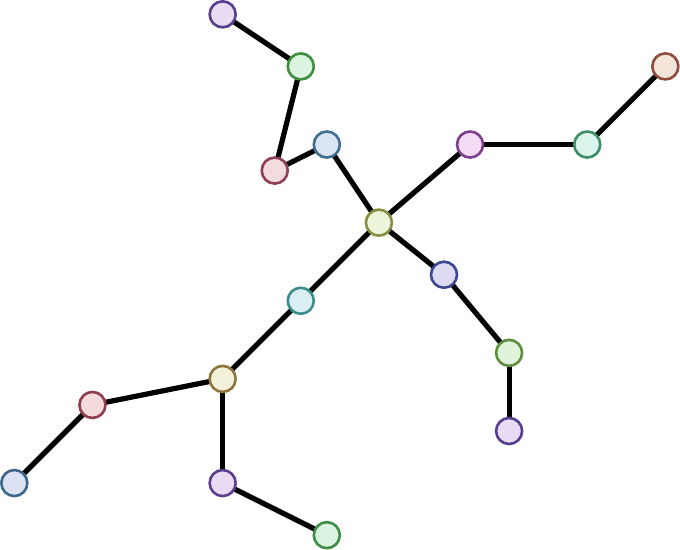}
    \end{subfigure}
    \caption{Left: A graph. Right and Center: A (not nice) tree decomposition of the graph of width 3. Each node of the tree corresponds to a subset of the vertices of the graph.}
    \label{fig:td}
\end{figure} 
 
A \textit{\textbf{path decomposition}} is a special kind of tree decomposition $(T,X)$ where $T$ is a path. A \textit{\textbf{nice path decomposition}} is a tree decomposition without join nodes, i.e. where every node is either an introduce node or a forget node. The \textit{\textbf{pathwidth}} of a graph $G$ is the smallest width of any path decomposition of $G$. As any path decomposition is also a tree decomposition, the treewidth of $G$ is at most the pathwidth of $G$.

\subsection{The Exponential Time Hypothesis}

When a new algorithm is discovered it is natural to ask if it is possible to improve it. To prove that the algorithm was sub-optimal it is enough to find a new and better algorithm. On the other hand, if the algorithm is actually the best possible, then the situation becomes more complicated. Although there are optimality results for a few problems in P,\footnote{One such example is sorting, which we know can at best be done in $\Omega(n\log n)$ time.} none are known for algorithms solving NP-complete problems. Such a result would imply $\text{P}\neq\text{NP}$, which remains famously unproven. 

This theoretical barrier does not make the question of optimality less relevant. No one wants to spend years searching for improvements to an algorithm that cannot be improved! 
For instance, the algorithms in this paper need $2^{\OO(k\log k)}n^{\OO(1)}$ time, which may prompt the question ``Why were you unable to deliver a $2^{\OO(k)}n$ time solution?''.   


A pragmatic and popular response to these kinds of questions is to prove that you have optimality under the Exponential Time Hypothesis (ETH). The ETH is a conjecture stating that there is no sub-exponential algorithm for 3-SAT. More precisely, let $n$ be the number of variables in a given instance of 3-SAT.

\begin{hypothesis}[ETH] 
    3-SAT cannot be solved in time $2^{\oo(n)}$.
\end{hypothesis}

Similar to NP-hardness, an ETH-lower bound is a way of connecting the hardness of a new and often poorly understood problem to problems we already have a good understanding of. The idea is to show that an improvement on the runtime of the currently best algorithm for a new problem would disprove the ETH. Although the ETH remains unproven, 
the continued absence of any algorithm for 3-SAT fast enough to disprove the ETH is itself strong empirical evidence in support of the hypothesis.

\section{Algorithms}
\label{sec: algorithms}

We first present a high-level overview of our algorithm in Section \ref{sec:algorithm_overview}. The remainder of the section then explains the ideas presented in the overview in greater detail.

\subsection{Overview of the Algorithms}
\label{sec:algorithm_overview}

Our algorithms are all dynamic programs on a tree decomposition $(T,X)$ of the Hasse diagram of a simplicial complex $K$. For each node $t\in T$, starting at the leaves of $T$ and moving towards the root, we compute a set of candidate solutions to our problem, where a candidate solution is a subcomplex of $K$ that might be a subcomplex of a solution to our problem. We recursively use candidate solutions at the children of $t$ to build the candidate solutions at $t$. At the end of the algorithm, candidate solutions at the root of $t$ will be solutions to our problem. In this section, we explore how a candidate solution to our problem is defined, and how we can effectively store representations of these candidate solutions so that our final algorithm is FPT.
\par 
Certain nice tree decompositions\footnote{Certain here means \textit{closed}, which is a type of tree decomposition of the Hasse diagram we define in Section \ref{sec:closed_td}. In particular, the set $K_t$ as defined above is a simplicial complex in a closed tree decomposition, which would not true for general tree decompositions of the Hasse diagram.}
$(T,X)$ of the Hasse diagram of a simplicial complex $K$ define a recursively-nested set of subcomplexes of $K$. Recall that each bag of the tree decomposition is a set of simplices of $K$. For each node $t\in T$, the subcomplex $K_t\subset K$ is the union of the bags of each descendant of $t$ minus the triangles in the bag of $t$. These subcomplexes have the property that if $t'$ is a child of $t$, then $K_{t'}\subset K_{t}$.
\par 
We use this set of subcomplexes to recursively build solutions to our problems. Our algorithm computes a set of \textit{\textbf{candidate solutions}} at each node $t$. The exact definition of candidate solution is given in Section \ref{sec:candidate_solutions}, but intuitively, a candidate solution at a node $t$ is a subcomplex of $K_t$ that could be a subcomplex of a combinatorial surface in $K$. In particular, the link of each vertex in a candidate solution must be a subset of a simple path or simple cycle. Our definition of candidate solution works recursively: if $\Sigma$ is a candidate solution at $t$, then for each child $t'$ of $t$, the complex $\Sigma\cap K_{t'}$ is a candidate solution at $t'$. Our algorithm uses this fact to find candidate solutions at $t$. Specifically, our algorithm attempts to build candidate solutions at $t$ by growing candidate solutions at $t'$. 
\par 
The main challenge with this approach is storing candidate solutions. There can be an exponential number of candidate solutions at a given node $t$, so we cannot simply store all candidate solutions. Generally, dynamic programs on tree decompositions work by storing some local representation of candidate solutions at $t$, where a local representation is a description of a candidate solution only in terms of vertices and edges in the bag $X_t$. Two candidate solutions with the same local representation are typically interchangeable in the sense that one candidate solution can be extended to a complete solution if and only if the other can too. The number of these local representations at $t$ is typically a function of the size of $X_t$, which allows for FPT algorithms parameterized by the treewidth.
\par
The local representation of candidate solutions for our problems should have several properties. First, they should represent a candidate solution using only simplices in $X_t$. Second, they should retain enough information that we can verify that a subcomplex is a candidate solution, i.e. it could be extended to a surface in $K$. In particular, we should be able to deduce information about the links of simplices in $X_t$ from the local representation. The first and second properties are at odds, as even if a simplex $\sigma$ is contained in $X_t$, the link of $\sigma$ need not be contained in $X_t$. Finally, we should be able to deduce the homeomorphism class of a candidate solution from the local representation. Again, this property is at odds with the first property, as topological properties like the genus and orientability of a surface are global, not local, properties of a surface. One of our contributions is introducing a data structure to store local representations of candidate solution with each of these properties called the \textit{\textbf{annotated cell complex}}. 
\par 
A (non-annotated) \textit{\textbf{cell complex}} is an algebraic representation of a surface that was originally introduced by Ahlfors and Sario \cite{ahlfors_sario} to prove the Classification Theorem of Compact Surfaces. Intuitively, a cell complex is a collection of disks, called \textit{\textbf{faces}}, joined by shared edges in their boundaries. The faces in a cell complex differ from triangles in a simplicial complex as the faces in a cell complex can have more than three edges in their boundary. A definition of cell complex and a discussion of their properties can be found in Section \ref{sec:cell_complexes}.
\par 
The advantage of using cell complexes rather than simplicial complexes to store surfaces is that there is a simple equivalence relation that partitions cell complexes into homeomorphism classes. This is of obvious benefit as the surface $S$ we are looking for may be specified by its homeomorphism class, but there is a secondary benefit. We define a set of \textit{\textbf{equivalence-preserving moves}}, operations on cell complexes that preserve their homeomorphism class. We use these moves to compress the local representation of each candidate solution we keep during our algorithm. The most important benefit that these moves provide is the ability to merge two faces that share an edge.
\par 
To see why merging faces is helpful, suppose that we have a candidate solution $\Sigma$ at a node $t$ that is represented as a cell complex. We would like to store a local representation of $\Sigma$ using only edges in $X_t$. There would then be a bounded number of local representations of candidate solutions at a node $t$, as there are a bounded number of edges in $X_t$. To this end, each time we forget an edge $e$, we would like to merge the two faces incident to $e$ into a single face. See Figure~\ref{fig:remove_e}, left panel.
\begin{figure}
    \centering
    \begin{subfigure}{0.50\textwidth}
        \centering
        \includegraphics[height=1in]{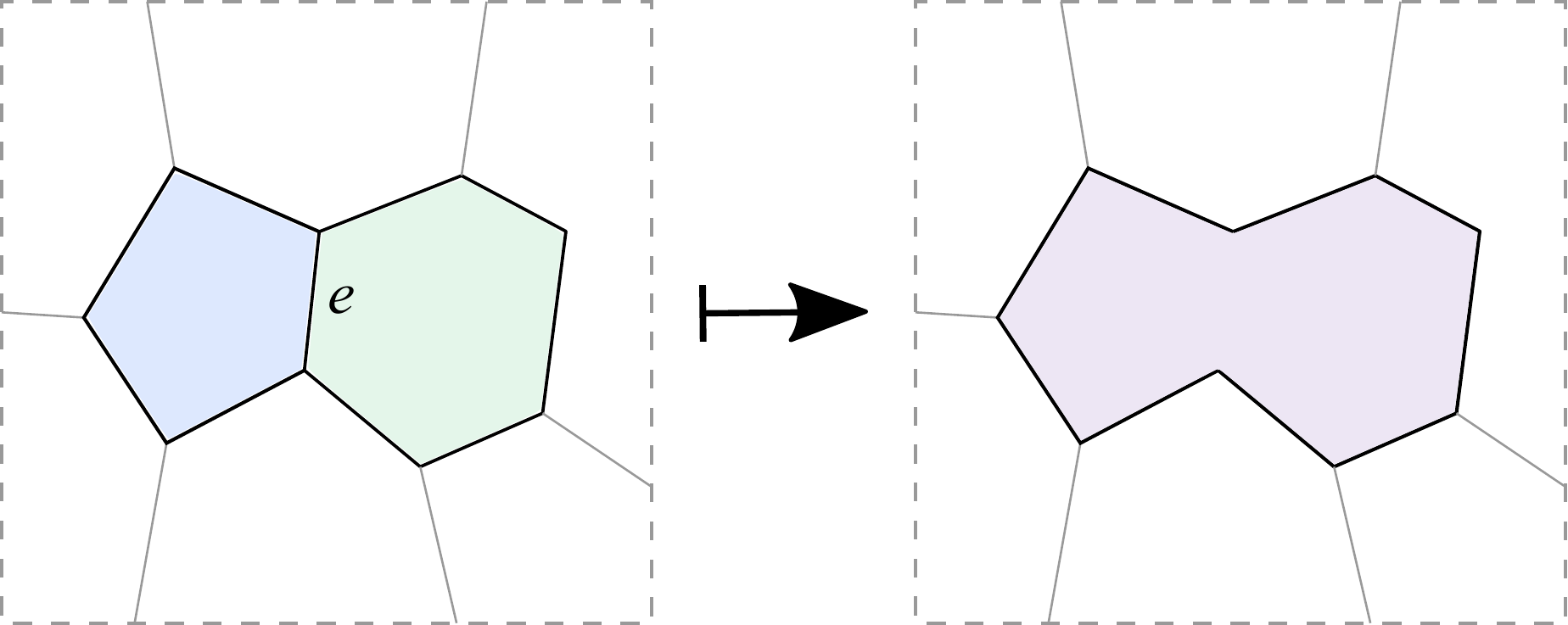}
    \end{subfigure}
    \begin{subfigure}{0.40\textwidth}
        \centering
        \includegraphics[height=1in]{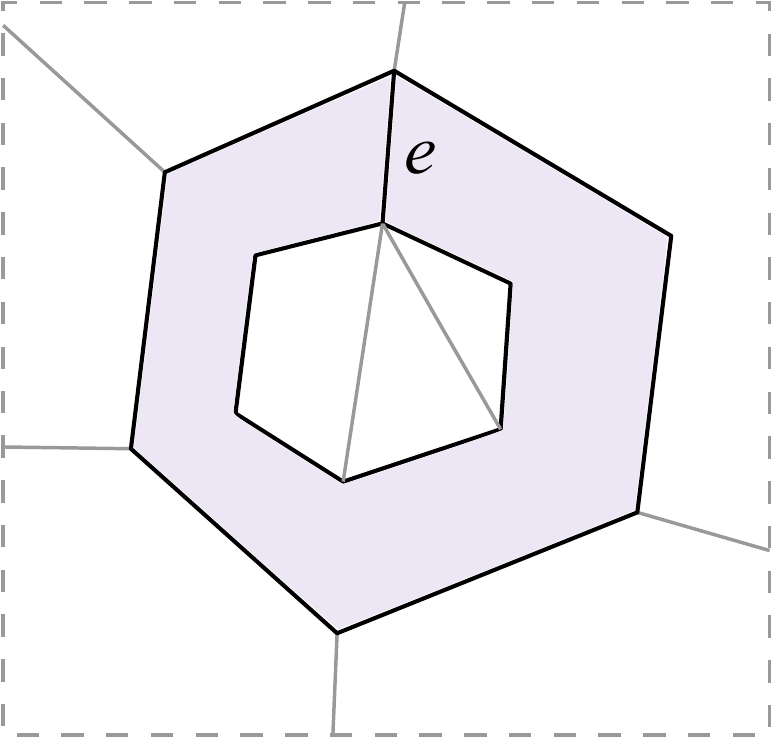}
    \end{subfigure}
    \caption{Left: The edge $e$ is removed by merging the two incident faces. Right: The edge $e$ appears twice on the boundary of the same face, so $e$ cannot be removed by merging incident faces as this would make the interior of the face an annulus. We use annotated cell complexes to remove $e$.}
    \label{fig:remove_e}
\end{figure}
\par 
The idea of merging faces when we forget $e$ works unless $e$ is incident to the same face twice; the right panel of Figure \ref{fig:remove_e} gives an example. After merging some faces, it is possible that a face may have two edges on its boundary identified. If two edges on the boundary of the same face are identified, then we can no longer remove these edges by merging their incident faces, as then the interior of this face would no longer be a disk.
\par 
We therefore modify the definition of cell complex to allow for a more general type of face. Our first change is to allow a face to be a disk with multiple boundary components like in Figure \ref{fig:remove_e}, but we need to go a step further. Topological features like handles, crosscaps, and boundaries in cell complexes are the result of a single face having edges on its boundary identified in certain ways; thus, we need a way of removing the edges that constitute these topological features. An \textit{\textbf{annotated cell complex}} annotates each face with the number of topological features like handles, crosscaps, and boundaries on this face, rather than storing these features explicitly with edges. In effect, an annotated cell complex is a representation of a surface where the interior of a face is allowed to be any compact connected surface.

\subsection{Cell Complexes}
\label{sec:cell_complexes}

 A cell complex is an algebraic representation of a surface. Cell complexes are more flexible than combinatorial surface as they provide a set of algebraic rules that will allow us to store a compressed representation of candidate solutions in our complex. Intuitively, these moves allow us to merge two faces together into a single face and perform other transformations to simplify our candidate solutions.   
\par
In this section, we introduce cell complexes and their basic properties. In Section \ref{sec:cell_definition}, we define cell complexes.  In Section \ref{sec:cell_surf}, we explore the relationship between cell complexes and combinatorial surfaces. In Section \ref{sec:surface_classification}, we see how cell complexes provide a concise way of classifying connected, compact surfaces up to homeomorphism. In Section \ref{sec:moves}, we give a set of equivalence-preserving rules for cell complexes and define a generalization of the cell complex, the \textit{\textbf{annotated cell complex}}, which allows us to further compress cell complexes as is needed for our algorithm.

\subsubsection{Definition}
\label{sec:cell_definition}

For a set $X$, let $X^{-1}=\{x^{-1}\mid x\in X\}$. We will let $(x^{-1})^{-1}=x$. For the time being, it is fine to treat $^{-1}$ as meaningless notation.
\par 
A \textit{\textbf{cell complex}} is a tuple $C=(F,E,B)$ where $F$ and $E$ are finite sets and $B:F\sqcup F^{-1}\to (E\sqcup E^{-1})^{*}$ is a map that assigns each element $A\in F\sqcup F^{-1}$ a cyclically ordered sequence $B(A)=(\overbar{a_1\ldots a_m})$ where each $a_i\in E\sqcup E^{-1}$, such that (1) $B(A^{-1})=(\overbar{a_m^{-1}\ldots a_1^{-1}})$ and (2) each element $e\in E\sqcup E^{-1}$ appears either once or twice in some sequence $B(A)$.\footnotemark 
\footnotetext{This definition of cell complex was introduced by Ahlfors and Sario \cite{ahlfors_sario} to give a proof of the Classification Theorem for Compact Surfaces, although many proofs of this theorem use an algebraic description of surfaces similar to cell complexes. We recommend Chapter 6 of the book by Gallier and Xu \cite{gallier_xu} for a modern treatment of cell complexes.}
\par 
Elements of $F$ are called \textbf{\textit{faces}} and elements of $F\sqcup F^{-1}$ are called \textbf{\textit{oriented faces}}. Elements of $E$ are called \textbf{\textit{edges}} and elements of $E\sqcup E^{-1}$ are called \textbf{\textit{oriented edges}}. The sequence $B(A)$ is the \textbf{\textit{boundary}} of $A$.
\par
It is informative to visualize cell complexes as a collection of disks $F$ identified along shared edges in their boundaries. A face $A\in F$ is a disk with edges on its boundary $B(A)$. The element $A^{-1}$ is the same disk as $A$ but with opposite orientation, and the boundary $B(A^{-1})$ is the boundary of $A$ traversed in the opposite direction as $B(A)$. For example, $B(A)$ might be the order of the edges on the boundary when traversing the boundary clockwise and $B(A^{-1})$ would be the order of the edges when traversing the boundary counterclockwise. 
\begin{figure}[H]
    \centering
    \includegraphics[height=0.75in]{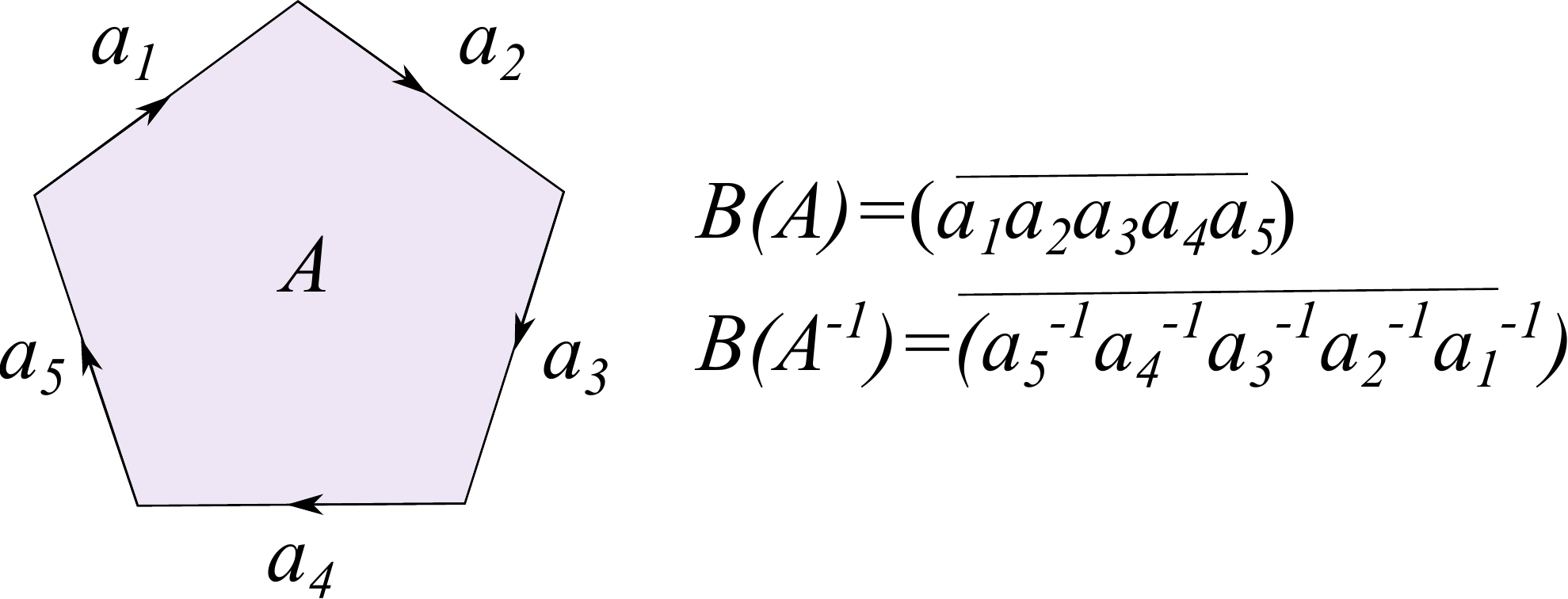}
    \caption{The oriented faces $A$ and $A^{-1}$ represent the two ways (clockwise and counterclockwise) of traversing the boundary of $A$.}
    \label{fig:a_a_inverse}
\end{figure}
 The oriented edges $a$ and $a^{-1}$ are the two oppositely-directed edges defined by the edge $a\in E$. If an oriented edge $a$ appears twice in the boundaries of faces, then $a$ on one face is glued to $a^{-1}$ on the other face. If we repeat this gluing for all oriented edges, then we can visualize our collection of disks $F$ as part of a single (possibly disconnected) surface. We store both the oriented faces $A$ and $A^{-1}$ to ensure that both the oriented edges $a$ and $a^{-1}$ appear in the boundary of a face.
\begin{figure}[!ht]
    \centering
        \begin{subfigure}{0.35\textwidth}
            \centering
            \includegraphics[height=1in]{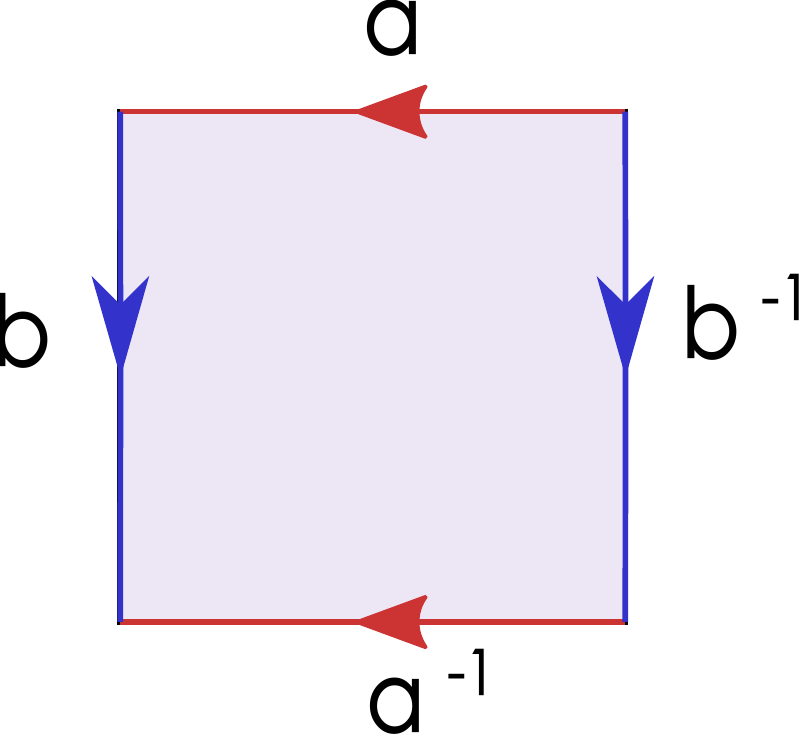}
        \end{subfigure}
        \begin{subfigure}{0.35\textwidth}
            \centering
            \includegraphics[height=1in]{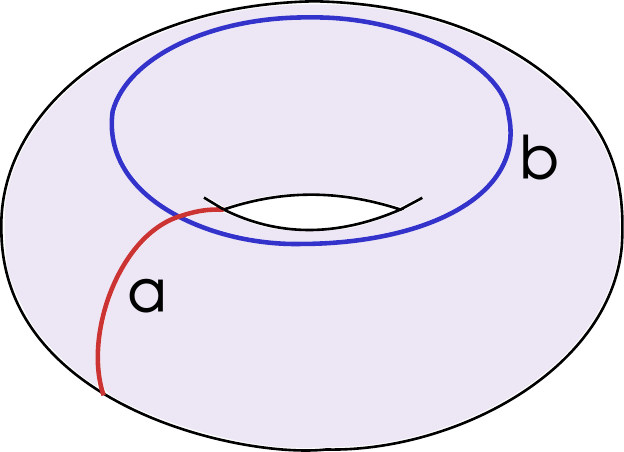} 
        \end{subfigure}
    \caption{If an edge $a$ appears on the boundary of two faces (or twice on the boundary of a face), we visualize this as $a$ on one face being glued to $a^{-1}$ on the other face. Here we see how these edge identifications can be used to define the torus.}
    \label{fig:cell_edge}
\end{figure}

A cell complex is \textit{\textbf{connected}} if there is no partition $F_1\sqcup F_2=F$ and $E_1\sqcup E_2= E$ such that $(E_1,F_1,\restr{B}{F_1\sqcup F_1^{-1}})$ and $(E_2,F_2,\restr{B}{F_2\sqcup F_2^{-1}})$ are also cell complexes. If a cell complex $C$ is disconnected, the \textit{\textbf{connected components}} of $C$ are  the cell complexes $(E_i,F_i,\restr{B}{F_i\sqcup F_i^{-1}})$ for $1\leq i\leq k$ where (1) $E$ and $F$ are partitioned by $E = E_1\sqcup\cdots\sqcup E_k$ and $F = F_1\sqcup\cdots\sqcup F_k$  and (2)  each cell complex $(E_i,F_i,\restr{B}{F_i\sqcup F_i^{-1}})$ is connected.
\par
Let $a\in E\sqcup E^{-1}$. A \textit{\textbf{successor}} of $a$ is an edge $b$ such that $ab$ is a substring of a boundary $B(A)$ for some $A\in F\sqcup F^{-1}$. As each oriented edge appears in at most two boundaries, then $a$ has at most two successors. If an oriented edge appears in two boundaries, we say it has a \textit{\textbf{pair of successors}}; otherwise, if an oriented edge appears in a single boundary, it has a \textit{\textbf{single successor}}. A \textit{\textbf{sequence of successors}} is a sequence of edges $(a_1\ldots a_k)$ such that $a_{i-1}^{-1}$ and $a_{i+1}^{-1}$ are pairs of successors of $a_{i}$ for $2\leq i\leq k-1$, $a_2^{-1}$ is the single successor of $a_1$, and $a_{k-1}^{-1}$ is the single successor of $a_{k}$. If there is a face $(\overbar{a})$ in a cell complex, then this face and its inverse $(\overbar{a^{-1}})$ define the sequence of successors $(aa^{-1})$ as faces are cyclic sequences of edges. A \textit{\textbf{cyclic sequence of successors}} is a cyclically ordered sequence of edges $(\overbar{a_1\ldots a_k})$ such that $a_{i-1}^{-1}$ and $a_{i+1}^{-1}$ are a pair of successors for $a_{i}$, with indices taken modulo $k$. We distinguish cyclic sequences of successors from sequences of successors with an overline. If $aa^{-1}$ is the substring of a boundary, then $(\overbar{a})$ is a cyclic sequence of successors. Sequence of successors describe sets of edges that all enter a common vertex and are analogous to rotation systems in surface graphs. Sequence of successors define vertices in cell complexes when the edges do not have their own notion of vertex; see \cite{gallier_xu}.

\begin{figure}[H]
    \centering
    \includegraphics[height=0.9in]{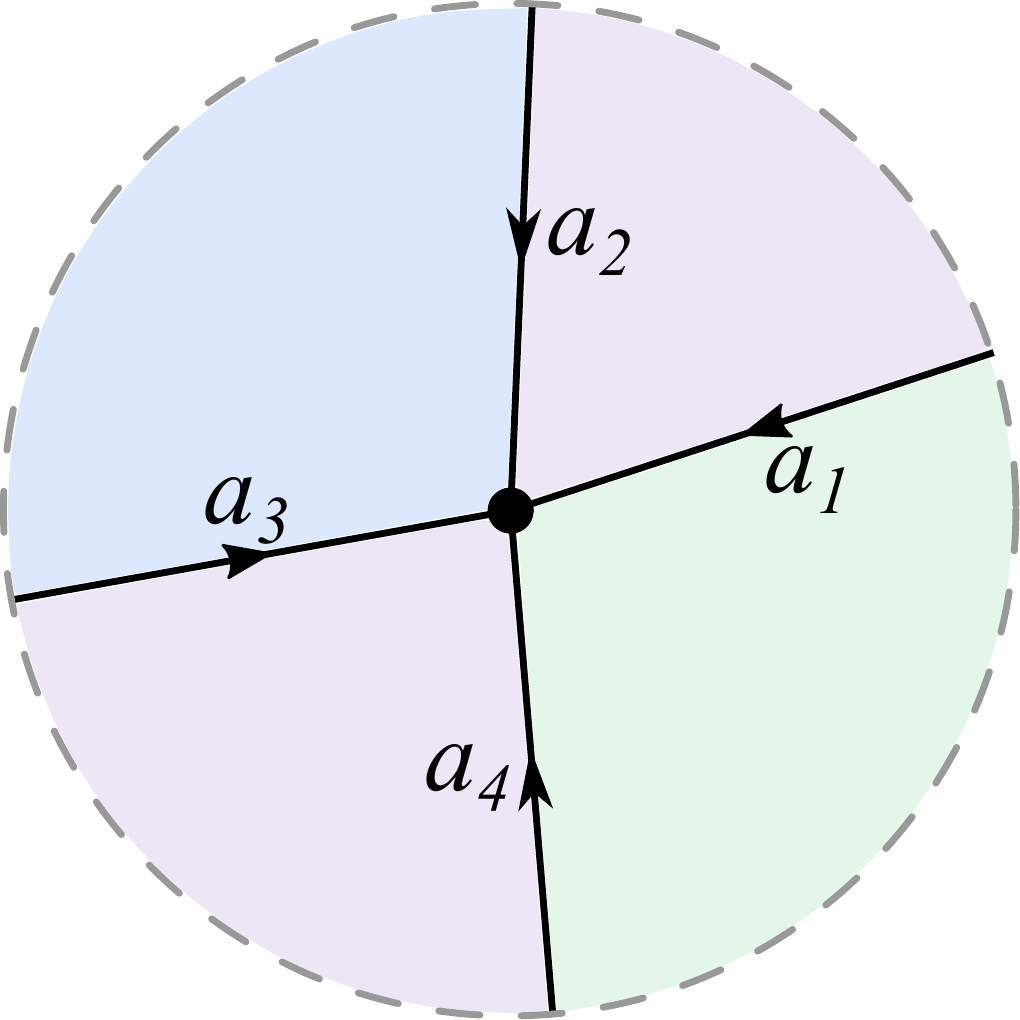}
    \caption{A cyclic sequence of successors. Note that multiple edges in a cyclic sequence of successors may appear on the same face.}
    \label{fig:cyclic_sequence_successors}
\end{figure}

\subsubsection{Cell Complexes and Combinatorial Surfaces}
\label{sec:cell_surf}

A combinatorial surface $S$ defines a cell complex $C=(E,F,B)$. The edges of $S$ are the edges of $C$ and the triangles of $S$ are the faces of $C$. While an oriented edge is purely a formal construction, we associate the oriented edges of an edge $\{u,v\}\in S$ with the ordered pairs $(u,v)$ and $(v,u)$. For a triangle $A=\{u,v,w\}\in S$, we define the boundary of $A$ as $B(A)=(\overbar{(u,v),(v,w),(w,u)})$ and $B(A^{-1})=(\overbar{(u,w),(w,v),(v,u)})$. All edges in a combinatorial surface are incident to at most two triangles, so conditions (1) and (2) in the definition of cell complex are satisfied.
\par
The conditions on the link of the vertices in $S$ can be described in the language of cell complexes. Specifically, the vertices of $S$ define sequences of successors in $C$. Let $(u,v)$ be an oriented edge. We say $(u,v)$ \textbf{\textit{enters}} $v$.

\begin{proposition}
\label{prop:inner_vertex}
Let $S$ be a combinatorial surface and $C$ the cell complex defined by $S$. Let $v\in S$ be a vertex such that $\lk_{S}{v}$ is a simple cycle. The set of edges entering $v$ form a cyclic sequence of successors.
\end{proposition}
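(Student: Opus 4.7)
The plan is to unwrap the definitions and exhibit the cyclic sequence of successors directly. Since $\lk_S v$ is a simple cycle, I would write it as $(\overbar{w_1,w_2,\ldots,w_k})$; then for each $i$ (indices mod $k$) the edge $\{w_i,w_{i+1}\}$ lies in $\lk_S v$ because the triangle $T_i = \{v, w_i, w_{i+1}\}$ belongs to $S$, and $T_1,\ldots,T_k$ are exactly the triangles of $S$ containing $v$. Consequently, the edges of $S$ incident to $v$ are precisely $\{v,w_1\},\ldots,\{v,w_k\}$, and the oriented edges entering $v$ are $(w_1,v),\ldots,(w_k,v)$. I then claim that $(\overbar{(w_1,v),(w_2,v),\ldots,(w_k,v)})$ is the desired cyclic sequence of successors.

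To verify the claim, I would focus on one triangle $T_i = \{v,w_i,w_{i+1}\}$ and look at its two boundaries $B(T_i)$ and $B(T_i^{-1})$, which by the construction of the cell complex built from $S$ cyclically list the three oriented edges of $T_i$ in opposite orders. In exactly one of these two cyclic listings, $(w_{i+1},v)$ is immediately followed by $(v,w_i)$; hence $(v,w_i)$ is a successor of $(w_{i+1},v)$, and taking the inverse shows that $(w_i,v)$ is the corresponding adjacent entering edge. Symmetrically, the same triangle (or its inverse) shows that $(v,w_{i+1})$ is a successor of $(w_i,v)$, whose inverse $(w_{i+1},v)$ is the other adjacent entering edge. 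Thus $T_i$ witnesses that the adjacent entries $(w_i,v)$ and $(w_{i+1},v)$ are related by successors in the way the definition of a cyclic sequence of successors requires.

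It remains to show that each entering edge has a \emph{pair} of successors, which is where the full strength of the simple-cycle hypothesis on $\lk_S v$ enters: the vertex $w_i$ appears in exactly two edges of $\lk_S v$, namely $\{w_{i-1},w_i\}$ and $\{w_i,w_{i+1}\}$, so the edge $\{v,w_i\}$ lies in exactly the two triangles $T_{i-1}$ and $T_i$. Therefore each of $(w_i,v)$ and $(v,w_i)$ appears in exactly two oriented-face boundaries, and from the previous paragraph the two successors of $(w_i,v)$ are precisely $(v,w_{i-1})$ and $(v,w_{i+1})$. Their inverses are the cyclic neighbors of $(w_i,v)$ in the proposed sequence, completing the verification.

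The main obstacle I anticipate is purely notational: the cell-complex boundary convention fixes a specific cyclic ordering for $B(A)$ together with its reversal $B(A^{-1})$, and one must be careful, when the unordered triangle $T_i$ is written with a specific orientation of $v,w_i,w_{i+1}$, to identify which of $B(T_i)$ and $B(T_i^{-1})$ places $(w_{i+1},v)$ before $(v,w_i)$. This bookkeeping is routine once one observes that the successor relation between entering edges is governed entirely by the cyclic order of the $w_i$ around $v$ in the link, independently of the chosen orientation of each triangle.
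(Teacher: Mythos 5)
Your proposal is correct and follows essentially the same route as the paper's proof: both identify, from the simple cycle $\lk_S v=(\overbar{w_1,\ldots,w_k})$, the triangles $\{v,w_i,w_{i+1}\}$ containing $v$, and then read the pair-of-successors relation directly off the two oriented boundaries of each such triangle. Your version is somewhat more explicit about why each entering edge lies in exactly two boundaries and about the orientation bookkeeping, but the underlying argument is the same.
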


\begin{proof}

Let $\lk_{K}{v}$ be the simple cycle $(\overbar{v_1,\ldots,v_k})$. For each $v_i\in\lk_{S}{v}$, one can verify using the definition of the link that $v$ is incident to the triangles $\{v,v_{i-1},v_{i}\}$ and $\{v,v_{i},v_{i+1}\}$. The boundary of $\{v,v_{i-1},v_{i}\}$ in $C$ is $(\overbar{(v,v_{i-1}), (v_{i-1},v_{i}),(v_{i},v)}).$ The boundary of the inverse of $\{v,v_{i},v_{i+1}\}$ is $(\overbar{(v_{i+1},v_{i}),(v_{i},v),(v,v_{i+1})})$. So $(v,v_{i-1})=(v_{i-1},v)^{-1}$ and $(v,v_{i+1})=(v_{i+1},v)^{-1}$ are a pair of successors to $(v_{i},v)$. The proposition follows by the definition of a cyclic sequence of successors.
\end{proof}

\begin{proposition}
\label{prop:boundary_vertex}
    Let $S$ be a combinatorial surface and $C$ the cell complex defined by $S$. Let $v\in S$ such that $\lk_{S}{v}$ is a simple path. The set of edges entering $v$ form a sequence of successors.
\end{proposition}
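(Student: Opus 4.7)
The plan is to mirror the proof of Proposition \ref{prop:inner_vertex}, treating the interior vertices of $\lk_S v$ and its two endpoints separately. Write $\lk_S v = (v_1, \ldots, v_k)$; I will argue that the ordered list $\bigl((v_1, v), (v_2, v), \ldots, (v_k, v)\bigr)$ of oriented edges entering $v$ satisfies the three clauses in the definition of a sequence of successors given in Section \ref{sec:cell_definition}.

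First I would catalogue the triangles of $\str{S}{v}$. Since the edges of $\lk_S v$ are exactly $\{v_i, v_{i+1}\}$ for $1 \leq i \leq k-1$, these triangles must be precisely $\{v, v_i, v_{i+1}\}$ for $1 \leq i \leq k-1$. For an interior index $2 \leq i \leq k-1$, both $\{v, v_{i-1}, v_i\}$ and $\{v, v_i, v_{i+1}\}$ are present, and the boundary computation used in Proposition \ref{prop:inner_vertex} carries over verbatim to show that $(v_{i-1}, v)^{-1}$ and $(v_{i+1}, v)^{-1}$ form a pair of successors of $(v_i, v)$.

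The new content is at the endpoints. Because $v_1$ is an endpoint of the path $\lk_S v$, its only neighbour in that link is $v_2$, so $\{v, v_1, v_2\}$ is the unique triangle of $\str{S}{v}$ containing the edge $\{v, v_1\}$. Hence $(v_1, v)$ appears in exactly one face boundary of $C$, namely $B(\{v, v_1, v_2\}^{-1}) = (\overbar{(v_2, v_1), (v_1, v), (v, v_2)})$; the edge following $(v_1, v)$ there is $(v, v_2) = (v_2, v)^{-1}$, which is therefore its single successor. An identical argument at $v_k$ shows that $(v_{k-1}, v)^{-1}$ is the single successor of $(v_k, v)$. Together with the interior analysis, these observations match all three clauses in the definition of a sequence of successors.

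The main (and only mild) obstacle is verifying that $(v_1, v)$ and $(v_k, v)$ each appear in exactly one face boundary rather than two; this is the one place where the path case genuinely diverges from the cycle case of Proposition \ref{prop:inner_vertex}, and it reduces immediately to the combinatorial fact that an endpoint of a simple path has exactly one neighbour along the path.
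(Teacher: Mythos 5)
Your proof is correct and follows exactly the approach the paper intends: the paper's own "proof" is the one-line remark that the argument is analogous to that of Proposition~\ref{prop:inner_vertex}, and you have simply carried out that analogy in detail, handling the interior indices verbatim and then checking at the two endpoints that the relevant oriented edges lie in a unique face boundary and hence have a single successor.
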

\begin{proof}
    The proof of this proposition is analogous to the proof of Proposition \ref{prop:inner_vertex}.
\end{proof}

\par
While a combinatorial surface can be represented as a cell complex, not all cell complexes can be represented as combinatorial surfaces.  A face in a cell complex can have more than three edges on its boundary, so we cannot just reverse the construction. However, there is always a combinatorial surface \textit{homeomorphic} to any cell complex. A cell complex describes a compact surface with boundary, and there is a combinatorial surface homeomorphic to any compact surface with boundary. Phrased differently, any compact surface with boundary can be triangulated. This is a famous result known as Rado's Theorem; see for instance \cite{gallier_xu}. 

\subsubsection{Surface Classification}
\label{sec:surface_classification}

Many cell complexes describe the same surface up to homeomorphism. We can define an equivalence relation on cell complexes that partitions cell complexes into homeomorphism classes. This is the famous Classification Theorem of Compact Surfaces.
\par
Let $C$ be a cell complex. A cell complex $C'$ is an \textit{\textbf{elementary subdivision}} of $C$ if (1) a pair of oriented edges $a$ and $a^{-1}$ in $C$ are replaced by two oriented edges $bc$ and $c^{-1}b^{-1}$ respectively in $C'$ in all boundaries containing $a$ or $a^{-1}$, where $b,c$ are distinct edges in $C$ and not in $C'$ or (2) a face $A$ in $C$ with $B(A)=(\overbar{a_1...a_ka_{k+1}...a_{l}})$ is replaced in $C'$ with two faces $A',A''$ such that $B(A')=(\overbar{a_1,...,a_k,c})$ and $B(A'')=(\overbar{c^{-1}a_{k+1}...a_{l}})$, where $A',A'',c$ are not in $C$ and the reverse operation is applied to $A^{-1}$. 
\par
Two cell complexes $C$ and $C'$ are \textit{\textbf{equivalent}} if they are equivalent in the least equivalence relation containing the elementary subdivision relation. The equivalence classes of the elementary subdivision relation are exactly the homeomorphism classes of compact surfaces, as evidenced by the theorem of Gallier and Xu \cite{gallier_xu}. 
\par
\begin{theorem}
[Classification Theorem for Compact Surfaces, Lemma 6.1 \cite{gallier_xu}]
\label{thm:classification_surfaces}

Each connected cell complex is equivalent to a cell complex $C=(F,E,B)$ with a single face $F=\{A\}$ and
$$
B(A)=(\overbar{a_1c_1a_1^{-1}c_1^{-1}...a_gc_ga_g^{-1}c_{g}^{-1}d_1e_1d_1^{-1}...d_be_bd_b^{-1}})
$$
in which case $C$ is an orientable surface of genus $g\geq 0$ with $b\geq 0$ boundary components, or
$$
B(A)=(\overbar{a_1a_1...a_ga_gd_1e_1d_1^{-1}\ldots d_be_bd_b^{-1}})
$$
in which case $C$ is a non-orientable surface of genus $g\geq 1$ with $b\geq 0$ boundary components.
\end{theorem}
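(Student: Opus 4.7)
The plan is to prove the theorem by the classical normal-form reduction: starting from an arbitrary connected cell complex, I would apply a sequence of elementary subdivisions and their inverses (both are allowed since equivalence is defined as the least equivalence relation containing the subdivision relation and is therefore symmetric) so as to first collapse the complex to a single face, and then rewrite the boundary word of that face into one of the two standard cyclic forms.

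First I would reduce to one face. Because the cell complex is connected, if it has more than one face then some edge $e$ lies on the boundaries of two distinct faces $A_1,A_2$; reversing a type-(2) elementary subdivision merges these into a single face whose boundary is obtained by splicing $B(A_1)$ and $B(A_2)$ along $e$, after which $e$ is removed from $E$. Iterating brings me to a complex with a single face $A$. I would then cancel trivial pairs: whenever $ee^{-1}$ occurs as a cyclic substring of $B(A)$ and the length of $B(A)$ is at least three, the reverse of a type-(1) subdivision deletes both occurrences. After cancellation, each underlying edge in $E$ falls into exactly one of three types: handle generators (both $e$ and $e^{-1}$ appear in $B(A)$), crosscap generators ($e$ appears twice in $B(A)$ with the same orientation), and boundary edges (only one of $e,e^{-1}$ appears at all).

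Next I would put the boundary word in normal form via cyclic string rewriting. The moves I need are: (a) bring each crosscap pair to the adjacent position $ee$; (b) pair up interleaved handle generators and rewrite each such pair of pairs into a commutator block $a_i c_i a_i^{-1} c_i^{-1}$; and (c) isolate each boundary edge together with its flanking identifications into a block $d_j e_j d_j^{-1}$. Each of these rewrites can be realized by a short composition of face-splittings (cut the face along a new diagonal), their inverses (glue two faces back along a common edge), and edge-refinements together with their inverses. If no crosscaps remain after step (b), the word is already in the orientable normal form of \cref{thm:classification_surfaces}. Otherwise I would invoke the Dyck relation, which states that a handle coexisting with a crosscap is equivalent to three crosscaps, and apply it repeatedly to convert every commutator block into two additional crosscap blocks, producing the non-orientable normal form.

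The main obstacle I expect is the Dyck relation step. Forming commutators, gathering crosscaps, and peeling off boundary components are essentially cyclic string manipulations on $B(A)$, and each can be realized by a handful of elementary subdivisions in a mechanical way. Converting a handle into two crosscaps in the presence of a third crosscap, however, is not a local rewrite of the boundary word; it requires cutting the face along an auxiliary arc, flipping one of the resulting pieces so that the handle identifications $a\ldots a^{-1}$ and $c\ldots c^{-1}$ become crosscap identifications $a'\ldots a'$ and $c'\ldots c'$, and then regluing. Packaging this intrinsically two-dimensional move as a finite sequence of the two permitted elementary subdivisions, while keeping careful track of every occurrence of every oriented edge, is where the technical core of the argument lies.
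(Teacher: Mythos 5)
The paper states this theorem as a direct citation (Gallier--Xu, Lemma~6.1) and gives no proof of its own, so there is no internal argument to compare against. Your sketch follows the standard normal-form reduction from that cited source: merge faces along shared edges until one face remains (possible because the complex is connected), cancel adjacent $ee^{-1}$ pairs, and then normalize the cyclic boundary word by gathering crosscaps, forming commutator blocks from interleaved edge pairs, peeling off boundary blocks $d_je_jd_j^{-1}$, and applying Dyck's relation when handles and crosscaps coexist. This is exactly the argument the paper relies on, and the paper itself re-derives the key intermediate identities you need as short computations with its equivalence-preserving moves (\cref{lem:cell_handle}, \cref{lem:cell_crosscap}, \cref{lem:cell_boundary}, \cref{lem:handle_equivalence}, \cref{lem:crosscap_boundary}, \cref{lem:handle_crosscap_equivalence}).

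Two points to tighten. First, your initial three-way classification of edges is imprecise: an edge whose two occurrences carry opposite orientation ($e$ and $e^{-1}$) is not automatically a handle generator. Some such edges become the connectors $d_j$ that flank boundary edges, and which role an $e,e^{-1}$ pair plays is decided only by the interleaving analysis. Second, and more importantly, your self-diagnosed obstacle is not the real one. You worry that Dyck's relation requires cutting the face and flipping one piece, and that this cannot be expressed by the ``two permitted elementary subdivisions.'' But the flip --- replacing a summand $B(A)=(\overbar{a_1\ldots a_k})$ by $B(A^{-1})=(\overbar{a_k^{-1}\ldots a_1^{-1}})$ --- is the paper's move~(3), a re-choice of section $\phi$. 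It is not an elementary subdivision; it does not need to be, because it leaves the cell complex $(F,E,B)$ literally unchanged and is therefore always available for free. With it in hand, Dyck's relation becomes a short cut--flip--paste computation on the boundary word in the same style as your other rewrites; the paper's proof of \cref{lem:cell_crosscap} already uses the flip in precisely this way. The delicate part of the full classification proof is keeping the interleaving structure under control across the successive rewrites, not the flip itself.
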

We call a substring of a boundary of the form $aca^{-1}c^{-1}$ a \textit{\textbf{handle}}, a substring of the form $aa$ a \textit{\textbf{crosscap}}, and a substring of the form $ded^{-1}$ a \textit{\textbf{boundary}} if the edge $e$ is only on the boundary of one face. A cell complex is \textit{\textbf{non-orientable}} if it is equivalent to a cell complex with a crosscap and \textit{\textbf{orientable}} otherwise. The above theorem states that a cell complex is characterized by the number of handles and boundaries or the number of crosscaps and boundaries it has. The \textit{\textbf{genus}} of a cell complex is the number of handles or the number of crosscaps a cell complex has. An arbitrary cell complex is not connected, so an arbitrary cell complex is a collection of connected cell complexes of the above form.

\subsubsection{Equivalence-Preserving Moves and Annotated Cell Complexes}
\label{sec:moves}
One advantage of using cell complexes to describe surfaces is the equivalence relation defined by elementary subdivision. This equivalence relation allows us to develop a useful algebra on cell complexes. In this section, we give a concise algebraic description of cell complexes as a formal sum. We then define a collection of \textit{\textbf{equivalence-preserving moves}} on these formal sums, operations we can perform on a cell complex that preserves its equivalence class. We will then use these equivalence-preserving moves to define a more general data structure, the annotated cell complex.
\par
We now give an algebraic description of a cell complex that will allow us to easily describe and reduce cell complexes to their canonical form. Let a \textit{\textbf{section}} be a map $\phi:F\to F\sqcup F^{-1}$ such that $\phi(A)\in\{A,A^{-1}\}$ for each $A\in F$. If we fix a section $\phi$, we can represent a cell complex as the formal sum $\sum_{A\in F}B(\phi(A))$. Note that any such formal sum uniquely determines a cell complex regardless of the section $\phi$ used, as the boundary of any oriented face $A\not\in\im\phi$ is determined by the boundary of $A^{-1}\in\im\phi$. We will express a cell complex using these formal sums as
$$
\sum_{A\in F}B(\phi(A))=(\overbar{a_1\ldots a_k})+\ldots+(\overbar{a_l\ldots a_m}).
$$
We will use upper-case variables to denote substrings of boundaries, e.g. $X=a_i\ldots a_j$.
\par 
We are only interested in these formal sums up to the equivalence relation described in the previous section. We now introduce a series of operations on these formal sums that preserve the equivalence class of a cell complex.
\begin{enumerate}[font=\bfseries]
\item The first elementary subdivision says that an edge $a$ can be replaced with two edges $bc$. In the sum, we can replace summands $(\overbar{aX})+(\overbar{aY})=(\overbar{bcX})+(\overbar{bcY})$ to get an equivalent cell complex. As the elementary subdivision equivalence relation is symmetric, we can perform this move in reverse. If two edges $bc$ appear consecutively in two boundaries, we can replace these two edges with a single edge $a$, i.e. $(\overbar{bcX})+(\overbar{bcY})=(\overbar{aX})+(\overbar{aY})$.

\item The second elementary subdivision says we can replace a single face $A$ with two faces $A',A''$ that share an edge. We can express this algebraically as replacing a face $(\overbar{XY})$ with two faces $(\overbar{Xa})+(\overbar{a^{-1}Y}).$ As the elementary subdivision equivalence relation is symmetric, we can also apply this move in reverse, i.e. $(\overbar{Xa})+(\overbar{a^{-1}Y})=(\overbar{XY})$. 
\begin{figure}[H]
    \centering
    \includegraphics[width=0.45\linewidth]{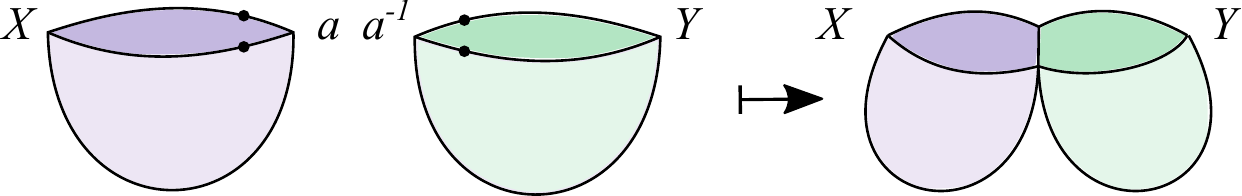}
    \captionsetup{margin=1cm}
    \caption{An example of Move 2. Two faces are glued along their common edge $a$ to create a single face.}
    \label{fig:merge}
\end{figure}

\item Any section $\phi:F\to F\sqcup F^{-1}$ defines the same surface, so we can interchange the section $\phi$ for any other section $\psi$. This replaces one or more summands $B(A)=(\overbar{a_1\ldots a_k})$ in the formal sum with $B(A^{-1})=(\overbar{a_k^{-1}\dots a_1^{-1}})$. 

\item As each summand $(\overbar{a_1\ldots a_k})$ is a cyclically ordered sequence, we can replace a face $(\overbar{a_1\ldots a_k})$ with $(\overbar{a_2\ldots a_k a_1)}$. 

\item We can remove any instance of $aa^{-1}$ from a boundary. Lemma \ref{lem:remove_aa-1} proves this.
\begin{lemma}
\label{lem:remove_aa-1}
Let $K$ be a cell complex. Let $A$ be a face of $K$ such that $B(A)=(\overbar{Xaa^{-1}})$. There is an equivalent cell complex $K'$ without $a$ and $A$ and with a new face $A'$ such that $B(A')=(\overbar{X})$.
\end{lemma}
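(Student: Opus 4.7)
My plan is to prove $K \sim K'$ by first defining $K'$ explicitly and then producing $K$ from $K'$ via a short sequence of elementary subdivisions. To define $K'$, I delete the edge $a$ (and its inverse $a^{-1}$) together with the face $A$ from $K$, and insert a new face $A'$ whose boundary is $(\overbar{X})$. Because $a$ and $a^{-1}$ already occur inside $B(A)$, the cell-complex condition that each element of $E\sqcup E^{-1}$ appears at most twice forces both oriented edges to occur nowhere else in $K$; hence removing $a$ affects no other boundary, and $K'$ is a valid cell complex.

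I then apply three moves in sequence to transform $K'$ into $K$. First, I use Move 2 on $A'=(\overbar{X})$, splitting off an empty suffix to obtain a face with boundary $(\overbar{Xc})$ and a monogon with boundary $(\overbar{c^{-1}})$, where $c$ is a fresh edge. Second, I use Move 1 to subdivide $c$ into the pair $aq$, where $a$ is the (no-longer-present) edge that I want to reintroduce and $q$ is another fresh edge; the two faces now carry boundaries $(\overbar{Xaq})$ and $(\overbar{q^{-1}a^{-1}})$. Third, I apply the inverse of Move 2 to merge these faces along the shared edge $q$: since the first boundary ends with $q$ and the second begins with $q^{-1}$, the merge produces a single face with boundary $(\overbar{Xaa^{-1}})$. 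The resulting cell complex is exactly $K$---the only new edge left is $a$, the merged face is $A$, and every other edge and face is inherited unchanged---so $K'\sim K$.

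The main obstacle I expect is checking two small technicalities. First, I must confirm that the intermediate monogon $(\overbar{c^{-1}})$ is a legitimate face: the definition imposes no lower bound on the length of a boundary sequence, and condition~(2) is satisfied because $c$ also appears on $(\overbar{Xc})$. Second, I must verify that Move 1 is permitted to introduce an edge named $a$, which is allowed because $a$ was deleted when forming $K'$ and so is not present in the complex at the moment Move 1 is applied. If the explicit construction becomes awkward, a cleaner fallback is to appeal directly to the Classification Theorem (\cref{thm:classification_surfaces}): the adjacent pair $aa^{-1}$ in $B(A)$ contributes to no handle, crosscap, or boundary component in the canonical form of $A$'s connected component, so $K$ and $K'$ have the same canonical form component-by-component, and are therefore equivalent.
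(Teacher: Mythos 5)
Your derivation is correct and is essentially the paper's chain of moves read in reverse: the paper reduces $(\overbar{Xaa^{-1}})$ to $(\overbar{X})$ via moves $(2),(3),(1),(3),(2)$, and you run the same chain backwards, folding the two applications of move $(3)$ into a single direct use of the elementary subdivision $c\mapsto aq$, $c^{-1}\mapsto q^{-1}a^{-1}$ applied to $(\overbar{Xc})+(\overbar{c^{-1}})$. Your technical checks on the monogon and on re-introducing the name $a$ are both sound, so the main argument stands without needing the Classification Theorem fallback.
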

\begin{proof}
We apply a series of equivalence-preserving moves to $(\overbar{Xaa^{-1}})$.
\begin{align*}
    (\overbar{Xaa^{-1}})=&(\overbar{Xab})+(\overbar{b^{-1}a^{-1}})&&\hfill\text{by (2)}\\
    =&(\overbar{Xab})+(\overbar{ab})&&\hfill\text{by (3)}\\
    =&(\overbar{Xc})+(\overbar{c})&&\hfill\text{by (1)}\\
    =&(\overbar{Xc})+(\overbar{c^{-1}})&&\hfill\text{by (3)}\\
    =&(\overbar{X})&&\hfill\text{by (2)}\qedhere
\end{align*}
\end{proof}

\begin{figure}[H]
    \centering
    \includegraphics[width=0.35\textwidth]{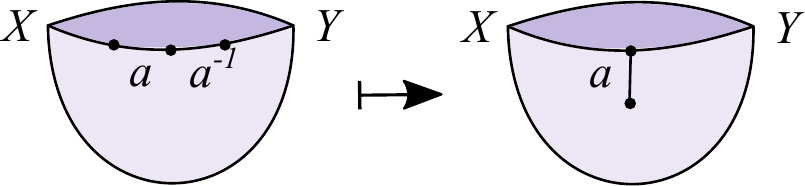}
    \captionsetup{margin=1cm}
    \caption{An example of Move 5. Identifying consecutive edges $a$ and $a^{-1}$ removes these edges from the boundary of the face.}
    \label{fig:edge_inverse}
\end{figure}

\item\textbf{Boundary Components.} The next two moves are inspired by a lecture by Wildberger \cite{wildberger_zip}. If $b$ and $b^{-1}$ both appear in the boundary of a face $B(A)=(\overbar{XbYb^{-1}})$, we think of $X$ and $Y$ as being separate boundary components of the sphere connected by a path $b$. As the interior of the face is an open disk and is therefore path connected, we can connect these boundary components anywhere along $X$ or $Y$. The following lemma formalizes this idea.

\begin{lemma}
\label{lem:cell_multiple_boundaries}
Let $K$ be a cell complex. Let $A$ be a face of $K$ such that $B(A)=(\overbar{X_1X_2bYb^{-1}})$. There is an equivalent cell complex $K'$ without $b$ and $A$ and with a new edge $c$ and a new face $A'$ such that $B(A')=(\overbar{X_2X_1cYc^{-1}})$.
\end{lemma}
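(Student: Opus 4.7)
The plan is to mimic the algebraic style used in the proof of Lemma~\ref{lem:remove_aa-1}, applying a short sequence of equivalence-preserving moves from Section~\ref{sec:moves} directly to the formal sum that encodes the cell complex, and verifying at the end that $b$ genuinely disappears.

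Topologically, the face $A$ with $B(A)=(\overbar{X_1 X_2 b Y b^{-1}})$ is a disk whose boundary reads through an arc labelled $X_1 X_2$, then along the bridge edge $b$, then through an arc labelled $Y$, and finally back along $b^{-1}$; after identifying $b$ with $b^{-1}$, this becomes an annulus-like region with $X_1X_2$ on one boundary component and $Y$ on the other, connected through the interior. Swapping $X_1$ and $X_2$ simply slides the attachment point of the bridge around the outer circle, which is topologically innocuous, so the combinatorial task is to realise this slide with the allowed moves.

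Concretely, I would proceed as follows. First, use move (4) (cyclic rotation) to bring $b^{-1}$ to the front, so that $B(A)$ reads $(\overbar{b^{-1} X_1 X_2 b Y})$. Next, apply move (2) with a fresh edge $c$ inserted between $X_1$ and $X_2$, splitting $A$ into two faces $(\overbar{b^{-1} X_1 c})$ and $(\overbar{c^{-1} X_2 b Y})$. Two further applications of move (4) rotate these into $(\overbar{X_1 c b^{-1}})$ and $(\overbar{b Y c^{-1} X_2})$, respectively. Now move (2) in reverse merges the two faces along the pair $b^{-1}/b$ that sit at the end of the first boundary and the start of the second, eliminating $b$ entirely and producing a single face with boundary $(\overbar{X_1 c Y c^{-1} X_2})$. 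One last cyclic rotation via move (4) brings this to $(\overbar{X_2 X_1 c Y c^{-1}})$, which is exactly the required $B(A')$.

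The only step that needs a sanity check is the reverse merge, since it is only legitimate if $b$ and $b^{-1}$ do not appear anywhere else in the formal sum. This is automatic: condition (2) in the definition of a cell complex forces $b$ and $b^{-1}$ together to appear at most twice across all boundaries in $F\sqcup F^{-1}$, and those appearances are already exhausted by $B(A)$ and $B(A^{-1})$. The only mildly delicate part of the plan is choosing \emph{where} to insert $c$ so that, after the subsequent rotations, the remaining occurrence of $b^{-1}$ ends one face and $b$ starts the other; once that is arranged, composing the above moves yields the equivalent cell complex $K'$ asserted by the lemma.
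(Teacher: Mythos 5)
Your sequence of moves is exactly the one the paper uses: rotate $b^{-1}$ to the front by (4), split off $(\overbar{b^{-1}X_1c})$ and $(\overbar{c^{-1}X_2bY})$ by (2), rotate both by (4), remerge along $b$ by (2), and finish with a final rotation. The approach and the intermediate formal sums match the paper's proof; the closing remark on the legitimacy of the reverse merge is a harmless addition.
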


\begin{proof}
We prove this by repeating applying moves (4) and (2) to the boundary of $A$.
\begin{align*}
    (\overbar{X_1X_2bYb^{-1}})&=(\overbar{b^{-1}X_1X_2bY})&&\hfill\text{by (4)}\\
    &=(\overbar{b^{-1}X_1c})+(\overbar{c^{-1}X_2bY})&&\hfill\text{by (2)}\\
    &=(\overbar{X_1cb^{-1}})+(\overbar{bYc^{-1}X_2})&&\hfill\text{by (4)}\\
    &=(\overbar{X_1cYc^{-1}X_2})&&\hfill\text{by (2)}\\
    &=(\overbar{X_2X_1cYc^{-1}})&&\hfill\text{by (4)}\qedhere
\end{align*}
\end{proof}

The lemma tells us that for a face containing the edges $b$ and $b^{-1}$ of the form $(\overbar{XbYb^{-1}})$, we can cyclically permute $X$ and $Y$ arbitrarily so long as they are connected by some edge and its inverse. We express this choice of connection by introducing new notation. We write $(\overbar{XbYb^{-1}})=(\overbar{X})(\overbar{Y})$ where the formal multiplication of boundaries denotes $(\overbar{X})$ and $(\overbar{Y})$ are connected by some edge $b$ and its inverse $b^{-1}$.  If $B(A)=(\overbar{X})(\overbar{Y})$, we call each factor a \textit{\textbf{boundary component}} of $A$.
\par 
The individual boundary components of $A$ have the same properties as the entire boundary $B(A)$. The lemma implies that $(\overbar{X_1X_2})(\overbar{Y})=(\overbar{X_2X_1})(\overbar{Y})$, i.e. that we can cyclically permute a boundary component while maintaining equivalence. While we cannot invert just one of $(\overbar{X})$ or $(\overbar{Y})$ while maintaining equivalence, it is easy to verify that $(\overbar{X})(\overbar{Y})=(\overbar{Y^{-1}})(\overbar{X^{-1}})$.
\par 
We can also have more than two boundary component. If we have boundary components $(\overbar{X})(\overbar{YaZa^{-1}})$, we can write write this $(\overbar{X})(\overbar{Y})(\overbar{Z})$. We can arbitrarily permute boundary components while maintaining equivalence. This can be seen as 
$$
(\overbar{X})(\overbar{Y})(\overbar{Z})=(\overbar{X})(\overbar{YaZa^{-1}})=(\overbar{X})(\overbar{ZaYa^{-1}})=(\overbar{X})(\overbar{Z})(\overbar{Y}).
$$
\par
For a face $(\overbar{XbYb^{-1}})$, if we remove $b$ to create the face $(\overbar{X})(\overbar{Y})$, we define the successors of edges as substrings of the individual boundary components. In particular, in the face $(\overbar{a_1\ldots a_k})
(\overbar{X})$, then we defined the successor of $a_k$ to be $a_1$.

\begin{figure}[H]
    \centering
    \includegraphics[width=0.45\linewidth]{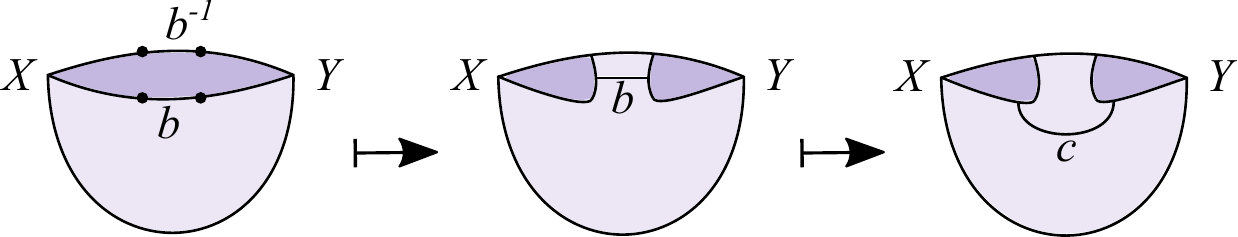}
    \captionsetup{margin=1cm}
    \caption{An example of Move 6. If we identify the edges $b$ and $b^{-1}$ on the boundary of the face on the right, then the face turns into the sphere with two boundary components in the middle. We can connect then these boundary components with any edge $c$, not just $b$, as in the surface on the right.}
    \label{fig:boundaries}
\end{figure}

\item\textbf{Annotations.} Theorem \ref{thm:classification_surfaces} tells us that a surface is completely characterized by the number of handles or crosscaps and number of boundaries it has. We take this idea a step further and prove that up to equivalence a handle, crosscap, or boundary on a face is independent of the rest of the face. The following lemmas formalize this idea. Proofs of these Lemmas can be found in Appendix \ref{appendix:handle_crosscap_boundary}.

\begin{lemma}
\label{lem:cell_handle}
Let $A$ be a face of $K$ such that $B(A)=(\overbar{aba^{-1}b^{-1}XY})$. There is an equivalent cell complex $K'$ without $A$, $a$, and $b$ and with a new face $A'$ and edges $e,f$ such that $B(A')=(\overbar{efe^{-1}f^{-1}YX})$.
\end{lemma}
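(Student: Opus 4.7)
The plan is to exhibit a chain of equivalence-preserving moves taking the face $B(A) = (\overbar{aba^{-1}b^{-1}XY})$ to a face whose boundary is $(\overbar{efe^{-1}f^{-1}YX})$ for fresh internal edges $e, f$. The main engine will be Move~6 (Lemma~\ref{lem:cell_multiple_boundaries}), supported by Moves~2, 3, and 4 for cyclic rotation, face splitting, and face inversion. Topologically, once the pairs $a,a^{-1}$ and $b,b^{-1}$ are identified, the face represents a once-punctured torus whose boundary is parametrized by $XY$; the lemma amounts to saying that there is a self-equivalence of this punctured torus that rotates the boundary circle so that the external portions $X$ and $Y$ appear in the opposite cyclic order, with the handle edges relabeled as $e, f$.

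Concretely, I would first cyclically rewrite $B(A)$ as $(\overbar{XYaba^{-1}b^{-1}})$ by Move~4, then apply Move~6 using $(a, a^{-1})$ as the bridge pair with $Y_{\text{lemma}} = b$ and outside substring $X_1X_2 = b^{-1}XY$. Choosing the nontrivial split $X_1 = b^{-1}X$, $X_2 = Y$ yields the intermediate face $(\overbar{Y\, b^{-1}\, X\, c\, b\, c^{-1}})$, in which $Y$ has been moved in front of $X$ at the price of leaving a stray $b^{-1}$ between them and an incomplete handle pattern $b^{-1} c b c^{-1}$ with $X$ and $Y$ interleaved. I would then iterate Move~6 with freshly introduced bridges (say $(c, c^{-1})$ or $(b, b^{-1})$ in alternation), each time choosing the split so that the external edges $X$ and $Y$ are pushed out of the handle region while the internal edges are gathered into a contiguous handle block. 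After finitely many steps, the face reaches the form $(\overbar{efe^{-1}f^{-1}YX})$ up to cyclic rotation, with $e, f$ being the newest internal edges produced by the chain (replacing the original $a, b$).

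The main obstacle is bookkeeping and termination: a single application of Move~6 only swaps two contiguous substrings of the boundary separated by a bridge, so the cumulative effect that realizes the global reordering $XY \to YX$ must be carefully choreographed, and the intermediate forms typically do not resemble a clean handle plus external part. Establishing that the iteration actually reaches the target requires a progress measure (for instance, the number of external edges still interleaved with handle edges) that can be forced to decrease by the right choice of split at each stage. An alternative and potentially cleaner route is to work inside the boundary-component notation $(\overbar{X})(\overbar{Y})$ introduced after Move~6: decompose the face through the bridge $(b, b^{-1})$ as $(\overbar{XYa})(\overbar{a^{-1}})$, apply the identities $(\overbar{X})(\overbar{Y}) = (\overbar{Y})(\overbar{X})$ and $(\overbar{X})(\overbar{Y}) = (\overbar{Y^{-1}})(\overbar{X^{-1}})$ together with cyclic permutation of individual boundary components, recombine by a reverse Move~6 into a single face, and use Move~3 to cancel any residual inversions. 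Either way, the hard part is tracking the edge relabelings so that the final face has precisely the form $(\overbar{efe^{-1}f^{-1}YX})$ demanded by the lemma.
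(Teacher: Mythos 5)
Your plan is the paper's: alternate cyclic rotation (Move~4) with Lemma~\ref{lem:cell_multiple_boundaries} (Move~6) to shuffle $X$ and $Y$ past one another while regenerating the handle with fresh edges. Your one explicit application of Move~6, yielding $(\overbar{Y\, b^{-1}\, X\, c\, b\, c^{-1}})$, is — up to renaming fresh edges and exchanging an edge for its inverse — exactly the paper's second intermediate $(\overbar{a^{-1}cYaXc^{-1}})$. So the first step is correct and lands in the right place.

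The gap is that you stop there and hand control to an unspecified ``progress measure'' for termination. No such argument appears in, or is needed for, the paper's proof: the derivation is a \emph{fixed} chain of exactly four applications of Lemma~\ref{lem:cell_multiple_boundaries}, each preceded by a cyclic rotation (Move~4) to put the boundary in the form $(\overbar{X_1X_2bYb^{-1}})$. Starting from your intermediate, one rotates to bring the next bridge pair into place and applies the lemma three more times, arriving in succession at $(\overbar{Xc^{-1}d^{-1}Ycd})$, then $(\overbar{dXeYd^{-1}e^{-1}})$, then $(\overbar{e^{-1}f^{-1}YXef})$, which after a final rotation is $(\overbar{efe^{-1}f^{-1}YX})$. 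Each step is a routine pattern match; none of it requires an inductive termination argument, just writing the chain down and checking it. Your alternative route through the factored notation $(\overbar{X})(\overbar{Y})$ is a repackaging of Move~6 and would still require the same tracking of fresh bridge edges, so it does not spare you the bookkeeping — to complete the proof, exhibit the remaining three applications explicitly.
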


\begin{lemma}
\label{lem:cell_crosscap}
Let $A$ be a face of $K$ such that $B(A)=(\overbar{aaXY})$. There is an equivalent cell complex $K'$ without $A$ and $a$ a new face $A'$ and edge $d$ such that $B(A')=(\overbar{ddYX})$.
\end{lemma}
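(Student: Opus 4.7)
The plan is to follow the algebraic style of Lemmas \ref{lem:remove_aa-1}, \ref{lem:cell_multiple_boundaries}, and especially \ref{lem:cell_handle}: chain together a finite sequence of Moves (2), (3), and (4) that eliminates the edge $a$, introduces a new edge $d$, and permutes $X$ and $Y$ across the crosscap. The manipulation will divide naturally into two split--invert--merge cycles. After the first cycle the consecutive crosscap $aa$ will have been converted into a ``spread-out'' crosscap on a temporary edge $c$, with $X$ sitting between its two occurrences; the second cycle will collapse that spread-out crosscap back into a consecutive one while simultaneously swapping the positions of $X$ and $Y$.

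In the first cycle I would cyclically rotate with (4) to read the boundary as $(\overbar{YaaX})$, then split between the two $a$'s with (2), introducing a fresh edge $c$ and producing sub-faces $(\overbar{Yac})$ and $(\overbar{c^{-1}aX})$. Inverting the first sub-face with (3) turns its $a$ into $a^{-1}$, so the reverse of (2) cancels $a$ and merges the two sub-faces into the single face $(\overbar{Y^{-1}c^{-1}Xc^{-1}})$. A cyclic rotation then rewrites this as $(\overbar{c^{-1}Xc^{-1}Y^{-1}})$, in which $c^{-1}$ plays the role of the two crosscap edges, now separated by the block $X$.

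In the second cycle I would split that face via (2) after the substring $c^{-1}X$, introducing a new edge $f$ to obtain $(\overbar{c^{-1}Xf})+(\overbar{f^{-1}c^{-1}Y^{-1}})$. Applying (3) to the second sub-face rewrites it as $(\overbar{Ycf})$, simultaneously turning $Y^{-1}$ back into $Y$ and flipping the orientation of $c$. The reverse of (2) along $c$ then cancels $c$ and yields the single face $(\overbar{XffY})$, and a final cyclic rotation with (4) produces $(\overbar{ffYX})$. Relabelling the new edge $f$ as $d$ gives $B(A')=(\overbar{ddYX})$, completing the construction of $K'$; by design the only edge removed from $K$ is $a$ and the only edge added is $d$.

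The main obstacle will be recognising that a single split--merge pass does not suffice: one must first ``spread'' the crosscap apart so that $X$ is sandwiched between the two copies of the crosscap edge, and only then can a second invert-and-merge effect the swap, with the inversion of $Y^{-1}$ back to $Y$ during the second merge being the move that actually realises the permutation of $X$ and $Y$. Once this two-stage structure is identified, verifying that each Move application is valid and that the edge orientations line up for the two reverse-(2) merges is straightforward algebraic bookkeeping of the sort already carried out in the proofs of Lemmas \ref{lem:remove_aa-1} and \ref{lem:cell_multiple_boundaries}.
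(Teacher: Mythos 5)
Your chain of moves (2), (3), (4) is correct and is essentially the paper's own proof: the paper's version factors the same manipulation into a sub-lemma $(\overbar{aaXY})=(\overbar{bYbX^{-1}})$ applied twice (with a (3)/(4) step in between), and your two split--invert--merge passes reproduce exactly those two applications, arriving at the same intermediate state $(\overbar{c^{-1}Xc^{-1}Y^{-1}})$ (the paper's $(\overbar{cXcY^{-1}})$ up to relabelling $c\mapsto c^{-1}$) before the second pass.
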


\begin{lemma}
\label{lem:cell_boundary}
Let $A$ be a face of $K$ such that $B(A)=(\overbar{bab^{-1}XY})$ such that $a$ appears once in the boundary of all faces. There is an equivalent cell complex $K'$ without $A$ and $b$, a new face $A'$ and edges $c,d$ such that $B(A')=(\overbar{cdc^{-1}YX})$. 
\end{lemma}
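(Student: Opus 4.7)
The plan is to reach the target form $(\overbar{cdc^{-1}YX})$ via a short sequence of the equivalence-preserving moves defined earlier in the section, using essentially only Lemma~\ref{lem:cell_multiple_boundaries} (Move (6)) together with cyclic permutations (Move (4)). The intuition is the topological picture of sliding a boundary component along the boundary of the face: since $a$ is the unique appearance of its edge in the complex, the substring $bab^{-1}$ is the algebraic encoding of a boundary component attached to the disk via the bridge edge $b$, and Lemma~\ref{lem:cell_multiple_boundaries} already tells us how to reroute such a bridge.

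Concretely, the derivation I plan to carry out is
\begin{align*}
(\overbar{bab^{-1}XY})
 &= (\overbar{XYbab^{-1}}) \\
 &= (\overbar{YXcac^{-1}}) \\
 &= (\overbar{cac^{-1}YX}),
\end{align*}
where the first and third equalities are cyclic rotations of the face boundary (Move (4)), and the middle equality is a single application of Lemma~\ref{lem:cell_multiple_boundaries} with $X_1 := X$, $X_2 := Y$ and with the lemma's interior substring taken to be the single edge $a$. That application does the real work: it removes the old bridge $b$ from the complex, introduces a fresh edge $c$, and swaps the positions of $X$ and $Y$ around the face, which is exactly the reordering required by the conclusion.

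The only remaining discrepancy between the derivation and the form demanded by the statement is that the lemma names the surviving boundary edge $d$, whereas the computation above still calls it $a$. Because $a$ appears exactly once in the entire cell complex (on the face $A$, which is itself being replaced by the new face $A'$), the renaming $a \mapsto d$ changes only the label of a singleton edge and produces literally the same cell complex up to the bijective renaming of the sets $E$ and $F$ that is implicit in the definition. I expect this last bookkeeping step to be the main pitfall in writing the proof cleanly: one must either declare this relabeling convention explicitly, or, to stay strictly inside Moves (1)--(6), insert a short post-processing on the formal sum (for example, subdividing $a$ via Move (1) into $d$ and an auxiliary edge and then cancelling the auxiliary edge using Move (2) and Lemma~\ref{lem:remove_aa-1}). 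No genuinely new topological or algebraic idea is needed beyond Lemma~\ref{lem:cell_multiple_boundaries}; the proof is essentially a special case of its boundary-sliding argument applied to the degenerate "boundary" whose interior is a single edge rather than a nontrivial substring.
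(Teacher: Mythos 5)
Your proposal matches the paper's proof in essence: both reduce the lemma to a cyclic rotation (Move (4)) followed by a single application of Lemma~\ref{lem:cell_multiple_boundaries} (Move (6)), and both then dispose of the residual old edge $a$ by relabeling it $d$. The one quibble is your suggested fallback mechanism for the relabeling---subdividing $a$ and then cancelling the auxiliary edge via Lemma~\ref{lem:remove_aa-1}---cannot work as stated, since the auxiliary edge would appear only once and never in the consecutive form $ee^{-1}$ that Lemma~\ref{lem:remove_aa-1} requires; the paper instead realizes the relabeling entirely with Move (1), subdividing $a$ into two fresh edges $ef$ and then merging $ef$ back into a single new edge $d$, which is valid precisely because $a$ appears only once so both Move (1) and its inverse act on that lone occurrence.
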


If we have a face $(\overbar{HX})$ where $H$ is a handle, we can store a face equivalent to $(\overbar{HX})$ by storing $(\overbar{X})$ as a face and simply noting that $(\overbar{X})$ has a handle. We can later attach a handle anywhere along $(\overbar{X})$ by Lemma \ref{lem:cell_handle} and regain an equivalent cell complex. The same applies if $H$ is a crosscap or a boundary. These lemmas motivates a new data structure, the \textit{\textbf{annotated cell complex}}. An annotated cell complex is a cell complex where each face is annotated with a genus, number of boundary components, and a boolean to indicate whether or not this face is orientable. 
\par
With this extra information, we can store an annotated cell complex equivalent to a cell complex that is defined with fewer edges. Furthermore, by Theorem \ref{thm:classification_surfaces}, any connected cell complex is equivalent to an annotated cell complex with \textit{no} edges. For example, the cell complex of the torus $(\overbar{aba^{-1}b^{-1}})$ is equivalent to the annotated cell complex $()$ with the face annotated to have genus 1, 0 boundary components, and to be orientable. If a cell complex is disconnected, then the cell complex is equivalent to a sum of faces $()$ with no edges and just annotations. We call a face with no edges an \textit{\textbf{empty face}}. 
\begin{figure}[H]
        \centering
        \begin{subfigure}{0.4\textwidth}
            \centering
            \includegraphics[height=1in]{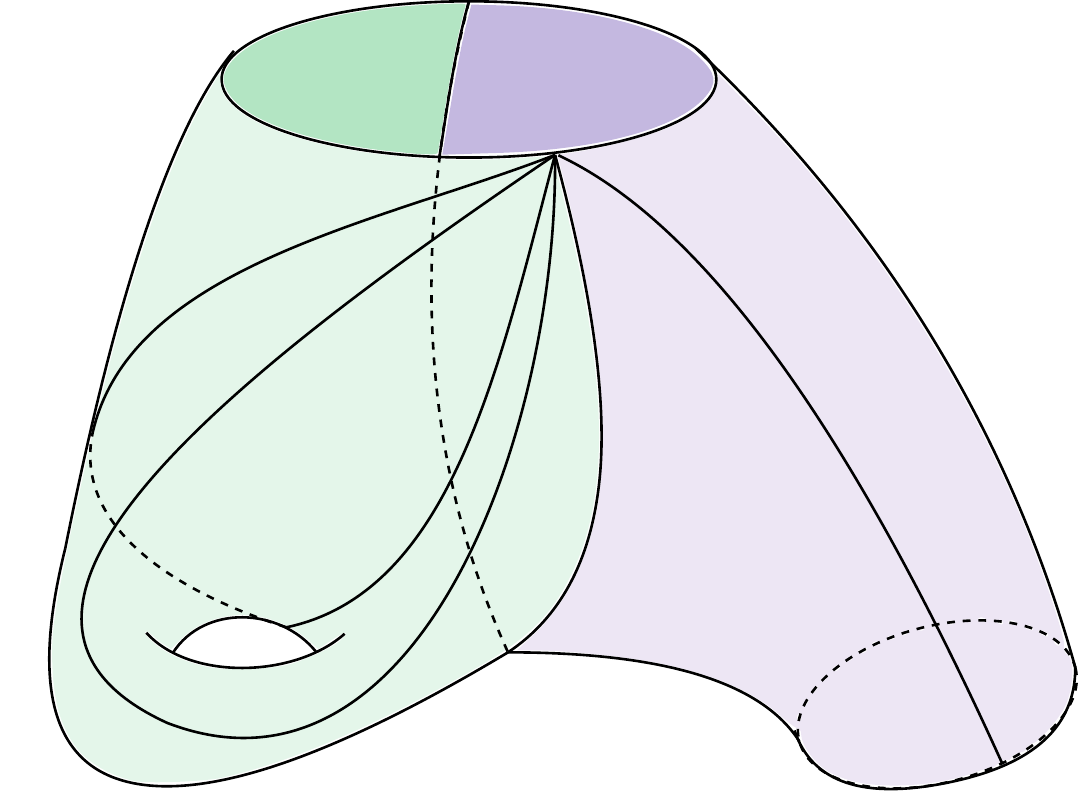}    
        \end{subfigure}
        \begin{subfigure}{0.55\textwidth}
            \centering
            \includegraphics[height=1in]{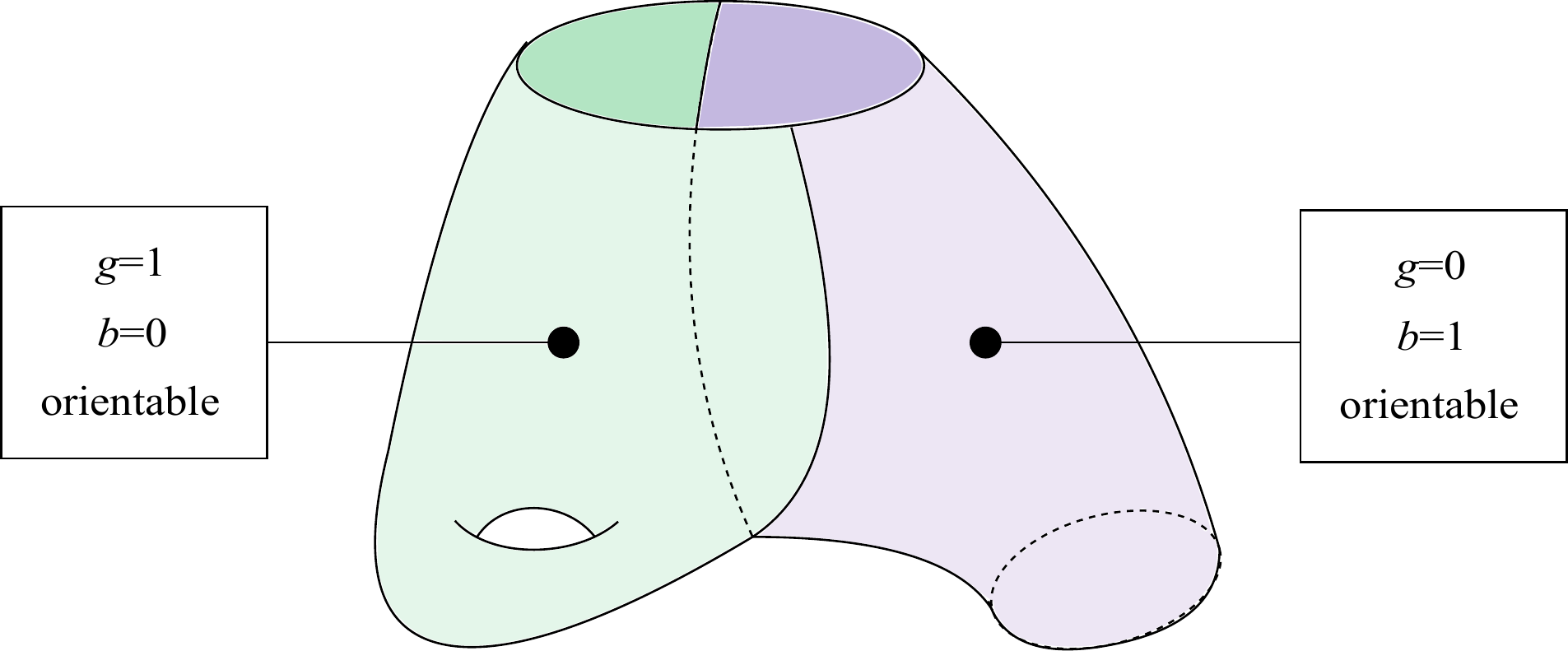}
        \end{subfigure}
        \label{fig:annotations}
        \captionsetup{margin=1cm}
        \caption{Annotations allow us to store cell complexes using fewer edges. In the cell complex in this example, the face on the left has a handle and the face on the right has a boundary. We are able to record these features with annotations instead of the edges that actually make up these features.}
\end{figure}

\end{enumerate}

\subsubsection{Cell Complex Miscellany}
We now present several lemmas without commentary that we will use later for classifying surfaces. The lemmas give more general criteria for identifying a handle or crosscap on a face than having substrings of the form $aba^{-1}b^{-1}$ or $aa$ in its boundary. Each of the lemmas is from the proof of Lemma 6.1 in \cite{gallier_xu}; each of the corollaries is some restatement of the lemmas using the notation of boundary components.

\begin{lemma}
\label{lem:handle_equivalence}
Let $A$ be a face of $K$ such that $B(A)=(\overbar{aUbVa^{-1}Xb^{-1}Y})$. There is an equivalent cell complex $K'$ without $A$, $a$, and $b$ and with a new face $A'$ and new edges $c,d$ such that $B(A')=(\overbar{cdc^{-1}d^{-1}YXVU})$.
\end{lemma}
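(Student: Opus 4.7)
The plan is to combine two cut-and-paste operations with one application of Lemma~\ref{lem:cell_multiple_boundaries}, mirroring the standard reduction used in the Classification Theorem for Compact Surfaces. Each cut-and-paste consists of applying move~(2) to split the face along a new edge, followed by its inverse to merge the two resulting sub-faces along an existing edge.

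First, I would eliminate $a$ by cut-and-paste: cyclically rotate (move~(4)) to $(\overbar{bVa^{-1}Xb^{-1}YaU})$, then apply move~(2) with the cut placed between $b^{-1}$ and $Y$ so that $a$ and $a^{-1}$ land on different sub-faces. Merging the two sub-faces along $a$ via move~(2) in reverse yields the single face $(\overbar{Xb^{-1}cbVUc^{-1}Y})$. Repeating the same procedure on $b$ (introducing a new edge $d$ by splitting between $X$ and $b^{-1}$, then merging along $b$) and cyclically rotating gives $(\overbar{VUc^{-1}YXdcd^{-1}})$. In this form the four handle edges $c,c^{-1},d,d^{-1}$ are not yet adjacent, but $d$ and $d^{-1}$ bracket a region containing only the single edge $c$.

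Second, I would apply Lemma~\ref{lem:cell_multiple_boundaries} to the bracketing pair $d, d^{-1}$, with inner substring $c$ and the split $X_1 = VU$, $X_2 = c^{-1}YX$ of the outer substring. The lemma replaces $d$ with a new edge $e$ and yields $(\overbar{c^{-1}YXVUece^{-1}})$. A cyclic rotation then produces $(\overbar{ece^{-1}c^{-1}YXVU})$, and renaming $e \mapsto c$ and $c \mapsto d$ gives exactly the target $(\overbar{cdc^{-1}d^{-1}YXVU})$.

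The main obstacle is the careful bookkeeping of the cyclic positions and orientations of the substrings $U, V, X, Y$ through the three operations to verify that they end up precisely in the reversed order $YXVU$. This reversal emerges naturally: each cut-and-paste effectively swaps two portions of the cyclic boundary, and the final application of Lemma~\ref{lem:cell_multiple_boundaries} reverses another pair of segments, together producing the reversal $UVXY \to YXVU$ relative to the handle substring.
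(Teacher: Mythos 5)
Your derivation is correct. Note first that the paper does not give its own proof of Lemma~\ref{lem:handle_equivalence}: it is one of the statements prefaced by ``each of the lemmas is from the proof of Lemma~6.1 in~\cite{gallier_xu},'' so there is no internal argument to compare against. The closest proved analogue in the paper is the special case Lemma~\ref{lem:cell_handle} in Appendix~\ref{appendix:handle_crosscap_boundary}, which telescopes four applications of Lemma~\ref{lem:cell_multiple_boundaries} interleaved with cyclic rotations, and your plan is structurally of the same kind: each of your two preliminary ``cut-and-paste'' operations --- a move~(2) split, rotations, and a reverse move~(2) merge along the targeted edge --- is precisely a rotate--split--rotate--merge--rotate, which is how Lemma~\ref{lem:cell_multiple_boundaries} is itself proved, so each of them is (up to cyclic rearrangement of the two resulting arcs) one application of that lemma. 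I checked the bookkeeping: the first cut-and-paste eliminates $a$ and yields $(\overbar{Xb^{-1}cbVUc^{-1}Y})$; the second eliminates $b$ and yields $(\overbar{VUc^{-1}YXdcd^{-1}})$ (the $d$ versus $d^{-1}$ orientation is just the naming convention for the freshly introduced edge and has no effect on the rest of the argument); and the final application of Lemma~\ref{lem:cell_multiple_boundaries} with $X_1=VU$, $X_2=c^{-1}YX$, inner string $c$, and bracketing pair $d,d^{-1}$ gives $(\overbar{c^{-1}YXVUece^{-1}})$, which rotates and renames to the target $(\overbar{cdc^{-1}d^{-1}YXVU})$. Your closing observation that the three operations accumulate the reversal $UVXY\to YXVU$ is also the right structural explanation and parallels the ``$YX$'' that appears in the conclusion of Lemma~\ref{lem:cell_handle}.
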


\begin{corollary}
\label{cor:handle_boundary_equivalence}
Let $A$ be a face of $K$ such that $B(A)=(\overbar{aX})(\overbar{a^{-1}Y})$. There is an equivalent cell complex $K'$ without $A$ and $a$ and with a new face $A'$ and new edge $c,d$ such that $B(A')=(\overbar{cdc^{-1}d^{-1}YX})$.
\end{corollary}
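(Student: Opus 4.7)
The plan is to reduce this to Lemma \ref{lem:handle_equivalence} by unpacking the boundary-component notation. Recall that by Lemma \ref{lem:cell_multiple_boundaries} (Move 6), writing a face as a formal product $(\overbar{X})(\overbar{Y})$ of two boundary components is shorthand for saying that its underlying boundary is $(\overbar{XbYb^{-1}})$ for some edge $b$ that joins the two components. So the hypothesis $B(A) = (\overbar{aX})(\overbar{a^{-1}Y})$ is equivalent to asserting the existence of an edge $b$ with
\[
B(A) = (\overbar{aX\, b\, a^{-1}Y\, b^{-1}}).
\]

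Next, I would match this string against the left-hand side of Lemma \ref{lem:handle_equivalence}, which has the form $(\overbar{a U b V a^{-1} X' b^{-1} Y'})$. Reading off the substrings, I take $U = X$, $V = \emptyset$, $X' = Y$, and $Y' = \emptyset$. Applying the lemma produces an equivalent cell complex in which $A$, $a$, and $b$ are all removed and replaced by a new face $A'$ together with two new edges $c, d$, where
\[
B(A') = (\overbar{cdc^{-1}d^{-1}\, Y' X' V U}) = (\overbar{cdc^{-1}d^{-1} Y X}),
\]
which is exactly the conclusion sought. Note that the auxiliary edge $b$ introduced by Move 6 does not appear in the final complex, so no extra edges beyond $c, d$ are created, matching the statement.

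The only subtlety is confirming that Lemma \ref{lem:handle_equivalence} remains valid when $V$ and $Y'$ are empty substrings. Because the equivalence-preserving moves (1)--(6) only manipulate cyclic sequences of oriented edges, an empty substring simply means the two flanking edges are cyclically adjacent; no move used in the derivation requires a nonempty gap. Hence the lemma applies verbatim, and the corollary follows as a one-line reduction. I expect no real obstacle here: the content is already packed into Lemma \ref{lem:handle_equivalence}, and the corollary is just its reformulation in the boundary-component notation introduced in Move 6.
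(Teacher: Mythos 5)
Your proof is correct and is exactly the intended derivation: the paper states that each corollary in that subsection ``is some restatement of the lemmas using the notation of boundary components,'' and your argument makes that precise by unpacking $(\overbar{aX})(\overbar{a^{-1}Y})$ to $(\overbar{aXba^{-1}Yb^{-1}})$ via Move (6) and then matching it against Lemma~\ref{lem:handle_equivalence} with $U=X$, $V=\emptyset$, $X'=Y$, $Y'=\emptyset$. Your observation that both $a$ and the auxiliary $b$ are removed while only $c,d$ are introduced, and that the lemma poses no obstruction when $V$ and $Y'$ are empty, correctly addresses the only points that need checking.
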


\begin{lemma}
\label{lem:crosscap_boundary}
Let $A$ be a face of $K$ such that $B(A)=(\overbar{aXaY})$. There is an equivalent cell complex $K'$ without $A$ and $a$ and with a new face $A'$ and new edge $b$ such that $B(A')=(\overbar{bbY^{-1}X})$.
\end{lemma}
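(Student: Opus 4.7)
My plan is to adapt the cut-and-reglue strategy used for Lemmas \ref{lem:cell_crosscap} and \ref{lem:handle_equivalence}: split the face along a fresh edge separating the two copies of $a$, invert one of the two resulting pieces, and re-glue along $a$. Because the two occurrences of $a$ share the same orientation, the re-gluing is forced to happen in a way that reverses one of the substrings, which is what produces the $Y^{-1}$ and the consecutive pair $bb$ in the target boundary. Topologically this mirrors the familiar picture of cutting a Möbius-band-like configuration and regluing the ``wrong way''.

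Concretely, I would start by applying Move 2 to cut the face along a new edge $b$ placed between the two occurrences of $a$:
\[
(\overbar{aXaY}) = (\overbar{aXb}) + (\overbar{b^{-1}aY}).
\]
The two pieces share $b$, but both contain $a$ in the same direction, so $a$ cannot yet be eliminated by Move 2 in reverse. To remedy this I would apply Move 3 to invert the second piece, giving $(\overbar{b^{-1}aY}) = (\overbar{Y^{-1}a^{-1}b})$. Now the two faces are $(\overbar{aXb})$ and $(\overbar{Y^{-1}a^{-1}b})$; they contain $a$ with opposite orientations and both contain $b$ in the same orientation.

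Next I would use Move 4 to cyclically rotate each face so that $a$ and $a^{-1}$ sit adjacent to the shared $b$'s, producing $(\overbar{Xba})$ and $(\overbar{a^{-1}bY^{-1}})$. Applying Move 2 in reverse to collapse the common edge $a$ then yields the single face $(\overbar{XbbY^{-1}})$, and a final cyclic rotation (Move 4) gives $(\overbar{bbY^{-1}X})$, as required. The edge $a$ and the face $A$ have both been removed and replaced by the new edge $b$ and new face $A'$.

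The step requiring care is the choice of which face to invert and how to line up the cyclic rotations before the final merge, since this determines the order and orientation of $X$ and $Y$ in the resulting boundary. Inverting the second face (rather than the first) is precisely what produces the $Y^{-1}$ and places $X$ after $bb$, yielding the asymmetric form $(\overbar{bbY^{-1}X})$ stated in the lemma rather than some other permutation such as $(\overbar{bbXY})$.
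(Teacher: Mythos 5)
Your derivation is correct: the chain
\[
(\overbar{aXaY}) \overset{(2)}{=} (\overbar{aXb}) + (\overbar{b^{-1}aY}) \overset{(3)}{=} (\overbar{aXb}) + (\overbar{Y^{-1}a^{-1}b}) \overset{(4)}{=} (\overbar{Xba}) + (\overbar{a^{-1}bY^{-1}}) \overset{(2)}{=} (\overbar{XbbY^{-1}}) \overset{(4)}{=} (\overbar{bbY^{-1}X})
\]
checks out at every step, including the boundary cases where $X$ or $Y$ is empty. The paper does not write out a proof of this particular lemma (it defers to Gallier--Xu), but your cut--invert--reglue pattern is the same technique the appendix uses for the closely related Lemma~\ref{lem:cell_crosscap}, so this is essentially the paper's approach.
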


\begin{corollary}
\label{cor:cell_crosshandle}
Let $A$ be a face of $K$ such that $B(A)=(\overbar{aX})(\overbar{aY})$. There is an equivalent cell complex $K'$ without $A$ and $a$, a new face $A'$ and new edges $b,c$ such that $B(A')=(\overbar{ccXbbY^{-1}})$.
\end{corollary}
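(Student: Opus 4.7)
The plan is to reduce the hypothesis to the conclusion via two applications of Lemma~\ref{lem:crosscap_boundary} followed by a single application of Lemma~\ref{lem:cell_crosscap}. First, I would unfold the boundary-components notation: by definition of the product $(\overbar{X})(\overbar{Y})$, the hypothesis reads
$$
B(A) = (\overbar{aX})(\overbar{aY}) = (\overbar{aXdaYd^{-1}})
$$
for some connector edge $d$. The key observation is that the edge $a$ occurs twice with the same orientation in one boundary cycle, which is precisely the pattern consumed by Lemma~\ref{lem:crosscap_boundary}.

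Applying Lemma~\ref{lem:crosscap_boundary} with its ``$X$'' and ``$Y$'' matched to $Xd$ and $Yd^{-1}$ deletes $a$ and introduces a new edge, which I name $b$, producing a face with boundary
$$
(\overbar{bb \cdot (Yd^{-1})^{-1} \cdot (Xd)}) = (\overbar{bbdY^{-1}Xd}).
$$
The edge $d$ now appears twice with the same orientation. A cyclic rotation (Move~(4)) puts this in the form $(\overbar{dY^{-1}Xdbb})$, and a second application of Lemma~\ref{lem:crosscap_boundary}, this time with ``$X$'' and ``$Y$'' matched to $Y^{-1}X$ and $bb$, deletes $d$ and introduces another fresh edge, giving
$$
(\overbar{cc(bb)^{-1}(Y^{-1}X)}) = (\overbar{ccb^{-1}b^{-1}Y^{-1}X}).
$$
Since $b$ was itself freshly introduced a step earlier and occurs only inside this face, I am free to choose which of its two oriented copies bears the label ``$b$''; swapping that convention converts $b^{-1}b^{-1}$ into $bb$, leaving the boundary $(\overbar{ccbbY^{-1}X})$.

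The final step is to slide the $cc$ past $X$. Invoking Lemma~\ref{lem:cell_crosscap} with its ``$X$'' and ``$Y$'' matched to $bbY^{-1}$ and $X$ rewrites $(\overbar{cc \cdot bbY^{-1} \cdot X})$ as $(\overbar{dd \cdot X \cdot bbY^{-1}}) = (\overbar{ddXbbY^{-1}})$; renaming the new edge $d$ as $c$ yields exactly the target boundary $(\overbar{ccXbbY^{-1}})$. Along the way the original face $A$, the edge $a$, and the implicit connector $d$ have all been discarded, and the only new edges remaining are the announced $b$ and $c$.

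The main obstacle is not topological but bookkeeping. One must carefully match each invocation of a lemma to its $X$ and $Y$ arguments (noting that Lemma~\ref{lem:crosscap_boundary} inverts and swaps its $Y$, whereas Lemma~\ref{lem:cell_crosscap} merely swaps), and one must justify the single relabeling that turns $b^{-1}b^{-1}$ into $bb$. This relabeling is legitimate precisely because $b$ was introduced fresh by Lemma~\ref{lem:crosscap_boundary} and appears only inside the current face, so swapping its oriented copies yields an isomorphic cell complex. No further topological insight is needed beyond recognising that gluing two boundary components along an edge whose two occurrences share the same orientation is topologically the two-crosscap configuration, which is why the sequence of lemmas produces exactly two crosscaps in the output.
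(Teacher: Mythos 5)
Your proof is correct and supplies exactly what the paper leaves implicit (it says only that the corollaries are ``restatements of the lemmas using the notation of boundary components''): you unfold $(\overbar{aX})(\overbar{aY})$ into the single cycle $(\overbar{aXdaYd^{-1}})$, eliminate $a$ and then the connector $d$ by two applications of Lemma~\ref{lem:crosscap_boundary}, and finally use Lemma~\ref{lem:cell_crosscap} to commute the crosscap $cc$ past $X$ to reach $(\overbar{ccXbbY^{-1}})$. One small remark: the relabeling turning $b^{-1}b^{-1}$ into $bb$ needs no special justification — the involution swapping $b\leftrightarrow b^{-1}$ is always a cell-complex isomorphism, regardless of how many faces $b$ appears in — and you could skip the final Lemma~\ref{lem:cell_crosscap} step entirely by first rotating to $(\overbar{Xa})(\overbar{aY})=(\overbar{adaYd^{-1}X})$ before applying Lemma~\ref{lem:crosscap_boundary}, which lands directly on $(\overbar{ccXb^{-1}b^{-1}Y^{-1}})$.
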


\begin{lemma}[Dyck's Theorem]
\label{lem:handle_crosscap_equivalence}
Let $A$ be a face of $K$ with a handle and a crosscap. There is an equivalent cell complex $K'$ without $A$  with a new face $A'$ such that $A'$ has three crosscaps.
\end{lemma}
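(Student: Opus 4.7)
The plan is to use the existing crosscap to flip an edge of the handle, converting the handle into two additional crosscaps via the cross-handle identity (Corollary~\ref{cor:cell_crosshandle}); combined with the original crosscap this yields three crosscaps in total. The heart of the argument is to engineer, starting from the handle-plus-crosscap configuration, a face with two boundary components in which some edge appears with the \emph{same} orientation in both components, since that is exactly the configuration Corollary~\ref{cor:cell_crosshandle} converts into two new crosscaps.

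By Lemmas~\ref{lem:cell_handle} and~\ref{lem:cell_crosscap}, I may assume without loss of generality that $B(A) = (\overline{pp\,qrq^{-1}r^{-1}\,Z})$, where $pp$ is the crosscap, $qrq^{-1}r^{-1}$ is the handle, and $Z$ collects the remainder of the boundary. I would then apply Move~6 to the pair $q,q^{-1}$: writing the boundary cyclically as $(\overline{XqYq^{-1}})$ with $Y = r$ and $X = r^{-1}Z\,pp$, and then cyclically rotating the resulting first component, yields
\[
B(A) \;\sim\; (\overline{pp\,r^{-1}\,Z})(\overline{r}).
\]
At this point the edge $r$ sits in both boundary components but with \emph{opposite} orientations; Corollary~\ref{cor:handle_boundary_equivalence} alone would merely reabsorb this into a handle. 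The extra leverage is the crosscap $pp$ that still sits inside the first component.

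Next I would apply Lemma~\ref{lem:crosscap_boundary} to the first boundary component, using the two occurrences of $p$ as the repeated edge: with $X = \emptyset$ and $Y = r^{-1}Z$ the lemma replaces $(\overline{pp\,r^{-1}\,Z})$ by $(\overline{dd\,Z^{-1}\,r})$ for a new edge $d$. Crucially, the inversion $Y \mapsto Y^{-1}$ built into the lemma has flipped the $r^{-1}$ into $r$, so the face is now $(\overline{dd\,Z^{-1}\,r})(\overline{r})$, in which $r$ appears with the same orientation in both components. Cycling the first component to the form $(\overline{r\,dd\,Z^{-1}})(\overline{r})$ and applying Corollary~\ref{cor:cell_crosshandle} with $a = r$, $X = dd\,Z^{-1}$, $Y = \emptyset$ produces a single-component face $B(A') = (\overline{\alpha\alpha\,dd\,Z^{-1}\,\beta\beta})$ for new edges $\alpha,\beta$, which manifestly contains the three crosscaps $\alpha\alpha$, $dd$, $\beta\beta$, as required.

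The main obstacle I anticipate is the application of Lemma~\ref{lem:crosscap_boundary} to a single boundary component rather than to an entire face boundary. The preceding discussion in the paper that ``the individual boundary components of $A$ have the same properties as the entire boundary $B(A)$'' is what licenses this step, but a strict justification would re-run the proof of Lemma~\ref{lem:crosscap_boundary} with the second boundary component carried along as an inert passenger, checking that each of the local moves used in its proof commutes with the presence of $(\overline{r})$; this is routine bookkeeping but is the one subtlety a careful write-up must address.
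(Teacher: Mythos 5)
The paper itself gives no proof of this lemma (the entire subsection cites Gallier and Xu for all its statements), so your argument has to stand on its own. Your overall strategy is on the right track, but the step where you apply Lemma~\ref{lem:crosscap_boundary} to ``the first boundary component'' contains a sign error that is fatal as written. That lemma is a statement about the full boundary word, so one must first unfold the two components into a single cyclic sequence, $(\overbar{pp\,r^{-1}Z})(\overbar{r}) = (\overbar{pp\,r^{-1}Z\,brb^{-1}})$ for a fresh connecting edge $b$. With the two $p$'s adjacent, the only reading as $(\overbar{pXpY})$ is $X=\emptyset$, $Y = r^{-1}Zbrb^{-1}$, and the lemma gives $(\overbar{ff\,Y^{-1}X}) = (\overbar{ff\,br^{-1}b^{-1}Z^{-1}r})$, which splits at the $b,b^{-1}$ pair into $(\overbar{Z^{-1}rff})(\overbar{r^{-1}})$. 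The substring $brb^{-1}$ lies inside $Y$ and is inverted along with the rest of it, so the second component becomes $(\overbar{r^{-1}})$, not $(\overbar{r})$: the edge $r$ still appears with \emph{opposite} orientations in the two components. You are therefore in the handle regime of Corollary~\ref{cor:handle_boundary_equivalence}, not the crosshandle regime of Corollary~\ref{cor:cell_crosshandle}, and the argument stalls. This is precisely the ``routine bookkeeping'' you deferred, and it does not go through as an inert passenger.

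The repair is short but genuinely needed. Before applying Lemma~\ref{lem:crosscap_boundary}, use Lemma~\ref{lem:cell_multiple_boundaries} to cyclically permute the first component so the crosscap edges straddle the connector: $(\overbar{pp\,r^{-1}Z\,brb^{-1}}) \sim (\overbar{p\,r^{-1}Z\,p\,crc^{-1}})$. Now $X = r^{-1}Z$ sits between the $p$'s (and will \emph{not} be inverted), while $Y = crc^{-1}$ carries the second component (and \emph{will} be inverted). Lemma~\ref{lem:crosscap_boundary} yields $(\overbar{ff\,cr^{-1}c^{-1}\,r^{-1}Z})$, which splits into $(\overbar{r^{-1}Zff})(\overbar{r^{-1}})$. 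Here $r^{-1}$ appears with the same orientation in both components, so Corollary~\ref{cor:cell_crosshandle} applies and produces the desired three crosscaps.
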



\subsection{Closed Tree Decomposition} 
\label{sec:closed_td}

Our algorithm will use a special type of tree decomposition of the Hasse diagram that we call a \textit{\textbf{closed tree decomposition}}. In this section, we prove that we can convert any tree decomposition of the Hasse diagram to a closed tree decomposition while only increasing the width by a constant multiplicative factor. We then prove that we can always find a closed tree decomposition that is also nice. 
\par 
Let $H$ denote the Hasse diagram of a 2-dimensional simplicial complex $K$, and let $(T,X)$ be a tree decomposition of $H$. We define the \textit{\textbf{closure}} of $(T,X)$ to be the pair $(T,C)$, where $C$ is a set of bags of $T$ with $C_t = \cl(X_t)$. The bags $C_t$ are simplicial complexes, as the face of any simplex in $C_t$ is also contained in $C_t$. We claim that $(T,C)$ satisfies the conditions to be a tree decomposition of $H$. 

\begin{lemma}
\label{lem:td_to_closed_td}
     Let $H$ be the Hasse diagram of a 2-dimensional simplicial complex $K$. Let $(T,X)$ be a tree decomposition of $H$ of width $k$, and let $(T,C)$ be the closure of $(T,X)$. The pair $(T,C)$ is a tree decomposition of $H$. Moreover, the width of $(T,C)$ is $\OO(k)$.\footnote{The width of the tree decomposition only increases by a constant factor as $K$ is 2-dimensional. If $K$ were $d$-dimensional, then we could only bound the treewidth of the closure by $O(dk)$.}
\end{lemma}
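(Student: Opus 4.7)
The plan is to verify the three axioms of a tree decomposition for $(T,C)$ and then bound its width. Since $X_t\subseteq \cl(X_t)=C_t$ for every node $t\in I$, the covering condition and the edge condition for $(T,C)$ are inherited immediately from $(T,X)$: every simplex of $K$ that was already in some $X_t$ is in the corresponding $C_t$, and any Hasse edge $\{\sigma,\tau\}\subseteq X_t$ is also contained in $C_t$. So these two axioms require no work beyond this observation.

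The only substantive point is the subtree condition. For a simplex $\sigma\in K$, write $A_\tau:=\{t\in I : \tau\in X_t\}$ and $B_\sigma:=\{t\in I : \sigma\in C_t\}$. By the definition of closure, $B_\sigma=\bigcup_{\tau\supseteq\sigma} A_\tau$, where $\tau$ ranges over all simplices of $K$ having $\sigma$ as a face (including $\sigma$ itself). Each $A_\tau$ is a connected subtree of $T$ by the assumption on $(T,X)$, so it suffices to chain these subtrees together. To do this I would first show that the ``coface subgraph'' $G_\sigma$ of the Hasse diagram induced by the cofaces of $\sigma$ is connected. Since $K$ is $2$-dimensional, $\sigma$ is a vertex, edge, or triangle, and a short case check suffices: a triangle has no proper cofaces so $G_\sigma$ is a single node; an edge is Hasse-adjacent to each of its incident triangles; and a vertex $v$ is Hasse-adjacent to every incident edge, and every incident triangle is Hasse-adjacent to two edges containing $v$, so $G_\sigma$ is connected.

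With $G_\sigma$ connected in hand, the subtree condition follows from the edge property of $(T,X)$: any Hasse edge $\{\tau_1,\tau_2\}$ in $G_\sigma$ must lie in a common bag $X_t$, so $A_{\tau_1}\cap A_{\tau_2}\neq\emptyset$. Walking along any path in $G_\sigma$ and successively unioning the corresponding $A_\tau$'s therefore yields a connected subtree of $T$, showing that $B_\sigma$ is connected. For the width bound, note that each simplex $\sigma\in X_t$ contributes at most the seven simplices of its own closure (at worst $3$ vertices, $3$ edges, and itself) to $C_t$, so $|C_t|\le 7|X_t|\le 7(k+1)$, giving width $O(k)$.

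The main obstacle is the subtree step, and in particular the need to prove that $G_\sigma$ is connected; without that, the union of connected subtrees $A_\tau$ could in principle be disconnected. This is also exactly where the $2$-dimensional hypothesis enters the width estimate (as the paper's footnote hints), since for a $d$-complex the same argument still works but the closure of a simplex has size up to $2^{d+1}-1$ and so inflates the bag by a factor depending on $d$.
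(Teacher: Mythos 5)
Your proof is correct and takes essentially the same approach as the paper's. The paper packages the subtree-connectivity argument as an induction that adds the trees $T^X_\tau$ in order of increasing dimension, using that every coface of higher dimension has a codimension-one face already present; your version makes the same mechanism explicit by observing that the coface subgraph $G_\sigma$ of the Hasse diagram is connected and chaining the trees $A_\tau$ along Hasse edges, and the width bound via $|C_t|\le 7|X_t|$ is identical.
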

\begin{proof}
    We first verify that $(T,C)$ satisfies the definition of being a tree decomposition of $H$. The first two conditions of a tree decomposition are that all vertices and all edges of $H$ are contained in some bag $C_t$; indeed, this follows from the fact that each vertex and edge of $H$ are contained in some bag $X_t$, and $X_t\subset C_t$ for each node $t\in T$
    \par 
    We next verify that the tree $T^C_{\sigma} = \{t\in T:\sigma\in C_t\}$ is connected for each $\sigma\in K$. We will use the fact that the trees $T^X_{\sigma} = \{t\in T:\sigma\in X_t\}$ are connected for each simplex $\sigma\in K$ as $(T,X)$ is a valid tree decomposition. A simplex $\sigma$ is contained in a bag $C_t$ iff there is a simplex $\tau\in X_t$ such that $\sigma\subset \tau$; this follows from the definition of closure. Therefore, the tree $T^{C}_{\sigma}$ equals the union $\cup_{\tau\in K:\sigma\subset\tau} T_{\tau}^{X}$. 
    \par 
    We need one more observation to verify that $T^{C}_{\sigma}$ is connected. Let $\sigma,\tau\in K$ such that $\sigma$ is a codimension 1 face of $\tau$; for example, $\sigma$ is an edge and $\tau$ is a triangle. We claim the tree $T^{X}_{\sigma}\cup T^{X}_{\tau}$ is connected. Indeed, both trees $T^{X}_{\sigma}$ and $T^{X}_{\tau}$ are individually connected, and there must be a vertex where the two trees intersect, as there must be some bag $X_t$ containing both $\sigma$ and $\tau$ as $\sigma$ and $\tau$ are connected by an edge in $H$. 
    \par 
    To see that $T^{C}_{\sigma}$ is connected, consider incrementally adding each tree to the union $\cup_{\tau\in K:\sigma\subset\tau} T_{\sigma}^{X}$ in order of dimension, i.e. add all tree corresponding to vertices, then all trees corresponding to edges, and so on. At each iteration, the union will be connected. There is a unique tree of smallest dimension: $T_\sigma^{X}$. Any tree $T_{\tau}^{X}$ added after $T_{\sigma}^{X}$ will have a codimension 1 face already in the union, so $T_{\tau}^{X}$ will intersect the current union in at least one vertex. Therefore, the union with $T_{\tau}^{X}$ is also connected. Applying this argument inductively, we see that $\cup_{\tau\in K:\sigma\subset\tau} T_{\sigma}^{X} = T^{C}_{\sigma}$ is connected.
    \par 
    We now show that the width of $(T,C)$ is $\OO(k)$, where $k$ is the width of $(T,X)$. Consider a bag $X_t$. The closure $C_t = \cl(X_t) = \cup_{\sigma\in X_t}\cl(\sigma)$. The closure of a triangle contains 7 simplices (namely, the triangle itself, 3 edges and 3 vertices); the closure of an edges contains 3 simplices. The closure of a vertex contains 1 simplex. Therefore, $|C_t| \leq \sum_{\sigma\in X_t} |\cl(\sigma)| \leq 7|X_t|$. It follows that the width of $(T,C)$ is at most $7k+6$.
\end{proof}

 In general, we define a \textit{\textbf{closed tree decomposition}} of $H$ to be a tree decomposition of $H$ such that each bag $X_t$ is a simplicial complex. Going forward, we will always assume the tree decompositions of $H$ are closed, as we can always convert an arbitrary tree decomposition of $H$ to a closed tree decomposition. Additionally, we can also assume we have a \textit{nice} closed tree decomposition of $H$, as we prove in the following lemma.
  
 \begin{lemma}
    Let $(T,X)$ be a closed tree decomposition of $H$ of width $k$. There is a nice, closed tree decomposition of $H$ of width $k$ with $\OO(kn)$ nodes that can be computed in time $\OO(k^{2}\max\{|K|,|T|\})$. 
\end{lemma}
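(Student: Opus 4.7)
The plan is to invoke the standard conversion from \cref{lem:kn-nodes} and argue that, by choosing the order of elementary operations carefully, every intermediate bag remains a simplicial complex (closed). We begin from the given closed tree decomposition $(T,X)$ of $H$ of width $k$ and follow the construction in the proof of \cref{lem:kn-nodes} verbatim, except that, at each place where the standard construction has freedom in choosing which simplex to introduce or forget next, we fix that choice to respect the face-poset order.

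Concretely, between two adjacent bags $X_{t}$ and $X_{t'}$ of $(T,X)$ (with $t'$ the child of $t$), the standard conversion interpolates a chain of introduce/forget nodes that first forgets every simplex in $X_{t'}\setminus X_{t}$ and then introduces every simplex in $X_{t}\setminus X_{t'}$. I will order these operations as follows: forget the simplices of $X_{t'}\setminus X_{t}$ in order of \emph{decreasing} dimension (triangles first, then edges, then vertices), and introduce the simplices of $X_{t}\setminus X_{t'}$ in order of \emph{increasing} dimension (vertices first, then edges, then triangles). Under this ordering, when a simplex $\sigma$ is forgotten, every coface of $\sigma$ in the current bag has already been forgotten, so the resulting bag is still closed; when a simplex $\sigma$ is introduced, every face of $\sigma$ is already present, so again the resulting bag is closed. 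For join nodes, the construction sets $X_{t}=X_{t'}=X_{t''}$, which is the given (already closed) bag, so closedness is automatic; leaves have empty bags, which are trivially closed.

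Correctness of the ordering rests on two facts: (i) all such forgets/introduces are legal because the simplices $X_{t'}\setminus X_{t}$ (respectively $X_{t}\setminus X_{t'}$) are those that leave (respectively enter) the subtree, so the face-poset-based order is consistent with $H$-adjacency in the tree decomposition; and (ii) reordering these unit operations does not change which simplices belong to each ``global'' bag of the original $(T,X)$, it only changes the content of the new intermediate bags, so the tree-decomposition axioms continue to hold. The width is unchanged because each intermediate bag is a subset of either $X_{t}$ or $X_{t'}$, both of which have size at most $k+1$. The node count and running time bounds are exactly those of \cref{lem:kn-nodes} with $G:=H$ (so $|V|=|K|$ and $|I|=|T|$), namely $\OO(k|K|)$ nodes and $\OO(k^{2}\max\{|K|,|T|\})$ time, since our only change is the choice of ordering inside the same family of operations.

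The only subtlety I anticipate is verifying that the face-dimension ordering is always feasible within each chain of forgets and introduces: one must check that, at each step, choosing a highest-dimensional simplex to forget (or a lowest-dimensional one to introduce) never leaves a ``stranded'' simplex that would force a later bag to fail closedness. This follows because, within a single chain between adjacent original bags, the set of simplices being processed is fixed in advance, and dimension is a total preorder on this finite set; hence we can always pick a next element of the correct extremal dimension, and the resulting bag — which equals the previous bag minus (or plus) that single simplex — remains downward closed under the face relation.
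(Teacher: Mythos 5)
Your proposal is correct and takes essentially the same approach as the paper: both invoke the standard conversion of \cref{lem:kn-nodes} and fix the order of the unit operations so that forgets proceed from higher to lower dimension and introduces from lower to higher, which keeps every intermediate bag downward closed. The paper states this ordering without spelling out why a coface of a forgotten simplex must itself have been forgotten, whereas you make that (easy but worth noting) closure argument explicit; otherwise the proofs coincide.
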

\begin{proof}
      Given an arbitrary tree decomposition of $H$, there is a classic algorithm to make a nice tree decomposition with the given running time. We describe this algorithm and show how we can adapt this algorithm so that if the input is closed, the output will be closed too. The algorithm is as follows. Add an empty bag to each leaf. Arbitrarily pick a leaf to be the root. Replace each node $t$ with $c\geq 3$ children with a binary tree with $c$ leaves, each connected to a previous child of $t$, and set the bag of each node in the binary tree to $X_t$. For each pair of non-join neighbors $t$ and $t'$, add a path of nodes in between $t$ and $t'$. Consecutive nodes on this path should forget a single simplex in $X_{t'}\setminus X_t$ or introduce a single simplex in $X_{t}\setminus X_{t'}$. To ensure the output tree decomposition is closed, we must forget all the triangles, then edges, then vertices. As a simplex is forgotten before each of its faces, the bags are still closed. We must then introduce vertices, then edges, then triangles. As the faces of a simplex are added before the simplex itself, the bags are still closed. 
\end{proof}

\subsection{Subcomplexes}
\label{sec:subcomplexes}

 A closed tree decomposition $(T,X)$ of the Hasse diagram of a simplicial complex $K$ defines a recursively nested series of subcomplexes of $K$. We can use this series of subcomplexes to recursively build solutions to our problems. In this section, we define these subcomplexes and prove some properties that will be useful in designing our algorithm.
 \par 
 Let $(T,X)$ be a nice, closed tree decomposition of $H$. We define a subcomplex at each node $t$ of the tree decomposition. The \textit{\textbf{subcomplex rooted at t}} is $K_t = (\cup_{d\in D(t)} X_{d})\setminus (X_t)_2$, where $D(t)$ is the set of all descendants of $t$ including $t$ itself and $(X_t)_2$ is the set of triangles in $X_t$. Lemma \ref{lem:subcomplex_is_complex} proves that $K_t$ is indeed a simplicial complex. 
 \begin{lemma}
 \label{lem:subcomplex_is_complex}
    The set $K_t$ is a simplicial complex.
 \end{lemma}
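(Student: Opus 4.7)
The plan is to verify the two defining properties of a simplicial complex: each element is a finite set (which is automatic from $K_t \subset K$) and each $K_t$ is closed under taking subsets. So the real work is showing that if $\sigma \in K_t$ and $\tau \subset \sigma$, then $\tau \in K_t$.

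First I would unpack what $\sigma \in K_t$ means: there exists a descendant $d \in D(t)$ with $\sigma \in X_d$, and $\sigma$ is \emph{not} a triangle of $X_t$. I would then use the hypothesis that the tree decomposition is \emph{closed}: every bag $X_d$ is itself a simplicial complex, so $\tau \in X_d$. This immediately shows $\tau \in \cup_{d \in D(t)} X_d$, giving us the first half of membership in $K_t$.

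The second step is to argue $\tau \notin (X_t)_2$, i.e., $\tau$ is not a triangle of $X_t$. Here I would exploit the fact that $K$ is $2$-dimensional, so triangles are top-dimensional: the only way $\tau$ could be a triangle is if $\tau = \sigma$ were itself a triangle. But then since $\sigma \in K_t$ and $\sigma \notin (X_t)_2$ by the very definition of $K_t$, $\tau = \sigma$ is not in $(X_t)_2$ either. If $\tau \subsetneq \sigma$, then $\dim \tau < \dim \sigma \leq 2$, so $\tau$ is an edge or vertex and trivially not in $(X_t)_2$. In either case $\tau \notin (X_t)_2$. Combining the two steps yields $\tau \in K_t$.

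There is no real obstacle: the argument is essentially bookkeeping, and the key point is the observation that removing only top-dimensional simplices from a simplicial complex preserves closure under faces. If the complex $K$ were higher-dimensional, the statement would still hold but would require closure of every $X_d$ together with the fact that only $2$-simplices (not arbitrary faces) are being excised; the present setting makes this entirely transparent. I would conclude with a one-line remark that this justifies treating each $K_t$ as a genuine subcomplex in the subsequent algorithm.
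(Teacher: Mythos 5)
Your proof is correct and follows essentially the same argument as the paper's: show $\cup_{d\in D(t)} X_d$ is a simplicial complex because the tree decomposition is closed, then observe that removing the triangles $(X_t)_2$ preserves face-closure since triangles are top-dimensional and hence faces only of themselves. You simply spell out the case analysis more explicitly than the paper does.
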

 \begin{proof}
     Observe that $\cup_{d\in D(t)} X_{d}$ is a simplicial complex as each bag $X_d$ is a simplicial complex. Moreover, $K_t$ is a simplicial complex; removing $(X_t)_2$ does not break the face-closure property of being a simplicial complex as the triangles in $(X_t)_2$ are not the face of any simplex but themselves.
 \end{proof}
 One useful property of this definition is that if $t'$ is descendant of $t$, then $K_{t'}\subset K_t$. We prove this in the following lemma. 
\begin{lemma}
\label{lem:nested_subcomplexes}
    Let $t$ be a node of a closed tree decomposition, and let $t'$ be a descendant of $t$. Then $K_{t'}\subset K_t$.
\end{lemma}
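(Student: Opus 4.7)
The plan is to unfold the definitions and exploit the tree-decomposition property that, for each fixed simplex $\sigma \in K$, the set of bags containing $\sigma$ induces a connected subtree of $T$. Concretely, I would take an arbitrary $\sigma \in K_{t'}$ and show $\sigma \in K_t$, so the containment follows elementwise.

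First, I would observe the easy direction: since $t'$ is a descendant of $t$, we have $D(t') \subseteq D(t)$, and therefore $\bigcup_{d \in D(t')} X_d \subseteq \bigcup_{d \in D(t)} X_d$. Hence any $\sigma \in K_{t'}$ already lies in $\bigcup_{d \in D(t)} X_d$, which takes care of the ``union'' part of the definition of $K_t$.

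The main obstacle, and the part worth spelling out, is showing that $\sigma \notin (X_t)_2$. If $\sigma$ is a vertex or an edge this is automatic, since $(X_t)_2$ consists only of triangles. So assume $\sigma$ is a triangle; I would argue by contradiction, supposing $\sigma \in (X_t)_2$, i.e.\ $\sigma \in X_t$. By the definition of $K_{t'}$, the triangle $\sigma$ lies in some bag $X_{d}$ with $d \in D(t')$. Since $t'$ is a descendant of $t$, the node $t'$ lies on the unique path in $T$ from $d$ up to $t$. The third axiom of a tree decomposition forces the set of nodes whose bag contains $\sigma$ to form a connected subtree of $T$; because both endpoints $d$ and $t$ of that path contain $\sigma$, every node on the path must contain $\sigma$, and in particular $\sigma \in X_{t'}$. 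Then $\sigma \in (X_{t'})_2$, contradicting $\sigma \in K_{t'}$.

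This completes the inclusion $K_{t'} \subseteq K_t$. I do not expect any subtlety beyond this connectivity argument; the closedness of the tree decomposition is not needed here, only the standard third axiom. The one pitfall to guard against is confusing ``$\sigma \in X_t$'' with ``$\sigma \in K_t$'': a triangle can lie in $X_t$ and still be excluded from $K_t$, which is exactly why the above connectivity argument is the right tool to rule out that case from $K_{t'}$ as well.
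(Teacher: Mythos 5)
Your proof is correct and matches the paper's argument: both hinge on the observation that the subtree of nodes whose bag contains a given triangle $\sigma$ is connected, so if $\sigma$ appears in a bag below $t'$ and also in $X_t$, it must appear in $X_{t'}$, which excludes it from $K_{t'}$. Your remark that closedness of the tree decomposition is not actually used here is also accurate.
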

\begin{proof}
    The union $\cup_{d'\in D(t')}X_{d'} \subset \cup_{d\in D(t)} X_{d}$ as any descendant of $t'$ is also a descendant of $t$. We must now show that any triangle $\sigma\in (X_t)_2$ is not in $K_{t'}$. If $\sigma\notin \cup_{d'\in D(t')} X_{d'}$, then this is immediate. Alternatively, if $\sigma\in X_{d'}$ for some descendant $d'$ of $t'$, then we claim that $\sigma\in (X_{t'})_2$. Indeed, the subtree $T_\sigma$ of $T$ of all nodes whose bag contain $\sigma$ is connected, and $t'$ is in this subtree $T_\sigma$ as $t'$ lies on the unique path connecting $t$ and $d'$. 
\end{proof}
We prove one final lemma in this section. 
\begin{lemma}
\label{lem:surface-coface}
    Let $t$ be a node in the tree decomposition $(T,X)$. Let $\sigma\in K_t\setminus X_t$, and let $\tau$ be any coface of $\sigma$. Then $\tau\in K_t\setminus X_t$. 
\end{lemma}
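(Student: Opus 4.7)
The plan is to reduce to the case where $\tau$ is an immediate (codimension $1$) coface of $\sigma$ by chaining, then handle the immediate case by combining two ingredients: the closedness of the bag $X_t$, and the fact that the subtree of $T$ whose bags contain $\sigma$ must lie strictly inside $D(t)$.

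First I would record two quick observations. Since $(T,X)$ is closed, each bag $X_t$ is itself a simplicial complex and is in particular closed under taking faces. Second, since $D(t)$ is the subtree of $T$ rooted at $t$, any connected subtree $T' \subset T$ that meets $D(t)$ but avoids $t$ must be entirely contained in $D(t)\setminus\{t\}$, because the only node of $T$ through which one can leave $D(t)$ is $t$ itself.

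Next I would apply the second observation to the subtree $T_\sigma := \{s \in T : \sigma \in X_s\}$. The hypothesis $\sigma \in K_t$ says some descendant of $t$ has a bag containing $\sigma$, so $T_\sigma$ meets $D(t)$; the hypothesis $\sigma \notin X_t$ says $t \notin T_\sigma$; and $T_\sigma$ is connected by the definition of a tree decomposition. Hence $T_\sigma \subset D(t)\setminus\{t\}$.

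Now I would handle the case where $\tau$ is an immediate coface of $\sigma$ (so $\sigma$ and $\tau$ are adjacent in $H$). For $\tau\notin X_t$: if $\tau\in X_t$, then because $X_t$ is a simplicial complex and $\sigma\subset\tau$, we would get $\sigma\in X_t$, contradicting the hypothesis. For $\tau\in K_t$: the adjacency in $H$ gives some bag $X_s$ with $\{\sigma,\tau\}\subset X_s$, and then $s\in T_\sigma \subset D(t)$, so $\tau\in\bigcup_{d\in D(t)}X_d$; combined with $\tau\notin X_t$ (hence $\tau\notin (X_t)_2$) this places $\tau$ in $K_t$.

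Finally, for a general coface $\sigma \subsetneq \tau$ I would iterate along a chain of immediate cofaces $\sigma = \sigma_0 \subsetneq \sigma_1 \subsetneq \cdots \subsetneq \sigma_k = \tau$, applying the previous step once per link; because $K$ is $2$-dimensional, $k\le 2$, so there is essentially one nontrivial iteration (a vertex $\sigma$ with a triangle coface $\tau$, through an intermediate edge). The only delicate point is the initial tree-topology observation that $T_\sigma \subset D(t)\setminus\{t\}$; once that is in place, both halves of the conclusion follow cleanly from closedness of $X_t$ and the Hasse-diagram adjacency.
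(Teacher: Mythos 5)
Your proof is correct and rests on the same core idea as the paper's: the subtree $T_\sigma$ of nodes whose bags contain $\sigma$ is connected, misses $t$, and meets $D(t)$, so it lies entirely in $D(t)\setminus\{t\}$; combined with closedness of $X_t$ this yields both halves of the conclusion. The only presentational difference is that you first restrict to immediate cofaces (where the Hasse edge between $\sigma$ and $\tau$ hands you a common bag $X_s \ni \sigma,\tau$, forcing $s\in T_\sigma\subset D(t)$) and then chain; the paper instead handles an arbitrary coface $\tau$ in one step by noting that closedness of \emph{any} bag $X_{t_\tau}$ containing $\tau$ forces $\sigma\in X_{t_\tau}$, so $t_\tau\in T_\sigma\subset D(t)$ directly, with no need to reduce to immediate cofaces or invoke the $2$-dimensionality of $K$. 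Your chaining is harmless here but the paper's version is dimension-independent and slightly leaner.
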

\begin{proof}
    We have that $\tau\notin X_t$ because if $\tau\in X_t$, then the fact that $\sigma\subset\tau$ would imply $\sigma\in X_t$ as $X_t$ is a simplicial complex, contradicting the assumption that $\sigma\notin X_t$. Next, we prove that $\tau\in K_t$. By the definition of tree decomposition, there is a node $t_\tau$ in $T$ such that $\tau\in X_{t_\tau}$. Suppose for the purposes of contradiction that $t_\tau$ is not a descendant of $t$. As $\sigma\subset\tau$, then $\sigma\in X_{t_\tau}$ as well. We also know there is a descendant $t_\sigma$ of $t$ such that $\sigma\in K[X_{t_\sigma}]$ by the definition of $K_t$. The set of nodes containing $\sigma$ form a connected subtree of $T$ by the definition of tree decomposition; however, $\sigma\notin X_t$ by assumption, a contradiction as $t$ lies on the unique path between $t_\tau$ and $t_\sigma$. So $t_\tau$ must be a descendant of $t$.
\end{proof} 

\subsection{Candidate Solutions}
\label{sec:candidate_solutions}
 Our algorithm is a dynamic program on the tree in the tree decomposition. At each node $t$ of our tree, we will store the set of subcomplexes of $K_t$ that could be extended to surfaces with boundary $B$. We call these subcomplexes \textit{\textbf{candidate solutions}}. As the subcomplexes $K_t$ in the tree decomposition are nested, then we can build the candidate solutions at a node $t$ by extending the candidate solutions at $t$'s children. In this section, we define a candidate solution at a node $t$.  
 \par 
 To see what properties candidate solutions should have, assume a combinatorial surface $S$ with boundary $B$ exists. Consider the intersection $S_t = S \cap K_t$. The complex $S_t$ need not be a surface, or even a pure 2-complex. So, the link of a vertex $v$ in $S_t$ need not be a simple cycle or a simple path. As $\lk_{S_t}{v}\subset\lk_{S}{v}$, then $\lk_{S_t}{v}$ is either a simple cycle or a collection of simple paths and vertices. The same is true of the links of any edge $e$; namely, it is always true that $\lk_{S_t}(e) \subset \lk_{S}(e)$. However, Lemma \ref{lem:surface_link} proves that it can only be the case that $\lk_{S_t}{\sigma}\neq\lk_{S}{\sigma}$ if $\sigma\in X_t$. Intuitively, $S_t$ is ``surface-like'' everywhere except possibly in the intersection $S_t\cap X_t$.
 \begin{lemma}
 \label{lem:surface_link}
    Let $\sigma\in K_t \backslash X_t$ be a vertex or edge. Then $\lk_{S}{v}=\lk_{S_t}{\sigma}$.
 \end{lemma}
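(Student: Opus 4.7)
The plan is to prove the equality by showing both inclusions separately, where the containment $\lk_{S_t}(\sigma)\subseteq\lk_{S}(\sigma)$ is trivial from $S_t\subseteq S$ and monotonicity of the link under inclusion of the ambient complex: any simplex in $\cl\str{S_t}{\sigma}$ disjoint from $\sigma$ lies in $\cl\str{S}{\sigma}$ and is still disjoint from $\sigma$. All the work is in the reverse inclusion $\lk_{S}(\sigma)\subseteq\lk_{S_t}(\sigma)$.

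For that direction, I would unpack the definition of the link. Let $\tau\in\lk_{S}(\sigma)$; then $\tau$ is disjoint from $\sigma$, and there is a common coface $\rho\in S$ with $\sigma\subset\rho$ and $\tau\subset\rho$. To conclude $\tau\in\lk_{S_t}(\sigma)$ it suffices to exhibit $\rho$ as a common coface lying in $S_t$ and to observe that $\tau\in S_t$, after which the same $\rho$ witnesses $\tau\in\cl\str{S_t}{\sigma}$ with $\tau\cap\sigma=\emptyset$.

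The key step is to lift $\rho$ from $S$ into $S_t$, and this is exactly where the hypothesis $\sigma\in K_t\setminus X_t$ is used. Since $S\subseteq K$, the simplex $\rho$ is a coface of $\sigma$ in $K$, so \cref{lem:surface-coface} gives $\rho\in K_t\setminus X_t$, and therefore $\rho\in S\cap K_t=S_t$. Because $K_t$ is a simplicial complex by \cref{lem:subcomplex_is_complex}, the face relation $\tau\subset\rho$ then forces $\tau\in K_t$, hence $\tau\in S_t$. Putting the pieces together, $\rho\in S_t$ is a coface of both $\sigma$ and $\tau$ in $S_t$, so $\tau\in\cl\str{S_t}{\sigma}$, and since $\tau\cap\sigma=\emptyset$ we obtain $\tau\in\lk_{S_t}(\sigma)$.

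I do not expect any serious obstacle: the substantive content is packaged in \cref{lem:surface-coface}, and the proof is essentially a bookkeeping exercise. The only point worth flagging is why the hypothesis $\sigma\notin X_t$ is indispensable: if $\sigma$ were in $X_t$, a coface $\rho$ of $\sigma$ could be a triangle in $(X_t)_2$, which is precisely what gets subtracted off in the definition of $K_t$, so the lifting step would fail and links in $S_t$ could genuinely be strictly smaller than in $S$. Noting this sharpness informally at the end would help motivate the statement's exact form.
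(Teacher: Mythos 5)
Your proof is correct and follows essentially the same route as the paper's: the easy inclusion from $S_t\subset S$, and for the reverse direction, lifting the common coface into $K_t\setminus X_t$ via \cref{lem:surface-coface} and then using closure under taking faces. Your write-up is actually a bit cleaner (the paper's version reuses the symbol $\sigma$ both for the vertex/edge in the lemma statement and for the coface in the proof), and the closing remark on why $\sigma\notin X_t$ is essential is a nice touch, though not strictly needed.
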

 \begin{proof}
     As $S_t\subset S$, we immediately have that $\lk_{S_t}{v}\subset\lk_{S}{v}$. We need only show that $\lk_{S}{v}\subset\lk{S_t}{v}$. Any simplex $\tau\in\lk_{S}{v}$ is incident to a common coface $\sigma$ with $v$. Since $v\in K_t\setminus K[X_t]$ and $\sigma$ is a coface of $v$, we have $\sigma\in K_t\setminus K[X_t]$ by Lemma \ref{lem:surface-coface}. Thus, $\tau\in K_t\setminus K[X_t]$ because $\tau\subset\sigma$,  and $\lk_{S}{v}\subset\lk_{S_t}{v}$.
 \end{proof}
 Lemma \ref{lem:surface_link} gives criteria for defining a candidate solution $\Sigma$ at a node $t$. We need to verify that for each simplex $\sigma\in K_t$ the link $\lk_{\Sigma}{\sigma}$ could equal $\lk_{S_t}{\sigma}$ for a surface $S$ with boundary $B$, assuming such a surface exists. The conditions we place on $\lk_{\Sigma}(\sigma)$ will depend on whether $\sigma$ is an edge or vertex, whether or not $\sigma\in X_t$, and whether or not $\sigma\in B$. 
 \par 
 If a simplex $\sigma\in K_t\setminus X_t$, then by Lemma \ref{lem:surface_link}, we know $\lk_{\Sigma}{v}=\lk_{S}{v}$ for a hypothetical solution $S$. For a vertex $v\notin B$, this is true if $\lk_{\Sigma}{v}$ is a simple cycle or if $\lk_{\Sigma}{v}$ is empty (in the case that $v\notin\Sigma$.) For a vertex $v\in B$, this is true if $\lk_{\Sigma}{v}$ is a simple path with endpoints that are the neighbors of $v$ in $B$. (Note that it \textit{must} be the case that $v\in\Sigma$ if $v\in B$.) For an edge $e\notin B$, this is true if $\lk_{\Sigma}{e}$ is zero or two vertices. If $e\in B$, this is true if $\lk_{\Sigma}$ is a single vertex. In each of these cases, we say that $\lk_{\Sigma}{\sigma}$ is \textit{\textbf{complete}}. 
 \par 
 Alternatively, if a simplex $\sigma\in X_t$, then we can only say that $\lk_{\Sigma}{v}\subset\lk_{S}{v}$ for a hypothetical solution $S$. (Alternatively, we could say that the link of $\sigma$ may become complete after adding more triangles to $\Sigma$.) For a vertex $v$, this is true if $\lk_{\Sigma}{v}$ is a either a (possibly empty) collection of simple paths, or if $v\notin B$, a single simple cycle. See Figure \ref{fig:admissible_link_example}. For an edge $e$, this is true if $\lk_{\Sigma}{e}$ contains at most two vertices if $e\notin B$ or contains at most a single vertex if $e\in B$. In each of these cases, we say that $\lk_{\Sigma}{\sigma}$ is \textit{\textbf{admissible}}. Observe that a complete link is also admissible.
 
\begin{figure}
    \centering
    \includegraphics[width=1in]{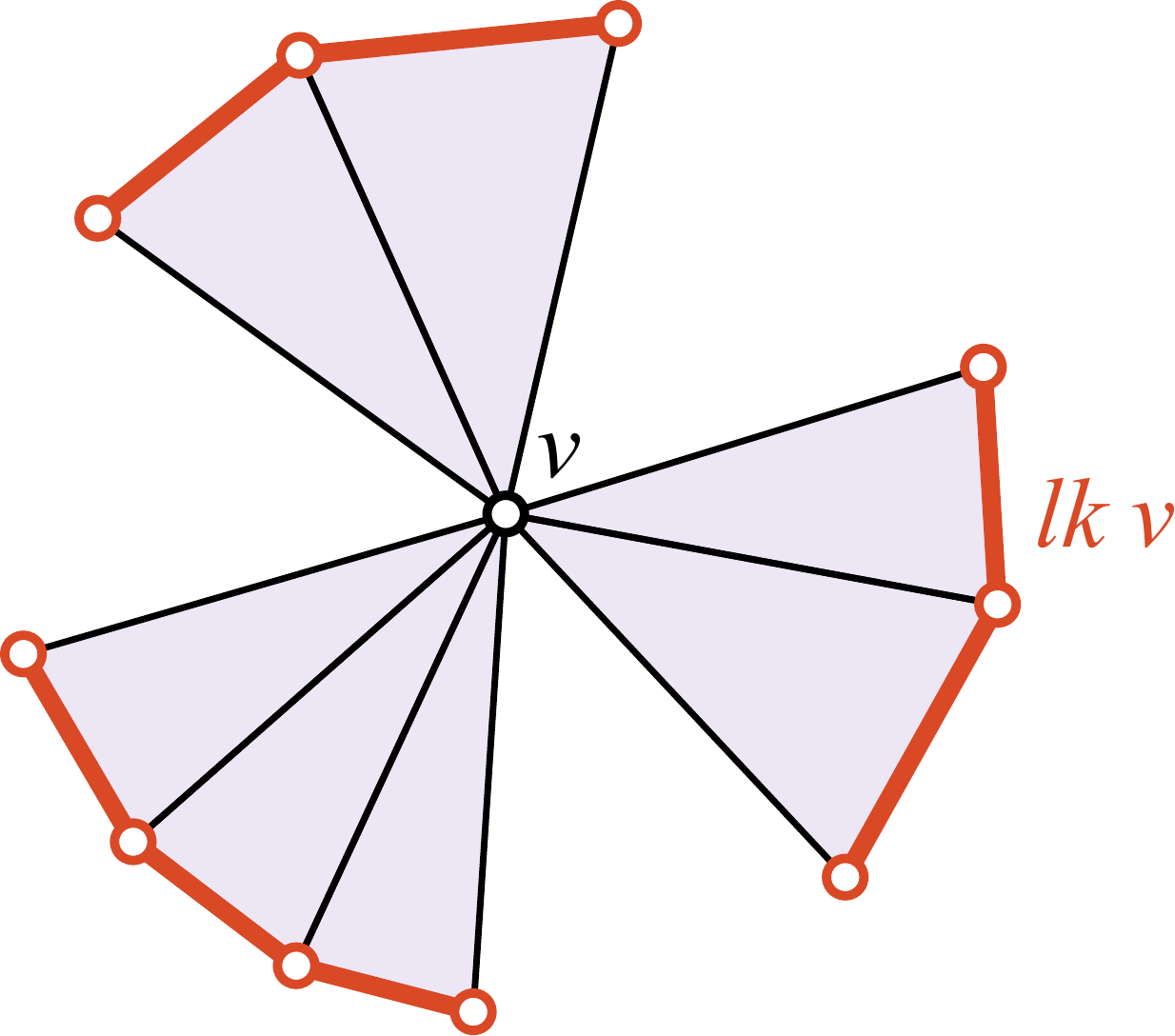}
    \caption{An example of a vertex $v$ with admissible link.}
    \label{fig:admissible_link_example}
\end{figure}

We define a \textit{\textbf{candidate solution at a node t}} to be a pure 2-dimensional subcomplex of $\Sigma\subset K_t$ such that 
\begin{enumerate}
    \item the link of each edge or vertex in $X_t$ is admissible, or
    \item the link of each edge or vertex $K_t\setminus X_t$ is complete.
\end{enumerate}
If $\Sigma$ satisfies conditions 1 and 2, we say that $\Sigma$ satisfies the \textit{\textbf{link conditions}} at $t$.
\par 
Our algorithm will compute the set of candidate solutions at each node $t$, denoted $\D[t]$. At this point, we should prove that our definition of candidate solutions is correct. That is, we must show that the set of candidate solution at the root $\D[r]$ are the set of all solutions to the problem. 

\begin{lemma}
\label{lem:dp_root}
    Let $(T,X)$ be a nice tree decomposition of the Hasse diagram with root $r$. Then the set $\D[r]$ is the set of all combinatorial surfaces $S\subset K$ with boundary $B$.
\end{lemma}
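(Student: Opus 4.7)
The plan is to unpack the definition of $\D[r]$ at the special root node $r$ and show that it coincides with the stated set of combinatorial surfaces. Since $(T,X)$ is nice, $X_r=\emptyset$, which in turn means $(X_r)_2=\emptyset$ and $K_r = \bigcup_{d \in D(r)} X_d = K$. So a candidate solution at $r$ is a pure 2-dimensional subcomplex $\Sigma \subseteq K$ for which the admissibility condition (for simplices in $X_r$) is vacuous, and the completeness condition holds at every vertex and every edge of $K$.

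For the forward inclusion, I would take $\Sigma \in \D[r]$ and verify directly that $\Sigma$ is a combinatorial surface with boundary $B$. Completeness of $\lk_\Sigma(v)$ at every vertex $v \in K$ means that if $v \in \Sigma$, then $\lk_\Sigma(v)$ is either a simple cycle (when $v \notin B$) or a simple path with endpoints the two $B$-neighbors of $v$ (when $v \in B$); if $v \notin \Sigma$ the link is empty, and the parenthetical remark in the definition forces $v \notin B$. This is exactly the combinatorial surface condition. To check that $\partial \Sigma = B$, I would observe that boundary vertices of $\Sigma$ are by definition those with simple-path link, which by the completeness condition are precisely the vertices of $B$; the same argument at the edge level (link a single vertex iff the edge lies in $B$) identifies the boundary edges with $B$.

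For the reverse inclusion, let $S \subseteq K$ be any combinatorial surface with boundary $B$. It is pure 2-dimensional, and since $\partial S = B$, every vertex $v \in B$ lies in $S$ with $\lk_S(v)$ a simple path whose endpoints are necessarily the two $B$-neighbors of $v$ (these are exactly the other endpoints of the two boundary edges at $v$). Every interior vertex has link a simple cycle, and vertices outside $S$ have empty link and cannot lie in $B$. An analogous statement holds at the edge level. So $S$ satisfies the completeness condition at every vertex and edge of $K$, hence $S \in \D[r]$.

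The argument is essentially a bookkeeping exercise, so no single step is a serious obstacle; the main care required is to track the case split (in $B$ vs.\ not in $B$, in $\Sigma$ vs.\ not in $\Sigma$) on both the vertex and edge levels so as to match each clause of the completeness condition with the corresponding clause in the definition of combinatorial surface with prescribed boundary.
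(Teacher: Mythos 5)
Your argument is correct and follows essentially the same route as the paper's own proof: observe that $X_r=\emptyset$ and $K_r=K$, so membership in $\D[r]$ reduces to the completeness condition at every vertex and edge of $K$, and then match that condition clause-by-clause with the definition of a combinatorial surface with boundary $B$ in both directions. The paper states this more tersely, but the content and the two-inclusion structure are the same.
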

\begin{proof}
    We must show inclusion both ways. Let $\Sigma\in\D[r]$. The bag $X_r=\emptyset$, and the complex $K_r = K$ as each node in the tree decomposition is a descendant of $K$. Therefore, $K_r\setminus X_r = K$ and the link of each vertex or edge in $K$ is complete. The definition of complete link implies that $\Sigma$ is a combinatorial surface. Moreover, it follows from the definition of complete link that each simplex $\sigma\in B$ must be on the boundary of $\Sigma$, and no simplex $\sigma\notin B$ is on the boundary of $\Sigma$. Therefore, the boundary of $\Sigma$ is exactly $B$, as claimed. 
    \par 
    Alternatively, let $S\subset K$ be a combinatorial surface with boundary $B$. It follows from the definitions of combinatorial surface and boundary that $S$ is a pure 2-dimensional complex and that each simplex in $K_r\setminus X_r = K$ has complete link. Therefore, $S\in\D[r]$.
\end{proof}

\subsection{Dynamic Program}
\label{sec:dynamic program}

We use a dynamic program on a nice, closed tree decomposition of the Hasse diagram of $K$ to compute the set of candidate solutions at a node $t$. In particular, we will compute the set of candidate solutions at $t$ using the candidate solutions at each of the children of $t$. Therefore, we assume that the set $\D[t']$ is computed for each child $t'$ of a node $t$ before we compute the set $\D[t]$. We now present a case analysis of how to compute $\D[t]$ for each type of node in a nice tree decomposition.

\subsubsection{Leaf Node}

Let $t$ be a leaf node of $T$. The subcomplex $K_t$ is empty, so there are no candidate solutions at $t$. Therefore, $\D[t] = \emptyset$.

\subsubsection{Introduce Nodes}

Let $t$ be an introduce node with child $t'$. The bag $X_t$ has one more simplex $\sigma$ than the bag of the child $X_{t'}$, but it is not immediate how the subcomplex $K_t$ differs from $K_{t'}$. In fact, whether $\sigma$ is contained in $K_{t}$ depends on the dimension of $\sigma$. We prove this is Lemma \ref{lem:introduce_complex}

\begin{lemma}
\label{lem:introduce_complex}
    Let $t$ be an introduce node with child $t'$. Let $\sigma$ be the introduced simplex. Then
    $$
        K_t = 
        \begin{cases}
            K_{t'}\cup\{\sigma\} & \text{if $\sigma$ is a vertex or edge} \\
            K_{t'} & \text{if $\sigma$ is a triangle} 
        \end{cases}
    $$
\end{lemma}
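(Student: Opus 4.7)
The plan is to unfold both sides of the claim directly from the definition $K_t=(\cup_{d\in D(t)}X_d)\setminus(X_t)_2$. Since $t$ is an introduce node with the unique child $t'$, we have $D(t)=\{t\}\cup D(t')$, so setting $A:=\cup_{d\in D(t')}X_d$ and using $X_{t'}\subset A$ together with $X_t=X_{t'}\cup\{\sigma\}$, the union of all descendant bags simplifies to
$$
\cup_{d\in D(t)}X_d \;=\; X_t\cup A \;=\; \{\sigma\}\cup A.
$$

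The key preliminary observation I would establish is that $\sigma\notin A$. This is where the tree decomposition axioms are actually used: the set of nodes whose bag contains $\sigma$ induces a connected subtree of $T$. Since $\sigma\in X_t$ but $\sigma\notin X_{t'}$ (as $t$ introduces $\sigma$), no descendant $d$ of $t'$ can have $\sigma\in X_d$, because the unique path from $d$ to $t$ passes through $t'$, and connectivity would then force $\sigma\in X_{t'}$, a contradiction. Hence none of the bags composing $A$ contain $\sigma$.

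With this in hand the two cases drop out by set bookkeeping. If $\sigma$ is a vertex or an edge, then $(X_t)_2=(X_{t'})_2$ and $\sigma\notin(X_{t'})_2$, so
$$
K_t=(\{\sigma\}\cup A)\setminus(X_{t'})_2 \;=\; \{\sigma\}\cup(A\setminus(X_{t'})_2) \;=\; \{\sigma\}\cup K_{t'}.
$$
If $\sigma$ is a triangle, then $(X_t)_2=(X_{t'})_2\cup\{\sigma\}$, and combining this with $\sigma\notin A$ gives
$$
K_t=(\{\sigma\}\cup A)\setminus((X_{t'})_2\cup\{\sigma\}) \;=\; A\setminus(X_{t'})_2 \;=\; K_{t'}.
$$

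The only nontrivial step is the connectivity argument showing $\sigma\notin A$; everything else is routine manipulation of unions and set differences. I do not anticipate any real obstacle, and the proof should be quite short.
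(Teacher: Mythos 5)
Your proof is correct and follows essentially the same route as the paper's: both reduce $\cup_{d\in D(t)}X_d$ to $\{\sigma\}\cup\bigl(\cup_{d'\in D(t')}X_{d'}\bigr)$, both use the connected-subtree axiom to show $\sigma$ is absent from all descendant bags of $t'$, and both finish with the same case split on the dimension of $\sigma$ via $(X_t)_2$ versus $(X_{t'})_2$.
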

\begin{proof}
    The complexes $K_t$ and $K_{t'}$ are defined $K_t = (\cup_{d\in D(t)} X_d)\setminus(X_t)_2$ and $K_{t'} = (\cup_{d'\in D(t')} X_{d'})\setminus(X_{t'})_2$, where $D(t)$ and $D(t')$ are the set of descendants of $t$ and $t'$.
    \par 
    We first show that $(\cup_{d\in D(t)} X_d)\setminus\{\sigma\} = (\cup_{d'\in D(t')} X_{d'})$. We know that $(\cup_{d\in D(t)} X_d)=(\cup_{d'\in D(t')} X_{d'})\cup X_{t} = (\cup_{d'\in D(t')} X_{d'})\cup\{\sigma\}$ where the second equality follows from the fact that every simplex in $X_t$ but $\sigma$ is contained in $X_{t'}$. It only remains to be shown that $\sigma\notin \cup_{d'\in D(t')} X_{d'}$. Suppose that $\sigma \in X_{d'}$ for some descendant $d'$ of $t'$. This would imply that $\sigma\in X_{t'}$ as $t'$ lies on the unique path between $d'$ and $t$; however, we assume that $\sigma\notin X_{t}$. Therefore, $\sigma\notin\cup_{d'\in D(t')} X_{d'}$. 
    \par 
    We know that $(\cup_{d\in D(t)} X_d)\setminus\{\sigma\} = (\cup_{d'\in D(t')} X_{d'})$ no matter the dimension of $\sigma$. If $\sigma$ is a triangle, then $(X_t)_2\setminus\{\sigma\} = (X_{t'})_2$; otherwise, $(X_t)_2 = (X_{t'})_2$. Therefore, $K_{t}=K_{t'}$ if $\sigma$ is a triangle, and $K_{t}\setminus\{\sigma\}=K_{t'}$ if $\sigma$ is a vertex or edge.
\end{proof}

\paragraph*{Vertex and Edge Introduce Nodes}

\begin{lemma}
\label{lem:vertex_introduce_correctness}
    Let $t$ be a vertex or edge introduce node, and let $t'$ be its unique child. Then $\D[t] = \D[t']$.
\end{lemma}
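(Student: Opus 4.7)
The plan is to prove $\D[t] = \D[t']$ by first establishing a structural claim: the newly introduced simplex $\sigma$ (a vertex or edge) has no triangle coface in $K_t$. This claim is the crux, and once it is in hand, both inclusions follow almost immediately since candidate solutions are required to be pure $2$-dimensional.

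First I would invoke \cref{lem:introduce_complex} to write $K_t = K_{t'} \cup \{\sigma\}$, and then prove the key structural claim. Suppose for contradiction that some triangle $\tau \in K_t$ satisfies $\sigma \subset \tau$. Then $\sigma$ and $\tau$ are adjacent in the Hasse diagram $H$, so by the tree decomposition property some bag $X_s$ contains both. Since $\tau \in K_t$, there is a descendant $d$ of $t$ with $\tau \in X_d$. The set of nodes whose bag contains $\tau$ is a connected subtree of $T$ containing both $d$ and $s$, hence it must contain $t$ (because the unique path from $d$ to $s$ passes through $t$ when $s$ is not a descendant of $t$, and if $s$ is a descendant then, since $\sigma \in X_s$ and $\sigma$ is introduced at $t$, we would need $\sigma \in X_{t'}$ by connectivity, contradicting that $\sigma$ is introduced at $t$). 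In every case $\tau \in X_t$, but triangles in $X_t$ are excluded from $K_t$, a contradiction.

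Next I would deduce $\D[t] \subseteq \D[t']$. Take $\Sigma \in \D[t]$. Since $\Sigma$ is pure $2$-dimensional and $\sigma$ has no triangle coface in $K_t$ by the claim, $\sigma \notin \Sigma$, so in fact $\Sigma \subseteq K_{t'}$. Because $K_t \setminus X_t = K_{t'} \setminus X_{t'}$, the completeness conditions at simplices outside the bag are identical for the two nodes. Admissibility at $X_{t'} \subseteq X_t$ is inherited directly from admissibility at $X_t$, so $\Sigma \in \D[t']$. For the reverse inclusion $\D[t'] \subseteq \D[t]$, take $\Sigma \in \D[t']$ and note $\Sigma \subseteq K_{t'} \subseteq K_t$. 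Again the completeness conditions agree. For admissibility at $X_t$, only the new simplex $\sigma$ requires checking; but $\sigma \notin K_{t'}$, so $\lk_\Sigma(\sigma) = \emptyset$, which is trivially admissible (the empty collection of paths if $\sigma$ is a vertex, or fewer than two vertices if $\sigma$ is an edge).

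The main obstacle is the tree-decomposition connectivity argument in the structural claim; once it is clearly stated, the rest reduces to matching the two conditions in the definition of candidate solution against each other and exploiting the fact that purity of $\Sigma$ forbids an isolated vertex or edge.
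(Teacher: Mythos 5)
Your overall strategy matches the paper's: show $\sigma \notin \Sigma$ for any candidate solution $\Sigma$ at $t$ (via purity of $\Sigma$), then transfer the link conditions by observing that $X_{t'} \subset X_t$ and $K_{t'} \setminus X_{t'} = K_t \setminus X_t$. That second half of your argument is fine.

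However, your proof of the structural claim has a gap when $\sigma$ is a vertex. You assert that ``$\sigma$ and $\tau$ are adjacent in the Hasse diagram $H$, so by the tree decomposition property some bag $X_s$ contains both,'' but the Hasse diagram only has edges between simplices of consecutive dimension. When $\sigma$ is a vertex and $\tau$ a triangle, they are two levels apart in $H$, so the tree decomposition property for edges of $H$ gives you no bag containing both directly. To make the connectivity argument work you would need to route through an intermediate edge $e$ with $\sigma \subset e \subset \tau$ and then invoke closedness of the tree decomposition: a bag containing both $e$ and $\tau$ is a simplicial complex and therefore also contains $\sigma$. Without that step the vertex case is unjustified.

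The paper takes a much shorter route that avoids tree-decomposition connectivity entirely, and since you already invoked Lemma~\ref{lem:introduce_complex} it is available to you at no cost: from $K_t = K_{t'} \cup \{\sigma\}$, any triangle $\tau \in K_t$ with $\sigma \subset \tau$ satisfies $\tau \neq \sigma$ and hence $\tau \in K_{t'}$; but $K_{t'}$ is a simplicial complex, so $\sigma \in K_{t'}$, contradicting that $\sigma$ is the introduced simplex. This makes your structural claim a one-line consequence of what you already had in hand and sidesteps the Hasse-adjacency issue completely.
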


\begin{proof}
    Let $\Sigma$ be a candidate solution at $t$. We will show that $\Sigma$ is a candidate solution at $t'$.
    \par 
    We first show that $\Sigma\subset K_{t'}$. The only simplex in $K_t\setminus K_{t'}$ is the introduced simplex $\sigma$ by Lemma \ref{lem:introduce_complex}, so we must show that $\sigma$ cannot be in $\Sigma$. As $S$ is a pure 2-dimensional complex, if $\sigma$ were in $\Sigma$, then $\sigma$ would have to be the face of a triangle in $K_t$. This triangle would have to be in $K_{t'}$, as the only simplex in $K_t\setminus K_{t'}$ is $\sigma$. However, no such triangle can exist, as this would imply $\sigma\in K_{t'}$ as $K_{t'}$ is a simplicial complex. Therefore, $\Sigma\subset K_{t'}$.
    \par 
    We now verify that $\Sigma$ satisfies the link conditions at $t'$. The bag at $t$ is $X_{t} = X_{t'}\cup\{\sigma\}$ and the complex $K_t = K_{t'}\cup\{\sigma\}$. Therefore, $X_{t'} \subset X_t$ and $K_{t'}\setminus X_{t'} = K_t\setminus X_t$. It follows that $\Sigma$ satisfies the link conditions at $t'$ as it satisfied the link conditions at $t$.
    \par 
    Now let $S'$ be a candidate solution at $t'$. We will show that $S'$ is also a candidate solution at $t$. We know that $S'$ is a subcomplex of $K_{t}$ as $K_{t'}\subset K_{t}$. We now verify that $S'$ satisfies the link conditions at $t$. We know that $\sigma\notin \Sigma'$ as $\sigma\notin K_{t'}$. Therefore, $\lk_{\Sigma}{\sigma}$ is admissible as it is empty. Any other simplex is in $X_{t'}$ iff it is in $X_t$. It follows that $\Sigma$ satisfies the link conditions at $t'$ as $\Sigma$ satisfied the link conditions at $t$.
\end{proof}

\paragraph*{Triangle Introduce Nodes}

\begin{lemma}
\label{lem:triangle_introduce_correctness}
    Let $t$ be a triangle introduce node, and let $t'$ be its unique child. Then $\D[t] = \D[t']$.
\end{lemma}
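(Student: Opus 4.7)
The plan is to argue that the definitions of $\D[t]$ and $\D[t']$ coincide bag-for-bag, since introducing a triangle affects neither the underlying complex nor the link conditions. The proof is therefore a direct bookkeeping argument built on top of Lemma \ref{lem:introduce_complex}, with no new topology needed.

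First I would invoke Lemma \ref{lem:introduce_complex} for the triangle case, which gives $K_t = K_{t'}$. This already shows that the phrases ``pure 2-dimensional subcomplex of $K_t$'' and ``pure 2-dimensional subcomplex of $K_{t'}$'' describe the same family of sets $\Sigma$. Moreover, the introduced triangle $\sigma$ itself is excluded from $K_t$ by the rule $K_t = (\cup_{d\in D(t)} X_d)\setminus (X_t)_2$, since $\sigma$ is a triangle newly placed in $X_t$. So no candidate solution at $t$ or $t'$ ever contains $\sigma$ as a simplex, and we do not have to reason about how $\sigma$ would interact with the rest of $\Sigma$.

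Next I would compare the link conditions at $t$ with those at $t'$. By their definition in Section \ref{sec:candidate_solutions}, both admissibility and completeness quantify only over edges and vertices. Since $X_t = X_{t'} \cup \{\sigma\}$ with $\sigma$ a triangle, the set of edges and vertices in $X_t$ equals the set of edges and vertices in $X_{t'}$, so the admissibility condition at $t$ is literally the admissibility condition at $t'$. The same reasoning applies to $K_t \setminus X_t$ versus $K_{t'}\setminus X_{t'}$: removing a triangle from the bag does not remove any edge or vertex from the complement, so these two sets contain exactly the same edges and vertices, and the completeness condition at $t$ matches that at $t'$.

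Putting these observations together, a pure 2-dimensional subcomplex $\Sigma \subset K_t = K_{t'}$ satisfies the link conditions at $t$ if and only if it satisfies them at $t'$, whence $\D[t] = \D[t']$. There is no substantive obstacle; the conceptual point is simply that triangles participate in no link condition, so introducing a triangle is an algorithmically inert operation and the real work in this framework is deferred to the corresponding forget-triangle node.
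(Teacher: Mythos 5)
Your argument is exactly the paper's: apply Lemma~\ref{lem:introduce_complex} to get $K_t = K_{t'}$, then observe that $X_t$ and $X_{t'}$ (and hence $K_t\setminus X_t$ and $K_{t'}\setminus X_{t'}$) contain the same vertices and edges, so the link conditions coincide. The extra remark that the introduced triangle $\sigma$ never lies in any candidate solution is a harmless clarification but not needed.
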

\begin{proof}
    As $K_t=K_{t'}$, then any 2-dimensional subcomplex of $K_t$ is a subcomplex of $K_{t'}$ and vice versa. Moreover, as $X_{t}$ and $X_{t'}$ differ by a triangle, then $X_t$ and $X_{t'}$ contain the same vertices and edges, and so do $K_t\setminus X_t$ and $K_{t'}\setminus X_{t'}$. Therefore, any subcomplex of $K_t$ that satisfies the link conditions at $t$ will also satisfy the link conditions at $t'$ and vice versa. These two facts prove $\D[t] = \D[t']$.
\end{proof}

\subsubsection{Forget Nodes}

Let $t$ be a forget node, and let $t'$ be its unique child. The bag $X_{t}$ has one fewer simplex $\sigma$ than the bag of its child $X_{t'}$; however, as we exclude triangles in the bag $X_{t}$ from the complex $K_{t}$, the complex $K_t$ will have one \textit{more} simplex than the complex at its child $K_{t'}$ if $\sigma$ is a triangle. Alternatively, if $\sigma$ is a vertex or edge, then $\sigma$ was already in the complex $K_{t'}$, so $K_{t} = K_{t'}$. We summarize this in Lemma \ref{lem:forget_complex}.

\begin{lemma}
\label{lem:forget_complex}
    Let $t$ be a forget node with child $t'$. Let $\sigma$ be the forgotten simplex. Then
    $$
        K_t = 
        \begin{cases}
            K_{t'} & \text{if $\sigma$ is a vertex or edge} \\
            K_{t'}\cup\{\sigma\} & \text{if $\sigma$ is a triangle} 
        \end{cases}
    $$
\end{lemma}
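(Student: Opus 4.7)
The plan is to unfold the definition $K_t = (\cup_{d\in D(t)} X_d)\setminus (X_t)_2$ and track which of its two pieces change when we pass from the child $t'$ to the forget node $t$. Write $\sigma$ for the forgotten simplex so that $X_t = X_{t'}\setminus\{\sigma\}$, and note that $D(t) = \{t\}\cup D(t')$.

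First I would handle the union of bags over descendants. Since $X_t \subset X_{t'}$ and $X_{t'} \subset \cup_{d'\in D(t')} X_{d'}$, adding $X_t$ to this union contributes nothing new, so
\[
\bigcup_{d\in D(t)} X_d \;=\; \bigcup_{d'\in D(t')} X_{d'}.
\]
This equality holds regardless of the dimension of $\sigma$, and it lets us focus entirely on how the removed-triangle set $(X_t)_2$ compares to $(X_{t'})_2$.

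Next I would do a short case split on the dimension of $\sigma$. If $\sigma$ is a vertex or edge, then $(X_t)_2 = (X_{t'})_2$, and the definition immediately gives $K_t = K_{t'}$. If $\sigma$ is a triangle, then $(X_t)_2 = (X_{t'})_2 \setminus \{\sigma\}$, so exactly one triangle that was previously subtracted away is now retained; combined with $\sigma \in X_{t'} \subset \cup_{d'\in D(t')} X_{d'}$, this yields
\[
K_t \;=\; \Bigl(\bigcup_{d'\in D(t')} X_{d'}\Bigr)\setminus\bigl((X_{t'})_2\setminus\{\sigma\}\bigr) \;=\; K_{t'} \cup \{\sigma\}.
\]
Here I would also briefly note that $\sigma \notin K_{t'}$ in this case, precisely because $\sigma \in (X_{t'})_2$ is subtracted off in the definition of $K_{t'}$, so the union $K_{t'}\cup\{\sigma\}$ is genuinely adding a new simplex.

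There is no real obstacle here: the statement is essentially a bookkeeping consequence of the definitions, parallel to Lemma \ref{lem:introduce_complex}. The only subtlety worth flagging explicitly is that $\sigma$ must belong to $X_{t'}$ (not merely to some descendant bag), which is why $X_t = X_{t'}\setminus\{\sigma\}$ is the correct description of a forget node and why the set arithmetic above goes through cleanly.
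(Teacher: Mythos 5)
Your proof is correct and follows essentially the same route as the paper's: unfold the definition of $K_t$, observe that the union of descendant bags is unchanged because $X_t\subset X_{t'}$, and then case-split on the dimension of $\sigma$ to compare the subtracted triangle sets. If anything, your set arithmetic is stated a bit more carefully (you correctly write $(X_t)_2 = (X_{t'})_2\setminus\{\sigma\}$ for the triangle case and explicitly check $\sigma\notin K_{t'}$), but the underlying argument is the same.
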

\begin{proof}
    The complexes $K_t$ and $K_{t'}$ are defined $K_t = (\cup_{d\in D(t)} X_d)\setminus(X_t)_2$ and $K_{t'} = (\cup_{d'\in D(t')} X_{d'})\setminus(X_{t'})_2$, where $D(t)$ and $D(t')$ are the set of descendants of $t$ and $t'$.
    \par 
    We first show that $(\cup_{d\in D(t)} X_d) = (\cup_{d'\in D(t')} X_{d'})$. This is true as $(\cup_{d\in D(t)} X_d)=(\cup_{d'\in D(t')} X_{d'})\cup X_{t} = (\cup_{d'\in D(t')} X_{d'})$, where the second equality follows from the fact that $X_t\subset X_{t'}$.
    \par
    The difference between $K_t$ and $K_{t'}$ will be determined by the sets of triangles $(X_t)_2$ and $(X_{t'})_2$. As $X_t\cup\{\sigma\} = X_{t'}$, then $(X_t)_2 = (X_{t'})_2\cup\{\sigma\}$ is $\sigma$ is a triangle and $(X_t)_2 = (X_{t'})_2$ otherwise.
\end{proof}

\paragraph*{Triangle Forget Nodes}

\begin{lemma}
    Let $t$ be a triangle forget node, and let $t'$ be its unique child. Let $\Delta$ be the triangle being forgotten. Then $\D[t] = \D[t'] \cup E^{\Delta}(t)$, where the set $E^{\Delta}(t)$ is defined as
    $$
    E^{\Delta}(t) = \left\{ \Sigma := \Sigma'\cup\cl(\Delta) \:\middle|
    \begin{array}{l}
        (1)\: \Sigma'\in D[t'] \\ 
        (2)\: \text{The links of all faces of $\Delta$ are admissible in $\Sigma$}  \\
    \end{array}  \right\}.
    $$ 
\end{lemma}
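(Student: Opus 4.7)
The plan is to prove the equality by mutual inclusion, using Lemma \ref{lem:forget_complex} together with the fact that $X_{t'}=X_t\cup\{\Delta\}$. These facts imply $K_t\setminus X_t=(K_{t'}\setminus X_{t'})\cup\{\Delta\}$, so the two link-condition regimes at $t$ and $t'$ only disagree on the triangle $\Delta$ itself.

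For the easier inclusion $\D[t']\cup E^{\Delta}(t)\subseteq \D[t]$: if $\Sigma'\in\D[t']$, then $\Sigma'\subseteq K_{t'}\subseteq K_t$, admissibility at $t'$ transfers to admissibility at $t$ for $\sigma\in X_t\subseteq X_{t'}$, and completeness at $t'$ transfers to completeness at $t$ for $\sigma\in K_{t'}\setminus X_{t'}$, while $\Delta$ is a triangle carrying no link condition. If instead $\Sigma=\Sigma'\cup\cl(\Delta)$ with $\Sigma'\in\D[t']$, then $\Sigma$ is a pure $2$-dimensional subcomplex of $K_t$; the link of any vertex or edge not incident to $\Delta$ is unchanged from $\Sigma'$, so its condition at $t$ follows from $\Sigma'\in\D[t']$, and admissibility of $\lk_\Sigma(\sigma)$ for the vertex and edge faces of $\Delta$ (which all lie in $X_t$) is exactly condition $(2)$ in the definition of $E^{\Delta}(t)$.

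For the reverse inclusion $\D[t]\subseteq\D[t']\cup E^{\Delta}(t)$, I case-split on whether $\Delta\in\Sigma$. If $\Delta\notin\Sigma$, then $\Sigma\subseteq K_{t'}$ and the link conditions at $t'$ follow from those at $t$ using $X_t\subseteq X_{t'}$ and $K_{t'}\setminus X_{t'}\subseteq K_t\setminus X_t$. If $\Delta\in\Sigma$, I set $\Sigma'=\bigcup\{\cl(T):T\text{ a triangle of }\Sigma,\,T\neq\Delta\}$. This $\Sigma'$ is pure $2$-dimensional by construction, avoids $\Delta$ (so sits inside $K_{t'}$), and satisfies $\Sigma=\Sigma'\cup\cl(\Delta)$ because every simplex of the pure $2$-dimensional complex $\Sigma$ is a face of some triangle of $\Sigma$. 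Condition $(2)$ of $E^{\Delta}(t)$ is then immediate because $\Sigma\in\D[t]$ and every face of $\Delta$ lies in $X_t$.

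The main obstacle is verifying $\Sigma'\in\D[t']$ in this last case, specifically admissibility of $\lk_{\Sigma'}(\sigma)$ at $t'$ when $\sigma$ is a face of $\Delta$ (for all other vertices and edges, the link is unchanged by the passage from $\Sigma$ to $\Sigma'$). Writing $\Delta=\{u,v,w\}$: for an edge of $\Delta$, say $e=\{u,v\}$, a direct computation shows that $\lk_{\Sigma'}(e)$ is obtained from $\lk_\Sigma(e)$ by removing exactly the vertex $w$, so its size drops by one and admissibility is preserved. For a vertex of $\Delta$, say $u$, the link $\lk_{\Sigma'}(u)$ is obtained from $\lk_\Sigma(u)$ by deleting the edge $\{v,w\}$ and also deleting the vertex $v$ (respectively $w$) precisely when the edge $\{u,v\}$ (respectively $\{u,w\}$) is a face of no other triangle of $\Sigma$, which happens exactly when that endpoint was incident to no other edge of $\lk_\Sigma(u)$. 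Since $\lk_\Sigma(u)$ was either a single simple cycle or a disjoint union of simple paths, removing one edge together with any of its now-isolated endpoints again yields a disjoint union of simple paths, and admissibility at $t'$ is preserved.
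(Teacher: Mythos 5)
Your proof is correct and follows essentially the same structure as the paper's: both inclusions by the same decomposition, the same case split on $\Delta\in\Sigma$, and the same definition of $\Sigma'$ as the closure of the non-$\Delta$ triangles. The only difference is that where the paper simply asserts that a subcomplex of an admissible link is still admissible, you spell out exactly which vertices and edges are deleted from $\lk_\Sigma(\sigma)$ for each face $\sigma$ of $\Delta$ and check that the result is again a disjoint union of simple paths --- a more careful but not essentially different argument.
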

\begin{proof}
    Let $\Sigma$ be a candidate solution at $t$. There are two cases: $\Sigma$ does not contain $\Delta$, or $\Sigma$ contains $\Delta$.
    \par 
    In the case that $\Sigma$ does not contain $\Delta$, then we claim that $\Sigma$ is a candidate solution at $t'$. The only simplex in $K_t\setminus K_{t'}$ is $\Delta$, so $S\subset K_{t'}$. Moreover, $X_t$ and $X_{t'}$ contain the same set of vertices and edge, so $\Sigma$ satisfies the link conditions at $t'$ as it satisfied them at $t$. Therefore, $S$ is a candidate solution at $t'$.
    \par 
    In the case that $\Delta\in\Sigma$, we claim that $\Sigma$ belongs $E^\Delta(t)$. As $\Sigma$ is a candidate solution at $t$, then the links of all faces of $\Delta$ are admissible as they are all contained in $X_t$, so $\Sigma$ meets the second condition to belong to $\Sigma$. We now verify that $\Sigma$ is of the form $\Sigma'\cup\cl(\Delta)$ for some candidate solution $\Sigma'$ at $t'$. Let $\Sigma'$ be the simplicial complex obtained by removing $\Delta$ and any vertex or edge that is only incident to $\Delta$ from $S$; we will show that $\Sigma'$ is a candidate solution at $t$. The complex $S'$ is a pure 2-dimensional simplicial complex by construction. Also, the complex $S'\subset K_{t'}$ as the only simplex in $K_t\setminus K_{t'}$ is $\Delta$. Finally, we verify that $\Delta'$ satisfies the link conditions at $t'$. The set $X_{t'}$ and $X_{t}$ have the same set of vertices and edges, so the requirements on each vertex and edge are the same at $t$ and $t'$. The only vertices and edges with different links in $\Sigma$ and $\Sigma'$ are the faces of $\Delta$. However, the links of these simplices in $\Sigma'$ will be a subset of their links in $\Sigma$. The subset of a admissible link is still is admissible, so $S'$ satisfies the link conditions at $t'$.
    \par 
    We now show that $\D[t']\subset \D[t]$. Let $\Sigma'\in\D[t']$. The complex $\Sigma'$ is a subcomplex of $K_{t}$ as $K_{t'}\subset K_{t}$. As well, the vertices and edges in $K_t\cap X_t$ and $K_{t'}\setminus X_{t'}$ are the same, so the link conditions hold for $\Sigma$ at $t$ as they held for $\Sigma$ at ${t'}$. Therefore, $\Sigma$ is a candidate solution at $t$.
    \par 
    We now show that $E^{\Delta}(t)\subset \D[t]$. Let $\Sigma\in E^{\Delta}(t)$. We first show that $\Sigma$ actually is in $K_t$. Indeed, we know that $\Sigma'\subset K_{t'}$ by assumption, so $\Sigma'\subset K_{t}$ as $K_{t'}\subset K_{t}$. Moreover, $\cl(\Delta)\subset K_{t}$ as $\Delta\in K_{t}$. Therefore, $\Sigma\subset K_{t}$. We now verify that $\Sigma$ satisfies the link conditions at $t$. Adding $\Delta$ to $S'$ only changes the links of the face of $\Delta$, and as the links of these vertices are admissible by the assumptions on $E^{\Delta}(t)$, we see $\Sigma$ satisfies the link conditions at $t$.
\end{proof}

\paragraph*{Vertex and Edge Forget Nodes}

\begin{lemma}
\label{lem:ve_forget_node}
    Let $t$ be an edge or vertex forget node, and let $t'$ be its unique child. Let $\sigma$ be the vertex being forgotten. Then 
    $$
        \D[t] = \left\{ \Sigma'\in\D[t'] \;\middle|\; \text{$\lk_{\Sigma'}\sigma$ is complete} \\
        \right\}
    $$
\end{lemma}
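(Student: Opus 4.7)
The plan is to show both inclusions using Lemma \ref{lem:forget_complex} and a careful bookkeeping of which simplices must have complete versus admissible links at $t$ versus at $t'$. The key observation is that since $\sigma$ is a vertex or edge, Lemma \ref{lem:forget_complex} gives $K_t = K_{t'}$, and since $X_{t'} = X_t \cup \{\sigma\}$, the two bags differ only in $\sigma$. Consequently, every simplex $\tau \neq \sigma$ is in $X_t$ if and only if it is in $X_{t'}$, and is in $K_t \setminus X_t$ if and only if it is in $K_{t'} \setminus X_{t'}$. So the link requirement imposed on every simplex other than $\sigma$ is identical at $t$ and at $t'$. Only the status of $\sigma$ itself changes: at $t'$ we have $\sigma \in X_{t'}$, so we require $\lk_{\Sigma}\sigma$ to be admissible, whereas at $t$ we have $\sigma \in K_t \setminus X_t$, so we require $\lk_{\Sigma}\sigma$ to be complete.

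For the forward inclusion, I would take $\Sigma \in \D[t]$. Then $\Sigma$ is a pure 2-dimensional subcomplex of $K_t = K_{t'}$. By the discussion above, all link conditions at $t'$ other than the one at $\sigma$ are automatically satisfied. The link $\lk_{\Sigma}\sigma$ is complete by the definition of $\D[t]$; since complete links are also admissible, $\Sigma$ also satisfies the link condition at $\sigma$ required for $\D[t']$. Therefore $\Sigma \in \D[t']$ and $\lk_{\Sigma}\sigma$ is complete, placing $\Sigma$ in the right-hand side.

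For the reverse inclusion, I would take $\Sigma' \in \D[t']$ with $\lk_{\Sigma'}\sigma$ complete. Since $K_{t'} = K_t$, the complex $\Sigma'$ is a pure 2-dimensional subcomplex of $K_t$. Every simplex other than $\sigma$ satisfies the same link condition at $t$ as it did at $t'$, which was already guaranteed by $\Sigma' \in \D[t']$. At $\sigma$, the new requirement at $t$ is completeness, which holds by hypothesis. Hence $\Sigma'$ satisfies all the link conditions at $t$, and so $\Sigma' \in \D[t]$.

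There is no real obstacle here; the only subtlety is simply checking that no bookkeeping issue arises at cofaces of $\sigma$: if $\sigma$ is a vertex, then any edge or triangle containing $\sigma$ lies in $X_{t'}$ (by closedness of the tree decomposition and the subtree property) only when it lies in $X_t$ as well, so the conditions on cofaces of $\sigma$ are unaffected. Once this observation is noted, both inclusions follow directly from the fact that the defining conditions of $\D[\cdot]$ differ only in the single link requirement at $\sigma$, precisely swapping ``admissible'' for ``complete.''
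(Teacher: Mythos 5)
Your proof is correct and follows essentially the same route as the paper's: invoke Lemma \ref{lem:forget_complex} to get $K_t = K_{t'}$, observe that $X_{t'} = X_t\cup\{\sigma\}$ so link requirements agree on every simplex other than $\sigma$, and note that the only change is swapping ``admissible'' for ``complete'' at $\sigma$ (with completeness implying admissibility for the forward direction). The closing remark about cofaces is harmless but superfluous, since for any $\tau\neq\sigma$ the equivalence $\tau\in X_t\Leftrightarrow\tau\in X_{t'}$ already follows directly from $X_{t'}=X_t\cup\{\sigma\}$.
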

\begin{proof}
    Let $\Sigma\in\D[t]$ be a candidate solution at $t$. We must show that $\Sigma$ is a candidate solution at $t'$. By Lemma \ref{lem:forget_complex}, we know that $K_t = K_{t'}$, so $\Sigma$ is a subcomplex of $K_{t'}$. Moreover, we know that $X_{t'} = X_{t}\cup{\sigma}$. The simplices in $X_{t}$ have admissible link as $\Sigma$ is a candidate solution at $t$. The simplex $\sigma$ has complete (and therefore admissible) link, so each simplex in $X_{t'}$ has admissible link. Likewise, each simplex in $K_{t'}\setminus X_{t'}$ is also in $K_t\setminus X_t$, so these simplices have complete link as $\Sigma$ is a candidate solution at $t$. Therefore, $\Sigma$ is a candidate solution at $t'$.
    \par 
    Now let $\Sigma'$ be a candidate solution at $t'$ such that $\sigma$ has complete link in $\Sigma'$. We will show that $\Sigma'$ is a candidate solution at $t$. By Lemma \ref{lem:forget_complex}, we know that $K_t = K_{t'}$, so $\Sigma'$ is a subcomplex of $K_{t'}$. We now verify that $\Sigma'$ satisfies the link conditions at $t$. The bag $X_{t} = X_{t'}\setminus\{\sigma\}$. The simplices in $X_{t'}$ have admissible link in $\Sigma'$, so each simplex in $X_t$ has admissible link in $\Sigma'$ as $X_t\subset X_{t'}$. Likewise, the simplices in $K_{t'}\setminus X_{t'}$ have complete link. As $\sigma$ has complete link, each simplex in $K_{t}\setminus X_t$ has complete link as $K_t = K_{t'}$ and $X_t = X_{t'}\setminus\{\sigma\}$. Therefore, $\Sigma'$ is a candidate solution at $t$.
\end{proof}

\subsubsection{Join Nodes}

\begin{lemma}
\label{lem:join_complex}
    Let $t$ be a join node, and let $t'$ and $t''$ be the children of $t$. Then $K_t = K_{t'}\cup K_{t''}$ and $K_{t'}\cap K_{t''}\subset X_t$
\end{lemma}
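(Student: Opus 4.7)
The plan is to unpack both claims directly from the definition $K_s = (\cup_{d\in D(s)} X_d)\setminus (X_s)_2$ and the fact that, because $t$ is a join node, $X_t = X_{t'} = X_{t''}$, and the descendant sets satisfy $D(t) = \{t\}\cup D(t')\cup D(t'')$ with $D(t')\cap D(t'') = \emptyset$.

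For the first identity $K_t = K_{t'}\cup K_{t''}$, I would compute
\[
\bigcup_{d\in D(t)}X_d \;=\; X_t \,\cup\, \bigcup_{d'\in D(t')}X_{d'} \,\cup\, \bigcup_{d''\in D(t'')}X_{d''} \;=\; \bigcup_{d'\in D(t')}X_{d'} \,\cup\, \bigcup_{d''\in D(t'')}X_{d''},
\]
where the last equality uses $X_t = X_{t'}\subset \cup_{d'\in D(t')}X_{d'}$. Then, since $(X_t)_2 = (X_{t'})_2 = (X_{t''})_2$, subtracting this set of triangles distributes across the union, giving $K_t = K_{t'}\cup K_{t''}$.

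For the inclusion $K_{t'}\cap K_{t''}\subset X_t$, I would take an arbitrary $\sigma\in K_{t'}\cap K_{t''}$. By the definition of $K_{t'}$, there is a descendant $d'\in D(t')$ with $\sigma\in X_{d'}$, and similarly some $d''\in D(t'')$ with $\sigma\in X_{d''}$. Because $D(t')$ and $D(t'')$ lie in different subtrees of $T$ rooted at $t$, the unique path in $T$ from $d'$ to $d''$ passes through $t$. The third axiom of a tree decomposition then forces $\sigma\in X_t$, as required.

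The proof is essentially bookkeeping; there is no real obstacle beyond being careful that $\sigma$ is not erroneously removed as a triangle in $(X_t)_2$. That is fine because $\sigma\in K_{t'}$ already guarantees $\sigma\notin (X_{t'})_2 = (X_t)_2$, so membership in $K_t$ is determined only by the union condition, which we have just verified.
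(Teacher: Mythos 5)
Your proof is correct and follows essentially the same route as the paper: unpack the definition of $K_s$, note that $D(t)$ decomposes into $D(t')\sqcup D(t'')$ together with $t$ itself and that $X_t = X_{t'} = X_{t''}$, distribute the subtraction of $(X_t)_2$ over the union, and for the intersection use the connectivity axiom of tree decompositions via the path from $d'$ to $d''$ through $t$.
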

\begin{proof}
    The complex $K_s$ is defined $K_s= \cup_{d\in D(s)}X_s$ for $s=t,t',t''$, where $D(s)$ is the set of descendants of $s$.
    \par 
    We first show that the union $\cup_{d\in D(t)} X_d = (\cup_{d'\in D(t')} X_{d'})\cup(\cup_{d''\in D(t'')} X_{d''})$. Indeed, each descendant $d$ of $t$ except $t$ itself is either a descendant of $t'$ or $t''$, so $X_d$ is included in own of the unions on the right. As for $t$ itself, we know that $X_t = X_{t'}= X_{t''}$, so the simplices in $X_t$ are included in both unions on the right.
    \par 
    We now compare the complexes $K_t$ with $K_{t'}$ and $K_{t''}$. Each of the bags $X_t$, $X_{t'}$, and $X_{t''}$ have the same set of triangles as they are all equal, so 
    \begin{align*}
        K_t &= (\cup_{d\in D(t)} X_d) \setminus (X_t)_2 \\
        &= ((\cup_{d'\in D(t')} X_{d'})\cup(\cup_{d''\in D(t'')} X_{d''}))\setminus(X_{t})_2 \\
        &= (((\cup_{d'\in D(t')} X_{d'})\setminus(X_{t'})_2)\cup((\cup_{d''\in D(t'')} X_{d''})\setminus(X_{t''})_2) \\
        &= K_{t'} \cup K_{t''}. 
    \end{align*}
    We now prove that $K_{t'}\cap K_{t''} \subset X_{t}$. Let $\sigma\in K_{t'}\cap K_{t''}$. By definition, there are descendants $d'$ and $d''$ of $t'$ and $t''$ respectively such that $\sigma\in X_{d'}$ and $\sigma\in X_{d''}$. The node $t$ lies on the path connecting $d'$ and $d''$, so $\sigma\in X_{t}$.
\end{proof}

\begin{lemma}
    Let $t$ be a join node, and let $t'$ and $t''$ be the unique children of $t$. Then
    $$
        \D[t] = \left\{ S'\cup S'' \:\middle|\: 
        \begin{array}{l}
            (1)\: S'\in\D[t'],\, S''\in\D[t''] \\
            (2)\: \text{Each simplex in $(S'\cup S'')\cap X_t$ has admissible link.}  \\
        \end{array} \right\}. 
    $$
\end{lemma}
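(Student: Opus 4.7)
The plan is to prove the equality by two inclusions, in the style of the previous lemmas in this section. The central construction for the forward direction is to split a candidate solution $\Sigma\in\D[t]$ across its two children by partitioning its triangles. Specifically, I would let $T'$ (respectively $T''$) be the set of triangles of $\Sigma$ that lie in $K_{t'}$ (respectively $K_{t''}$), and take $\Sigma' = \cl(T')$ and $\Sigma'' = \cl(T'')$. This partition is well defined because Lemma \ref{lem:join_complex} gives $K_{t'}\cap K_{t''}\subseteq X_t$, while the construction of $K_t$ removes $(X_t)_2$, so $X_t$ contains no triangle of $K_t$; hence every triangle of $\Sigma\subseteq K_t$ belongs to exactly one of $K_{t'}, K_{t''}$.

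For the forward direction, I must verify that $\Sigma'\in\D[t']$, $\Sigma''\in\D[t'']$, and $\Sigma = \Sigma'\cup\Sigma''$; condition (2) then follows immediately from $\Sigma\in\D[t]$ since $X_t=X_{t'}=X_{t''}$. Purity and containment in $K_{t'}, K_{t''}$ are built into the closure construction, and $\Sigma=\Sigma'\cup\Sigma''$ follows from the fact that $\Sigma$ is pure 2-dimensional, so every simplex is a face of some triangle of $\Sigma$. The link conditions split into two subcases: for $\sigma\in K_{t'}\setminus X_{t'}$, Lemma \ref{lem:surface-coface} applied at $t'$ forces every coface of $\sigma$ in $K$ to lie in $K_{t'}$, so every triangle of $\Sigma$ containing $\sigma$ lies in $T'$, giving $\lk_{\Sigma'}(\sigma)=\lk_{\Sigma}(\sigma)$, which is complete. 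For $\sigma\in X_{t'}=X_t$, we have $\lk_{\Sigma'}(\sigma)\subseteq\lk_{\Sigma}(\sigma)$, and a short case analysis over vertex/edge and boundary/interior simplices shows that any subcomplex of an admissible link is admissible: subsets of ``at most $k$ vertices'' are still at most $k$ vertices, and any subcomplex of a simple cycle or a disjoint union of simple paths is a disjoint union of simple paths.

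For the backward direction, I would set $\Sigma = \Sigma'\cup\Sigma''$ and verify $\Sigma\in\D[t]$. Containment in $K_t$ follows from Lemma \ref{lem:join_complex}, and purity is preserved under the union of pure 2-complexes. Simplices in $X_t\cap\Sigma$ have admissible link by hypothesis (2). For $\sigma\in K_t\setminus X_t$, Lemma \ref{lem:join_complex} combined with $X_{t'}=X_{t''}=X_t$ places $\sigma$ in exactly one of $K_{t'}\setminus X_{t'}$ or $K_{t''}\setminus X_{t''}$; say the former. A second application of Lemma \ref{lem:surface-coface} at $t'$ then forces every triangle of $K$ containing $\sigma$ into $K_{t'}$, so every triangle of $\Sigma$ containing $\sigma$ lies in $\Sigma'$ (it cannot lie in $\Sigma''\subseteq K_{t''}$), giving $\lk_\Sigma(\sigma)=\lk_{\Sigma'}(\sigma)$, which is complete.

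The main obstacle I expect is bookkeeping rather than a single hard step: the whole argument rests on the ``decoupling'' between the two children, namely that triangles of $K_t$ partition cleanly across $K_{t'}$ and $K_{t''}$, and that cofaces of a simplex $\sigma\in K_{t'}\setminus X_{t'}$ cannot leak into $K_{t''}$. Both are consequences of Lemma \ref{lem:join_complex} and Lemma \ref{lem:surface-coface}, so once these are invoked, the argument reduces to careful case analysis of link admissibility.
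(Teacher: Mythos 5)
Your proposal is correct and follows essentially the same route as the paper: both directions split a candidate solution by taking closures of triangles lying in $K_{t'}$ and $K_{t''}$, appeal to Lemma \ref{lem:join_complex} for $K_t=K_{t'}\cup K_{t''}$ and $K_{t'}\cap K_{t''}\subseteq X_t$, and use Lemma \ref{lem:surface-coface} to show that links of simplices outside $X_t$ are unchanged by the split. The only cosmetic difference is that in the backward direction the paper deduces $\lk_{\Sigma}(\sigma)=\lk_{\Sigma'}(\sigma)$ directly from $\sigma\notin\Sigma''$ and closure of $\Sigma''$, whereas you route the same conclusion through a second application of Lemma \ref{lem:surface-coface}; both are valid.
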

\begin{proof}
    Let $S$ be a candidate solution at $t$. We will show that $S$ is of the form $S'\cup S''$ for candidate solutions at $t'$ and $t''$ respectively. Define $S'$ to be the closure of all triangles in $S\cap K_{t'}$. The complex $S'$ is a pure 2-dimensional subcomplex of $K$ as each simplex is the face of a triangle. By Lemma \ref{lem:surface-coface}, the complex $S'$ is also a subset of $K_{t'}$, as each simplex of $S'$ is the face of a triangle in $K_{t'}$. 
    \par 
    We now verify that the $S'$ satisfies the link conditions at $t'$. We begin with the observation that $\lk_{S'}(\sigma) \subset \lk_{S}(\sigma)$ for each simplex $\sigma\in S'$ as $S'\subset S$. We now analyze the link of $\sigma$ in $S'$ in two cases. Let $\sigma\in S'\cap X_{t'}$ be a vertex or edge. As $X_t = X_{t'}$, then $\sigma\in S\cap X_{t}$, which means that the link of $\sigma$ in $S$ is admissible by assumption. As $S'\subset S$, then $\lk_{S'}(\sigma) \subset \lk_{S}(\sigma)$. A subset of a admissible link is still admissible, so the link of $\sigma$ in $S'$ is also admissible. Alternatively, let $\sigma\in (S'\setminus X_{t'})$ be a vertex or edge. Again, the fact that $X_{t'} = X_{t}$ implies that $\sigma\in (S\setminus X_{t})$, so the link of $\sigma$ in $S$ is complete. We claim that $\lk_{S}(\sigma) = \lk_{S'}(\sigma)$. Any vertex in $\lk_{S}(\sigma)$ is incident to a common triangle with $\sigma$, and by Lemma \ref{lem:surface-coface}, this triangle is contained in $K_{t'}\setminus X_{t'}$. By construction, this triangle and all of its faces are in $S'$, so $\lk_{S}(\sigma) = \lk_{S'}(\sigma)$.  
    \par
    Now let $S'$ and $S''$ be candidate solutions at $t'$ and $t''$ respectively such that each vertex in $S'\cup S''$ has admissible link in $S'$ and $S''$. We will show that $S'\cup S''$ is a candidate solution at $t$. The complex $S\cup S''$ is a pure subcomplex of $K_t$ as $S'$ and $S''$ are both pure subcomplexes of $K_t$. We now show that $S'\cup S''$ satisfies the link conditions at $t$. We assume that each simplex in $(S'\cup S'')\cap X_{t}$ has admissible link. So let $\sigma \in (S'\cup S'')\setminus X_t$ be an edge or vertex. As $\sigma\notin X_t$, we know by Lemma \ref{lem:join_complex} that $\sigma$ is in either $K_{t'}$ or $K_{t''}$ (and thus either $S'$ or $S''$). WLOG assume that $\sigma\in S'$. Moreover, we conclude that $\lk_{S}(\sigma) = \lk_{S'}(\sigma)$ as $\sigma\notin S''$ so $\lk_{S''}(\sigma)$ is empty. The link $\lk_{S'}(\sigma)$ is complete as $S'$ is a candidate solution, so the link $\lk_{S}(\sigma)$ is also complete. 
\end{proof}

\subsection{Storing Candidate Solutions As Cell Complexes}
Let $\mathbf{D[t]}$ be the set of candidate solutions at a node $t$. We will not store the set of candidate solutions $\mathbf{D[t]}$ explicitly, as the number of candidate solutions at a node can be exponential in the size of the complex, even for complexes with fixed treewidth. Instead, we store a set of annotated cell complexes equivalent to each candidate solution. Storing candidate solutions as cell complexes allows us to dramatically reduce the number of candidate solutions we need to store, as many candidate solutions that are different as simplicial complexes may be equivalent as cell complexes. See Figure \ref{fig:sc_vs_cc} for an example.
\par 
The set of annotated cell complexes $\mathbf{V[t]}$ at a node $t$ is computed as follows. For each candidate solution $\Sigma\in\mathbf{D[t]}$, we use equivalence preserving moves to remove all edges or vertices $\sigma\notin X_t$ from $\Sigma$, as described in Section \ref{sec:remove_simplices}. (As a cell complex is a collection of edges, removing a vertex $v$ means removing all edges incident $v$.) The cell complex obtained by removing simplices in $\Sigma\setminus X_t$ from $\Sigma$ using our algorithm is the \textit{\textbf{corresponding cell complex at t}} and is denoted $\Tilde\Sigma$. In particular, the entry at the root $\mathbf{V[r]}$ will contain annotated cell complexes with a single face that has no edges in its boundary and instead only stores topological information in its annotation. 
\par
The size of an annotated cell complex in $\Tilde{\Sigma}\in\mathbf{V[t]}$ is $\OO(k)$. Intuitively, this is because there are $\OO(k)$ edges in $X_t$, although we also add $\OO(k)$ dummy edges not in $X_t$. Dummy edges are used to store information on the link of vertices in $X_t$ after edges incident to these vertices are removed. We will see in Section \ref{sec:remove_simplices} when we add these dummy edges.
\par
After removing each of the simplices in $\Sigma\setminus X_t$, then many candidate solutions may be transformed into the same annotated cell complex. All candidate solutions that are transformed into the same annotated cell complex are homeomorphic as they are equivalent to a common cell complex, so we only keep one of these annotated cell complexes. 
\begin{figure}
    \centering
    \includegraphics[width=0.35\textwidth]{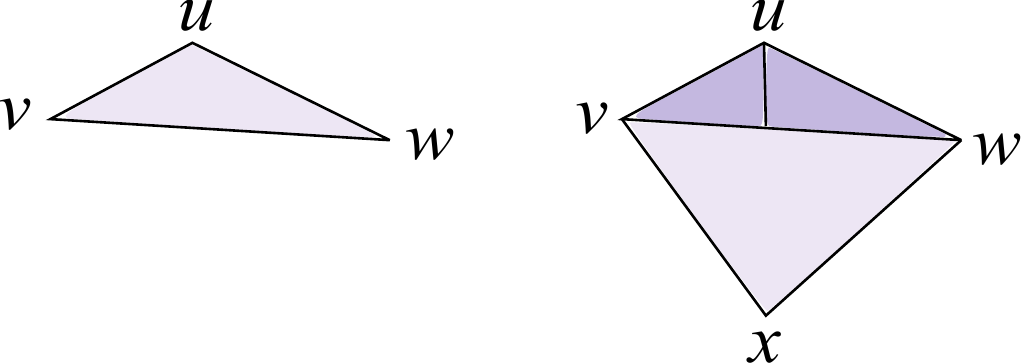}
    \caption{Multiple candidate solutions may reduce to the same cell complex after removing simplices $\sigma\notin X_t$. If the bag $X_t=\{u,v,w,\{u,v\},\{v,w\},\{w,v\}\}$, then both of the above candidate solutions reduce to the same cell complex at $t$.}
    \label{fig:sc_vs_cc}
\end{figure}
If we are going to store candidate solutions as cell complexes in this way, we need to verify two things. First, we are still able to verify that a cell complex is equivalent to a candidate solution, even after removing the simplices $\Sigma\setminus X_t$. Second, the set of equivalence-preserving moves are sufficient for removing all simplices in $\Sigma\setminus X_t$. We will explore these requirements in the next two sections respectively.

\subsubsection{Checking Candidacy on Cell Complexes}
\label{sec:check_candidacy}

We store a candidate solution $\Sigma$ at a node $t$ as a cell complex $\Tilde\Sigma$ obtained by removing all simplices in $\Sigma\setminus X_t$; however, to verify that $\Sigma$ is in fact a candidate solution using our dynamic program, we need to verify that the link of the simplices in $X_t$ satisfy the link conditions. This presents an apparent problem. Even if a simplex $\sigma$ is contained in $X_t$ and hasn't been removed from $\Tilde\Sigma$, the simplices in $\lk_{\Sigma}\sigma$ need not be contained in $X_t$. For this reason, it is not obvious that we can draw any conclusions about $\lk_{\Sigma}\sigma$ from $\Tilde\Sigma$.
\par 
We can in fact check the link conditions of a simplex $\Sigma$ using only the corresponding cell complex $\Tilde{\Sigma}$. In this section, we give an informal description of how we can do this. However, proving this formally is dependent on the algorithm we use to remove simplices in the next section, so the proofs of the lemmas in this section are deferred to Appendix \ref{sec:check_candidacy_proofs}.
\par 
We can check the link conditions on an edge $e\in X_t$ using the fact that the number of times an edge $e$ appears in the boundary of a face of $\Tilde\Sigma$ is the number of triangles incident to $e$ in $\Sigma$. Intuitively, an appearance of $e$ is only added to $\Tilde\Sigma$ when a triangle incident to $e$ is added to $\Sigma$, and no appearance of $e$ is removed from $\Tilde\Sigma$ until $e$ is forgotten from the bag $X_t$. 
\par 
To state this as a lemma, we need to know one of thing about our algorithm in advance. Namely, we will be adding \textit{\textbf{dummy edges}} to our cell complex. These are edges that do not correspond to edges in $K$ but appear in $\Tilde\Sigma$. A caveat is that a dummy edge may have the same endpoints as an edge in $K$. For the time being, all we need to know about dummy edges is that they are marked to distinguish them from real edges in $K$.
\par
We now state the conditions for an edge to have complete link. Analogous conditions for an edge to have admissible link can be found in Appendix \ref{sec:check_candidacy_edges_proofs}.

\begin{restatable}{lemma}{edgecompletelink}
\label{lem:edge_link_complete}
    Let $t$ be a node in the tree decomposition. Let $\Sigma$ be a candidate solution at $t$, and let $\Tilde\Sigma$ be the corresponding cell complex at $t$. Let $e\in X_t$ be an edge. The link of $e$ in $\Sigma$ is complete if and only if
    \begin{itemize}
        \item $e\in B$ and the real edge $e$ appears once in the boundary of a face in $\Tilde{\Sigma}$.  
        \item $e\notin B$ and the real edge $e$ appears zero or twice in the boundary of faces in $\Tilde{\Sigma}$.
    \end{itemize}
    Moreover, these conditions can be checked on $\Tilde\Sigma$ in $\OO(k)$ time.
\end{restatable}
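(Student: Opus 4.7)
The plan is to track how appearances of the real edge $e$ in the cell complex evolve as we go from $\Sigma$ to $\Tilde\Sigma$, and then translate between ``number of appearances of $e$'' and ``number of triangles incident to $e$ in $\Sigma$''. The key invariant I would establish is the following: at every intermediate stage of the reduction, the number of appearances of the real edge $e$ in the boundaries of faces equals the number of triangles in $\Sigma$ that are incident to $e$. The base case is immediate from the construction in Section~\ref{sec:cell_surf}: each triangle of $\Sigma$ contributes exactly one occurrence of each of its three edges (or inverses) to the boundary of the corresponding face, so every triangle incident to $e$ contributes exactly one appearance of $e$ or $e^{-1}$.

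Next, I would maintain this invariant through every step of the reduction described in Section~\ref{sec:remove_simplices}. Because $e\in X_t$, the edge $e$ is never the simplex being removed, and the only moves that alter appearances of $e$ are those that (i) merge two faces along some other edge (moves 1 and 2), (ii) cyclically rotate or invert a boundary (moves 3 and 4), (iii) collapse an $aa^{-1}$ with $a\neq e$ (move 5), or (iv) factor off a boundary component, handle, crosscap, or boundary annotation (moves 6 and 7). In each of these cases a straightforward syntactic check on the move's rewrite rule shows that neither the number of occurrences of $e$ nor of $e^{-1}$ changes, which gives the invariant. The introduction of dummy edges likewise does not affect the count, since a dummy edge is by assumption distinguishable from the real edge $e$.

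With the invariant in hand, the lemma reduces to translating link information into triangle-incidence counts. Recall that for an edge $e\in X_t$ to have complete link in $\Sigma$, we need $|\lk_\Sigma e|=1$ if $e\in B$ and $|\lk_\Sigma e|\in\{0,2\}$ otherwise. Since each triangle incident to $e$ contributes exactly one vertex of $\lk_\Sigma e$ and, conversely, each vertex of $\lk_\Sigma e$ comes from a unique incident triangle, $|\lk_\Sigma e|$ equals the number of triangles of $\Sigma$ incident to $e$, which by the invariant equals the number of appearances of $e$ in the boundaries of faces in $\Tilde\Sigma$. This matches the two bullets of the lemma exactly. Finally, the running time follows because $\Tilde\Sigma$ has $\OO(k)$ edges and hence $\OO(k)$ total boundary length, so a single scan of the boundaries suffices.

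The main obstacle is really the invariant: although each individual move is straightforward, formalizing it requires being precise about which edges are being introduced versus destroyed in each rewrite rule, especially for the annotation moves in Lemmas~\ref{lem:cell_handle}, \ref{lem:cell_crosscap}, and \ref{lem:cell_boundary}, where a substring of the boundary is removed in exchange for an annotation. Since $e\in X_t$ is never itself being used to record a handle, crosscap, or boundary in these moves (it is not forgotten), the substrings being removed by annotation moves are disjoint from $e$, and the invariant survives. Making this last observation rigorous depends on the precise algorithm in Section~\ref{sec:remove_simplices}, which is why the full proof lives in the appendix.
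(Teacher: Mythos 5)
Your proposal is correct and mirrors the paper's own argument: the paper first establishes exactly your invariant as Lemma~\ref{lem:cc_number_incident_faces} (the number of appearances of the real edge $e$ in $\Tilde\Sigma$ equals the number of triangles incident to $e$ in $\Sigma$), and then derives the present lemma immediately by translating link size into triangle-incidence counts. Your move-by-move verification of the invariant is more explicit than the paper's terser justification (which simply observes that $e\in X_t$ is never removed, so no appearance of it is ever deleted), but the underlying idea is the same.
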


To check the candidacy of a vertex $v$, we can deduce information on the link of the vertex $v$ in a candidate solution $\Sigma$ based on the set of edges that enter $v$ in the corresponding cell complex $\Tilde\Sigma$. In particular, a path in the link of $v$ in $\Sigma$ exactly corresponds to a sequence of edges in $\Tilde\Sigma$, and a cycle in the link of $v$ in $\Sigma$ exactly corresponds to a \textit{cyclic} sequence of successors in $\Tilde\Sigma$. As an example of why this is true, suppose the edges entering $v$ form a cyclic sequence of successors. If we remove one of these edges by merging the two incident faces, then the edges entering $v$ will still form a cyclic sequence of successors. This is a simple case, but we can prove that something analogous happens in all cases.
\par 
We now state the conditions for an edge to have complete link. Analogous conditions for an edge to have admissible link can be found in Appendix \ref{sec:check_candidacy_vertices_proofs}.

\begin{restatable}{lemma}{vertexcompletelink}
\label{lem:vertex_complete_link}
    Let $t$ be a node in the tree decomposition. Let $\Sigma$ be a candidate solution at $t$, and let $\Tilde\Sigma$ be the corresponding cell complex at $t$. Let $v\in X_t$ be a vertex. The link of $v$ in $\Sigma$ is complete if and only if
    \begin{enumerate}
        \item $v\notin B$ and either
        \begin{enumerate}[label=(\roman*)]
        \item no edges in $\Tilde{\Sigma}$ enter $v$, or
        \item the edges entering $v$ in $\Tilde{\Sigma}$ form a cyclic sequence of successors $(\overbar{a_1\ldots a_k})$; or
        \end{enumerate}
        \item $v\in B$ and the edges entering $v$ in $\Tilde{\Sigma}$ form a sequence of successors $(a_1\ldots a_k)$ such that $a_1$ and $a_k$ are either real edges in $B$ or are boundary dummy edges.
    \end{enumerate}
    Moreover, these conditions can be checked on $\Tilde\Sigma$ in $\poly(k)$ time.
\end{restatable}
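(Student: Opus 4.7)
The plan is to prove both directions of the equivalence by tracking what happens at $v$ as the removal algorithm of Section \ref{sec:remove_simplices} transforms the initial cell complex associated with $\Sigma$ into $\Tilde\Sigma$. The key inductive invariant I would maintain is structural: at every intermediate cell complex, the set of edges entering $v$ either (a) forms a cyclic sequence of successors, (b) forms a single sequence of successors with appropriately marked endpoints, (c) forms a disjoint union of several such sequences or cycles, or (d) is empty; furthermore, which of (a)--(d) holds is controlled by the shape of $\lk_{\Sigma}(v)$ in the original simplicial complex $\Sigma$.

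For the base case, before any simplices have been removed, Propositions \ref{prop:inner_vertex} and \ref{prop:boundary_vertex} establish this correspondence exactly: a simple cycle in $\lk_{\Sigma}(v)$ produces a cyclic sequence of successors at $v$, while a simple path produces a sequence of successors whose endpoints are the edges of $B$ incident to $v$. An admissible but non-complete link consisting of multiple simple paths corresponds to multiple sequences of successors. For the inductive step I would verify the invariant under each kind of removal step. Removing an edge $e$ not incident to $v$ leaves the structure at $v$ unchanged. Removing an edge $e = \{u,v\}$ that appears on two distinct faces by applying the inverse of Move 2 simply deletes the two occurrences of $e$ and $e^{-1}$ from whatever sequence(s) they participated in, while preserving successor relations between their neighbors. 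Removing an edge $e$ whose two occurrences lie on a single face uses the boundary-component and annotation machinery (Lemmas \ref{lem:cell_multiple_boundaries}, \ref{lem:cell_handle}, and \ref{lem:cell_boundary}), and the algorithm is designed to insert a boundary dummy edge at $v$ exactly when $v$ would otherwise lose an endpoint of its sequence of successors, so that the endpoint condition in case 2 is preserved. Removing a vertex $u \neq v$ reduces to removing all edges incident to $u$.

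The main obstacle will be the case where both occurrences of $e$ lie on the same face, since there the transformation is nonlocal and the precise placement of any newly introduced boundary dummy edge must be consistent with where $v$ sits in the sequence of successors; this requires a subcase analysis on whether each occurrence of $e$ is interior to or at an endpoint of the sequence at $v$. Once the invariant is established, both directions of the lemma follow immediately by applying it to $\Tilde\Sigma$: the claimed structural property at $v$ in $\Tilde\Sigma$ matches case 1(i)--(ii) or case 2 if and only if $\lk_{\Sigma}(v)$ is complete. Finally, the $\poly(k)$ bound is straightforward since $\Tilde\Sigma$ has $\OO(k)$ edges and faces, so extracting the edges entering $v$ and verifying successor relations along the face boundaries can be done by a single linear scan.
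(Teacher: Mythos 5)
Your proposal takes essentially the same route as the paper: an induction over the removal steps of Section \ref{sec:remove_simplices}, with Propositions \ref{prop:inner_vertex} and \ref{prop:boundary_vertex} giving the base case and a case analysis on removal moves giving the inductive step. The paper factors the inductive content differently — it proves the two directions of the biconditional via two separate auxiliary lemmas (Lemma \ref{lem:surface_face_link} for ``sequence structure at $v$ in $\Tilde\Sigma$ $\Rightarrow$ shape of $\lk_{\Sigma}v$'', and Lemma \ref{lem:successors_after_removal} for ``shape of $\lk_{\Sigma}v$ $\Rightarrow$ sequence structure in $\Tilde\Sigma$''), plus Lemma \ref{lem:vertex_neighbors_forgotten} for the $v\notin\Sigma$ case — rather than maintaining a single unified iff invariant. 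The underlying argument is nonetheless the same.

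Two details deserve more care than your sketch gives them. First, to make the forward direction go through you need not just that consecutive edges in the successor sequence correspond to arcs in $\lk_{\Sigma}v$, but that those arcs are pairwise disjoint except at their shared endpoints; otherwise the concatenation need not be a \emph{simple} cycle or path. The paper establishes this disjointness explicitly as property (4) of Lemma \ref{lem:surface_face_link}, and without it the forward implication fails. Second, your phrase ``both directions of the lemma follow immediately'' glosses over the asymmetry of the biconditional: the step from ``cyclic sequence of successors in $\Tilde\Sigma$'' to ``$\lk_{\Sigma}v$ is a simple cycle'' requires ruling out that the removal moves spuriously create a cyclic sequence even when the underlying link is not a cycle, which is precisely what the disjointness and the ``corresponding triangles'' bookkeeping in Lemma \ref{lem:face_triangles}/\ref{lem:surface_face_link} are for. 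These are fixable details rather than wrong turns, but they are the load-bearing parts of the argument.
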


\subsubsection{Removing Simplices}
\label{sec:remove_simplices}

Each time we forget a vertex or edge, we remove this simplex and all incident simplices from each of our candidate solutions. Removing a simplex may mean changing the faces or annotations of a cell complex or adding a dummy edge to the cell complex, but the new cell complex will always be equivalent to the old one. The following sections provide a case analysis of all the ways we might remove a simplex from a cell complex.

\paragraph*{Removing Edges}
\label{sec:remove_edges}

Let $t$ be an edge forget node that forgets the edge $e$, and let $t'$ be the unique child of $t$. Let $\Sigma$ be a candidate solution at $t$, and let $\Tilde\Sigma$ be the corresponding at $t'$ (not $t$.) To convert $\Tilde\Sigma$ to be a cell complex at $t$, we must remove $e$ from $\Tilde\Sigma$ if it appears in this cell complex.
\par 
Recall that when we forget an edge, by Lemma \ref{lem:ve_forget_node}, we know that $\lk_{\Sigma}{e}$ is complete. Lemma \ref{lem:edge_link_complete} tells us that $e$ will appear either once or twice in the boundary of faces in $\Tilde\Sigma$. 
\par 
 We now present a case analysis of all ways we can remove $e$ from our annotated cell complex. The cases will depend on the number of times $e$ appears in the boundary of a face, whether $e$ appears multiple times on the boundary of the same or different face, whether $e$ appears on the same of different boundary component of the same face, and whether it is the edge $e$ or its inverse $e^{-1}$ that appears on a given face.

\begin{enumerate}[font=\bfseries]

\item \textbf{Edge or Inverse on Different Faces.}
    If $a$ is on the boundary of two faces $(\overbar{Xa})+(\overbar{Ya})$, we can invert one of the faces $(\overbar{Ya})=(\overbar{a^{-1}Y^{-1}})$ and combine the faces $(\overbar{Xa})+(\overbar{a^{-1}Y^{-1}})=(\overbar{XY^{-1}})$ using move (2). If some other combination of $a$ or $a^{-1}$ appear on different faces, we can invert faces as necessary so that $a$ appears in one face and $a^{-1}$ appears in the other.
    \par 
    If either $(\overbar{Xa})$ or $(\overbar{a^{-1}Y^{-1}})$ is non-orientable, then $(\overbar{XY^{-1}})$ is non-orientable. If $(\overbar{Xa})$ had genus $g_1$ and $(\overbar{Ya})$ had genus $g_2$, then $(\overbar{XY^{-1}})$ has genus $g_1+g_2$. If one face is orientable and the other is non-orientable, then by Lemma \ref{lem:handle_crosscap_equivalence}, we double the genus of the orientable face before adding the two genuses. The number of boundary components of $(\overbar{XY^{-1}})$ is likewise the sum of the number of boundary components of $(\overbar{Xa})$ and $(\overbar{Ya})$. See Figure \ref{fig:merge}.

\item\textbf{Edge and Inverse Non-Consecutively on Same Boundary Component of Same Face.}
    If $a$ and $a^{-1}$ appear non-consecutively on the same boundary component of a face $(\overbar{XaYa^{-1}})$, we can break this boundary component into two boundary components $(\overbar{X})(\overbar{Y})$ using move (6). See Figure \ref{fig:boundaries}.

\item\textbf{Edge Twice on Same Boundary Component of Same Face.}
    If $a$ appears twice on the boundary component of the same face $(\overbar{XaYa})$, then by Lemma \ref{lem:crosscap_boundary}, this face is equivalent to $(\overbar{bbY^{-1}X})$. We can remove the substring $bb$, keep the face $(\overbar{Y^{-1}X})$, and update the face's surface information. The face is non-orientable. If the face was orientable before removing $bb$ and had genus $g$, there are $g$ handles on the face $f$. One handle is equivalent to two crosscaps in the presence of a crosscap by Lemma \ref{lem:handle_crosscap_equivalence}, so there are $2g+1$ crosscaps after removing $bb$. If the face was non-orientable before removing $bb$, there are now $g+1$ crosscaps.
    \begin{figure}[H]
        \centering
        \includegraphics[width=0.35\linewidth]{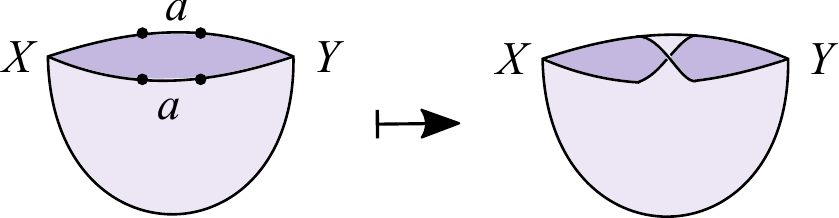}
        \captionsetup{margin=1cm}
        \caption{An example of Case 3. Identify edges $a$ and $a$ on the same boundary component creates a crosscap on the face. In particular, the faces is non-orientable after this identification.}
        \label{fig:crosscap}
    \end{figure}

\item\textbf{Edge and Inverse on Different Boundary Components of Same Face.}
    If $a$ and $a^{-1}$ appear on the boundary of the same face but on different boundary components $(\overbar{Xa})(\overbar{Ya^{-1}})$, then by Corollary \ref{cor:handle_boundary_equivalence}, we can combine these boundaries into a single boundary component $(\overbar{cdc^{-1}d^{-1}YX})$. We can remove the edges $c,d$ and annotate the face $(\overbar{YX})$ to have $+1$ genus if the face is orientable and $+2$ genus if the face is not orientable.
    \begin{figure}[H]
        \centering
        \includegraphics[width=0.35\linewidth]{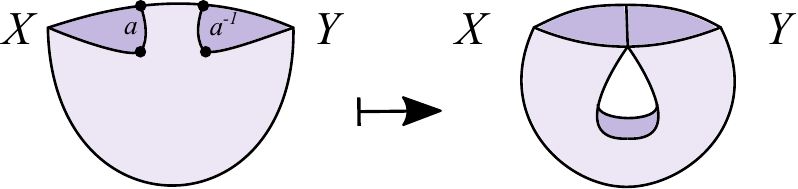}
        \captionsetup{margin=1cm}
        \caption{An example of Case 4. Identifying $a$ and $a^{-1}$ on different boundary components of the same face creates a handle.}
        \label{fig:handle}
    \end{figure}
    
\item\textbf{Edge Twice on Different Boundary Components of Same Face.}
    If $a$ appears twice on different boundary components of the same face $(\overbar{Xa})(\overbar{Ya})$, by Corollary \ref{cor:cell_crosshandle} this face is equivalent to $(\overbar{bbccXY^{-1}})$. We update the boundary of the face $(\overbar{XY^{-1}})$. This face is non-orientable. If this face was orientable with genus $g$ before removing $bbcc$, we update the genus to $2g+2$. If this face was non-orientable with genus $g$ before removing $bbcc$, we update the genus to $g+2$.
    \begin{figure}[H]
        \centering
        \includegraphics[width=0.35\linewidth]{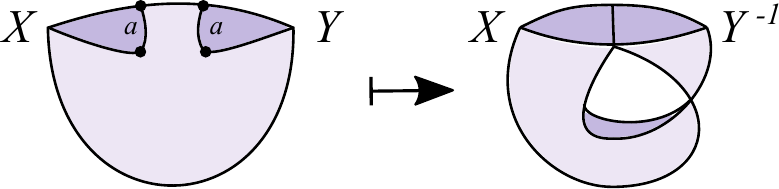}
        \captionsetup{margin=1cm}
        \caption{An example of Case 5. Identifying $a$ and $a$ on different boundary components of the same face creates two crosscaps. In the literature, this is also known as a \textit{\textbf{crosshandle}}.}
        \label{fig:crosshandle}
    \end{figure}
    
    \item\textbf{Edge and Inverse Consecutively on Same Boundary Component of Same Face.} If $a$ and $a^{-1}$ appear consecutively in some face, we would like to use move (5) to simplify $(\overbar{aa^{-1}X})$ to $(\overbar{X})$; however, we need to take an additional step to retain information on vertex links. Assume that $a=\{v,w\}$ for some vertices $v$ and $w$, and that $a$ enters $v$ (i.e. $a=(w,v)$). We use the edges entering $a$ to store the information on the link of $v$, so we need to take an additional step to remember this information about $v$.
\begin{figure}[H]
    \centering
        \begin{subfigure}{0.3\textwidth}
            \centering
            \includegraphics[width=0.65\linewidth]{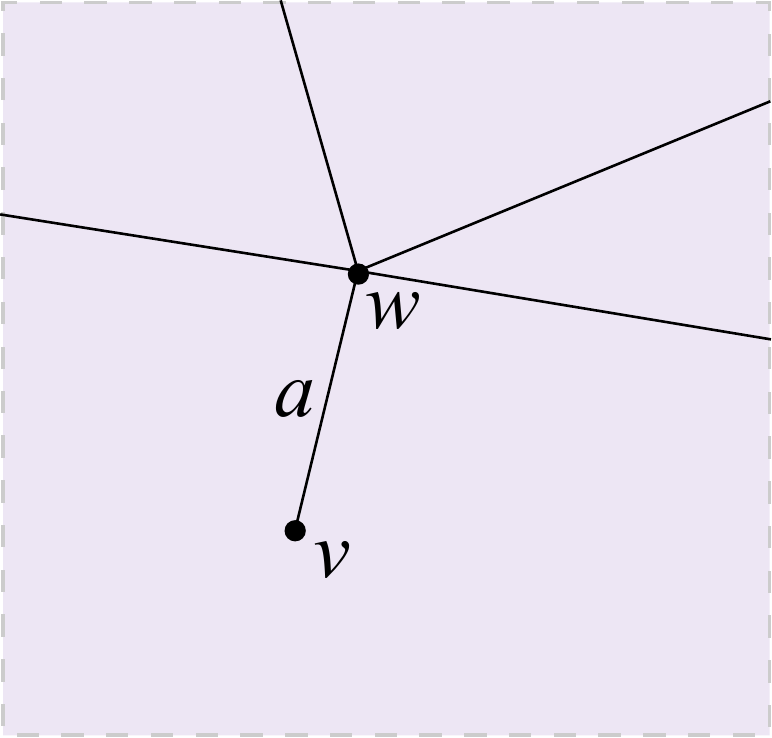}
        \end{subfigure}
        \begin{subfigure}{0.3\textwidth}
            \centering
            \includegraphics[width=0.65\linewidth]{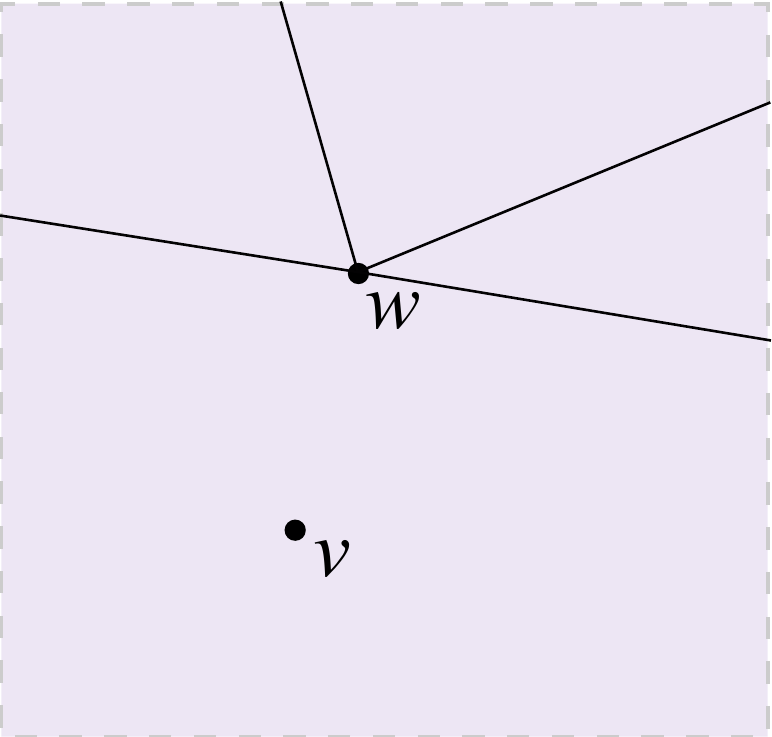}
        \end{subfigure}
        \begin{subfigure}{0.3\textwidth}
            \centering
            \includegraphics[width=0.65\linewidth]{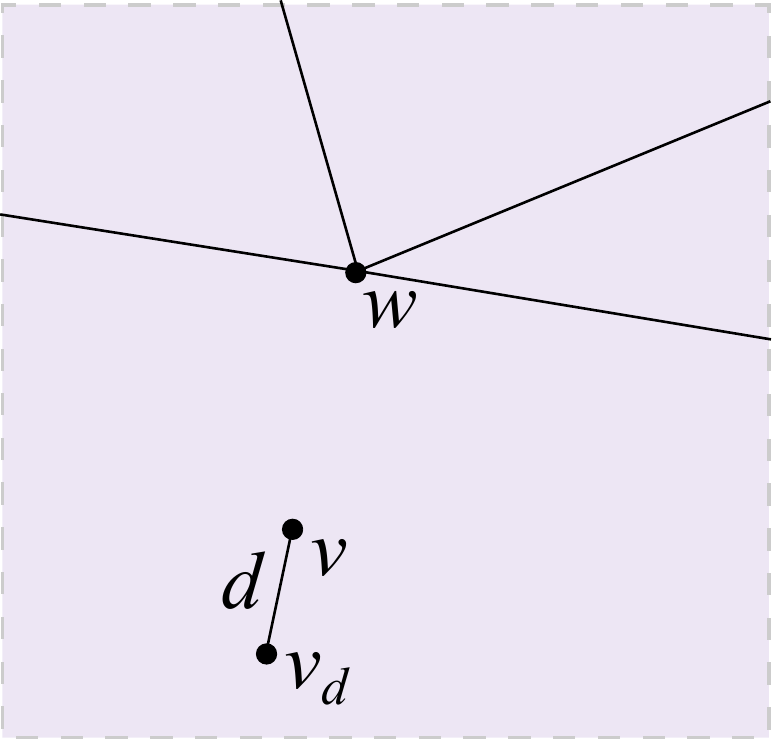}
        \end{subfigure}
    \captionsetup{margin=1cm}
    \caption{Forgetting the edge $a=(w,v)$ will remove any reference of $v$ from $\Tilde{\Sigma}$. We remember $v$ with a dummy edge $d$.}
    \label{fig:boundary_dummy_edge}
\end{figure}
    \par
    We will keep a record of $v$ with \textit{\textbf{interior dummy edges}}. Before removing $a$, we first add in dummy edges $(\overbar{add^{-1}a^{-1}X})$ using move (5). We can imagine these edges as connecting $v$ to a dummy vertex $v_d$, where the edge $d=(v,v_d)$. We then use move (6) to break the face into boundary components $(\overbar{dd^{-1}})(\overbar{X})$. The edges $dd^{-1}$ are added to $\Tilde{\Sigma}$ while maintaining equivalence, so adding the substring $dd^{-1}$ does not change the homeomorphism class of $\Tilde{\Sigma}$. A \textit{\textbf{real edge}} is an edge in $\Tilde{\Sigma}$ that is not a dummy edge. Real edges always correspond to edges in $K$. We assume dummy edges are marked.
    \par
    The sole purpose of the dummy edge $d$ is to denote that $v$ is still in the cell complex, even after all vertices that share an edge with $v$ have been forgotten. Note that before removing $a$, the edges incident to $v$ formed a cyclic sequence of successors $(\overbar{a})$ and $\lk_{\Sigma}{v}$ was complete. After adding $d$ and removing $a$ from the boundary, the edges incident to $v$ still form a cyclic sequence of successors, namely $(\overbar{d})$, so our algorithm will still recognize $\lk_{\Sigma}{v}$ as being complete.
    \par 
    Note that a vertex $v$ can only be incident to at most 2 interior dummy edges as $\lk_{\Sigma}{v}$ is complete. No more edges incident to $v$ will be added to $\Tilde{\Sigma}$ by our algorithm, as this would make $\lk_{\Sigma}{v}$ inadmissible. Therefore, there can only be $\OO(k)$ interior dummy edges in $\Tilde\Sigma$, and the number of edges (real or dummy)å in $\Tilde{\Sigma}$ is still $\OO(k)$.
    
 \item\textbf{Edge on Boundary of One Face.}
    Let $a$ be an edge that only appears once in the cell complex $\Tilde{\Sigma}$. By Lemma \ref{lem:edge_link_complete}, we know that $a$ is in $B$. We will replace $a$ with a \textit{\textbf{boundary dummy edge}} $d$, which is an edge with the same endpoints as $a$ but that is marked as being a dummy edge.  Note that a vertex $v$ can be incident to at most two boundary dummy edges as $v$ is incident to at most two edges in $B$. Therefore, there can only be $\OO(k)$ boundary dummy edges in $\Tilde\Sigma$.
\end{enumerate}   

\paragraph*{Removing Vertices}
\label{sec:remove_vertices}

When we forget a vertex $v$, we want to remove all edges incident to $v$ from cell complexes at $t'$. It turns out this is a relatively easy process. Any real edge incident to $v$ will already have been forgotten, so the only edges in $\Tilde\Sigma$ incident to $v$ are dummy edges. The follow lemma shows that the edges incident to $v$ in a cell complex $\Tilde\Sigma$ must be of one of two types. We give a proof of this lemma at the end of Appendix \ref{sec:check_candidacy_vertices_proofs}

\begin{restatable}{lemma}{vertexlinkcompleteforget}
\label{lem:vertex_link_complete_forget}
    Let $t$ be a vertex forget node that forget a vertex $v$, and let $t'$ be the unique child of $t$. Let $\Sigma$ be a candidate solution at $t'$, and let $\Tilde\Sigma$ be the corresponding cell complex at $t'$. Then the link of $v$ in $\Sigma$ is complete if and only if 
    \begin{enumerate}
    \item $v\notin B$ and either
    \begin{enumerate}[label=(\roman*)]
        \item no edges in $\Tilde\Sigma$ enter $v$, or
        \item a single interior dummy edge $d$ enters $v$ that forms a cyclic sequence of successors $(\overbar{d})$; or
    \end{enumerate}
    \item $v\in B$ and two boundary dummy edges $d_1$ and $d_2$ enter $v$ form a sequence of successors $(d_1,d_2)$.
\end{enumerate}
\end{restatable}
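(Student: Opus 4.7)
The plan is to derive this characterization from Lemma \ref{lem:vertex_complete_link} applied at $t'$, using the fact that our tree decomposition is closed. I first establish that at $t'$ no real edge incident to $v$ remains in $\Tilde\Sigma$. Since $t$ forgets $v$, we have $v\notin X_t$, and because $X_t$ is a simplicial complex no edge $e=\{v,w\}$ containing $v$ lies in $X_t$. But $t$ forgets only the single simplex $v$, so we cannot have $e\in X_{t'}$ either; every such edge has therefore been forgotten at $t'$ or at one of its strict descendants and has already been processed by one of the seven cases of Section \ref{sec:remove_edges}. Consequently, every oriented edge entering $v$ in $\Tilde\Sigma$ is a dummy edge.

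I then recall the structural facts about dummy edges. An interior dummy $d$ is introduced in Case 6 as an isolated boundary component $(\overbar{dd^{-1}})$: $d$ thus appears in exactly one face boundary, its only successor is $d^{-1}$, and distinct interior dummies lie in disjoint boundary components and share no successor relations. A boundary dummy is introduced in Case 7 as the single-appearance replacement of a real boundary edge and also has only one successor. In particular, no dummy edge has a pair of successors.

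Armed with these observations, I apply Lemma \ref{lem:vertex_complete_link} and collapse each alternative to the one stated here. In the case $v\notin B$, the fact that $B$ is a simplicial complex means no edge of $B$ is incident to $v$, so every dummy edge entering $v$ is necessarily an interior dummy. Either no edges enter $v$, landing in case $(1)(i)$; or the edges form a cyclic sequence of successors $(\overbar{a_1\ldots a_k})$. In the latter case the pair-of-successors condition forces $a_{i\pm 1}^{-1}$ to be successors of $a_i$, but the sole successor of an interior dummy $a_i$ is $a_i^{-1}$, so $a_{i-1}=a_{i+1}=a_i$ and the sequence collapses to $(\overbar{d})$ for a single interior dummy $d$, matching case $(1)(ii)$. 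In the case $v\in B$, the two endpoints of the sequence of successors must be real edges in $B$ or boundary dummies; the former are excluded by the no-real-edges observation, so they must be boundary dummies. Any internal edge would need a pair of successors, which no dummy edge has. Hence the sequence has length exactly two and consists of two boundary dummies $(d_1,d_2)$, matching case $(2)$.

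The converse direction is immediate in all three cases: the displayed configuration directly witnesses the relevant hypothesis of Lemma \ref{lem:vertex_complete_link}. The main obstacle is the combinatorial bookkeeping for the interior-dummy branch. One must confirm that, throughout the entire dynamic program, Case 6 really does keep each interior dummy $d$ in its own isolated boundary component $(\overbar{dd^{-1}})$, and that no later equivalence-preserving move relinks it to another edge entering $v$; only once this invariant is pinned down can we conclude that $(\overbar{d})$ is the unique possible cyclic sequence of successors at $v$ after all real edges incident to $v$ have been forgotten.
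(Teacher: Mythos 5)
Your proof is correct and follows essentially the same two observations as the paper's: it invokes Lemma~\ref{lem:vertex_complete_link} and uses the closed tree decomposition to conclude that no edge incident to $v$ can lie in $X_{t'}$, hence only dummy edges enter $v$ in $\Tilde\Sigma$. One small technical correction: an interior dummy $d$ in the boundary component $(\overbar{dd^{-1}})$ formally has a \emph{pair} of successors (both equal to $d^{-1}$, since $d$ appears in that face's boundary and also in the boundary of its inverse), not a single successor; your deduction $a_{i-1}=a_{i+1}=d$ still goes through because the pair is degenerate, so the conclusion is unaffected.
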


We now describe how to remove the edges incident to $v$ in the two cases in Lemma \ref{lem:vertex_link_complete_forget}.

\begin{enumerate}[font=\bfseries]
    \item$\mathbf{v\notin B}$ By Lemma \ref{lem:vertex_link_complete_forget}, there is a single interior dummy edge $d$ that enters $v$ that forms a cyclic sequence of successors. We conclude there must be a face of the form $(dd^{-1}X)$ in $\Tilde\Sigma$, so we can simplify this face to $(X)$ with move (5). 
    \item$\mathbf{v\in B}$ By Lemma \ref{lem:vertex_link_complete_forget}, there are two boundary dummy edges $d_1$ and $d_2$ that enter $v$ and form a sequence of successors. We distinguish between two subcases. 
    \begin{enumerate}
        \item$\mathbf{d_1 = d_2^{-1}}$ If $d_1 = d_2^{-1}$, then we conclude that $d_1$ must be the only edge in some boundary component $(\overbar{d_1})(\overbar{X})$. This face is equivalent to $(\overbar{d_1})(\overbar{X})=(\overbar{a^{-1}d_1aX})$ by Move (6). The string $b^{-1}ab$ is a boundary component, so we remove $b^{-1}ab$ and update the annotation of this face to have +1 boundary components.
        \item$\mathbf{d_1 \neq d_2^{-1}}$ If $d_1\neq d_2^{-1}$, then $d_1$ and $d_2^{-1}$ will be consecutive on the same face $(d_1d_2^{-1}X)$. We can then use move (1) and replace $d_1d_2^{-1}$ with a dummy edge $d_3$. In particular, if $d_1=(w_1,v)$ and $d_2=(v,w_2)$, then $d_3=(w_1,w_2)$. See Figure \ref{fig:interior_dummy_edge}. We define a \textit{\textbf{merge boundary dummy edge}} to be an edge that replaces two boundary dummy edges, although we usually just call them boundary dummy edges unless it is necessary to specify.
        \par 
        After replacing $d_1$ and $d_2$ with the edge $d_3$, the vertices $w_1$ and $w_2$ will still be incident to at most two boundary dummy edges. Therefore, there can only be $\OO(k)$ boundary dummy edges in $\Tilde\Sigma$. Moreover, the boundary dummy edges incident to a vertex $w\in B\cap X_t$ are the same in any candidate solution. Specifically, these dummy  connect $w$ to its the closest vertices in $B$ that have not yet been forgotten. 
        \begin{figure}[H]
        \centering
            \begin{subfigure}{0.49\textwidth}
                \centering
                \includegraphics[width=0.4\linewidth]{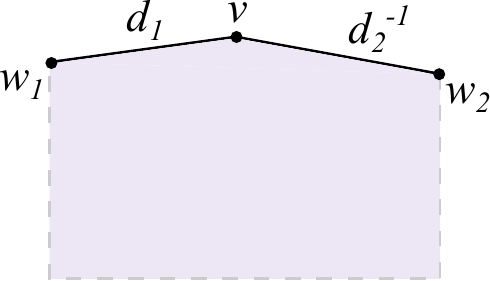}
            \end{subfigure}
            \begin{subfigure}{0.49\textwidth}
                \centering
                \includegraphics[width=0.4\linewidth]{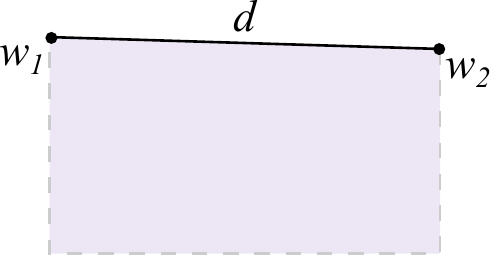}
            \end{subfigure}
            \captionsetup{margin=2cm}
            \caption{Replacing the edges $a_1$ and $a_{k}^{-1}$ with a boundary dummy edge $d$.}
        \label{fig:interior_dummy_edge}
        \end{figure}
    \end{enumerate}
\end{enumerate}

\subsection{Running Time Analysis}

In this section, we give a running time analysis of for each of our algorithms. We begin by giving a complete analysis of the algorithm for Subsurface Recognition, then in subsequent sections, we explain how the algorithm or analysis can be modified to find running times for algorithms for our other problems. 

\subsubsection{Subsurface Recognition}

The first step in analyzing the running time of our algorithm bound the number of cell complexes at each node in our tree decomposition. 

\begin{lemma}
    Let $t$ be a node in the tree decomposition, and let $g$ and $b$ be natural numbers. There are $2^{\OO(k\log k)}(gb)^{\OO(k+c)}$ cell complexes in $\V[t]$ of genus at most $g$ and with at most $b$ boundary components and $c$ connected components. 
\end{lemma}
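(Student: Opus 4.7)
The plan is to bound the total number of annotated cell complexes at $t$ by separately counting (i) the number of possible unannotated combinatorial skeletons on a fixed edge set, and (ii) the number of valid annotations layered on top, then multiplying.

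First, I would bound the number of edges appearing in any $\Tilde\Sigma \in \V[t]$. Real edges of $\Tilde\Sigma$ lie in $X_t$, of which there are $\OO(k)$. For dummy edges, the discussion in Section \ref{sec:remove_simplices} shows that each vertex $v \in X_t$ is incident to at most two interior dummy edges (this follows from $\lk_\Sigma v$ being complete immediately after creation and never being extended afterwards) and at most two boundary dummy edges (since $v$ is incident to at most two edges of $B$). Combined with $|X_t| = \OO(k)$, the total number of edges in $\Tilde\Sigma$ is $m = \OO(k)$. The number of \emph{choices} of which dummy edges appear is then $2^{\OO(k)}$, since each dummy edge is labeled by its type and its endpoints in $X_t$.

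Next, I would count the number of unannotated skeletons on a fixed set of $m = \OO(k)$ edges. By the definition of cell complex in Section \ref{sec:cell_definition} together with the boundary-component notation of Section \ref{sec:moves}, a skeleton is fully specified by a partition of the $2m$ oriented edges into faces, and within each face a partition into cyclically ordered boundary components (with the involution on oriented faces handled by the $^{-1}$ convention, so we only record one face per orbit). The number of such combinatorial structures is crudely bounded by the number of sequences of $2m$ labeled tokens with face and boundary-component delimiters, which is at most $(2m)! \cdot 2^{\OO(m)} = 2^{\OO(m \log m)} = 2^{\OO(k \log k)}$, absorbing the dummy-edge choice.

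Now I would bound the number of faces in any such skeleton. Faces that contain at least one edge number at most $\OO(m) = \OO(k)$, since each oriented edge appears in at most one face up to the involution. The only faces without edges are empty faces, which by Theorem \ref{thm:classification_surfaces} correspond bijectively to already-collapsed connected components; there are at most $c$ of these. Hence the total face count is $\OO(k+c)$. Each face is then annotated with a genus in $\{0, 1, \dots, g\}$, a number of boundary components in $\{0, 1, \dots, b\}$, and one orientability bit, giving at most $2(g+1)(b+1) = \OO(gb)$ choices per face, or $(gb)^{\OO(k+c)}$ overall. Multiplying skeleton and annotation bounds yields the claimed $2^{\OO(k\log k)} (gb)^{\OO(k+c)}$.

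The main obstacle I anticipate is step two: convincing myself that no additional combinatorial data hides inside the skeleton. In particular, one must confirm that the boundary-component grouping introduced by Move 6 is already a recoverable piece of the map $B$ of Section \ref{sec:cell_definition}, so that skeletons are genuinely in bijection with admissible maps $B \colon F \sqcup F^{-1} \to (E \sqcup E^{-1})^\ast$ up to the symmetry $B(A^{-1}) = B(A)^{-1}$. Given that, every other step is a direct counting argument and the bound follows without further geometry.
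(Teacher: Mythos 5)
Your proposal is correct and takes essentially the same approach as the paper: bound the edge count by $\OO(k)$, crudely count the unannotated skeletons (successor/boundary-component structure over those edges) as $2^{\OO(k\log k)}$, count the non-empty faces as $\OO(k)$ and empty faces as $\OO(c)$, and multiply by $(gb)^{\OO(k+c)}$ for per-face annotations. The only cosmetic difference is that you package the skeleton count as ``sequences with delimiters'' where the paper phrases it as a successor bijection on a subset of oriented edges followed by a partition of the resulting boundary components into faces; both are the same order-of-magnitude counting and both silently rely on the boundary-component data being part of the stored structure, which your closing paragraph correctly flags and which the paper likewise leaves implicit.
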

\begin{proof}
    Assume for the time being that each face in a cell complex contains at least one edge. There are $\OO(k)$ edges, real or dummy, in any cell complex $\Tilde\Sigma\in\V[t]$. An annotated cell complex can be described as a bijection between a subset of these edges, where each edge is mapped to its successors, and the orbits of the bijection are the boundaries of the cell complex. There are $2^{\OO(k)}$ such subset of edges, and $\OO(k!)$ bijections for each subset. Next, we need to group boundary components into faces. There are $\OO(k)$ faces in any cell complex, so there are $k^{\OO(k)} = 2^{\OO(k\log k)}$ ways to partition these boundary components into faces. Multiplying the number of subsets by the number of bijections by the number of partitions, we see that are $2^{\OO(k)}\cdot \OO(k!)\cdot 2^{\OO(k\log k)} = 2^{\OO(k\log k)}$ cell complexes in $\V[t]$, not accounting for different annotations.
    \par 
    As we are capping the genus of the cell complexes at $g$, we can discard any solution with a face that exceeds genus $g$. Likewise, we discard any annotated cell complex with a face with more than $b$ boundary components. Therefore, each of the $\OO(k)$ faces of $\Tilde\Sigma$ can have one of $\OO(gb)$ annotations, so there are $(gb)^{\OO(k)}$ annotations for $\Tilde\Sigma$. Therefore, there are $2^{\OO(k\log k)}(gb)^{\OO(k)}$ possible different cell complexes in $\V[t]$.
    \par 
    A cell complex can also have connected components that contain no edges in $X_t$; these connected components are represented by empty faces in the cell complex. We can discard any cell complex that has more than $c$ such empty faces. Each of these empty faces will have an annotation, so there are $gb^{\OO(c)}$ possible annotations on these empty faces. In total, there are $2^{\OO(k\log k)}(gb)^{\OO(k+c)}$ possible annotated cell complexes in $\V[t]$.
\end{proof}

We claim there are $\OO(n^2)$ possible annotations for each face. The following lemma proves this. 

\begin{lemma}
    Let $S$ be a connected combinatorial surface. Let $n$ be the number of simplices in $S$, $g$ the genus of $S$, and $b$ the number of boundary components of $S$. Then $g,b\in \OO(n)$.
\end{lemma}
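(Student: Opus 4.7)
The plan is to bound both $g$ and $b$ linearly in $n$ using the Euler characteristic. Let $V$, $E$, and $T$ denote the number of vertices, edges, and triangles of $S$, so that $n = V + E + T$. The combinatorial Euler characteristic $\chi(S) = V - E + T$ therefore satisfies $|\chi(S)| \leq n$. Because $S$ is a combinatorial surface, it triangulates a compact connected surface, so by the Classification Theorem (\cref{thm:classification_surfaces}) together with the standard topological formula for the Euler characteristic, we have $\chi(S) = 2 - 2g - b$ when $S$ is orientable and $\chi(S) = 2 - g - b$ when $S$ is non-orientable.

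For the bound on $b$, recall that the boundary $\partial S$ of a combinatorial surface is a disjoint union of simple cycles, and any simple cycle in a simplicial complex uses at least three edges. Hence $3b \leq E \leq n$, giving $b \leq n/3$.

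For the bound on $g$, I would solve the relevant Euler characteristic formula for $g$: in the orientable case $2g = 2 - \chi(S) - b$, and in the non-orientable case $g = 2 - \chi(S) - b$. Using $b \geq 0$ and $|\chi(S)| \leq n$, both cases yield $g \leq 2 - \chi(S) \leq n + 2$. Combining the two inequalities gives $g, b \in O(n)$, as required.

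There is essentially no obstacle here: the entire argument is a direct application of Euler's formula together with the classification of compact surfaces, both of which have already been developed in the preceding sections. The only minor subtlety is invoking the equality between the combinatorial Euler characteristic of the simplicial complex $S$ and the topological Euler characteristic of the surface it triangulates, but this is standard.
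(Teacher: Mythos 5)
Your proof is correct and follows essentially the same route as the paper: invoke the Euler characteristic formula $\chi(S) = 2-2g-b$ (orientable) or $2-g-b$ (non-orientable) from the classification of surfaces, and observe that the combinatorial quantity $V-E+T$ is trivially $O(n)$. The paper's version is slightly more economical — it simply notes $g+b \leq E-V-T+2 \leq E+2 = O(n)$ in one line, whereas you bound $b$ separately via the (true but unnecessary) observation that each boundary cycle uses at least three edges and then solve for $g$; both arguments are sound and rest on the same ingredients.
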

\begin{proof}
    Let $V$, $E$, and $F$ be the number of vertices, edges, and triangles of $S$ respectively. The Euler characteristic of $S$ is $V-E+F=2-2g-b$ if $S$ is orientable and $V-E+F=2-g-b$ if $S$ is non-orientable. The bound of  $g+b\leq E-V-F+2\in O(n)$ holds in either case.
\end{proof}

\begin{theorem}
    Let $K$ be a 2-dimensional simplicial complex with treewidth $k$ Hasse diagram. Let $X$ be a compact surface of genus $g$ with $c$ connected components. Let $B \subset K$ be a disjoint union of $b$ simple cycles. There is an algorithm to determine if there is a subcomplex $S\subset K$ homeomorphic to $X$ with boundary $B$ in $2^{\OO(k\log k)}(gb)^{\OO(k+c)}$ time.
\end{theorem}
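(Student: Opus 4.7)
The plan is to execute the dynamic program of Section~\ref{sec: algorithms} end-to-end and account for the work node by node. The preprocessing step is to compute a nice, closed tree decomposition $(T,X)$ of the Hasse diagram of $K$ using a constant-factor treewidth approximation, then apply Lemma~\ref{lem:td_to_closed_td} followed by the nice-decomposition conversion; this produces a tree with $\OO(kn)$ nodes of width $\OO(k)$ in time polynomial in $n$ and $k$. The algorithm then populates the tables $\V[t]$ according to the case analyses of Section~\ref{sec:dynamic program}, representing each candidate solution by its corresponding annotated cell complex of size $\OO(k)$, identifying two entries whenever they are equal as annotated cell complexes.

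By the lemma immediately preceding this theorem, and after pruning any cell complex whose accumulated genus, boundary count, or empty-face count already exceeds $g$, $b$, or $c$ (each of which is monotone along the recursion towards the root), we maintain $|\V[t]|\le 2^{\OO(k\log k)}(gb)^{\OO(k+c)}$ at every node. The per-node work breaks into cases. Leaf and introduce nodes are trivial. Triangle-forget nodes scan $\V[t']$ and, for each entry, either copy it or attempt to glue in the new triangle, checking admissibility of the links of the three faces of $\Delta$ in $\poly(k)$ time via Lemma~\ref{lem:edge_link_complete}, Lemma~\ref{lem:vertex_complete_link}, and their admissible-link counterparts. Vertex and edge forget nodes scan $\V[t']$, verify completeness of the forgotten simplex, and apply the local cell-complex surgery of Section~\ref{sec:remove_simplices}, each case of which touches only $\poly(k)$ edges and annotations.

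The main obstacle, and the step that dictates the final bound, is the join node. Here the algorithm iterates over pairs $(\Tilde\Sigma',\Tilde\Sigma'')\in\V[t']\times\V[t'']$ that agree on the real and dummy edges supported by $X_t$, glues them along that shared boundary into a single annotated cell complex, and re-verifies admissibility at every simplex of $X_t$ while pruning against the global $g$, $b$, and $c$ thresholds. The pair count is bounded by the square of the table size, which is still $2^{\OO(k\log k)}(gb)^{\OO(k+c)}$, since squaring the exponent only changes constants absorbed by the asymptotic notation. Each pair is processed in $\poly(k)$ time, so the join-node cost also matches the table bound.

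Finally, at the root $r$ the bag is empty, so every element of $\V[r]$ is a disjoint union of annotated empty faces whose annotations (orientability, genus, boundary count) together with their multiplicity determine the homeomorphism class of the represented surface by Theorem~\ref{thm:classification_surfaces}. The algorithm accepts iff some element of $\V[r]$ realises the homeomorphism type of $X$, which is a single pass costing $\OO(c)$ per element. Multiplying the per-node work by the $\OO(kn)$ nodes of the decomposition and absorbing the polynomial bookkeeping overhead yields the claimed $2^{\OO(k\log k)}(gb)^{\OO(k+c)}$ total running time.
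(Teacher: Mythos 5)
Your proof is correct and follows essentially the same approach as the paper: bound the table size at each node by the preceding lemma, observe that leaf and introduce nodes are trivial, show forget and join nodes cost table-size times $\poly(k)$ (with the join node dominating via the pairwise iteration, and your observation that squaring the table bound is absorbed by the big-O exponents matches the paper's implicit accounting), and finish by matching the root entries against the homeomorphism type of $X$. The only cosmetic difference is that you make the pruning against the $g$, $b$, $c$ thresholds explicit where the paper folds it into the counting lemma.
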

\begin{proof}
    Let $r$ be the root of the nice tree decomposition $(T,X)$. By Lemma \ref{lem:dp_root}, we know that the set of candidate solutions at the root $\D[r]$ are equivalent to each subcomplex of $K$ that is a combinatorial surface with boundary $B$.  We first show that a set of cell complexes $\V[r]$ equivalent to the candidate solutions in $\D[r]$ with at most $c$ connected components and $b$ boundary components and with genus at most $g$ can be computed in the $2^{\OO(k\log k)}(gb)^{\OO(k+c)}$ using the dynamic program in Section \ref{sec:dynamic program}; if there is a subcomplex homeomorphic to $X$, it will be contained in this set $\V[r]$. There are $\OO(kn)$ nodes in a nice tree decomposition, so we just need to verify that the set $\V[t]$ can be computed for each node $t$ in $2^{\OO(k\log k)}(gb)^{\OO(k+c)}$ time.
    \par 
    Leaf nodes and introduce nodes both can be processed in $\OO(1)$ constant time. Leaf nodes require no work as $\V[t]$ is empty, and introduce nodes require no work as $\V[t] = \V[t']$ where $t'$ is the child of $t$.
    \par 
    Forget nodes can be processed in $2^{\OO(k\log k)}(gb)^{\OO(k+c)}$ time. Let $t$ be a forget node with child $t'$. For a vertex or edge forget node, we just need to check whether or not the link of the forgotten vertex is complete in each candidate solution in $\V[t']$. Checking one candidate solution takes $\poly(k)$ time according to Lemmas \ref{lem:vertex_complete_link} and \ref{lem:edge_link_complete}, so checking all candidate solutions takes $2^{\OO(k\log k)}(gb)^{\OO(k+c)}n$ time as there are $2^{\OO(k\log k)}(gb)^{\OO(k+c)}$ candidate solution in $\V[t']$. For a triangle forget node, we just need to compute the set $E^\Delta(t)$. Computing a single entry of $E^\Delta(t)$ takes $\poly(k)$ time, as we need to verify the links of a constant number of simplices. Computing each entry of $E^\Delta(t)$ takes $2^{\OO(k\log k)}(gb)^{\OO(k+c)}$ time as there are $2^{\OO(k\log k)}(gb)^{\OO(k+c)}$ entries of $\V[t']$.
    \par 
    Join nodes can also be processed in $2^{\OO(k\log k)}(gb)^{\OO(k+c)}$ time. Let $t$ be a join node with children $t'$ and $t''$. Each entry in $\V[t]$ is the sum of an entry from $\V[t']$ and $\V[t'']$. Therefore, to compute $\V[t]$, we perform a nested iteration over $\V[t']$ and $\V[t'']$, sum an entry from each, and check that the link of all simplices in $X_t$ is admissible.
    \par 
     Any cell complex $\Tilde\Sigma$ is a collection of empty faces, as any simplex in $K\setminus X_r = K$ has been removed using equivalence-preserving moves as described in Section \ref{sec:remove_simplices}. Therefore, we can determine if there is a combinatorial surface homeomorphic to $X$ by checking if there is cell complex in $\V[r]$ with the correct number of connected components, each having the correct genus. There are $(gb)^{O(c)}$ cell complexes in $\V[r]$, so this takes $(gb)^{O(c)}\poly(c)$ time.
\end{proof}

The previous theorem gives a parameterized version of the algorithm for SR. If we plug in the upper bound of $n$ for $g$, $b$, and $c$, then we get a running time of $2^{\OO(k\log k)}n^{\OO(k+n)}$. However, we can perform a tighter worst-case analysis. There are at most $2^{O(n)}$ possible candidate solutions corresponding to the $2^{\OO(n)}$ subsets of triangles of $K$. Our algorithm for SR therefore takes $2^{O(n)}$ time. 
\par 
However, this bound does not depend on the treewidth at all! Indeed, we have no FPT algorithm for SR, and our algorithm matches the complexity of the naive algorithm of testing all possible subset of triangles. Our algorithm relies on having a ``local representation'' of a candidate solution in term of the bag $X_t$, but defining a local representation becomes difficult when some connected components of our candidate solution do not even intersect the bag $X_t$. 

\subsubsection{Sum-of-Genus Subsurface Recognition}

The \SoGSR problem is distinct from the Subsurface Recognition as we don't care how the genus of a disconnected surface is distributed among its connected components. Dropping this requirement allows us to obtain an FPT algorithm. In our algorithm for SR, we obtained a factor of $gb^{\OO(k+c)} = n^{\OO(k+c)}$ as we store an annotation for each of the $k+c$ faces. We could instead store a single \textit{\textbf{global annotation}} for the entire cell complex. Whenever a topological feature is found on any face, this feature is recorded in the global annotation. Storing a global annotation is less discriminative than storing an annotation for each face. For example, a global annotation could not distinguish two tori from a genus 2 surface and a sphere, as both are genus 2 surfaces with 2 connected components. However, a global annotation is sufficient for \SoGSRdot. 

\begin{theorem}
    Let $K$ be a 2-dimensional simplicial complex with treewidth $k$ Hasse diagram. Let $B \subset K$ be a disjoint union of $b$ simple cycles. Let $g$ and $c$ be natural numbers. There is an algorithm to determine if there is a subcomplex $S\subset K$ with boundary $B$, total genus $g$, and $c$ connected components in $2^{\OO(k\log k)}gbcn=2^{\OO(k\log k)}n^{4}$ time.
\end{theorem}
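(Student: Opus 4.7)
The plan is to reuse the SR dynamic program of Section~\ref{sec:dynamic program} essentially verbatim, but to replace the per-face annotations of an annotated cell complex by a single \emph{global} annotation that records only the aggregate invariants asked about in SoG: a total genus counter, a total number of boundary components, and a total number of connected components, together with whatever orientability bookkeeping is needed to invoke Dyck's theorem correctly (for instance, separate running totals for genus contributions from orientable vs.\ non-orientable components, plus a global orientability flag).  Because SoG only needs to know these numbers at the end, the bottleneck in the SR analysis --- the $(gb)^{\OO(k+c)}$ annotation factor coming from storing one triple per face --- collapses to a single factor of $\OO(gbc)$.

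First I would formalize the globally-annotated cell complex by modifying the definitions of Section~\ref{sec:moves}: every equivalence-preserving move that previously updated a per-face annotation is redefined to update the global counters instead.  Adding a handle (as in Lemma~\ref{lem:cell_handle}) increments the orientable-genus counter; adding a crosscap (Lemma~\ref{lem:cell_crosscap}) increments the non-orientable-genus counter and sets the global orientability flag; adding a boundary (Lemma~\ref{lem:cell_boundary}) increments the boundary counter; and whenever an edge-removal in Section~\ref{sec:remove_edges} produces an empty face the face is discarded and contributes $+1$ to the connected-component counter.

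Next, I would verify case-by-case that the dynamic program still goes through.  Leaf, introduce, and forget nodes operate locally on a single cell complex and carry the global counters along unchanged.  At a join node the combined global annotation is the coordinate-wise sum of the two children's annotations, with the connected-component counter decreased whenever two previously distinct components are glued through a shared simplex of $X_{t}$ --- and this is readable from the combined face structure.  At the root, the algorithm accepts iff some cell complex in $\V[r]$ has global annotation matching the input $(g,b,c)$ (after finalizing the doubling rule of Section~\ref{sec:background_surfaces} from the orientable / non-orientable subtotals).  The running-time analysis then mirrors that of SR: the face-structure factor is still $2^{\OO(k\log k)}$ but the annotation factor drops from $(gb)^{\OO(k+c)}$ to $\OO(gbc)$, giving $|\V[t]|=2^{\OO(k\log k)}gbc$; multiplying by the $\OO(kn)$ nodes of the nice tree decomposition and the $\poly(k)$ cost per entry yields the claimed $2^{\OO(k\log k)}gbcn=2^{\OO(k\log k)}n^{4}$ bound after substituting $g,b,c\in\OO(n)$.

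The main obstacle I expect is the Dyck bookkeeping: an orientable component that later becomes non-orientable must have its contribution to the total genus re-weighted (two handles are equivalent to three crosscaps only in the presence of a crosscap, Lemma~\ref{lem:handle_crosscap_equivalence}), so one cannot simply maintain a single scalar ``genus so far'' counter.  Splitting the counter into orientable and non-orientable subtotals --- and applying Dyck only at the root when the final orientability of each connected component is known --- is what lets the annotation be reconstructed from $\OO(gbc)$ possibilities without reintroducing a $(gb)$-factor per face.
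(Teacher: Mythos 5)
Your high-level plan --- replace per-face annotations by a single global annotation so that the $(gb)^{\OO(k+c)}$ factor collapses --- is exactly what the paper does, and you are right that the nontrivial point the paper glosses over is the Dyck bookkeeping. Your fix (track separate global \emph{handle} and \emph{crosscap} subtotals, never apply Dyck during the dynamic program, and apply it once at the root using a global orientability flag) is in fact sound: once Dyck is deferred, every edge-removal case in Section~\ref{sec:remove_simplices} increments the handle subtotal or the crosscap subtotal by a \emph{fixed} amount independent of the face it is applied to, face merges simply add the subtotals, and at the root the total genus is the handle subtotal if the flag is ``orientable'' and $2\cdot(\text{handles})+(\text{crosscaps})$ otherwise. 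So the reduction is correct.

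There is, however, a quantitative gap between your bookkeeping and the stated bound. With two independent subtotals the number of admissible annotation states is the number of pairs $(h,x)$ with $2h+x\le g$ (plus the purely orientable states $(h,0)$ with $h\le g$), which is $\Theta(g^{2})$, not $\OO(g)$. Multiplying through by the boundary and component counters gives $\OO(g^{2}bc)$ annotations per node, hence a running time of $2^{\OO(k\log k)}g^{2}bcn=2^{\OO(k\log k)}n^{5}$, one polynomial factor worse than the claimed $2^{\OO(k\log k)}gbcn=2^{\OO(k\log k)}n^{4}$. The cleaner bookkeeping --- and the one that makes the deferred-Dyck observation unnecessary --- is to maintain a single global \emph{non-orientable genus} (equivalently, $\sum_A(2-\chi_A-b_A)$ over faces $A$): adding a handle contributes $+2$, a crosscap $+1$, and a crosshandle $+2$, and face merges are additive, \emph{regardless} of whether the faces involved are orientable. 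Keeping only a per-face orientability bit (absorbed into $2^{\OO(k)}$) then lets you divide by two at the root when every component is orientable. This brings the annotation count down to $\OO(gbc)$ and matches the theorem as stated. Your remark about decrementing the connected-component counter at join nodes is also slightly off: the counter tracks empty faces, which by definition have no edges in $X_t$ and therefore never merge, so it is monotone non-decreasing; merging at joins happens only among the still-open (non-empty) faces, which are not yet counted.
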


\subsubsection{Connected Subsurface Recognition}

The \CSR problem is actually a special case of \SoGSR where the number of connected components $c=1$.

\begin{theorem}
    Let $K$ be a 2-dimensional simplicial complex with treewidth $k$ Hasse diagram. Let $X$ be a connected, compact surface of genus $g$. Let $B \subset K$ be a disjoint union of $b$ simple cycles. There is an algorithm to determine if there is a subcomplex $S\subset K$ homeomorphic to $X$ with boundary $B$ in $2^{\OO(k\log k)}gbn=2^{\OO(k\log k)}n^{3}$ time.
\end{theorem}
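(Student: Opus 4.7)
The plan is to observe that \CSR is essentially \SoGSR with $c=1$, combined with one extra bit of information. A connected compact surface is characterized up to homeomorphism by its genus, its orientability, and its number of boundary components. The number of boundary components is already determined by $B$, and the genus is recorded by the global annotation used in the \SoGSR algorithm, so the only additional piece of data we need to track is orientability.

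Concretely, I would rerun the \SoGSR dynamic program verbatim, except that the global annotation becomes a pair $(g',\epsilon)$, where $g'$ is the accumulated total genus and $\epsilon\in\{\mathrm{orient.},\mathrm{non\text{-}orient.}\}$ indicates whether any equivalence-preserving move has so far introduced a crosscap. Updates use the same cases as in \cref{sec:remove_simplices}: whenever a move converts a handle to crosscaps via \cref{lem:handle_crosscap_equivalence}, or directly produces a crosscap as in Cases 3 and 5 of \cref{sec:remove_edges}, we set $\epsilon$ to non-orientable and (when required by \cref{lem:handle_crosscap_equivalence}) double the stored genus of the affected face, exactly as already done inside \SoGSRdot. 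Since the bit $\epsilon$ multiplies the table size by only a constant factor, and since enforcing $c=1$ at the root just restricts the allowed annotations on empty faces, we inherit the complexity analysis of the \SoGSR algorithm verbatim.

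At the root $r$, we accept iff there is a cell complex in $\V[r]$ consisting of exactly one empty face whose annotation records genus $g$, the orientability of $X$, and the $b$ boundary components coming from $B$. By \cref{lem:dp_root} combined with \cref{thm:classification_surfaces}, this is equivalent to the existence of a subcomplex $S\subset K$ homeomorphic to $X$ with boundary $B$. The running time bound $2^{\OO(k\log k)}gbn$ follows directly by plugging $c=1$ into the \SoGSR analysis, and since $g,b\in \OO(n)$ by the Euler-characteristic bound, this is $2^{\OO(k\log k)}n^{3}$.

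The only subtlety I anticipate is making sure the orientability flag is updated correctly under \emph{every} merging operation from \cref{sec:remove_simplices}, in particular Case 1 where two faces with differing orientability are merged and one must double the genus of the orientable one before summing. This is already handled in the \SoGSR subroutine, so no new work is required; the proof of correctness is essentially a one-line appeal to \SoGSRdot with the refined annotation, and no new lemmas are needed.
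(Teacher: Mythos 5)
Your proposal is correct and takes essentially the same route as the paper, which literally proves this theorem in one sentence: ``The \CSR problem is actually a special case of \SoGSR where the number of connected components $c=1$.'' You are right, however, that this sentence is slightly imprecise as stated: the SoG problem as formulated only asks about total genus, number of components, and boundary, whereas two non-homeomorphic connected surfaces (e.g. the torus and the projective plane, both genus $1$) can share all three of those data, so one genuinely needs orientability as an additional attribute for c-SR. The paper silently relies on the fact that the annotated cell complex data structure already carries a per-face orientability boolean, so the ``global annotation'' used for SoG already records orientability; one only has to \emph{check} that bit at the root for c-SR. You make this implicit step explicit and verify it survives the merge rules of Section~\ref{sec:remove_simplices} (in particular the genus-doubling step when merging an orientable face with a non-orientable one via \cref{lem:handle_crosscap_equivalence}). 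The running-time accounting and the $g,b\in\OO(n)$ observation are identical to the paper's. Your write-up is therefore the same proof, filled in with more care at exactly the point where the paper is terse.
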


\subsubsection{Subsurface Packing}

In the \SP problem, we can discard annotations entirely, as we do not care about the genus of our surface. This saves a factor of $gb$ in our running time.

\begin{theorem}
    Let $K$ be a 2-dimensional simplicial complex with treewidth $k$ Hasse diagram. Let $B \subset K$ be a disjoint union of $b$ simple cycles. Let $c$ be a positive integer. There is an algorithm to determine if there is a subcomplex $S\subset K$ with $c$ connected components in $2^{\OO(k\log k)}cn=2^{\OO(k\log k)}n^{2}$ time.
\end{theorem}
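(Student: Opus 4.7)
My plan is to reuse the dynamic-programming framework of Section \ref{sec: algorithms} wholesale, with the single observation that for \SPdot{} we need not record any homeomorphism-type information about the components. Concretely, I would run the same dynamic program on a nice closed tree decomposition of the Hasse diagram, storing at each node $t$ a set $\V[t]$ of cell complexes obtained from candidate solutions via the equivalence-preserving moves of \cref{sec:remove_simplices}, but with all face annotations (genus, orientability, boundary components) stripped. The node-processing rules of \cref{sec:dynamic program} and the link-checking lemmas (\cref{lem:edge_link_complete,lem:vertex_complete_link}) never reference these annotations, so the correctness of the dynamic program is unaffected.

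Next I would re-bound $|\V[t]|$. The counting argument from the \SR running-time analysis shows that the number of unannotated cell complexes over the $\OO(k)$ real and dummy edges associated with $X_t$ is $2^{\OO(k \log k)}$: one chooses a subset of these edges, a successor bijection on the subset, and a partition of the resulting boundary components into faces. It remains to count the connected components represented by empty faces, and since any cell complex with strictly more than $c$ empty faces is irrelevant to the decision, we may cap the number of empty faces at $c$. This contributes only an $\OO(c)$ factor, rather than the $(gb)^{\OO(c)}$ factor needed in the \SR bound. Hence $|\V[t]| = 2^{\OO(k \log k)} c$, and the same node-by-node processing used for \SR gives $2^{\OO(k \log k)} c \cdot \poly(k)$ time per node, totalling $2^{\OO(k \log k)} c n$ over the $\OO(kn)$ nodes of the tree decomposition.

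At the root $r$ we have $X_r = \emptyset$ and $K_r = K$, so every edge has been removed by the procedure of \cref{sec:remove_simplices}, and each $\Tilde\Sigma \in \V[r]$ is simply a disjoint collection of empty faces, one per connected component of the underlying candidate solution. By \cref{lem:dp_root} these candidate solutions are exactly the combinatorial surfaces in $K$ with boundary $B$, so the algorithm accepts iff some $\Tilde\Sigma \in \V[r]$ consists of exactly $c$ empty faces, checkable in $\OO(1)$ time per entry. Using $c \leq n$, the overall running time is $2^{\OO(k \log k)} c n = 2^{\OO(k \log k)} n^{2}$. The only conceptual subtlety is the cap at $c$: without bounding the number of empty faces we would have no useful bound on $|\V[t]|$, but pruning every cell complex with strictly more than $c$ empty faces is safe because the decision depends only on whether a candidate with exactly $c$ components exists.
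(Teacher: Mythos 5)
Your proposal takes essentially the same approach as the paper: discard all face annotations (which the link-checking lemmas in \cref{sec:check_candidacy} never inspect), cap the number of empty faces at $c$ exactly as the SR analysis does, and conclude that the count of distinct cell complexes per node drops from $2^{\OO(k\log k)}(gb)^{\OO(k+c)}$ to $2^{\OO(k\log k)}c$. The paper states the \SP bound as a one-sentence corollary of the SR/SoG running-time analysis (``discard annotations entirely\ldots saves a factor of $gb$''), and your write-up is a faithful, more detailed elaboration of that same state-space reduction.
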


\subsubsection{A Note on Boundaries}

For each of our problems, we assume that the boundary $B$ is given. We could instead ask to find a surface with a given \textit{number} of boundary components. Our algorithm can be adapted to handle this problem, but at the cost of making the running time slightly worse. For this problem, any edge in our complex could potentially be on the boundary of the surface. For our algorithm, this means that any edge could be replaced by a boundary dummy edge. Therefore, a cell complex at a node $t$ will contain a subset of $O(k^2)$ possible edges: the edges in $X_t$, and boundary dummy edges between any pair of vertices in $X_t$. However, we can now make it a requirement that a vertex in a candidate solution is incident to at most two boundary dummy edges. In our analysis, this means that a single cell complex a node $t$ will still have $O(k)$ edges, but these edges come from a set of $O(k^2)$ possible edges.  Therefore, the running time of our algorithm for c-SR, SoG, and SP becomes $k^{O(k^2)}n^{O(1)} = 2^{O(k^2\log k)}n^{O(1)}.$


\section{Lower Bounds} \label{sec:reduction_all}

This section is about proving lower bounds on the runtime for some of the problems we have studied in this paper. In particular, for both \SoGSR and \SP we prove that the algorithms from \cref{sec: algorithms} are optimal under the ETH. This means that we have essentially pinned down the computational complexity of these two problems. We have also proved a lower bound for the \SR problem, but this lower bound does not match the runtime of any known algorithm.

Let $k$ be the width of a given (nice) path decomposition of the Hasse diagram of the simplicial complex given as input. This section centers around proving the following theorem.

\begin{theorem}\label{thm:lower_bound}
Assuming the ETH, no algorithm can solve \SRdot, \SoGSR or \SP in $2^{\oo(k\log k)}n^{\OO(1)}$ time. The parameter $k$ denotes the width of a given (nice) path decomposition of the Hasse diagram of the input simplicial complex.
\end{theorem}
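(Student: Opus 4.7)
I would give a single reduction from \DCPdot (DCP), the problem of deciding whether a directed graph $D$ contains $c$ pairwise vertex-disjoint directed simple cycles, and read off all three claimed lower bounds from the same construction by only varying the target surface. DCP parameterised by the pathwidth $k$ of $D$ is known to require $2^{\Omega(k\log k)}n^{\OO(1)}$ time under the ETH (via the permutation-based $k\times k$ lower bound machinery of Lokshtanov--Marx--Saurabh, applied to cycle packing), so it is a natural source problem; the paper's own macros for cycles $\cyc$, $\cycA$, $\cycB$ and for \DCPdot suggest this is indeed the intended approach.

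\textbf{Construction.} Given an instance $(D,c)$ together with a nice path decomposition of $D$ of width $k$, I would build a pure $2$-dimensional simplicial complex $K$, a $1$-dimensional boundary $B\subset K$ consisting of disjoint simple cycles, and the surface data appropriate to each target problem: a model surface $X$ for \SRdot{}, parameters $(g,c')$ for \SoGSR{}, or just $c'$ for \SP{}. Each vertex $v\in V(D)$ is replaced by a constant-size \emph{vertex gadget} $\aspace{v}$ whose only surface completions either fill $\aspace{v}$ by a disc bounded by a fixed boundary cycle in $B$ (``$v$ is unused'') or pair exactly one in-port with one out-port (``$v$ is used and chooses which in-edge and out-edge to follow''). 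The permutation-style choice of which in-port is matched to which out-port inside $\aspace{v}$ is the source of the $k\log k$ factor, mirroring the standard ETH-tight rotation gadgets. Each directed edge $uv\in E(D)$ is replaced by an \emph{edge gadget}, a thin annulus (or Möbius strip, for non-orientable variants) of triangles glued to the out-port of $\aspace{u}$ and the in-port of $\aspace{v}$; including this annulus in a surface subcomplex forces $u$ to select out-edge $uv$ and $v$ to select in-edge $uv$. By construction, subcomplexes of $K$ homeomorphic to the target surface and with boundary $B$ are in bijection with families of $c$ vertex-disjoint directed cycles in $D$; each realised cycle can be made to contribute a topologically rigid piece (say, a torus with one distinguished boundary component in $B$) so that the total homeomorphism type depends only on $c$. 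Setting $X$ to the disjoint union of $c$ such tori, or $(g,c')=(c,c)$, or $c'=c$, makes the three target instances equivalent to the DCP instance simultaneously.

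\textbf{Main obstacle --- controlling the pathwidth of the Hasse diagram.} The hard part will be to turn a width-$k$ path decomposition of $D$ into a path decomposition of the Hasse diagram of $K$ of width $\OO(k)$. I would sweep the bags of the given decomposition of $D$ left to right and, at each position, insert a bounded sequence of new bags whose contents consist of the $\OO(1)$ simplices of whichever vertex or edge gadget is currently being introduced together with the $\OO(k)$ ``interface'' simplices of the at most $k$ neighbouring gadgets that are still alive. Because each gadget is of constant size and every directed edge of $D$ has its two endpoints together in some bag of the original decomposition, this sweep is well-defined and produces a valid path decomposition of the 1-skeleton plus triangles of $K$ of width $\OO(k)$. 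Closing the decomposition under taking faces, as in \cref{lem:td_to_closed_td}, multiplies bag sizes by a constant factor only, and converting to a nice path decomposition preserves width. Consequently a hypothetical $2^{\oo(k\log k)}n^{\OO(1)}$ algorithm for \SRdot{}, \SoGSR{}, or \SP{} would, composed with this reduction, solve DCP in $2^{\oo(k\log k)}n^{\OO(1)}$ time, contradicting the ETH and proving \cref{thm:lower_bound}. The delicate point throughout is to design the vertex gadgets so that the number of valid surface completions at $\aspace{v}$ equals the number of permissible (in-edge, out-edge) pairs at $v$ --- no more and no fewer --- since any extraneous completions would break the reduction, and any missing completions would weaken the lower bound below $k\log k$.
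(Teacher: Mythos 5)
You have the right source problem and the right high-level architecture (vertex gadgets plus edge gadgets, correspondence between cycle packings and families of tori, then a pathwidth-preserving embedding into the Hasse diagram). This is indeed what the paper does: it reduces from \DCPdot using the ETH lower bound of~\cite{cycle_packing_paper} and builds vertex gadgets that can absorb exactly one incoming and one outgoing edge gadget, with $0$-dimensional singularities playing the role of your ``rigid piece'' check.

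The gap is in the pathwidth-control step, which is the heart of the proof. You assert that ``each gadget is of constant size,'' but a vertex gadget for a vertex $v$ must have one attachment port per incident edge, so it has size $\Theta(\deg(v))$; it is precisely because vertex gadgets are \emph{not} constant size that the sweep cannot introduce them in one shot, and it is precisely because they have many ports that the \emph{order} in which those ports appear along the gadget matters. The paper devotes an entire subsection (``A Potential Explosion in Pathwidth,'' \cref{sec:potential_pathwidth_explosion}) to exhibiting a family of digraphs of pathwidth $2$ for which an adversarial port ordering inside a single high-degree vertex gadget produces Hasse diagrams containing $n\times n$ grid minors, hence treewidth $\Omega(n)$. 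Your sweep never says how the ports of a given $\aspace{v}$ are to be ordered, and ``together with the $\OO(k)$ interface simplices of the at most $k$ neighbouring gadgets still alive'' implicitly assumes the yet-unbuilt parts of a high-degree gadget can be deferred without cost, which is exactly what fails in the paper's counterexample. The fix in the paper is to construct $\aspace{t}$ by structural induction on the \emph{path decomposition of $D$}, growing each vertex gadget $\aspace{t}^v$ incrementally and attaching the edge gadget for $uv$ only at the unique forget bag where one endpoint of $uv$ is forgotten while the other is still present; this forces the port ordering to track the bag ordering and caps the width of $\tdspace{t}$ at $c\cdot\ntd{D}$. Without an argument of this kind your reduction is not pathwidth-preserving, and the conclusion does not follow. (Two smaller remarks: the paper works with $B=\emptyset$ rather than using a boundary cycle to absorb unused vertices, which avoids the ambiguity of a gadget being simultaneously disc-filled and cycle-traversed; and the $k\log k$ exponent is inherited wholesale from the DCP lower bound, so the reduction itself need not contain ``rotation gadgets'' --- it only needs to preserve pathwidth linearly.)
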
 

Since every path decomposition is also a tree decompositions, the treewidth of a graph is never higher than its pathwidth. \cref{thm:lower_bound} therefore implies that none of our problems can be solved in $2^{\oo(k\log k)}n^{\OO(1)}$ time, where $k$ is now the treewidth of the Hasse diagram.
\par
We focus on proving the result for \SRdot. After this, it will be easy to modify our arguments to prove similar results for the two other problems. 
At a conceptual level there are two parts to the proof.

\begin{enumerate}
    \item Define a reduction from \DCP to \SRdot.
    \item Show that the reduction can always be chosen so that the pathwidth of the Hasse diagram of the output space is bounded by some linear function of the pathwidth of the input graph.
\end{enumerate}
\subsection{Directed Cycle Packing}
\label{sec:directed cycle packing}
\DCP asks us to find as many vertex disjoint cycles in a directed graph as possible (see \cref{fig:disjoint_cycles}). This is essentially a directed, $1$-dimensional version of the SP problem, as the only compact 1-manifolds are circles (cycles) and closed intervals (paths).


\begin{problem}The \DCP (DCP) problem\\
INPUT: A directed graph $D$ on $n$ vertices and an integer $\ell$.\\ 
PARAMETER: The pathwidth $k$ of $D$.\\
QUESTION: Does $D$ contain $\ell$ vertex disjoint cycles?
\end{problem}

\begin{figure}[!ht]
    \centering
    \includegraphics[width = \textwidth]{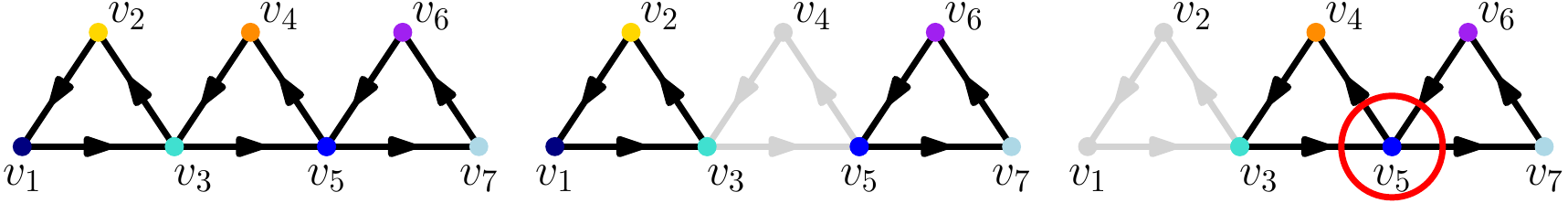}
    \caption{A directed graph $D$ (left), two vertex disjoint cycles contained in $D$ (middle) and two cycles in $D$ intersecting at a common vertex (right). This will be a guiding example for this section. 
    }
    \label{fig:disjoint_cycles}
\end{figure}

The DCP problem is a good starting point for our reduction not only because of its similarity to the SP problem but also because of the following theorem. 

\begin{theorem}[\cite{cycle_packing_paper}]\label{thm:ETH-cycles}
Assuming the ETH, the DCP problem cannot be solved in $2^{\oo(k\log k)}n^{\OO(1)}$ time, where the parameter $k$ denotes the width of a given (nice) path decomposition of the input graph.
\end{theorem}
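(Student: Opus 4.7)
The plan is to establish the ETH lower bound by a parameterized reduction from a problem already known to admit a $2^{\oo(k \log k)}$ ETH lower bound with respect to pathwidth. A natural starting point is a ``permutation'' variant such as \textsc{$k \times k$ Permutation Independent Set} (or \textsc{$k \times k$ Clique}), for which Lokshtanov, Marx, and Saurabh have shown no $2^{\oo(k \log k)}$ algorithm exists under the ETH. The goal is to encode each choice of ``which column is selected in row $i$'' as a choice of which directed cycle, among $k$ candidates, is picked from a row gadget, and to use vertex-disjointness of the cycle packing to enforce the global combinatorial constraints.

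First I would construct a \emph{row gadget} $R_i$ for each row $i \in [k]$. The gadget is a small directed graph containing exactly $k$ internally vertex-disjoint ``candidate cycles'' $\cyc_{i,1}, \dots, \cyc_{i,k}$, one per column. Auxiliary arcs within $R_i$ force any vertex-disjoint cycle of the final construction that meets $R_i$ to coincide with exactly one $\cyc_{i,j}$, so that selecting $\ell = k$ cycles commits the packing to a function $i \mapsto j(i)$. Next, for each pair $(i,i')$ of rows constrained by the source instance, I would introduce a \emph{consistency gadget} that shares a few vertices with $R_i$ and $R_{i'}$: the shared vertices are placed on $\cyc_{i,j}$ and $\cyc_{i',j'}$ exactly when the pair $(j,j')$ is forbidden by the source instance, so disjointness rules out precisely the illegal combinations.

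Then I would arrange the row gadgets along a linear ``spine'' and route the consistency gadgets as long thin pipes running alongside the spine. This is the crucial step for parameter control: a reference path decomposition of the source instance tells us which rows need to be ``live'' at any point, and I would place their gadgets so that at any single position on the spine only $\OO(k)$ vertices are active, yielding a path decomposition of the directed graph (and its Hasse diagram, viewed as a graph) of width $\OO(k)$. The size of the construction will be $\poly(k)$, so a hypothetical $2^{\oo(k \log k)} n^{\OO(1)}$ algorithm for \DCP would yield a $2^{\oo(k \log k)}$ algorithm for \textsc{$k\times k$ Permutation Independent Set}, contradicting the ETH.

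The main obstacle I foresee is the pathwidth-preservation step. Naively, a consistency gadget that touches two rows far apart on the spine would hold $\Omega(k)$ vertices live simultaneously with every intermediate row, blowing the pathwidth up to $\Omega(k^{2})$. Overcoming this requires the standard but delicate ``signal propagation'' trick: each consistency gadget is implemented as a chain of $\OO(1)$-size sub-gadgets that hand off a single piece of state from bag to bag, so that only a constant number of its vertices sit in any given bag while its endpoints still coincide with the appropriate candidate cycles. Verifying correctness of the reduction itself (both directions of the equivalence, and the counting of $\ell$ cycles) should be routine once the gadgets are in place.
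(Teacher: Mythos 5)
First, note that the paper does not prove \cref{thm:ETH-cycles} at all: it is imported verbatim from \cite{cycle_packing_paper} and used as the starting point of the reduction in \cref{sec:reduction_all}. So there is no in-paper proof to compare against, and your proposal has to be judged against the external literature. Your overall strategy --- reduce from a $k\times k$ permutation problem carrying a $2^{\oo(k\log k)}$ ETH lower bound, encode the column choice of each row as the choice of one of $k$ candidate cycles, and enforce constraints via vertex-disjointness --- is indeed the standard route and the one taken in the cited work.

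There is, however, a concrete gap in your pathwidth-preservation step, and it is not the one you flagged. For a $k\times k$ permutation clique/independent-set instance, constraints exist between essentially all $\Theta(k^2)$ pairs of rows, and the state of every row must be available at every constraint gadget that mentions it. Your fix --- implementing each consistency gadget as a chain of $\OO(1)$-size sub-gadgets --- reduces the per-bag footprint of \emph{one} pipe to $\OO(1)$, but in any linear layout there are cross-sections of the spine through which $\Theta(k^2)$ pipes must pass simultaneously, so the bags still have size $\Omega(k^2)$. That yields only a $2^{\oo(\sqrt{p}\log p)}$-type lower bound after substituting $p=\Theta(k^2)$, which is not what the theorem claims. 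Moreover, ``handing off a single piece of state from bag to bag'' with $\OO(1)$ vertices is problematic in a disjointness-only gadget: the state lives in a domain of size $k$, and the only way such gadgets encode it is positionally, i.e.\ by which of $k$ parallel tracks a cycle occupies, which costs $\Omega(k)$ vertices per bag per propagated state. The architecture that actually works is different: one takes $k$ \emph{long} cycles, one per row, running in parallel along the entire spine (so each cross-section sees $\OO(1)$ vertices from each of the $k$ cycles, hence bags of size $\OO(k)$), encodes the column choice of row $i$ in the \emph{route} that cycle $i$ takes through a sequence of choice gadgets, and checks each constraint locally where the two relevant cycles pass through a shared gadget. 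Your ``compact row gadget plus pairwise pipes'' layout cannot be repaired by the signal-propagation trick alone; it needs to be replaced by this parallel-cycles layout before the width bound of $\OO(k)$ goes through.
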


Given a digraph $D$, the reduction will construct a 2-dimensional simplicial complex $Y$ that contains $\ell$ disjoint tori if and only if $D$ contains $\ell$ vertex disjoint cycles. In fact, we show that the only connected subsurfaces without boundary in $Y$ are tori and that these are in a bijection with the directed cycles in $D$. Furthermore, any pair of these tori are \textit{disjoint} if and only if the corresponding directed cycles are vertex disjoint. 

\subsection{Important Shorthand Notation}

\cref{fig:notation} introduces some important shorthand notation that will help make the reduction easier to follow. Each column of the figure shows a different component that we will use when constructing the space $Y$. 
\par
The first row shows the symbol we use for the space. The second row shows the shorthand notation that we will use. The third row shows the ``topological space'' the notation represents. The fourth row indicates how we triangulate the space. Finally, the last row shows clearly and in detail the triangulation of each space. Here we have made ``cuts'' in the spaces so that they could be flattened down onto the plane. The identifications that undo these cuts are indicated by the use of differently colored arrows. 

\begin{figure}[!ht]
    \centering
    \includegraphics[width = \textwidth]{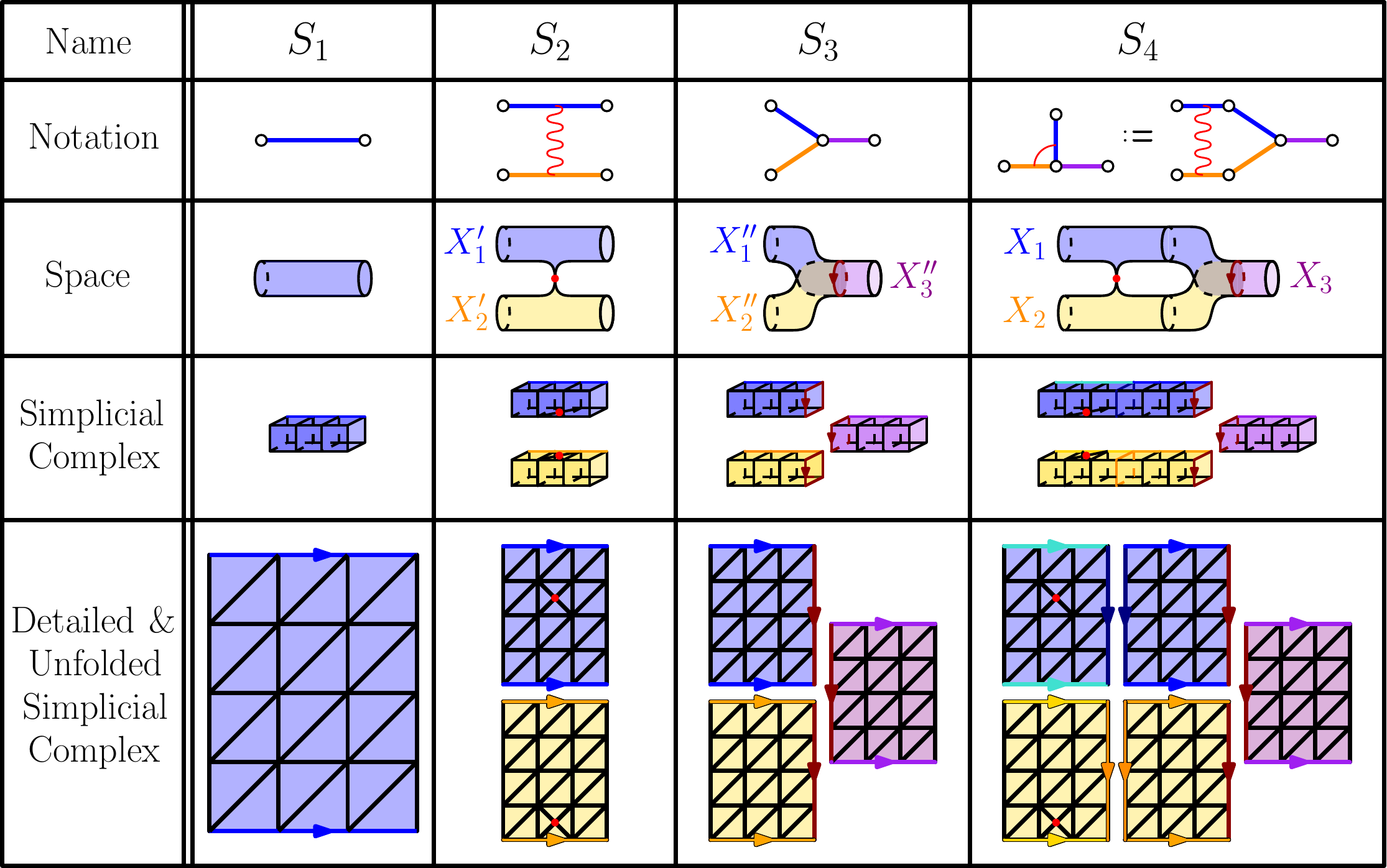}
    \caption{\label{fig:notation} Shorthand notation for specific triangulations of $S_1, \dots, S_4$ that we will use frequently throughout the section. 
    }
\end{figure}

The first column shows a cylinder, $S_1$. The second column shows a space $S_2$ consisting of two cylinders, $X_1'$ and $X_2'$. These cylinders are glued together at a single interior point, called a (0-dimensional) \textit{\textbf{singularity}}. The third column shows a space $S_3$ consisting of three cylinders $X_1'', X_2''$ and $X_3''$, each with a single boundary component attached to the same circle. The fourth and final column shows the space $S_4$, obtained by gluing $S_2$ and $S_3$ together. More precisely, $S_4$ also consists of three cylinders, $X_1 = X_1'\cup X_1'', X_2= X_2' \cup X_2''$ and $X_3 = X_3''$, each having a single boundary component attached to the same circle. Additionally, $X_1\cup X_2$ contains a 0-dimensional singularity.
\par
We establish some important properties of the spaces $S_1,S_2, S_3$ and $ S_4$ from \cref{fig:notation}. 
In order to describe these properties we temporarily extend the notion of a boundary, a term usually reserved for manifolds, to the world of simplicial complexes. For the remainder of the section, we use the below definition of the boundary of a simplicial complex. 

\begin{definition}
The \textbf{boundary of a simplicial complex} $K$ is the closure of the set of $1$-simplices in $K$ that only have a single coface. We denote the boundary as $$\boundary(K) = \cl{\{\rho \in K_1 | \#\{ \sigma | \rho \subset \sigma  \} = 1 \}}.$$
\end{definition}


The space $Y$ we will construct in the reduction is made by gluing together multiple copies of $S_1,S_2,S_3$ and $S_4$. It will therefore be important to know how a manifold contained in $Y$ can intersect these smaller components. In particular, we would like to know which manifolds $X$ are contained in each of $S_1, S_2, S_3,$ and $S_4$ respectively, where $\boundary(X) \subset \boundary(S_i)$. The following remark answers this question.

\begin{remark}
\label{remark:cylinder_boundaries}
Let $S_1,S_2,S_3$ and $S_4$ be the spaces introduced in \cref{fig:notation}.
\begin{enumerate}
    \item The only (non-empty) 2-manifold $X\subseteq S_1$ where $\boundary(X) \subseteq \boundary(S_1)$ is $S_1$ itself. 
    \item The only 2-manifolds $X\subseteq S_2$ where $\boundary(X) \subseteq \boundary(S_2)$ are $X_1'$ and $X_2'$. 
    \item The only 2-manifolds $X\subseteq S_3$ where $\boundary(X) \subseteq \boundary(S_3)$ are $X_1''\cup X_2''$, $X_1''\cup X_3''$ and $X_2''\cup X_3''$. 
    \item The only 2-manifolds $X\subseteq S_4$ where $\boundary(X)\subseteq \boundary(S_4)$ are $X_1 \cup X_3$ and $X_2 \cup X_3$.
\end{enumerate}
\end{remark}

\begin{proof}
Each of the four statements are intuitively obviously true. Formally, they can be proved easily by brute force: Simply go through all the 2-simplices in $S_i$ and assume that it is contained in a submanifold $X$. It is then easy to see which adjacent $2$-simplices must necessarily also be contained in the same submanifold. 
Whenever there is a choice to be made, simply branch and try all possibilities. 
\end{proof}

\subsection{Main Ideas of the Reduction}
\label{sec:reduction_outline}
This section gives an informal description of the simple idea behind the more technical reduction presented in \cref{sec:triangulation}.

\subsubsection{Cycles to Tori}
\label{sec:reduction_idea}
The reduction is best understood in terms of vertex gadgets and edge gadgets. In particular, \cref{fig:local_reduction} shows how a vertex $\xi$ is mapped to the vertex gadget $Y^{\xi}$, using the notation from \cref{fig:notation}. The figure also shows six edge gadgets (in black), three corresponding to the edges entering $\xi$ and three corresponding to the edges leaving $\xi$. The edge gadgets are unlabeled in the figure but can be identified by the vertex gadgets they are attached to. We think of each vertex gadget as composed of two sub-cylinders, one half for the incoming edge gadgets and the other half for outgoing edge gadgets. To better see this separation we draw the vertex gadget with a U-turn at the location of this divide in our figures.

\begin{figure}[!ht] 
    \centering
    \includegraphics[width = \textwidth]{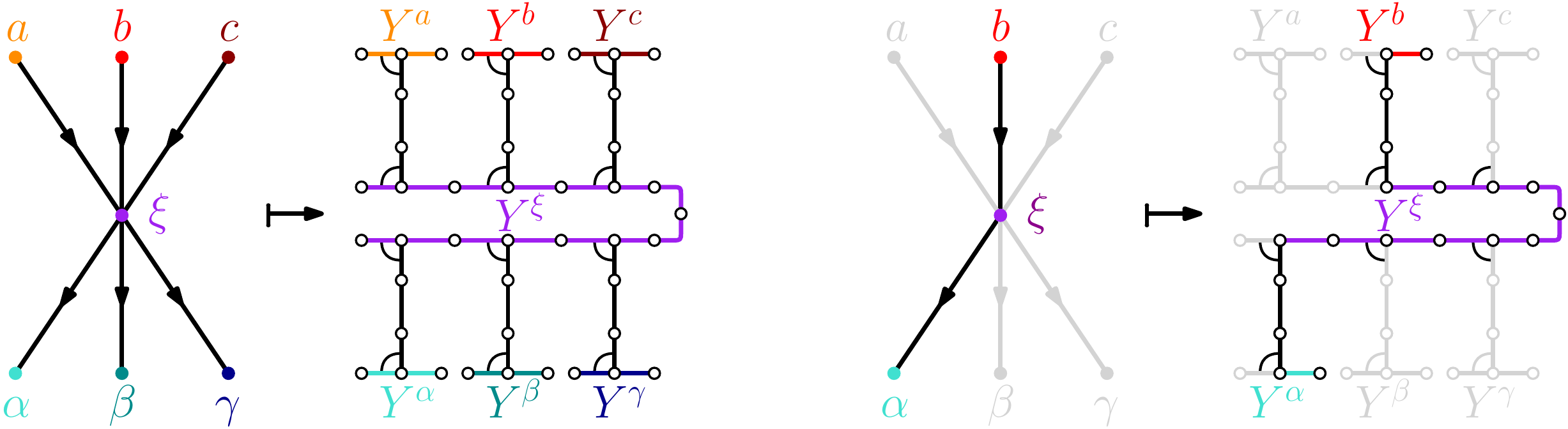}
    \caption{A local view of how a vertex $\xi$ is mapped to its vertex gadget $Y^\xi$ (left) and an illustration of how a directed cycle passing through the vertex $\xi$ is mapped to a submanifold in the space (right).
    }
    \label{fig:local_reduction}
\end{figure}
Each edge gadget is connected to the vertex gadgets corresponding to each of its two ends through a copy of $S_4$. The edge gadget contains the cylinder $X_1$ while the vertex gadget contains the other cylinders $X_2$ and $X_3$. Both the incoming and outgoing part of the vertex gadget consists primarily of a sequence of smaller cylinders, $X_2\cup X_3$, one for each incoming/outgoing edge. The boundary of the $X_3$ corresponding to one edge is attached to the boundary of the copy of $X_2$ corresponding to the next edge. The boundary of the ``last'' $X_3$ of the incoming edges is attached to one boundary component of a single additional cylinder, while the ``last'' $X_3$ of the outgoing edges is attached to the other boundary component of this same additional cylinder.

\begin{remark}
The order in which edge gadgets are attached to a vertex gadget is currently chosen arbitrarily. This is problematic, as we will see in \cref{sec:potential_pathwidth_explosion}.
\end{remark}

\begin{figure}[!ht]
    \centering
    \includegraphics[width = \textwidth]{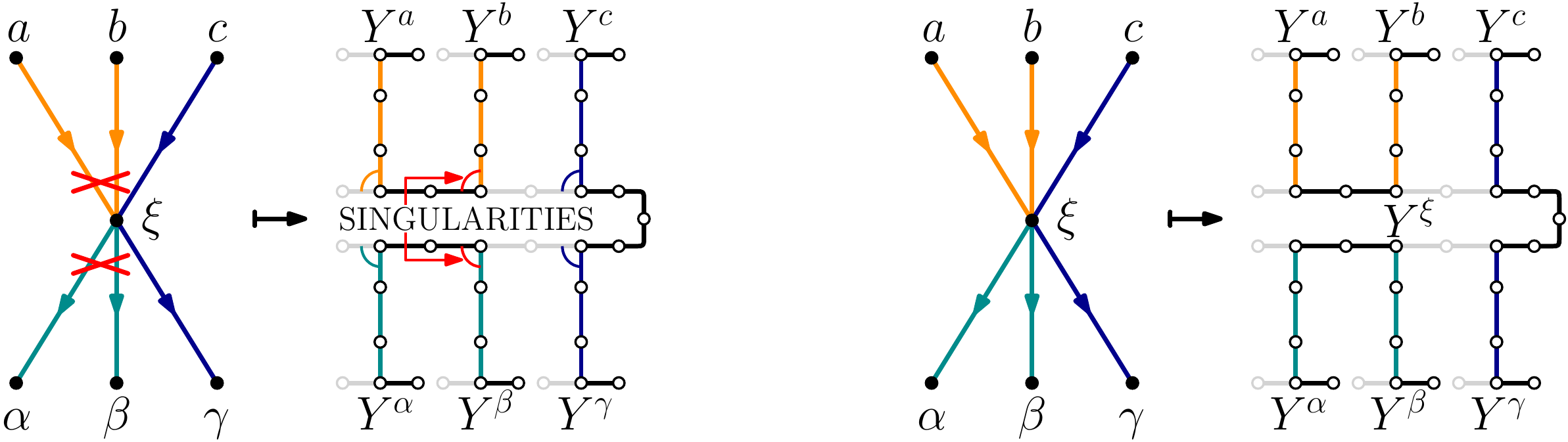}
    \caption{The leftmost figure shows how the singularities keeps ``badly behaved'' subcomplexes from becoming manifolds. The rightmost figure shows how the reduction would fail without the use of singularities between the vertex gadgets and edge gadgets.}
    \label{fig:singularities_reduction}
\end{figure}

By repeated use of Property 4 of Remark \ref{remark:cylinder_boundaries}, we can prove any potential manifold contained in this space must contain precisely one incoming and one outgoing edge gadget per vertex, assuming the manifold is not allowed to have a boundary. This is illustrated in  \cref{fig:singularities_reduction}. This figure also shows the importance of the 0-dimensional singularities in the reduction. The resulting space could otherwise contain tori that do not correspond to any directed cycle. An example of the correspondence between disjoint tori and vertex disjoint directed cycles is shown in \cref{fig:global_reduction}.

\begin{figure}[!ht]
    \centering
    \includegraphics[width = \textwidth]{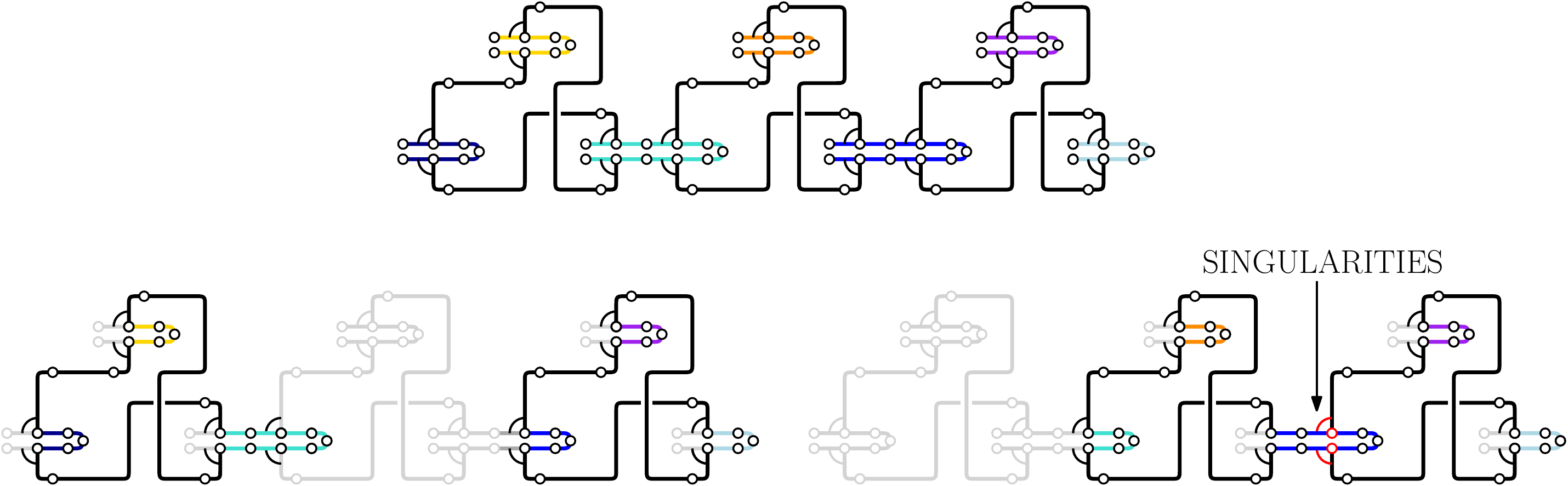}
    \caption{An illustration of how the graph from \cref{fig:disjoint_cycles} is mapped to spaces and how valid/invalid subsets of edges are mapped to manifolds/non-manifolds respectively.
    }
    \label{fig:global_reduction}
\end{figure}

We see in \cref{fig:global_reduction} that we can associate any pair of vertex disjoint cycles in the input graph to a pair of non-intersecting tori in the output space in an obvious way. Concretely, a cycle is mapped to a torus by sending the edges to edge gadgets and by then connecting these through the vertex gadgets. This association turns out to be a bijection with an inverse that maps a submanifold $X$ to the set of edges whose corresponding edge gadgets contains a $2$-simplex of $X$. That this inverse is well-defined is proved for the pathwidth-preserving reduction described in \cref{sec:triangulation} below. 

\subsubsection{A Potential Explosion in Pathwidth} \label{sec:potential_pathwidth_explosion}

This section investigates how the reduction described in \cref{sec:reduction_idea} can potentially blow up the pathwidth of the Hasse diagram of the output space. In particular, by choosing how the various gadgets are attached to each other in an ``adversarial'' way we can show that the gap between the pathwidth of an input graph and the Hasse diagram of the output space can be made arbitrarily large.

First we describe the family of graphs that we are interested in. It is a countable family, $G_1,G_2,G_3,\dots G_n,\dots$, and the first three graphs of the family are shown in \cref{fig:flexibility_in_reduction_graphs}. Described in words, the graph $G_n$ contains a directed paths on $n^2$ vertices, $v_1,v_2,v_3\dots, v_{n^2}$, as well as an additional vertex $v_0$. From $v_0$ there are directed edges $v_0v_i$ for all $1\leq i \leq m$.

\begin{figure}[!ht]
    \centering
    \includegraphics[width = \textwidth]{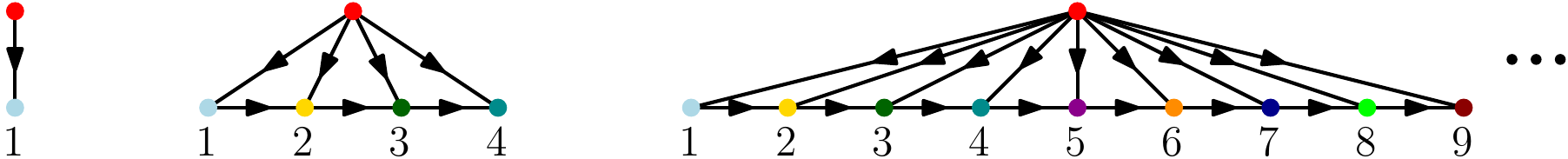}
    \caption{Graphs that are directed paths on $m$ vertices $v_i$ for $m=1$ (left), $m=4$ (middle) and $m=9$ (right) having also an additional vertex $u$ from which there are edges $uv_i$ for all $1\leq i \leq m$.}
    \label{fig:flexibility_in_reduction_graphs}
\end{figure}

The graphs in this family all have pathwidth at most $2$. In particular, the first graph has pathwidth $1$, since it is a path. The remaining graphs have pathwidth $2$. This is because a) they all have an obvious path decomposition of width $2$ and b) they all contain cycles when viewed as undirected graphs, so they do not have pathwidth $1$.

\begin{remark}
We can apply two (different) versions of the construction described in \cref{sec:reduction_idea} to the sequence of directed graphs $G_1,G_2,\dots, G_{n}, \dots $ and obtain two different sequences of spaces, $Y_1,Y_2,\dots, Y_n,\dots$ and $Y_1',Y_2',\dots, Y_n',\dots$, where:
\begin{itemize}
    \item the pathwidth of $Y_n$ is at most some fixed constant $c \leq 1000000$.
    \item the pathwidth of $Y_n'$ is at least $n$.
\end{itemize}
\end{remark}

The rest of this subsection will be focused on giving an informal justification of this remark. 

The construction of the spaces $Y_1,Y_2,\dots, Y_n,\dots$ is somehow the ``natural'' one. The general idea is that the order in which gadgets corresponding to outgoing edges are attached to the vertex gadget $v_0$ is given by the topological ordering of $G$, order where $v_1$ comes ``first'', then $v_2$, then $v_3$ etc. An example is shown in \cref{fig:flexibility_in_reduction_graphs_good} for $Y_3$. 

\begin{figure}[!ht]
    \centering
    \includegraphics[width = \textwidth]{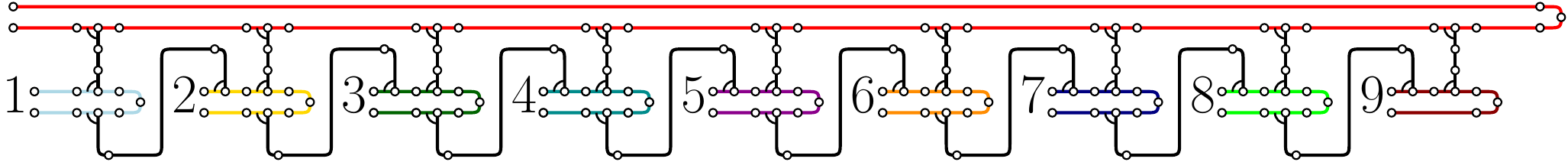}
    \caption{An example of one way the graph on $9$ vertices pictured in \cref{fig:flexibility_in_reduction_graphs} may be mapped to a space (using the notation of \cref{fig:notation}). By generalizing this ordering we get a family of spaces whose pathwidth is clearly bounded by some fixed constant.}
    \label{fig:flexibility_in_reduction_graphs_good}
\end{figure}

The spaces we produce using the obvious generalization of this pattern all have Hasse diagram with pathwidth bounded by some constant. A valid (but sub-optimal) path decomposition is shown in \cref{fig:flexibility_in_reduction_graphs_good_bags}. The idea is to use $n^2$ bags, where the bags $X_i$ for $1 \leq i < n^2$ contain
\begin{itemize}
    \item the simplices in the vertex gadget corresponding to $v_{i}$ and $v_{i+1}$.
    \item the simplices of the edge gadgets $v_{i}v_{i+1}$, $v_0v_{i}$ and $v_0v_{i+1}$.
    \item the simplices of a sub-cylinder of the vertex gadget $v_0$ having fixed size while also containing the intersection with the edge gadgets of $v_0v_{i}$ and $v_0v_{i+1}$.
\end{itemize}

Finally, the last bag $X_{n^2}$ contains the simplices of $X_{n^2-1}$ as well as the constant number simplices in the vertex gadget of $v_0$ that is not contained in the other bags. It is elementary to unwrap the notation and to see that this is indeed a valid path decomposition of the Hasse diagram of the spaces.

\begin{figure}[!ht]
    \centering
    \includegraphics[width = \textwidth]{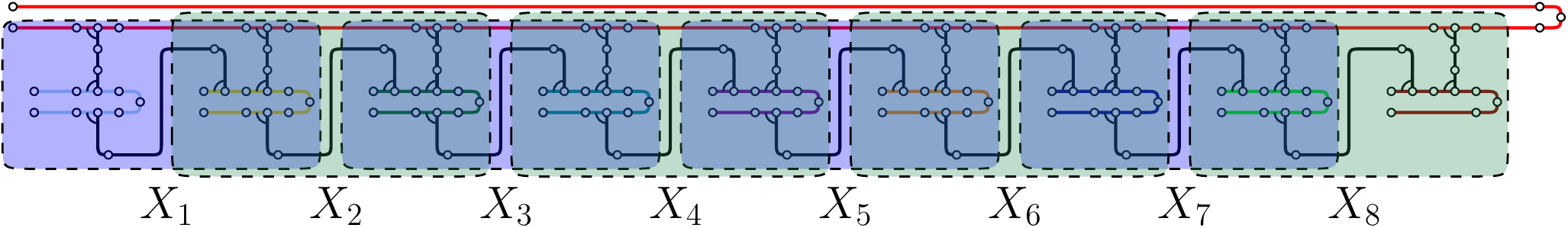}
    \caption{The path decomposition of the graph shown in \cref{fig:flexibility_in_reduction_graphs_good}. The last bag, $X_9$ is not pictured due to lack of space but it contains precisely $X_8$ and the remaining simplices not covered by the other bags.}
    \label{fig:flexibility_in_reduction_graphs_good_bags}
\end{figure}

An upper bound on the width of this particular kind of path decomposition can be found by a ``back-of-the-envelope'' computation. More precisely, each bag contains fewer than $100$ of the components from \cref{fig:notation}. Further more, each of these component contains fewer than $10000$ simplices. This means that each bag in the decomposition contains fewer than $1000000$ simplices, making the path width of the space less than $1000000$, as claimed.

We now turn our attention to the ``badly behaved'' spaces, $Y_n'$. These spaces are a bit more complicated than the ``nicely behaved'' spaces $Y_n$ we just saw. We will therefore start to carefully unravel the special case of $Y_3'$, shown in \cref{fig:flexibility_in_reduction_bad}. Once this is done, the general pattern will become obvious.

\begin{figure}[!ht]
    \centering
    \includegraphics[width = \textwidth]{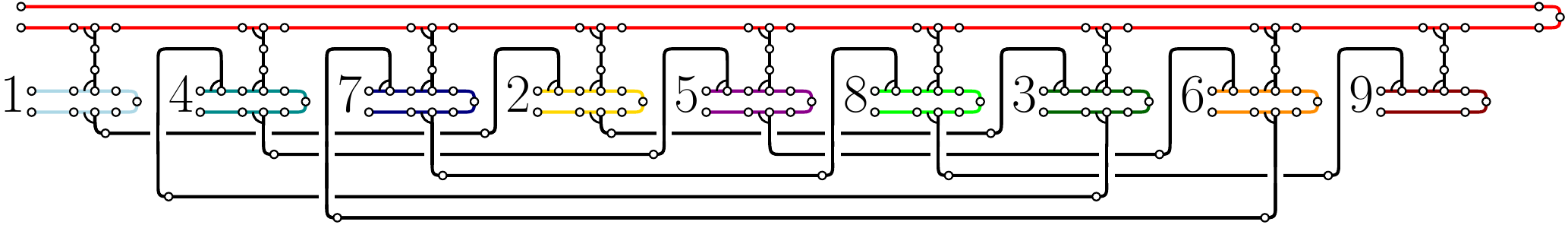}
    
    \vspace{0.3cm}
    
    \centering
    \includegraphics[width = \textwidth]{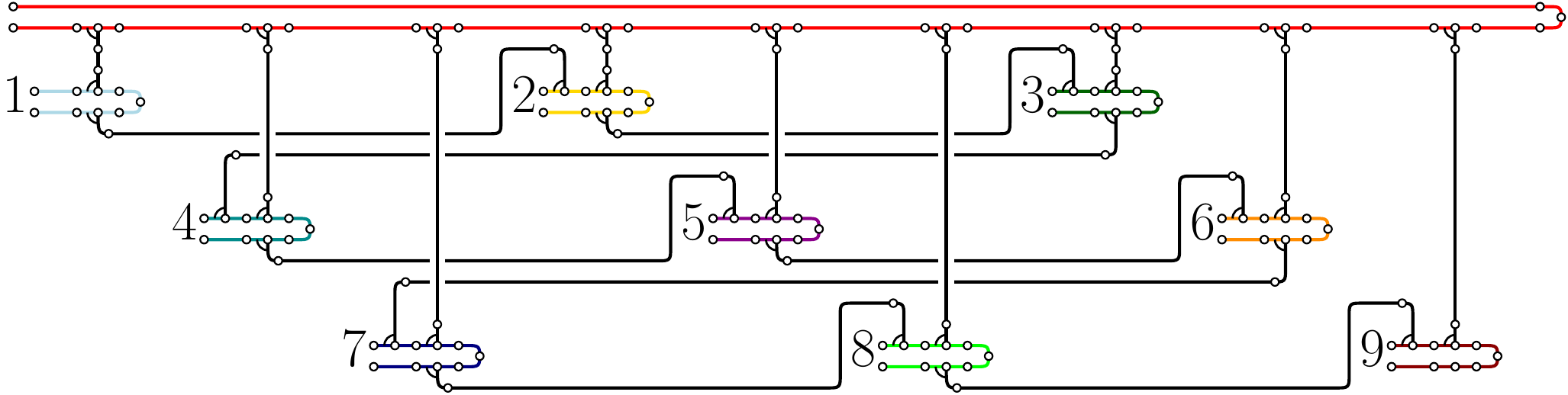}
    
    \caption{Two different figures depicting the same space, $Y_3'$.
    \label{fig:flexibility_in_reduction_bad}}
\end{figure}

Let us recall our claim for this space, namely that the pathwidth of the Hasse diagram of $Y_3'$ is higher than $3$. This is in many ways a modest goal, as intuition tells us that it must actually be much higher. However, it is the way in which we prove that the pathwidth of this space is higher than $3$ that is important, as this argument can be generalized to any $n$.

Concretely, our goal is to show that the Hasse diagram of this space contains a particular $3\times 3$ grid as a graph minor. This is useful, as it is well known that graphs that contains $n\times n$ grids as graph minors always have treewidth (and therefore also pathwidth) at least $n$. 

The first step towards this end was to deform the space slightly, as we did in the lower half of \cref{fig:flexibility_in_reduction_bad}. The next step is to look at a particular subcomplex shown in \cref{fig:flexibility_in_reduction_bad_2}. Since Hasse diagrams of subcomplexes are subgraphs of the Hasse diagram of the original complex, the graph minors of the Hasse diagram of any subcomplex are also graph minors of the Hasse diagram of the original graph.

\begin{figure}[!ht]
    \centering
    \includegraphics[width = \textwidth]{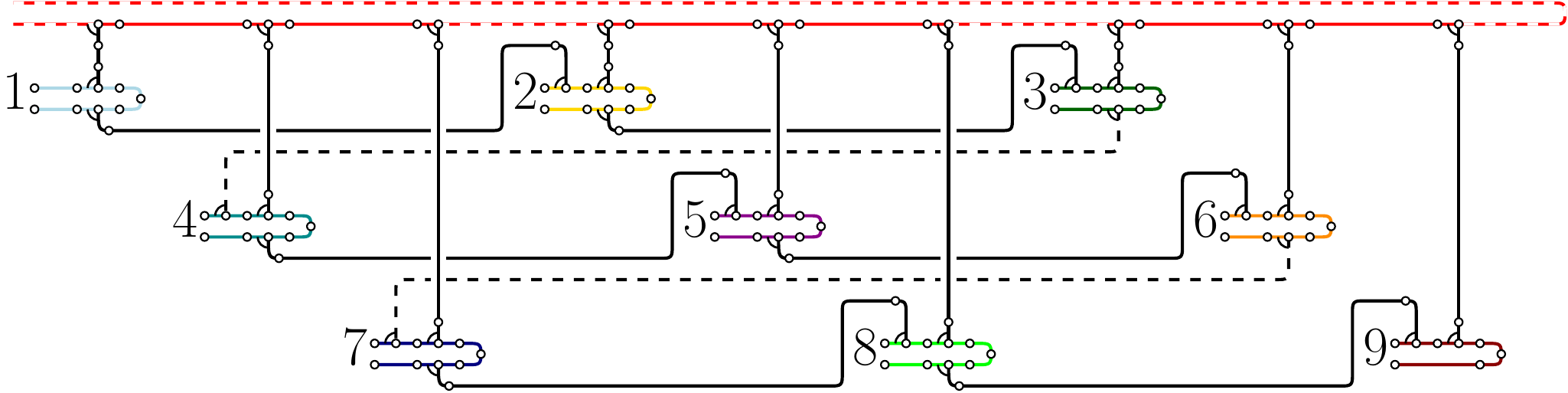}
    \vspace{0.3cm}
    
    \includegraphics[width = \textwidth]{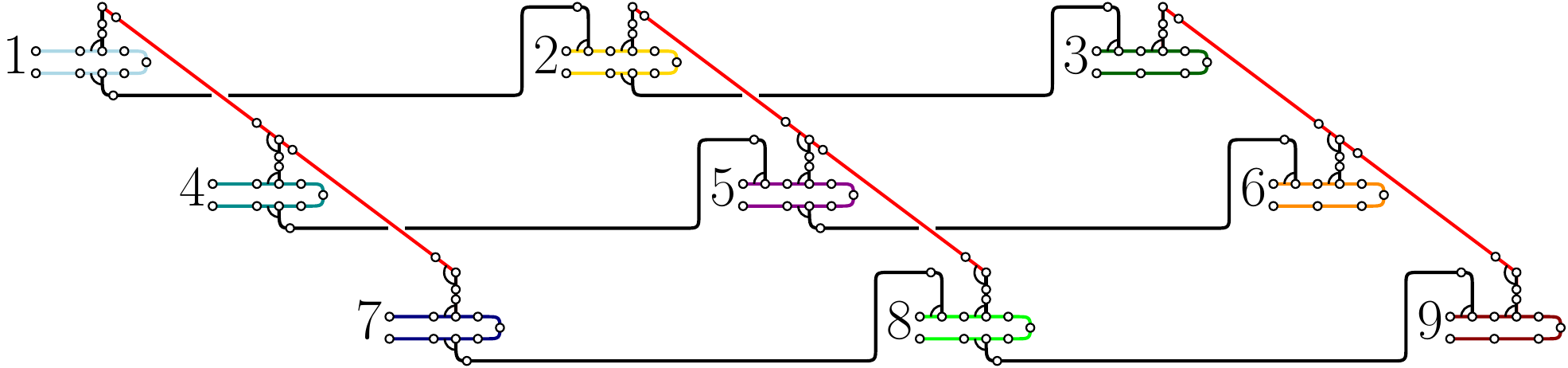}
    
    \caption{A subcomplex of the simplicial complex described in \cref{fig:flexibility_in_reduction_bad} (top) and space homeomorphic to this space that highlights the location of the location of the grid in the Hasse diagram of the space (bottom). \label{fig:flexibility_in_reduction_bad_2}}
\end{figure}

Finally, we pick an arbitrary simplex from each of the vertex gadgets (except for the vertex gadget corresponding to $v_0$). These simplices can be used to form the set of vertices of a $3\times 3$-grid contained as a minor in the Hasse diagram of the space.

\begin{figure}[!ht]
    \centering
    \includegraphics[width = \textwidth]{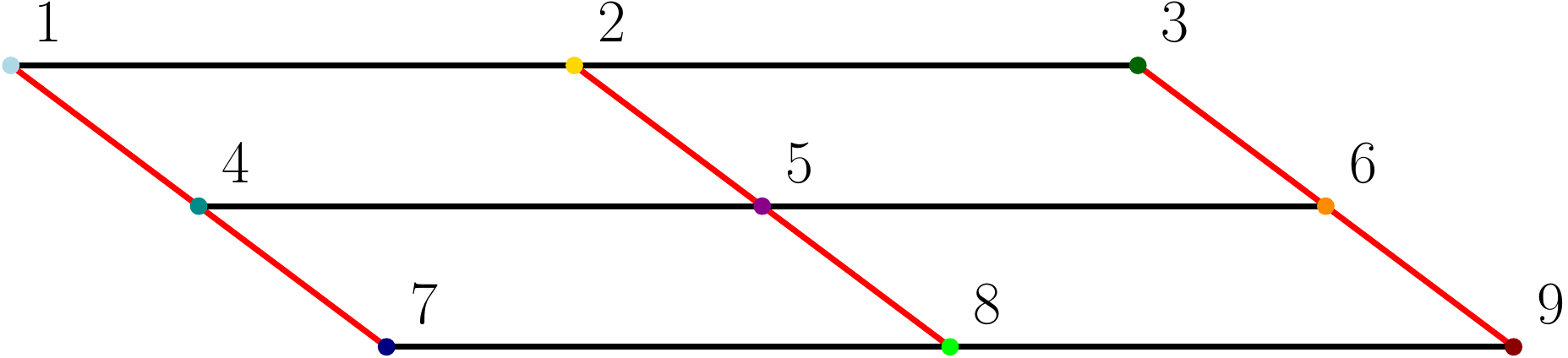}
    
    \caption{The $3\times 3$-grid minor of the subcomplex of $Y_3'$ shown in \cref{fig:flexibility_in_reduction_bad_2}. Hence $Y_3'$ has treewidth (and therefore also pathwidth) at least $3$. \label{fig:flexibility_in_reduction_bad_3}}
\end{figure}


    
The construction pictured in \cref{fig:flexibility_in_reduction_bad} can be generalized to produces a family of spaces whose treewidth can get arbitrarily bad. This is done by first attaching the vertex gadget $v_{1}$ to vertex gadget $v_0$ and then continue the process in the following order (reading from left to right, line by line from top to bottom):

$$
\begin{matrix}
v_{1+0\cdot n}, v_{1+1\cdot n}, \dots, v_{1+i\cdot n}, \dots, v_{1+(n-1)\cdot n},\\
v_{2+0\cdot n}, v_{2+1\cdot n}, \dots, v_{2+i\cdot n}, \dots, v_{2+(n-1)\cdot n},\\
\vdots\\
v_{j+0\cdot n}, v_{j+1\cdot n}, \dots, v_{j+i\cdot n}, \dots, v_{j+(n-1)\cdot n},\\
\vdots\\
v_{n+0\cdot n}, v_{n+1\cdot n}, \dots, v_{n+i\cdot n}, \dots, v_{n+(n-1)\cdot n}     
\end{matrix}
$$

This gives us the same kind of grid shape as we saw in the lower part of \cref{fig:flexibility_in_reduction_bad}, except it is now $n\times n$ rather than $3\times 3$. Fix a set of $n^2$ 0-simplices in the Hasse diagram, one from the interior of each vertex gadget. By attaching the vertex gadgets of each of the $v_i$'s to $u$ in the way we described, we have now ensured that in consecutive gadgets along each row there are paths in the Hasse diagram that passes through the vertex gadget $u$. Furthermore, between arbitrary pairs of simplices contained in consecutive gadgets along each column there are paths in the Hasse diagram going through the gadget associated to the edge $v_{j' + i'\cdot n}v_{j'+1 + i'\cdot n}$. Finally, these paths can be chosen so that they only intersect each other at their endpoints. By contracting these paths, we get our $n\times n$-grid as a grid minor of the Hasse diagram. It is well known that this implies that the Hasse diagram of the space has treewidth (and therefore also pathwidth) at least $n$.


\subsection{Avoiding the Explosion of Pathwidth}\label{sec:triangulation}

To make sure that the reduction works we need to guarantee that the pathwidth of the outputted space is bounded linearly by the pathwidth of the inputted graph. In order to ensure this we will work with several objects, each playing a part in the inductive construction of the reduction. We list the notation for all of these objects here and also add to each a short description.

\begin{itemize}
\itemsep0pt
    \item We are given a directed graph $D$ which is the input to the DCP problem.
    \item We are given a nice path decomposition of $D$ denoted by $\ntd{D}$.
    \begin{itemize}
        \item The bag at node $t$ of $\ntd{t}$ is denoted as $\bag{t}$.
        \item The set of vertices of $D$ that has been forgotten at node $t$ is denoted by $F_t$.
        \item We let the nodes of $\ntd{D}$ be totally ordered so that consecutive nodes are adjacent.
        \item We let the first bag be the leaf bag and the last bag be the root bag $X_r$.
    \end{itemize}
    \item We construct a space $\aspace{t}$ for each node $t$ of $\ntd{D}$.
    \begin{itemize}
        \item This space is defined by induction over the path decomposition $\ntd{D}$.
        \item The spaces are all nested so that $\aspace{t'} \subset \aspace{t}$ for every node $t' < t$ in $\ntd{D}$.
        \item The space $Y_D:= Y_r$ constructed at the end of the induction will contain $\ell$ disjoint manifolds (all of them tori) if and only if $D$ contains $\ell$ vertex-disjoint cycles.
     
    \end{itemize}
    \item We construct a path decomposition $\tdspace{t}$ of $\aspace{t}$ for every node $t$ of $\ntd{D}$.
    \begin{itemize}
        \item The path decompositions is defined by induction over the path decomposition $\ntd{D}$.
        \item The path decompositions are nested in the sense that every path decomposition $\tdspace{t'}$ appears as a sub-path of $\tdspace{t}$ for every node $t' \leq t$.
        \item The path decomposition $\tdspace{t}$ will always be bounded in size so that $\tdspace{t} < c\cdot \max_{t'\leq t}|X_{t'}|$ for some fixed constant $c$ (that remains fixed for all problem instances).
    \end{itemize}
\end{itemize}

\begin{figure}[!ht]
    \centering
    
    \includegraphics[width = \textwidth]{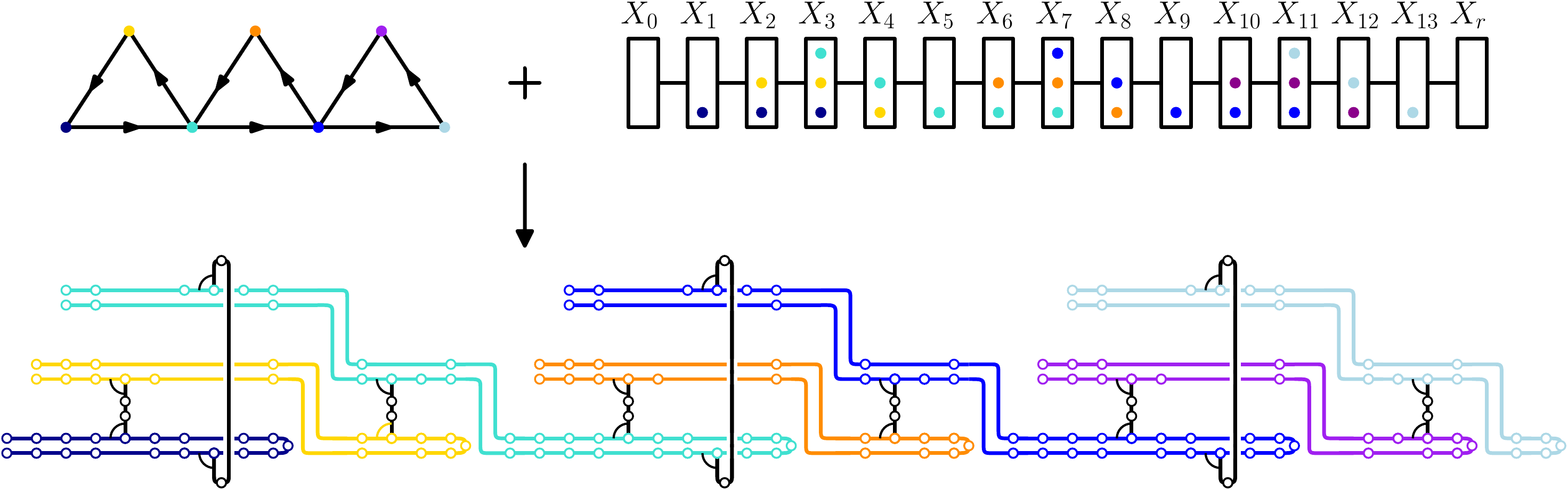}
    \caption{How the graph from \cref{fig:disjoint_cycles} (top left) is mapped to a space (bottom) having the same ``structure''/``ordering'' as the given nice path decomposition (top right) of the graph. This is essentially the space you would construct by following the inductive procedure described in \cref{sssection: inductive construction}.}
    \label{fig:example of low treewidth reduction}
\end{figure}

\subsubsection{Induction Hypothesis}
\label{sssection: induction hypothesis}

The induction hypothesis is somewhat complicated, involving multiple statements. We have separated these statements into three groups to make them more manageable. Roughly speaking, the first group is about the existence and properties of important subcomplexes of the space $\aspace{t}$. The second group is about key properties of $\tdspace{t}$. The third group will be important for proving that the construction results in a space that can be used as a reduction.

The space $Y_t$ consists of many components which we believe might be easier to understand visually. We therefore recommend that the reader study \cref{fig:example of low treewidth reduction} and \cref{fig:example of partitioning in the induction step } in order to get a better understanding of the role of the different components. The induction hypothesis we will use in our construction is as follows. 

\par

\textbf{\textit{Existence of components of $\aspace{t}$:}}
\begin{enumerate}
\itemsep0pt
    \item the 2-simplices of $\aspace{t}$ are partitioned into two families of subcomplexes:
    \begin{itemize}
    \itemsep0pt
        \item for every edge $uv$ with $u,v\in F_t\cup X_t$ and where $u$ and/or $v$ in $F_t$, $Y_t$ contains a cylinder shaped subcomplex $\aspace{t}^{uv}$. The two boundary components of $\aspace{t}^{uv}$ are denoted $IN(\aspace{t}^{uv})$ and $OUT(\aspace{t}^{uv})$. \footnote{Intuitively, this is the edge gadget for $uv$ described in \cref{sec:reduction_idea}.}
        \item for every vertex $v \in F_t \cup X_t$, $\aspace{t}$ contains a cylinder shaped subcomplex $\aspace{t}^v$.\footnote{Essentially, $Y_t^v$ is the vertex gadget described in \cref{sec:reduction_idea} for all $v\in F_t$. Technically, the two simplicial complexes are slightly different as the gadget constructed in this induction contains some extra cylinders added for cosmetic reasons, but they are still homeomorphic.}
    \end{itemize}
    \item the 2-simplices of $\aspace{t}^v$ are in turn partitioned by into the two subcomplexes $\mathcal{A}_t^v$ and $\mathcal{B}_t^v$. \footnote{The subcomplex $\mathcal{A}_t^v$ is the subcomplex of the vertex gadget that is attached to outgoing edge gadgets, and $\mathcal{B}_t^v$ is the subcomplex of the vertex gadget attached to incoming edge gadgets.} The complexes $\mathcal{A}_t^v$ and $\mathcal{B}_t^v$ satisfy the following properties:
        \begin{itemize}
            \itemsep0pt
            \item $\mathcal{A}_t^v$ contains a boundary component $\overline{\mathcal{A}_t^v}$ with a cylinder neighbourhood.
            \item $\mathcal{B}_t^v$ contains a boundary component $\overline{\mathcal{B}_t^v}$ with a cylinder neighbourhood.
            \item if $v \in X_t$ then $\mathcal{A}_t^v$ and $\mathcal{B}_t^v$ are disjoint.
            \item if $v \in F_t$ then $\mathcal{A}_t^v$ intersects $\mathcal{B}_t^v$ at $\overline{\mathcal{A}_t^v} = \overline{\mathcal{B}_t^v}$.
        \end{itemize}
        
    \item for every edge $uv$ with $u,v\in F_t\cup X_t$ and where $u$ and/or $v$ in $F_t$:
    \begin{itemize}
            \itemsep0pt
            \item $\mathcal{B}_t^u$ intersects $\aspace{t}^{uv}$ at $IN(\aspace{t}^{uv})$.
            \item $\mathcal{A}_t^v$ intersects $\aspace{t}^{uv}$ at $OUT(\aspace{t}^{uv})$.
        \end{itemize}
\end{enumerate}

\begin{figure}[!ht]
    \centering
    \includegraphics[width = \textwidth]{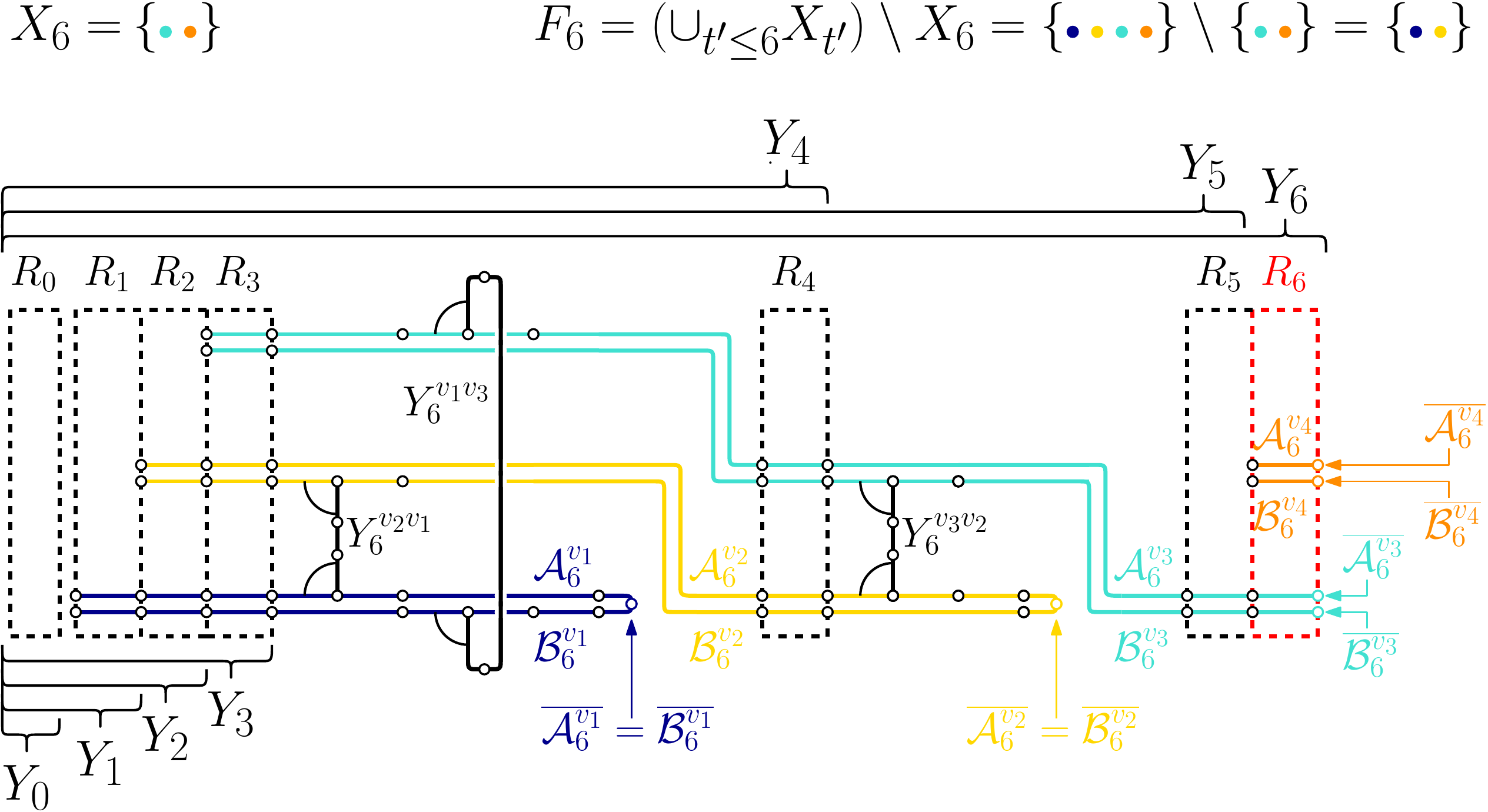}
    \caption{\label{fig:example of partitioning in the induction step } An example of the different components of $Y_t$. In particular, we have zoomed in on the construction of the space from \cref{fig:example of low treewidth reduction} at node $t=6$. We followed the construction presented in \cref{sssection: inductive construction} to make this space.}
\end{figure}

\textbf{\textit{Conditions on the path decomposition $\tdspace{t}$ of $Y_t$:}}
\begin{enumerate}
\itemsep0pt
    \item $\tdspace{t}$ is a path decomposition of the Hasse diagram of $Y_t$
    \item $\tdspace{t}$ contains a bag $R_t$ with degree $\leq 1$ that is precisely equal to set of the (cylindrical) neighbourhoods of $\overline{\mathcal{A}_t^v}$ and $\overline{\mathcal{B}_t^v}$ for every $v\in X_t$.
    \item The width of $\tdspace{t} \leq c\cdot \operatorname{npd}(D)$ for some fixed constant $c$. 
\end{enumerate}



\textbf{\textit{Properties of $Y_t$ used in the proof of correctness:}}
\begin{enumerate}
\itemsep0pt
    \item For every edge $uv$ with $u,v\in F_t\cup X_t$ where $u$ and/or $v$ in $F_t$ there is a...
    \begin{itemize}
        \itemsep0pt
        \item cylinder in $\mathcal{B}_t^u$ with boundary components $OUT(\aspace{t}^{uv})$ and $\overline{\mathcal{B}_t^u}$
        \item cylinder in $\mathcal{A}_t^v$ with boundary components $IN(\aspace{t}^{uv})$ and $\overline{\mathcal{A}_t^v}$
    \end{itemize}
    \item If $M$ is a (non-empty) $2$-manifold in $\mathcal{B}_t^u$ having a boundary that a) intersects some other component and b) has a boundary on the boundary of $\mathcal{B}_t^u$ then $M$ is a cylinder having $OUT(\aspace{t}^{uv})$ as one boundary component and $\overline{\mathcal{B}_t^u}$ as the other. Furthermore, any two such $2$-manifolds in $\mathcal{B}_t^u$ intersect and $M\cup\aspace{t}^{uv}$ is a cylinder.
    \item If $M$ is a (non-empty) $2$-manifold in $\mathcal{A}_t^v$ having a boundary that a) intersects some other component and b) has a boundary on the boundary of $\mathcal{A}_t^v$ then $M$ is a cylinder having $IN(\aspace{t}^{uv})$ as one boundary component and $\overline{\mathcal{A}_t^v}$ as the other. Furthermore, any two such $2$-manifolds in $\mathcal{A}_t^v$ intersect and $M\cup\aspace{t}^{uv}$ is a cylinder.
    
\end{enumerate}

\subsubsection{Reduction by Structural Induction on Path Decompositions}
\label{sssection: inductive construction}

There are three kinds of bags in a path decomposition of a graph, each of which requires special attention. We need to prove that for every kind of node $t$ we can construct a space $\aspace{t}$ and a path decomposition $\tdspace{t}$ (in polynomial time) satisfying the induction hypothesis. If $t$ has a child node $t'$ we may assume by induction that we have access to a space $\aspace{t'}$ and a path decomposition $\tdspace{t'}$ that satisfies the induction hypothesis at the child node.
\par
\textbf{The Leaf Node} is particularly easy to deal with. In particular, the empty simplicial complex and the ``empty'' path decomposition (containing a single empty bag) satisfies the criteria of the induction hypothesis.

\begin{lemma}
Let $D$ be a directed graph, $\ntd{D}$ be a nice path decomposition of $D$ and $t$ be an leaf node in $\ntd{D}$. Then we can construct a space $\tdspace{t}$ and a path decomposition $\tdspace{t}$ satisfying the induction hypothesis for the node $t$ in constant time. 
\end{lemma}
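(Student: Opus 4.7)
The plan is to verify that the empty simplicial complex $Y_t = \emptyset$, together with the trivial path decomposition $\operatorname{pd}(Y_t)$ consisting of a single empty bag $R_t = \emptyset$, satisfies every clause of the induction hypothesis at a leaf node $t$. Since $t$ is a leaf in the nice path decomposition $\operatorname{npd}(D)$, we have $X_t = \emptyset$, and by convention $F_t = \emptyset$ as well (no vertex has yet been forgotten). The construction itself is immediate and runs in constant time, so the real work is simply checking that the hypothesis holds vacuously.

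First I would dispatch the existence clauses. The two families of subcomplexes are indexed over edges $uv$ with $u,v \in F_t \cup X_t$ and over vertices $v \in F_t \cup X_t$. Since $F_t \cup X_t = \emptyset$, both index sets are empty, so the partition of the 2-simplices of $Y_t$ into $Y_t^{uv}$'s and $Y_t^v$'s is the empty partition of the empty set, which agrees with $Y_t = \emptyset$. The further partition of each $Y_t^v$ into $\mathcal{A}_t^v$ and $\mathcal{B}_t^v$, and the intersection conditions between $\mathcal{B}_t^u, \mathcal{A}_t^v$ and $Y_t^{uv}$, are likewise vacuous.

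Next I would check the conditions on $\operatorname{pd}(Y_t)$. A single-bag path decomposition with an empty bag is trivially a valid path decomposition of the Hasse diagram of $Y_t = \emptyset$, which has no nodes or edges. The unique bag $R_t = \emptyset$ has degree $0 \leq 1$, and it coincides with the required union of cylindrical neighbourhoods of $\overline{\mathcal{A}_t^v}$ and $\overline{\mathcal{B}_t^v}$ for $v \in X_t$, since that union is empty. The width bound $|R_t| \leq c \cdot \operatorname{npd}(D)$ is satisfied for any constant $c \geq 0$, consistent with the constant that will be fixed globally in the construction. Finally, the three ``properties used in correctness'' quantify over edges $uv$ with endpoints in $F_t \cup X_t$ or $2$-manifolds contained in $\mathcal{B}_t^u$ and $\mathcal{A}_t^v$, and all such quantifications are vacuous.

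I do not foresee any obstacle in this lemma: the only subtle point is the bookkeeping requirement that the path decomposition contain a distinguished bag $R_t$ of degree at most one that matches the boundary-neighbourhood union, and this is arranged by taking the decomposition to be a single empty bag. The lemma should therefore be stated and its proof given in one or two lines, setting up a clean base case for the inductive construction carried out for introduce and forget nodes in the subsequent lemmas.
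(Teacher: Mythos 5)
Your proposal is correct and follows the paper's approach exactly: set $Y_t = \emptyset$, take the path decomposition to be a single empty bag $R_t = \emptyset$, and observe that every clause of the induction hypothesis holds vacuously. You are more explicit than the paper (which simply asserts it is "trivial to verify"), but the argument is the same.
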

\begin{proof}
The leaf node $t$ is the base case of the inductive construction. By definition there are no explored/forgotten nodes. This means that setting $\aspace{t}=\emptyset$ is a valid construction. The path decomposition is also easily defined, as it consists of a single bag, $R_t$, which is empty. It is trivial to verify that this construction satisfies the induction hypothesis.
\end{proof}

\textbf{The Introduce Bag} is a bit more complicated than the leaf bag. Here we need to extend the space and path decomposition we have constructed inductively by adding the space $Y_t^{v}$ for the new vertex. We have decided to add edge gadgets when vertices are forgotten, so this procedure is not too complicated. The space representing the new vertex is just the disjoint union of two cylinders.

\begin{lemma}
Let $D$ be a directed graph, $\ntd{D}$ be a nice path decomposition of $D$ and $t$ be an introduce node in $\ntd{D}$ with a child node $t'$. Given a space $\aspace{t'}$ together with a path decomposition $\tdspace{t'}$ satisfying the induction hypothesis for the node $t'$, we can construct a space $\tdspace{t}$ and a path decomposition $\tdspace{t}$ satisfying the induction hypothesis for the node $t$ in polynomial time.
\end{lemma}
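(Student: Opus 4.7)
The plan is to construct $\aspace{t}$ by taking the disjoint union of $\aspace{t'}$ with two fresh, disjoint, triangulated cylinders $C_A$ and $C_B$ of constant size. Since the newly introduced vertex $v$ has not appeared in any earlier bag of $\ntd{D}$, the tree decomposition property of $\ntd{D}$ forbids $v$ from sharing an edge with any vertex of $F_t = F_{t'}$, so no new edge gadgets need to be created at this step. I set $\aspace{t}^{v} := C_A \sqcup C_B$, $\mathcal{A}_t^v := C_A$, $\mathcal{B}_t^v := C_B$, and designate one boundary component of each cylinder as $\overline{\mathcal{A}_t^v}$ and $\overline{\mathcal{B}_t^v}$; the opposite boundary components are kept free for attachment when $v$ is later forgotten. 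Every other vertex gadget $\aspace{t}^{u}$ and every existing edge gadget $\aspace{t}^{uw}$ is inherited unchanged from $\aspace{t'}$, and since $v \in X_t$ the required disjointness $\mathcal{A}_t^v \cap \mathcal{B}_t^v = \emptyset$ is immediate.

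Next I would extend $\tdspace{t'}$ to a path decomposition $\tdspace{t}$ of the Hasse diagram of $\aspace{t}$ by appending a short sequence of bags after the existing root bag $R_{t'}$, using a fixed constant-width path decomposition of each of $C_A$ and $C_B$ to introduce their simplices one by one and then forget every simplex lying outside the cylindrical neighborhoods of $\overline{\mathcal{A}_t^v}$ and $\overline{\mathcal{B}_t^v}$. The final appended bag becomes the new root bag $R_t$; it consists of $R_{t'}$ together with the simplices of those two cylindrical neighborhoods. Because $R_t$ is terminal in the path it has degree $\leq 1$, and $\tdspace{t'}$ appears as an initial sub-path of $\tdspace{t}$ by construction.

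The verification of the induction hypothesis at $t$ is then almost entirely inherited from $t'$: the partition of $2$-simplices, the gadget and boundary conditions on components not involving $v$, and the three ``proof of correctness'' properties all carry over directly, the last holding vacuously at $v$ since no edge gadget is incident to $v$ in $\aspace{t}$. The new components $\mathcal{A}_t^v$ and $\mathcal{B}_t^v$ are disjoint cylinders whose designated boundary circles carry cylinder neighborhoods by design, so their conditions are satisfied. For the width bound, each new bag has size at most $|R_{t'}| + \OO(1)$, and since $|R_{t'}| \leq c\cdot |X_{t'}|$ and $|X_t| = |X_{t'}|+1$, a single fixed choice of the constant $c$ absorbs the additive overhead and preserves the width of $\tdspace{t}$ within $c \cdot \ntd{D}$.

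The main subtlety lies in triangulating $C_A$ and $C_B$ and arranging their bag sequence so that the simplices retained in $R_t$ form honest cylindrical collars of $\overline{\mathcal{A}_t^v}$ and $\overline{\mathcal{B}_t^v}$ while fitting inside a single bag of bounded size that also contains $R_{t'}$; this is achievable because a triangulated cylinder admits a path decomposition of constant width in which a dedicated endpoint bag captures a complete collar of either boundary circle.
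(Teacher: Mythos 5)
Your construction is correct and achieves the same goal, but it deliberately departs from the paper's at one point: the paper \emph{extends every existing vertex gadget} $\aspace{t'}^u$ for $u\in X_{t'}$ by attaching a fresh cylinder to each of $\overline{\mathcal{A}_{t'}^u}$ and $\overline{\mathcal{B}_{t'}^u}$, and then redesignates the far ends of those cylinders as the new distinguished boundaries $\overline{\mathcal{A}_t^u}$, $\overline{\mathcal{B}_t^u}$, so that the collars placed in $R_t$ consist entirely of freshly created simplices. You instead leave all pre-existing gadgets untouched and only adjoin the two disjoint cylinders $C_A, C_B$ for the new vertex, so $\overline{\mathcal{A}_t^u}=\overline{\mathcal{A}_{t'}^u}$ for all $u\in X_{t'}$ and $R_t \supset R_{t'}$. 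Both routes are valid — the paper's own footnote concedes the extra cylinders are ``cosmetic'' — and yours is arguably cleaner for verifying the correctness properties: for existing vertices, $\mathcal{A}_t^u=\mathcal{A}_{t'}^u$ so Properties~2 and~3 are inherited verbatim, whereas the paper has to argue through the appended cylinder. What the paper buys with its redundant cylinders is that $R_{t'}$ and $R_t$ become disjoint, so the intermediate bags can simply trade out the old collars for the new ones; your version instead requires every intermediate bag to retain $R_{t'}$ in full to preserve the connected-subpath condition, which you correctly note and which still respects the $O(\mathrm{npd}(D))$ width bound. One minor inaccuracy: you justify the absence of new edge gadgets by claiming the tree-decomposition axioms forbid $v$ from sharing an edge with $F_{t'}$, but that is not quite the operative reason — $v$ could very well be adjacent in $D$ to vertices already in $X_{t'}$. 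The actual reason is a design choice of the reduction: edge gadgets are created only at forget nodes, at the unique node where both endpoints have appeared and one is being forgotten. This does not affect correctness, but the justification should be phrased in terms of that design choice rather than a structural constraint on $D$.
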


\begin{figure}[!ht]
    \centering
    \includegraphics[width = .8\textwidth]{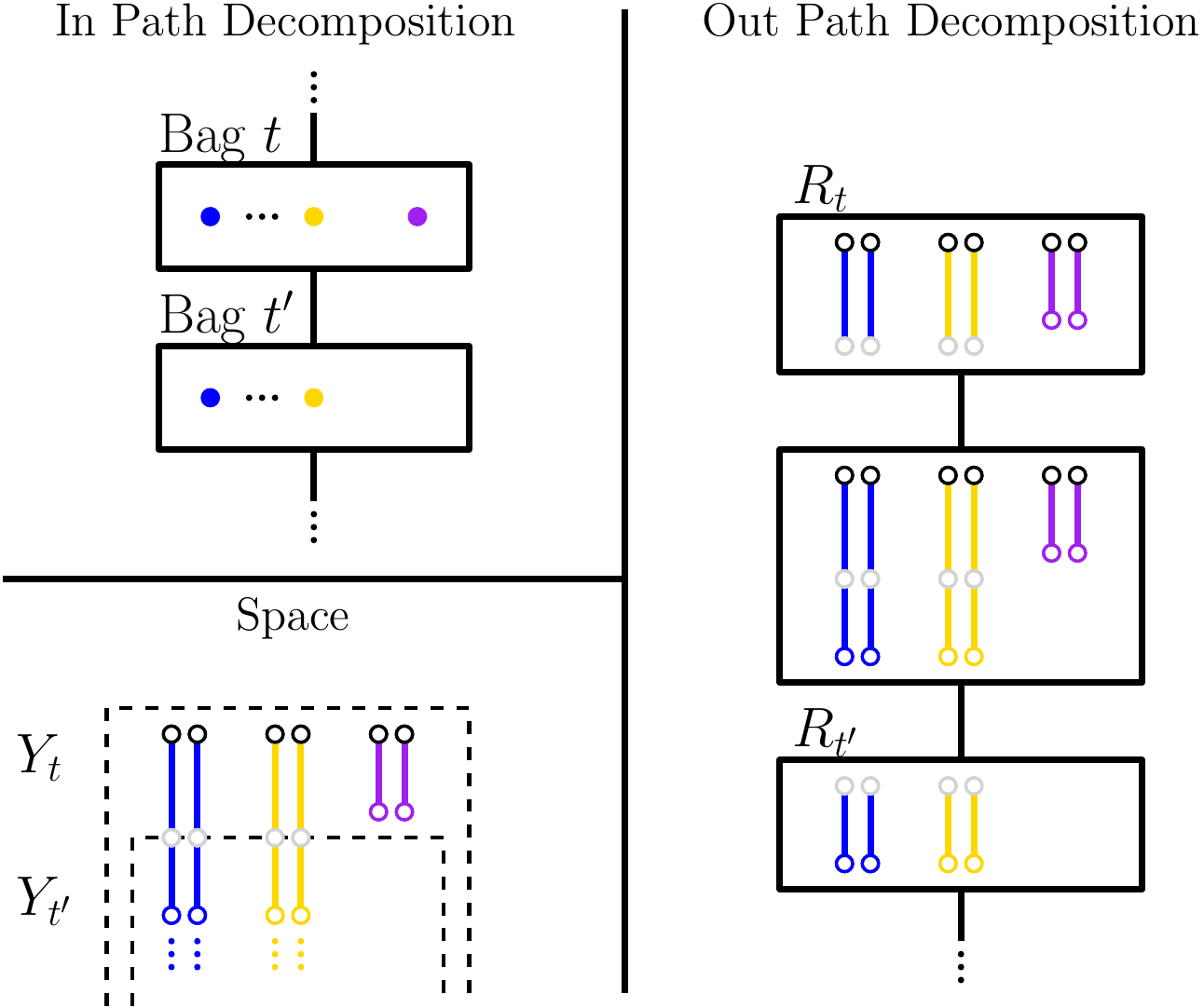}
    \caption{An overview of how to construct the space $\aspace{t}$ and the path decomposition $\ntd{Y_t}$ from $\aspace{t'}$ and $\ntd{Y_{t'}}$ at an introduce bag. In particular, the top left figure shows the contents of the bags of the input path decomposition at node $t$ and $t'$. The lower left figure shows how to extend the space $Y_{t'}$ to $Y_t$ and the rightmost figure shows how to extend the path decomposition $\ntd{Y_{t'}}$ by adding two bags to $R_{t'}$. The correspondence between nodes and spaces is given by color coding. }
    \label{fig:reduction1_intro_bag}
\end{figure}
\begin{proof}
Let $t$ be an introduce bag with a child bag $t'$ and let $w$ be the newly introduced vertex. By induction, we can assume that we have already constructed the space $\aspace{t'}$ and that it's $2$-simplices are partitioned by subcomplexes in the way specified by the induction hypothesis. We also assume that we have a path decomposition $\tdspace{t'}$ and that contains the bag $R_t$. The new space we construct, $\aspace{t}$, is defined by \cref{fig:reduction1_intro_bag_fig_2}. Note that we extend each space $Y_t^{u}$ for each existing vertex $u$ with a cylinder and that we add components $\mathcal{A}_t^w$ and $\mathcal{B}_t^w$ for the newly introduced vertex $w$. The path decomposition $\tdspace{t}$ of $\aspace{t}$ is similarly defined as an extension of the path decomposition of $\tdspace{t'}$ where we add one bag attached to $R_{t'}$ containing the simplices of $R_{t'}$ together with the new simplices. Then we attach $R_t$ to this intermediate bag, which contains the new boundaries and their cylindrical neighbourhoods, see \cref{fig:reduction1_intro_bag} above for details on the path decomposition.

We are now ready to prove the statements of the induction hypothesis for the new space and path decomposition. The two first groups of statements are rather trivial to prove, as they generally hold by construction and an elementary induction argument. We first discuss the statements regarding the properties of the components of $Y_t$

We added a component $Y_t^w$ for the vertex introduced to the bag. This component is in turn partitioned into $\mathcal{A}_t^w$ and $\mathcal{B}_t^w$, each containing precisely one cylinder, with one of the boundaries of each cylinder set to be equal to $\overline{\mathcal{A}_t^w}$ (respectively $\overline{\mathcal{B}_t^w}$). We also add two cylinders for each vertex $x$ in the bag and we attached these to the boundaries $\overline{\mathcal{A}_{t'}^x}$ and $\overline{\mathcal{B}_{t'}^x}$. The boundaries of these components are now set to be the other end of these cylinders (see \cref{fig:reduction1_intro_bag_fig_2}) and they now belong to the component they are attached to. The other $2$-simplices of the space are partitioned in the same way they where in $Y_{t'}$. Since the newly added component is not adjacent to any forgotten vertex (by basic properties of path decompositions) and since no new nodes have been forgotten moving from bag $t'$ to bag $t$, the rest of the properties about the components of the space remain true.

\begin{figure}[!ht]
    \centering
    \includegraphics[width = .8\textwidth]{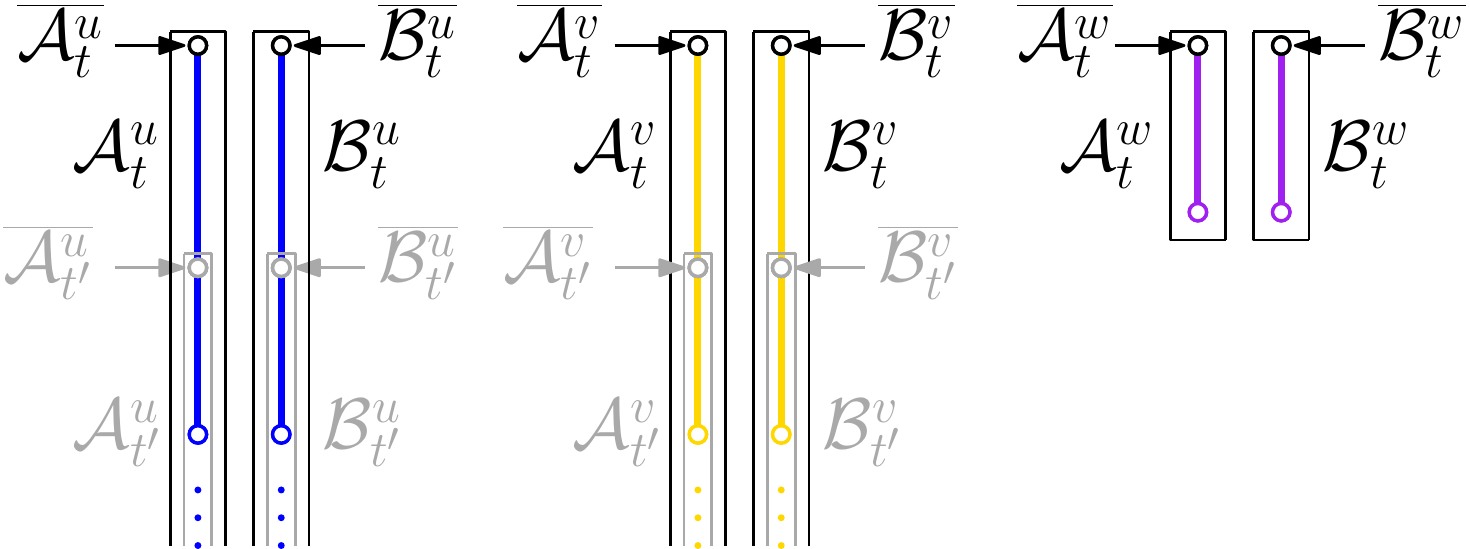}
    \caption{A more detailed description of how to construct the space $\aspace{t}$ from $\aspace{t'}$. In particular, the figure includes some additional labels naming some of the components of the space that where not clearly marked in \cref{fig:reduction1_intro_bag}.}
    \label{fig:reduction1_intro_bag_fig_2}
\end{figure}
To check that the properties of the path decomposition $\ntd{Y_t}$ holds is equally elementary. First, the proposed set of bags is a path decomposition because every simplex is in some bag, every face/coface pair is contained in some bag and the bags containing any given simplex forms a connected sub path. A proof of each statement follows from construction and together with a simple proof by induction. The bag $R_t$ exists purely by construction and the proof that the width of the path decomposition $\ntd{Y_t}$ is linearly bounded by the $\ntd{D}$ again only requires a simple induction proof (combined with the observation that neither of the bags we added are too large).

Finally, the properties required for the correctness proof are equally simple at the introduce nodes. First, the existence of cylinders with boundaries $IN(Y_t^{uv})$ and $\overline{\mathcal{A}_t^v}$ follows by construction and an elementary inductive proof. 

Properties 2 and 3 have similar proofs so we will only argue for property 3. In the case of components corresponding to the newly introduced vertex this is trivially true by construction, as this component contains no such manifolds. For the components corresponding to the other vertices in the bag these properties can be proved by induction. More concretely, let $M$ be a manifold contained in $\mathcal{A}_t^v$ and assume it has a boundary that a) intersects some other component of the space and b) has a boundary on the boundary of $\mathcal{A}_t^v$. Then $M$ cannot be contained entirely in $\mathcal{A}_{t'}^v$ because by induction this means that $M$ must have $\overline{\mathcal{A}_{t'}^v}$ as a boundary component which is not a boundary of $\mathcal{A}_t^v$ nor is part of another component. By construction, if $M$ contains one of the newly added simplices it must contain all to satisfy a) and b) and hence one boundary component of $M$ is $\overline{\mathcal{A}_t^v}$. Furthermore, $M \cap \mathcal{A}_{t'}^v$, is now a manifold and must therefore be a cylinder with boundary components $\overline{\mathcal{A}_{t'}^v}$ and some $IN(\aspace{t}^{uv})$.
\end{proof}

\textbf{The Forget Bag} is the most complicated, but we have been building up to it gradually. It is in many ways similar to the introduce bag in that we we need to extend the space and path decomposition we have constructed inductively. Here, instead of beginning to encode a vertex we need to finish it. In particular, we need to encode the edges between the forgotten vertex and its neighbours in the bag. This guarantees that every edge is encoded precisely once, since every edge there is precisely one bag containing both ends of the edge and where one end is forgotten. After these edges have been encoded, we just have to connect the two components $\mathcal{A}_t^v$ and $\mathcal{B}_t^v$ for the forgotten vertex.

\begin{lemma}
Let $D$ be a directed graph, $\ntd{D}$ be a nice path decomposition of $D$ and $t$ be an forget node in $\ntd{D}$ with a child node $t'$. Given a space $\aspace{t'}$ together with a path decomposition $\tdspace{t'}$ satisfying the induction hypothesis for the node $t'$, we can construct a space $\tdspace{t}$ and a path decomposition $\tdspace{t}$ satisfying the induction hypothesis for the node $t$ in polynomial time.
\end{lemma}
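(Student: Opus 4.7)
The plan is to construct $\aspace{t}$ by attaching to $\aspace{t'}$ all edge gadgets incident to the forgotten vertex $v$ and then gluing $\mathcal{A}_{t'}^v$ to $\mathcal{B}_{t'}^v$ along their exposed boundaries, and to build $\tdspace{t}$ by appending to $\tdspace{t'}$ at the bag $R_{t'}$ a sequence of constant-size introduce-style bags. Let $v$ be the vertex removed at $t$. By standard properties of nice path decompositions, every edge of $D$ incident to $v$ that has not yet been encoded has its other endpoint in $\bag{t}$, so the set of edges we must now encode is well defined.

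For the space itself, for each edge $uv$ (and symmetrically each edge $vu$) with $u \in \bag{t}$, I introduce a fresh cylindrical edge gadget $\aspace{t}^{uv}$ and glue its boundary $IN(\aspace{t}^{uv})$ to the cylindrical neighbourhood of $\overline{\mathcal{B}_{t'}^u}$ using the $S_4$ construction of \cref{fig:notation}, so that the gluing creates a $0$-dimensional singularity; analogously I glue $OUT(\aspace{t}^{uv})$ to the cylindrical neighbourhood of $\overline{\mathcal{A}_{t'}^v}$. On each side of each singularity I then attach a short cylindrical collar beyond the singularity, recording the $u$-side collar in $\mathcal{B}_t^u$ and the $v$-side collar in $\mathcal{A}_t^v$. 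The free ends of these collars, together with their cylindrical neighbourhoods, become the new $\overline{\mathcal{B}_t^u}$ and a re-indexed $\overline{\mathcal{A}_t^v}$. After all edges incident to $v$ are processed, I attach one final cylinder whose two boundary circles are identified with the current exposed $\overline{\mathcal{A}_t^v}$ and $\overline{\mathcal{B}_t^v}$; this realises the $v \in F_t$ case of the partition, in which the two components share exactly one circle. All components of $\aspace{t'}$ unrelated to $v$ are copied into $\aspace{t}$ unchanged.

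For the path decomposition, starting from $R_{t'}$ I append a path of bags in which each successive bag adds one $S_4$-gadget or one collar at a time, in a fixed order; each such bag differs from $R_{t'}$ by $O(1)$ simplices, hence has size $|R_{t'}| + O(1)$. The final bag $R_t$ is the union of the cylindrical neighbourhoods of $\overline{\mathcal{A}_t^u}$ and $\overline{\mathcal{B}_t^u}$ for $u \in \bag{t}$, obtained from $R_{t'}$ by deleting the simplices near $v$'s old boundaries and inserting the pushed-forward ones. The three path-decomposition axioms hold by construction (every simplex lies in some bag; every face/coface pair is contained in some common bag; and the set of bags containing any fixed simplex is a subpath), and because only $O(1)$ new simplices are added on top of $R_{t'}$, the width bound $|R_t| \leq c\cdot|\bag{t}|$ and $\operatorname{width}(\tdspace{t}) \leq c\cdot\ntd{D}$ is preserved.

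The main obstacle is verifying the third (correctness) block of the induction hypothesis for the updated components: any $2$-manifold $M \subseteq \mathcal{B}_t^u$ whose boundary meets another component and lies on $\boundary \mathcal{B}_t^u$ must be a cylinder with boundary circles $OUT(\aspace{t}^{uv})$ and $\overline{\mathcal{B}_t^u}$, and dually for $\mathcal{A}_t^v$; moreover any two such manifolds must intersect. Here the singularities inserted at each edge-gadget attachment are decisive: by \cref{remark:cylinder_boundaries}(4) applied locally at that attachment, $M$ can cross into $\aspace{t}^{uv}$ only along the entirety of $X_1 \cup X_3$ or $X_2 \cup X_3$, so no strict submanifold can ``peel off'' near the singularity. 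Combining this local obstruction with the inductive correctness statement applied to $M \cap \aspace{t'}$ forces $M$ into the claimed cylindrical form and forces any two candidate manifolds to share the distinguished collar added when processing edge $uv$, which supplies the required non-disjointness.
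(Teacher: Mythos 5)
Your proof follows essentially the same construction as the paper: you encode every edge incident to the forgotten vertex $v$ at this step (justified, as you note, by the fact that each such edge's other endpoint must still be in $\bag{t}$), you glue edge gadgets through $S_4$-type attachments with singularities, you push the distinguished boundaries outward through collars, and you finish by joining $\mathcal{A}_t^v$ to $\mathcal{B}_t^v$ along a final cylinder so that they share a single circle. Your path-decomposition argument (extend from $R_{t'}$ one gadget at a time, track the moving cylindrical neighbourhoods, and bound each bag by $|R_{t'}| + O(1)$) and your correctness argument (Remark~\ref{remark:cylinder_boundaries}(4) at each singularity prevents a manifold from peeling off, forcing it across the whole edge gadget; combine with the inductive statement on $M \cap \aspace{t'}$ to get the cylinder and the forced intersection) both match the paper's reasoning. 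This is the paper's proof, stated at roughly the same level of detail.
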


\begin{figure}[!ht]
    \centering
    \includegraphics[width = \textwidth]{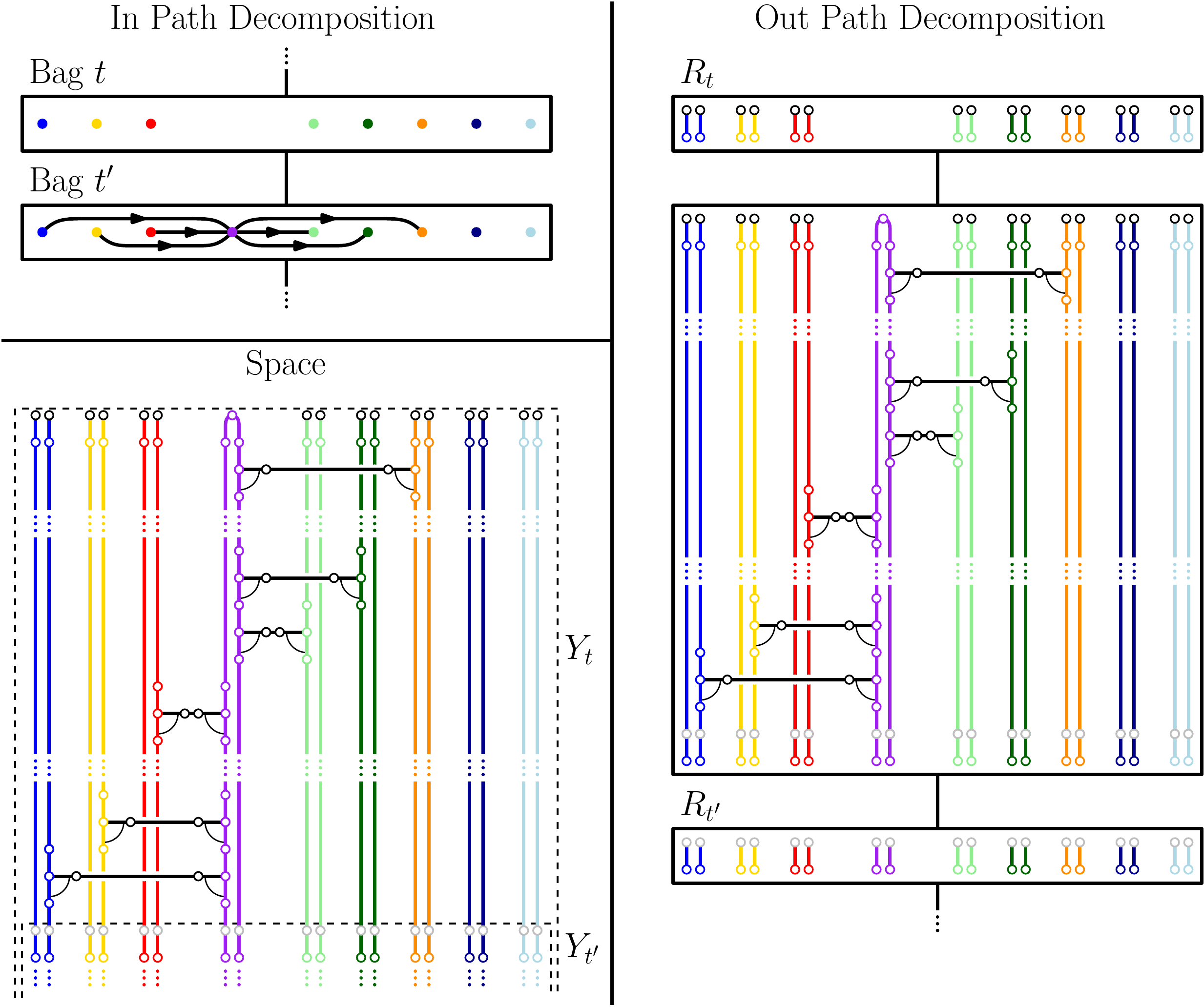}
    \caption{
    An overview of how to construct the space $\aspace{t}$ and the path decomposition $\ntd{Y_t}$ from $\aspace{t'}$ and $\ntd{Y_{t'}}$ at a forget bag. In particular, the top left figure shows the contents of the bags of the input path decomposition at node $t$ and $t'$. The lower left figure shows how to extend the space $Y_{t'}$ to $Y_t$ and the rightmost figure shows how to extend the path decomposition $\ntd{Y_{t'}}$ by adding two bags to $R_{t'}$. The correspondence between nodes and spaces is given by color coding. Each of the black components are edge gadgets.}
    \label{fig:reduction1_forget_bag}
\end{figure}

\begin{proof}
Let $t$ be an forget bag with a child bag $t'$ and let $v$ be the newly introduced vertex. By induction, we can assume that we have already constructed the space $\aspace{t'}$ and that it is partitioned as specified in the induction hypothesis. We also assume that we have a path decomposition $\tdspace{t'}$ containing a bag $R_t$. The construction of the new space, $\aspace{t}$, and its path decomposition is sketched in \cref{fig:reduction1_forget_bag}. The partitioning of the simplicies in the new space is shown in \cref{fig:reduction1_forget_bag_fig_2}.

\begin{figure}[!ht]
    \centering
    \includegraphics[width = \textwidth]{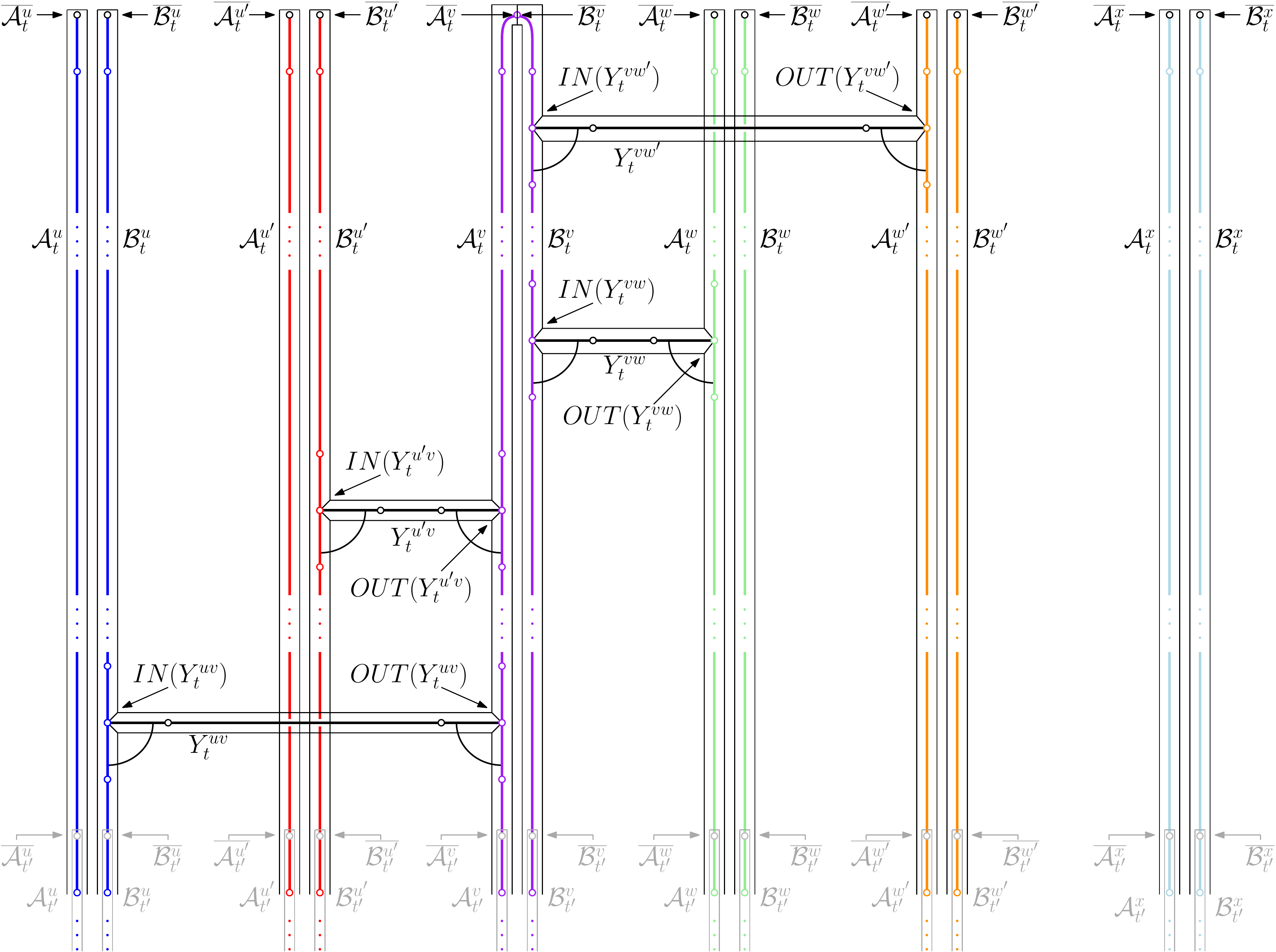}
    \caption{A more detailed description of how to construct the space $\aspace{t}$ from $\aspace{t'}$. In particular, the figure includes some additional labels indicating some of the components of the space that were not clearly marked in \cref{fig:reduction1_forget_bag}.}
    \label{fig:reduction1_forget_bag_fig_2}
\end{figure}


We use the same partitioning of the $2$-simplices contained in $Y_{t'}$ that we have constructed inductively. The newly added simplices to the space are partitioned into new and existing components in the way that is indicated by \cref{fig:reduction1_forget_bag_fig_2}. In particular, there are simplices added because of edges going to (resp. from) vertices in the bag from (resp. to) the vertex that is forgotten. The $2$-simplices of such an edge, $yv$ (resp. $vz$), make up a new component denoted $Y_t^{yv}$ (resp. $Y_t^{vz}$). The distinguished boundaries of the various components are updated as illustrated in the figures above. We do not discuss the first groups two of statements involved in the induction hypothesis this time around, because a) these ``proofs'' are essentially just observations and b) they are very similar to those we saw at the introduce bag. 

We will now prove the properties required to show the correctness of the reduction.

First, we prove the existence of cylinders in $\mathcal{A}_t^x$ with boundaries $IN(Y_t^{xy})$ and $\overline{\mathcal{A}_t^v}$ if at least one of the vertices $x$ or $y$ has been forgotten. The exact argument here depends on the vertex $x$. The only interesting case is if $x$ is in the bag at node $t'$. If the component $IN(Y_t^{xy})$ was added when we moved from bag $t'$ to $t$ then it is clear that such a cylinder exists by construction. If $IN(Y_t^{xy})$ was added earlier, then there is a cylinder from this boundary to $\overline{\mathcal{A}_{t'}^x}$ by the induction hypothesis. This cylinder can then be composed with the cylinder from $\overline{\mathcal{A}_{t'}^x}$ to $\overline{\mathcal{A}_{t}^x}$. An analogous argument can be used for the cylinders in $\mathcal{B}_{t}^x$.

Properties 2 and 3 have similar proofs, so we will only present the argument for property 3. Furthermore, we will only consider the case where the component is $\mathcal{A}_{t}^v$, where $v$ is the node that is being forgotten. The proofs for the other cases (in the case of different component and spaces corresponding to different vertices) the arguments are analogous.

By the same argument we saw at the introduce node, we can show that any manifold $M$ in $\mathcal{A}_{t}^v$ must intersect the newly added simplices. There are now two cases: either $M$  intersects $\mathcal{A}_{t'}^v$ or it does not. If $M$ intersects $\mathcal{A}_{t'}^v$ then we can make the argument that $M\cap \mathcal{A}_{t'}^v$ is a cylinder with one boundary component equal to some $IN(Y_{t}^{yv})$ for some previously forgotten node $y$ adjacent to $v$ and the other boundary component equal to $\overline{\mathcal{A}_{t'}^v}$. The latter is not a boundary of $\mathcal{A}_{t}^v$, so $M$ must contain some of the newly added $2$-simplices in $\mathcal{A}_{t}^v$. Because of the singularities, we can never add simplices so that our manifold $M$ have one of the $IN(Y_{t}^{u''v})$ as boundaries. Indeed, our only option is to move towards $\overline{\mathcal{A}_{t}^v}$. A similar argument works in the case when $M$ is entirely contained as a submanifold of the newly added simplices. If you assume $M$ to have some $IN(Y_{t}^{u''v})$ as one of its boundary components, then you are forced by the singularites to add $2$-simplices to $M$ until you end at $\overline{\mathcal{A}_{t}^v}$.
\end{proof}

\subsection{Proof of Correctness}\label{sec:proof_of_correctness}
Having done most of the work in the previous section, the actual proof that the reduction works is a simple lemma.

\begin{lemma}
The directed graph $D$ contains $\ell$ disjoint directed cycles if and only if $Y_D = Y_{root}$ contain $\ell$ disjoint manifolds. Furthermore, every manifold without a boundary contained in $Y_D$ is a torus.
\end{lemma}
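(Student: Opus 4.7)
The plan is to give an explicit bijection between directed cycles in $D$ and connected boundaryless $2$-manifolds in $Y_D$, and then check that this bijection sends vertex disjoint families of cycles to pairwise disjoint families of tori. For the forward direction, given a cycle $C=(\overbar{v_1\cdots v_k})$ in $D$, I will build a torus $M_C$ by taking the union of the edge gadgets $\aspace{r}^{v_jv_{j+1}}$ together with, inside each vertex gadget $\aspace{r}^{v_j}$, the transit cylinder in $\mathcal{A}_r^{v_j}$ between $IN(\aspace{r}^{v_jv_{j+1}})$ and $\overline{\mathcal{A}_r^{v_j}}$ and the transit cylinder in $\mathcal{B}_r^{v_j}$ between $OUT(\aspace{r}^{v_{j-1}v_j})$ and $\overline{\mathcal{B}_r^{v_j}}$. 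The existence of these cylinders is guaranteed by property~1 in the ``properties of $Y_t$ used in the proof of correctness'' group of the induction hypothesis. Since every vertex of $D$ has been forgotten at the root, $\overline{\mathcal{A}_r^{v_j}}=\overline{\mathcal{B}_r^{v_j}}$, so the pieces glue along matching boundary circles into a closed chain of cylinders, which is homeomorphic to a torus. Vertex disjoint cycles yield $M_C$'s that use disjoint edge and vertex gadgets, so the resulting tori are disjoint.

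For the reverse direction, let $M\subseteq Y_D$ be a connected $2$-manifold without boundary. For each edge gadget $\aspace{r}^{uv}$, which is a cylinder (the space $S_1$), \cref{remark:cylinder_boundaries}(1) forces $M\cap \aspace{r}^{uv}$ to be either empty or all of $\aspace{r}^{uv}$; let $E_M$ collect the edges of $D$ whose gadget lies in $M$. Inside each $\mathcal{A}_r^v$, properties~2 and~3 of the same group of the induction hypothesis say that any non-empty submanifold of $\mathcal{A}_r^v$ whose boundary both reaches $\partial\mathcal{A}_r^v$ and meets a neighbouring component must be a cylinder from some $IN(\aspace{r}^{uv})$ to $\overline{\mathcal{A}_r^v}$, and that any two such cylinders intersect; the symmetric statement holds for $\mathcal{B}_r^v$. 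Applied to $M$, this forces $M\cap \mathcal{A}_r^v$ and $M\cap \mathcal{B}_r^v$ each to be either empty or a single transit cylinder, so at most one outgoing and one incoming edge at $v$ can belong to $E_M$. Because $M$ is boundaryless, whenever any edge at $v$ lies in $E_M$ then in fact exactly one incoming and exactly one outgoing edge at $v$ lies in $E_M$.

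Consequently $E_M$ induces a subgraph of $D$ in which every used vertex has in-degree and out-degree equal to $1$, i.e.\ a disjoint union of directed cycles; connectedness of $M$ then narrows this to a single directed cycle $C$, and $M$ coincides with the torus $M_C$ built above. This gives the claimed bijection, shows that every connected boundaryless manifold in $Y_D$ is a torus (and hence every boundaryless manifold is a disjoint union of tori), and shows that $\ell$ pairwise disjoint boundaryless manifolds in $Y_D$ correspond to $\ell$ pairwise vertex disjoint directed cycles in $D$. The main obstacle is precisely the step that forces $M$ to use at most one incoming and one outgoing edge per vertex gadget: this is where the $0$-dimensional singularities built into $S_2$ and the triple-cylinder junctions of $S_3$ earn their keep, since without them the local ``branching'' submanifolds illustrated in \cref{fig:singularities_reduction} would yield boundaryless manifolds not arising from any directed cycle, and properties~2--3 of the induction hypothesis would fail.
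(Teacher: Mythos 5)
Your proof is correct and takes essentially the same approach as the paper: both directions hinge on Property~1 of the induction hypothesis to produce the transit cylinders that form the torus associated to a cycle, and on Properties~2--3 (together with the all-or-nothing behaviour of a boundaryless manifold inside each $S_1$-shaped edge gadget) to force a boundaryless manifold to select exactly one incoming and one outgoing edge gadget per used vertex. The only difference is cosmetic: where the paper traces the chain of cylinders one step at a time until it closes up, you package the same forcing argument as a degree count on the induced edge set $E_M$ (in-degree $=$ out-degree $=1$ at every used vertex, hence a disjoint union of cycles, hence a single cycle by connectedness of $M$), which is a slightly cleaner way to organize the identical reasoning.
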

\begin{proof}
We prove the forward direction first. Let $C_1,\dots, C_\ell$ be disjoint directed cycles in $D$. Then we know there are $\ell$ disjoint tori $T(C_1),\dots, T(C_\ell)$ in $Y_D$. These are obtained in the obvious way: Let $C_1$ be the cycle on the directed edges $(v_1v_2,v_2v_3,\dots, v_{p-1}v_p, v_{p}v_1)$. Then by Property 1 we know that $Y_D$ contains a cylinder in $\mathcal{B}^{v_1}$ attaching $\overline{\mathcal{B}^{v_1}}$ to $IN(Y^{v_1v_2})$ and a cylinder in $\mathcal{A}^{v_2}$ attaching $OUT(Y^{v_1v_2})$ to $\overline{\mathcal{A}^{v_2}}$. There is also a cylinder $Y^{v_1v_2}$ attaching $IN(Y^{v_1v_2})$ to $OUT(Y^{v_1v_2})$ by construction. By Properties 2 and 3, we know that the unions of these intersecting cylinders form a new cylinder (or potentially a torus if it's two boundary components are identified). The same is true for all edges in $C_1$. 

Since $C_1$ is a cycle, every vertex is unique, so there are by construction no intersection of these cylinders beyond those described in the above paragraph. Hence, if we let $T(C_1)$ be the union of all these cylinders, we have constructed our first torus. The other tori are constructed in the same way. Again, since the cycles are all disjoint, the tori constructed in this way can not intersect at any point. This concludes the forward direction.

For the other direction, we assume that $Y_D$ contains $\ell$ disjoint tori, $T_1,\dots, T_\ell$. Let us consider $T_1$. Since $T_1$ is a torus it must contain some $2$-simplex. This $2$-simplex must be contained in one of the three types of components of $Y_D$: $\mathcal{A}^v_2$, $\mathcal{B}^v_1$ or $Y^{v_1v_2}$. 

Let us assume that the $2$-simplex belongs to $Y^{v_1v_2}$ (the other cases have similar arguments). We know from construction that $Y^{v_1v_2}$ is a cylinder and that only the boundary components of this cylinder are attached to other components of the space. 
Since $T_1$ is a torus it should not have a boundary, which means that if one of the $2$-simplices of $Y^{v_1v_2}$ are in $T_1$ then all of $Y^{v_1v_2}$ must be contained in $T_1$. Moreover, the boundary component $IN(Y^{v_1v_2})$ intersects $\overline{\mathcal{A}{^v_2}}$. Since this cannot be a boundary in $T_1$, $T_1$ must itself intersect $\mathcal{A}^{v_2}$ in such a way that one of the boundaries of $T_1\cap \mathcal{A}^{v_2}$ is $IN(Y^{v_1v_2})$. By Property 3, $T_1$ intersects $\mathcal{A}^{v_2}$ as a cylinder with the other boundary equal to $\overline{\mathcal{A}^{v_2}}$. 

Again, we have that $\overline{\mathcal{A}^{v_2}}$ cannot be a boundary of $T_1$, so $T_1$ must intersect $\overline{\mathcal{B}^{v_2}}$ in such a way that one of the boundaries of $T_1\cap \overline{\mathcal{B}^{v_2}}$ is $\overline{\mathcal{A}^{v_2}}=\overline{\mathcal{B}^{v_2}}$ This means (by Property 2) that the other boundary is some $OUT(Y^{v_2v_3})$. This argument is then repeated until we eventually have that $T_1$ intersects some component with one boundary equal to $OUT(Y^{v_1v_2})$. The edge gadgets the tori contain, $v_1v_2, v_2v_3 \dots, v_{p-1}v_p, v_{p}v_1$ will then by definition be our first cycle $C(T_1)$. Note that these vertices cannot have been repeated, since we assumed that $T_1$ was a torus. We construct the other cycles from the tori in similar ways. Again, the cycles will be vertex disjoint since intersecting cycles would imply intersecting tori.

This argument also proves that the only connected submanifolds without a boundary contained in $Y_D$ are tori.

\end{proof}

As an immediate corollary we get the following results:
\begin{itemize}
    \item Sending an instance $(D,\ell)$ of the \DCP problem to the instance $(Y_D,\ell)$ of the \SP problem (where $Y_D$ is the input space and $\ell$ is the number of connected submanifolds) is a reduction.
    \item Sending an instance $(D,\ell)$ of the \DCP problem to the instance $(Y_D,\ell,\ell)$ of the \SoGSR problem (where $Y_D$ is the input space and $\ell$ is the number of connected submanifolds and the total genus) is a reduction.
    \item Sending an instance $(D,\ell)$ of the \DCP problem to the instance $(Y_D,(1: \ell))$ of the Subsurface Recognition problem (where $Y_D$ is the input space and $\ell$ is the number of connected submanifolds of genus $1$ we want to find) is a reduction.
\end{itemize}

Combine these results with the fact that the space $Y_D$ has a pathwidth that is linearly bounded by the pathwidth of $D$ and the ETH-lower bound for the \DCP problem parameterized by pathwidth, and we get \cref{thm:lower_bound} as an immediate consequence. In particular, this theorem also implies the same ETH-lower bounds for simplicial complexes of bounded treewidth, matching our algorithmic upper bounds for the \SP problem and the \SoGSR problem.







\section{Conclusion}

In this paper, we consider the parameterized complexity of several variants of the problem of finding surfaces in 2-dimensional simplicial complexes with respect to the treewidth of the Hasse diagram. We give ETH-optimal algorithms for the \SoGSR and \SP problems. We also give an ETH-based lower bound for Subsurface Recognition and an FPT algorithm for \CSRdot. Several questions surrounding subsurface recognition remain open, such as 
\begin{itemize}
    \item whether the algorithm presented in this paper for \CSR is ETH-optimal;
    \item whether or not the Subsurface Recognition Problem is W[1]-hard when parameterized by the treewidth of the Hasse diagram.
\end{itemize}
Future work could either attempt to find better parameterized algorithms or prove stronger lower bounds for these problems. 



\bibliography{main.bib}


\newpage
\appendix

\section{Proofs of Lemma \ref{lem:cell_handle}, \ref{lem:cell_crosscap}, and \ref{lem:cell_boundary}}
\label{appendix:handle_crosscap_boundary}

\begin{proof}[Proof of Lemma \ref{lem:cell_handle}]
We will use our equivalence-preserving moves to show $(\overbar{aba^{-1}b^{-1}XY})=(\overbar{efe^{-1}f^{-1}YX})$.
\begin{align*}
    (\overbar{aba^{-1}b^{-1}XY})&=(\overbar{a^{-1}b^{-1}XYab})&\hfill\text{by (4)}\\
    &=(\overbar{a^{-1}cYaXc^{-1}})&\hfill\text{by Lemma \ref{lem:cell_multiple_boundaries}}\\
    &=(\overbar{Xc^{-1}a^{-1}cYa})&\hfill\text{by (4)}\\
    &=(\overbar{Xc^{-1}d^{-1}Ycd})&\hfill\text{by Lemma \ref{lem:cell_multiple_boundaries}}\\
    &=(\overbar{dXc^{-1}d^{-1}Yc})&\hfill\text{by (4)}\\
    &=(\overbar{dXeYd^{-1}e^{-1}})&\hfill\text{by Lemma \ref{lem:cell_multiple_boundaries}}\\
    &=(\overbar{e^{-1}dXeYd^{-1}})&\hfill\text{by (4)}\\
    &=(\overbar{e^{-1}f^{-1}YXef})&\hfill\text{by Lemma \ref{lem:cell_multiple_boundaries}}\\
    &=(\overbar{efe^{-1}f^{-1}YX})&\hfill\text{by (4)}
\end{align*}
\end{proof}

\begin{proof}[Proof of Lemma \ref{lem:cell_crosscap}]
We first prove the sub-lemma that $(\overbar{aaXY})=(\overbar{bYbX^{-1}})$.
\begin{align*}
    (\overbar{aaXY})&=(\overbar{YaaX})&\hfill\text{by (4)}\\
    &=(\overbar{Yab})+(\overbar{b^{-1}aX})&\hfill\text{by (2)}\\
    &=(\overbar{Yab})+(\overbar{X^{-1}a^{-1}b})&\hfill\text{by (3)}\\
    &=(\overbar{bYa})+(\overbar{a^{-1}bX^{-1}})&\hfill\text{by (4)}\\
    &=(\overbar{bYbX^{-1}})&\hfill\text{by (2)}
\end{align*}
We now use this sub-lemma to prove that $(\overbar{aaXY})=(\overbar{ddYX})$.
\begin{align*}
    (\overbar{aaXY})&=(\overbar{bYbX^{-1}})&\hfill\text{by the sub-lemma}\\
    &=(\overbar{XcY^{-1}c})&\hfill\text{by (3), where $c=b^{-1}$}\\
    &=(\overbar{cXcY^{-1}})&\hfill\text{by (4)}\\
    &=(\overbar{ddYX})&\hfill\text{by the sub-lemma}
\end{align*}
\end{proof}

\begin{proof}[Proof of Lemma \ref{lem:cell_boundary}]
Rearranging $XY$ to $YX$ follows immediately from move (6). As $a$ appears only in the boundary of $a$, we can replace $a$ with new edges edges $ef$ by move (1) without changing the boundary of any other faces. We can then replace $ef$ with a new edge $d$ by move (1).
\end{proof}

\section{Checking the Link Conditions of Simplices}
\label{sec:check_candidacy_proofs}

In this section, we prove that if $\Sigma$ is a candidate solution at $t$ and $\Tilde\Sigma$ is the corresponding candidate solution at $t$, then we can verify that the link of a simplex $\sigma\in X_t$ is admissible or complete if we only have access to $\Tilde\Sigma$. 

\subsection{Checking the Link Condition of Edges}
\label{sec:check_candidacy_edges_proofs}

 There is a simple and natural condition to verify that the link of an edge is admissible or complete. In a candidate solution $\Sigma$, the link of an edge $e$ is a set of vertices, one for each triangle incident to $e$. In the corresponding cell complex $\Tilde\Sigma$, the number of times $e$ appears in the boundary of a face equals the number of triangles incident to $e$ in $\Sigma$.

\begin{lemma}
\label{lem:cc_number_incident_faces}
    Let $\Sigma$ be a candidate solution at $t$, and let $\Tilde\Sigma$ be the corresponding cell complex at $t$. Let $e\in X_t$ be an edge. The number of times the real edge $e$ appears in the boundary of a face in $\Tilde\Sigma$ is the number of triangles in $\Sigma$ with $e$ in their boundary.
\end{lemma}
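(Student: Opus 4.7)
The plan is to prove the lemma by induction on the nice closed tree decomposition $(T,X)$, showing that at every node $t$ and for every candidate solution $\Sigma\in\D[t]$, the invariant holds for the corresponding cell complex $\Tilde\Sigma$ built by the dynamic program. The base case is a leaf node, where both $\Sigma$ and $\Tilde\Sigma$ are empty, so both counts are zero. Throughout, ``an appearance of $e$'' means an occurrence of either $(u,v)$ or $(v,u)$ on a boundary, where $e=\{u,v\}$.

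For the inductive step, I would track how $\Tilde\Sigma$ is obtained from the cell complexes at the children of $t$. At vertex, edge, and triangle introduce nodes, $\Tilde\Sigma$ is inherited unchanged and no triangle is added to $\Sigma$, so the invariant carries over. At a triangle forget node where a triangle $\Delta$ is added to $\Sigma$, the algorithm adds a new face to $\Tilde\Sigma$ whose boundary is the cyclic sequence of oriented edges of $\Delta$; this increases the appearance count of $e$ by exactly one when $e\subset\Delta$ and by zero otherwise, exactly mirroring the change in the number of triangles incident to $e$.

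At an edge forget node that forgets an edge $e'\neq e$, I would do a case analysis over the seven cases of Section \ref{sec:remove_edges}, checking that each equivalence-preserving rewrite leaves the number of appearances of $e$ unchanged. Case 1 merges two faces $(\overbar{Xa})+(\overbar{a^{-1}Y^{-1}})$ into $(\overbar{XY^{-1}})$, preserving the multisets of oriented edges in $X$ and $Y^{-1}$. Case 2 splits one boundary component into two without altering other edges. Cases 3, 4, and 5 apply Lemma \ref{lem:crosscap_boundary}, Corollary \ref{cor:handle_boundary_equivalence}, and Corollary \ref{cor:cell_crosshandle}, respectively; each rewrite keeps the residual strings $X$ and $Y$ intact up to inversion, and inverting a string preserves the appearance count of $e$ under the convention above. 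Case 6 inserts only a marked interior dummy pair $dd^{-1}$, and Case 7 replaces the forgotten edge $e'$ by a marked boundary dummy, so in both cases no real edge other than $e'$ is added to or removed from any boundary. The vertex-forget operations of Section \ref{sec:remove_vertices} act only on dummy edges and are similarly inert for real edges.

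Finally, at a join node with children $t',t''$, I would use Lemma \ref{lem:join_complex} together with the subtree-connectedness property of tree decompositions to argue that no triangle of $K_t$ lies in both $K_{t'}$ and $K_{t''}$: any such triangle would lie in $X_t$ and thus be excluded from $K_t$. Consequently, the triangles of $\Sigma$ incident to $e$ are partitioned between $\Sigma\cap K_{t'}$ and $\Sigma\cap K_{t''}$, and the appearances of $e$ in $\Tilde\Sigma$ are the sum of the appearances in the two children's cell complexes, which by induction matches the total. The main obstacle I expect is the detailed bookkeeping in the edge-forget case analysis, particularly in the crosshandle and crosscap moves, to ensure that the substring inversions genuinely preserve the appearance count and that no dummy-edge marking inadvertently causes a real-edge appearance to be miscounted.
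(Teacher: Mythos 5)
Your proof is correct, but it takes a more mechanical route than the paper. The paper argues directly at the level of the cell complexes: the cell complex obtained from the simplicial complex $\Sigma$ (whose faces are exactly the triangles of $\Sigma$) trivially satisfies the count, and since $\Tilde\Sigma$ is obtained from it by applying equivalence-preserving moves that only remove real edges in $\Sigma\setminus X_t$, and $e\in X_t$, no appearance of $e$ is ever deleted or created; the count therefore transfers unchanged. This collapses the entire argument to the single observation you verify in your edge-forget and vertex-forget cases, and makes the introduce-, triangle-forget-, and join-node bookkeeping unnecessary, since those only need to be tracked if one insists on maintaining the invariant at every DP node rather than comparing the endpoints $\Sigma$ and $\Tilde\Sigma$. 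Your version does buy something: it spells out explicitly that the seven removal cases (and the vertex-forget rules) preserve appearance counts of other real edges, which the paper leaves implicit, and it additionally certifies the invariant at every intermediate node, which is reassuring if one wants to use the count during the DP and not only at the node $t$ itself. The join-node argument you give (triangles in $K_{t'}\cap K_{t''}$ must lie in $X_t$ by Lemma~\ref{lem:join_complex}, hence are excluded from $K_t$) is correct and is the one nontrivial point that your framing forces you to address and the paper's framing avoids.
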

\begin{proof}
    This is vacuously true for the simplicial complex $\Sigma$. Moreover, the only real edges that have been removed from $\Tilde\Sigma$ are those in $\Sigma\setminus X_t$, so no appearance of $e$ has been removed from $\Tilde\Sigma$. Therefore, the number of times $e$ appears in faces of $\Tilde\Sigma$ is the number of triangles in $\Sigma$ with $e$ in their boundary.
\end{proof}

With this lemma in mind, the algorithm to check if the link of an edge is admissible or complete is simple: just count the number of times it appears in the cell complex. We can do this in $\OO(k)$ time by iterating through the edges in $\Tilde\Sigma$. The following lemmas are immediate.

\edgecompletelink*

\begin{lemma}
\label{lem:edge_link_admissible}
    Let $t$ be a node in the tree decomposition. Let $\Sigma$ be a candidate solution at $t$, and let $\Tilde\Sigma$ be the corresponding cell complex at $t$. Let $e\in S\cap X_t$ be an edge. The link of $e$ in $\Sigma$ is admissible if and only if
    \begin{itemize}
        \item $e\in B$ and the real edge $e$ appears at most once in the boundary of a face in $\Tilde{\Sigma}$, or
        \item $e\notin B$ and the real edge $e$ appears at most twice in the boundary of faces in $\Tilde{\Sigma}$.
    \end{itemize}
    Moreover, this condition can be checked on $\Tilde\Sigma$ in $\OO(k)$ time.
\end{lemma}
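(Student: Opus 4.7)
The plan is to prove this lemma as a direct analogue of Lemma \ref{lem:edge_link_complete}, piggybacking on the same bridge between triangles incident to $e$ in $\Sigma$ and appearances of $e$ in the boundaries of faces of $\Tilde\Sigma$. The crucial tool is Lemma \ref{lem:cc_number_incident_faces}, which already establishes this correspondence as an exact count.

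First I would unpack the definition of an admissible link at an edge $\sigma = e \in X_t$: the link $\lk_{\Sigma}(e)$ is simply the set of opposite vertices of the triangles of $\Sigma$ incident to $e$, so $|\lk_{\Sigma}(e)|$ equals the number of triangles in $\Sigma$ containing $e$. By the definition of admissibility given in Section~\ref{sec:candidate_solutions}, this link is admissible exactly when it contains at most two vertices if $e \notin B$, or at most one vertex if $e \in B$. Applying Lemma \ref{lem:cc_number_incident_faces}, this triangle count is precisely the number of times the real edge $e$ appears in the boundary of a face of $\Tilde\Sigma$. Chaining these two equalities produces the stated biconditional directly, both in the $e \in B$ case and in the $e \notin B$ case.

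For the running time, I would observe that the corresponding cell complex $\Tilde\Sigma$ has $\OO(k)$ edges in total (real and dummy), each of which is listed $\OO(1)$ times in the face boundaries since we only need to count up to two occurrences. Scanning through the boundaries of all faces in $\Tilde\Sigma$ and counting the occurrences of the single edge $e$ therefore takes $\OO(k)$ time, after which the appropriate threshold ($1$ or $2$) can be checked in constant time.

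The main obstacle here is essentially bookkeeping rather than mathematical: I need to be careful to invoke Lemma \ref{lem:cc_number_incident_faces} with the right notion of ``real edge'' (excluding dummy edges that may have the same endpoints as $e$), so that the count from $\Tilde\Sigma$ really does reflect the triangle incidence in $\Sigma$. Once this is pinned down, the proof reduces to a one-line application of the previous lemma and the definition, so I would keep the write-up very short and essentially defer to the proof of Lemma \ref{lem:edge_link_complete}.
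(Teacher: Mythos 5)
Your proposal matches the paper's approach exactly: the paper states that both Lemma~\ref{lem:edge_link_complete} and Lemma~\ref{lem:edge_link_admissible} are immediate from Lemma~\ref{lem:cc_number_incident_faces}, and your write-up simply spells out why, by unpacking the definition of an admissible edge link and chaining it with the count established in that lemma. The running-time remark and the caution about distinguishing real edges from dummy edges sharing endpoints are both consistent with the paper's treatment.
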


\subsection{Checking the Link Conditions of Vertices}
\label{sec:check_candidacy_vertices_proofs}

The key idea for checking the candidacy of vertices is that we can deduce information on the link of the vertex $v$ in a candidate solution $\Sigma$ based on the set of edges that enter $v$ in the corresponding cell complex $\Tilde\Sigma$. In particular, a path in the link of $v$ in $\Sigma$ exactly corresponds to a sequence of edges in the cell complex, and a cycle in the link of $v$ exactly corresponds to a \textit{cyclic} sequence of successors in $\Tilde\Sigma$.

\par 
We will prove the following two lemmas in this section. 

\begin{lemma}
\label{lem:vertex_admissible_link}
    Let $\Sigma$ be a candidate solution at $t$, and let $\Tilde\Sigma$ be the corresponding cell complex at $t$. Let $v\in X_t$ be a vertex. The link $\lk_{\Sigma}{v}$ is admissible if and only if one of the following conditions hold.
    \begin{enumerate}
        \item $v\in B$ and the edges entering $v$ in $\Tilde{\Sigma}$ form a cyclic sequence of successors, or
        \item the edges entering $v$ in $\Tilde{\Sigma}$ form a (possibly empty) collection of (non-cyclic) sequences of successors.
    \end{enumerate}
\end{lemma}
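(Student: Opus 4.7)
My plan is to prove the lemma by establishing and maintaining a bijection between the connected components of $\lk_{\Sigma}(v)$ (as a subset of a simple path or cycle) and the sequences of successors of oriented edges entering $v$ in $\Tilde\Sigma$. The starting observation is Proposition \ref{prop:inner_vertex}/\ref{prop:boundary_vertex}: before any simplices are removed, each triangle $\{u,v,w\}$ incident to $v$ in $\Sigma$ contributes the substring $(u,v)(v,w)$ (up to inversion) to the boundary of a face, so two oriented edges entering $v$ are a pair of successors iff the corresponding vertices in $\lk_{\Sigma}(v)$ are joined by an edge in $\lk_{\Sigma}(v)$. Consequently, for the untouched cell complex, the connected components of $\lk_{\Sigma}(v)$ that are simple paths correspond to (non-cyclic) sequences of successors of edges entering $v$, and any component that is a simple cycle corresponds to a cyclic sequence of successors.

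The bulk of the work is to verify that this correspondence survives every equivalence-preserving move used in Section~\ref{sec:remove_simplices} to build $\Tilde\Sigma$, under the running assumption that $v \in X_t$ (so $v$ is never the forgotten vertex). The proof will be a case analysis matching the seven edge-removal cases and the vertex-removal cases. For each move, I would verify locally what happens to the oriented edges entering $v$: for instance, Case 1 (merging two faces along an edge $e$ not incident to $v$) leaves every pair-of-successors relation among edges entering $v$ unchanged; Case 2 (splitting a boundary component via move (6)) only rearranges edges along a face without altering which edges enter $v$ or which pairs are successors; Cases 3--5 attach handles/crosscaps by removing edges not entering $v$; and Case 6 removes an incident edge $a=(w,v)$ but immediately installs the interior dummy edge $d$ with $(\overbar{dd^{-1}})$, which preserves the fact that the edges entering $v$ still form a cyclic sequence of successors. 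Case 7 replaces a boundary edge with a boundary dummy edge of the same endpoints, trivially preserving successors. For vertex-forgetting of some $w \neq v$, the only edges incident to $w$ in $\Tilde\Sigma$ are dummy edges, and removing them (or merging two boundary dummy edges into one via the final subcase) leaves the successor structure at $v$ intact.

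Given this invariant, the ``if'' direction is immediate: if the edges entering $v$ form a collection of (non-cyclic) sequences of successors, the corresponding link consists of a collection of simple paths, which is admissible for any vertex; and if they form a cyclic sequence of successors, the link is a single simple cycle, which is admissible exactly when $v \notin B$ (so this is the case that lines up with the stated condition). The ``only if'' direction follows by contraposition: any other configuration of successors at $v$ (e.g.\ a successor pair where one orientation is missing, or a self-loop created spuriously) would by the invariant correspond to a link that is neither a collection of simple paths nor a single simple cycle, violating admissibility.

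The main obstacle will be Case 6, the introduction of interior dummy edges, because it is the only case in which the local link structure at $v$ is altered non-trivially: the real oriented edge $(w,v)$ is deleted but a dummy edge $(v_d, v)$ forming the cyclic sequence $(\overbar{d})$ is added to record that $v$ still has a closed-up link. I will need to justify carefully that this substitution preserves \emph{both} the abstract cyclic-successor structure and the link-class (cycle vs.\ path collection), invoking the admissibility of $\lk_{\Sigma}(v)$ at the moment of the move (so that $v$ has at most two such dummy edges, as noted in Section~\ref{sec:remove_edges}) to rule out spurious accumulations. Once this case is settled the rest of the case analysis is routine, and the $\poly(k)$ checking algorithm follows from traversing the $\OO(k)$ oriented edges of $\Tilde\Sigma$ and grouping them into (cyclic) sequences of successors.
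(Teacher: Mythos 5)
Your plan mirrors the paper's proof of the companion Lemma~\ref{lem:vertex_complete_link} (to which the paper's one-line proof of this lemma simply defers): anchor the correspondence at the initial cell complex via Propositions~\ref{prop:inner_vertex} and~\ref{prop:boundary_vertex}, then track it through each removal move, with the interior-dummy-edge case correctly singled out as the one needing the most care. One step is left implicit that deserves to be made explicit: your invariant -- a bijection between connected components of $\lk_\Sigma(v)$ and (cyclic) sequences of successors at $v$ in $\Tilde\Sigma$ -- is the right global statement, but the local reason it holds after faces have been merged is precisely Lemma~\ref{lem:surface_face_link}: a segment $(w_1,w,w_2)$ on a face boundary of $\Tilde\Sigma$ corresponds not to a single edge of $\lk_\Sigma(w)$ but to a simple \emph{path} in it, and these paths are pairwise disjoint except at endpoints, so concatenating them along the successor chain reconstructs a simple cycle or simple path. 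Without that lemma (and its disjointness clause), the phrase ``verify that this correspondence survives every move'' covers exactly the hard part; the per-move bookkeeping you sketch needs Lemma~\ref{lem:successors_after_removal} to make the shortening of a sequence rigorous. Finally, note a mismatch with the printed statement: a cyclic sequence of successors corresponds to a simple cycle in $\lk_\Sigma(v)$, which by the definition of admissibility in Section~\ref{sec:candidate_solutions} is admissible only when $v\notin B$; condition~1 should therefore read $v\notin B$ (the $v\in B$ is a typo), and your parenthetical ``admissible exactly when $v\notin B$'' shows your argument is already for the corrected statement.
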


\vertexcompletelink*

We will use the criteria in these lemmas to determine link admissibility and completeness in our algorithm. In particular, if there is a cell complex corresponding to a candidate solution with a vertex that does not satisfy the conditions of Lemma \ref{lem:vertex_admissible_link}, we assume this algorithm is discarded.
\par 
We begin with a simple but helpful observation. 
\begin{lemma}
\label{lem:vertex_neighbors_forgotten}
    Let $\Sigma$ be a candidate solution at $t$, and let $\Tilde\Sigma$ be the corresponding cell complex at $t$. If a vertex $v\in X_t\cap\Sigma$, there is an edge incident to $v$ in $\Tilde{\Sigma}$.
\end{lemma}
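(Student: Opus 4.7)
The plan is to proceed by induction on the sequence of equivalence-preserving moves used to construct $\Tilde\Sigma$ from the initial cell-complex representation of $\Sigma$, maintaining the invariant that every vertex $w\in X_t\cap\Sigma$ has at least one edge incident to it in the current cell complex. Setting $w=v$ at the end yields the conclusion.

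For the base case, $\Sigma$ is viewed as a cell complex whose faces are the triangles of $\Sigma$ with their three oriented edges on the boundary, and whose edges are the edges of $\Sigma$. Since $\Sigma$ is a candidate solution, it is pure 2-dimensional, so any $w\in X_t\cap\Sigma$ is a face of some triangle $\tau\in\Sigma$. The two edges of $\tau$ incident to $w$ therefore appear in the initial cell complex incident to $w$, so the invariant holds.

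For the inductive step, I would consider each of the seven cases in Section \ref{sec:remove_edges} for removing an edge $e\in\Sigma\setminus X_t$. If $e$ is not incident to $w$, the edges incident to $w$ are unchanged. If $e$ is incident to $w$, then: in cases 1--5 the equivalence-preserving move only locally rearranges boundaries around $e$ without eliminating any other edge of the cell complex, so any other edge incident to $w$ still survives; case 6 is precisely the case where the naive simplification $(\overbar{aa^{-1}X})=(\overbar{X})$ would eliminate the last reference to $w=v$, and the algorithm deliberately inserts an interior dummy edge incident to $w$ to preserve it; and in case 7 the edge $e$ is replaced by a boundary dummy edge with the same endpoints, which is still incident to $w$. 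The vertex-removal steps in Section \ref{sec:remove_vertices} only manipulate the dummy edges incident to the vertex being forgotten (which is not in $X_t$ after the forget), and hence do not affect incidence with $w$.

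The main obstacle is making the case analysis for the seven edge-removal cases airtight; however, in each case the conclusion is immediate from the description of the corresponding equivalence-preserving move, since cases 1--5 only change boundary strings locally around $e$ and $e^{-1}$, while cases 6 and 7 are the only cases that could remove the last reference to a vertex, and both explicitly introduce a replacement dummy edge.
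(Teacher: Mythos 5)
Your approach is close in spirit to the paper's (both track what happens to the incident edges of a fixed vertex as removals proceed), but the way you handle the inductive step leaves two genuine gaps, and these are precisely the points the paper's proof is careful about.

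The first gap is in cases 1--5. You write that the move ``only locally rearranges boundaries around $e$ without eliminating any other edge of the cell complex, so any other edge incident to $w$ still survives.'' That is true \emph{if} another edge incident to $w$ exists, but you never argue that one does. The invariant could be destroyed precisely when $e$ is the only edge incident to $w$. To rule this out you need the successor observation implicit in the paper's proof: if $e$ appears twice and is incident to $w$, then the successor of $e$ (or the predecessor of $e^{-1}$) at $w$ is some edge $b$; if $b = e^{-1}$ we are in case~6, if $b = e$ we would need a loop (impossible), and otherwise $b$ is a distinct edge incident to $w$ that survives the removal. Without this, your claim that case~6 is ``precisely the case'' where the last reference to $w$ could be lost is an assertion, not an argument.

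The second gap is in case 6 itself, where you write ``$w = v$'' and claim the algorithm inserts an interior dummy edge ``incident to $w$.'' The dummy edge in the paper's case~6 is attached to the vertex that $e$ \emph{enters} (the tip of the $aa^{-1}$ excursion), not to both endpoints. If $w$ is the other endpoint, the dummy is not incident to $w$, and you still need to argue that $w$ retains some edge, namely one in the surrounding $X$. This is exactly where the paper's proof splits into $v \notin B$ and $v \in B$: for $v \notin B$ it argues by contradiction that the last real edge removed must present as $(\overbar{aa^{-1}X})$ with $v$ at the tip (otherwise $v$ would already be incident to an interior dummy edge, which precludes other incidences); for $v \in B$ it shows $v$ must be incident to an edge of $B$, which is always replaced by a boundary dummy edge whose endpoint remains $v$ through any subsequent merges. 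Your proof does not separate the $v\in B$ and $v\notin B$ cases at all, so it cannot make these arguments. To repair your proof you would need to (i) supply the successor argument for cases 1--5, and (ii) split the case~6 analysis by whether $w$ is the entered vertex, using the $w\in B$ versus $w \notin B$ dichotomy to control what kinds of edges can remain.
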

\begin{proof}
    A vertex $v$ is first added to $\Sigma$ when a triangle is added; in this case, $v$ is incident to one of the real edges of the triangle and the lemma is true. Moreover, while there is a real edge incident to $v$, the lemma remains true. We only need to verify that $v$ is incident to an edge after all real edges have been removed. 
    \par
    If $v\notin B$, then each real edge $e$ incident to $v$ is not in $B$ either, so $e$ would have been incident to two faces when it was removed by Lemma \ref{lem:cc_number_incident_faces}. In particular, when the last real edge $a$ incident to $v$ was removed, it must have had the form $(aa^{-1}X)$ in the cell complex. (If $a$ did not have this form, then $v$ would have to be incident to another edge. As this edge is not a real edge or a boundary dummy edge, it must be an interior dummy edge. However, as mentioned in Section \ref{sec:remove_edges}, if a vertex is incident to an interior dummy edge, it can be incident to no other edges, including $a$.) Thus, removing $a$ adds an interior dummy edge incident to $v$ and the lemma is true. 
    \par 
    If $v\in B$, then we claim $v$ must have been incident to at least one edge in $\Sigma\cap B$. If $v$ was only incident to edges in not in $B$, each of these edges would have been incident to two faces when it was removed. Therefore, $v$ would be incident to an interior dummy edge, which cannot be the case as it would make the link of $v$ inadmissible. Therefore, $v$ was incident to an edge in $\Sigma\cap B$, and this edge was replaced with a boundary dummy edge when it was removed. If this boundary dummy edge was ever replaced with a merge boundary dummy edge, then this would have been when the other endpoint was forgotten, and one of this new edge's endpoint would still be $v$.
\end{proof}

Our goal in this section is to use sequences of successors in the cell complex $\Tilde\Sigma$ to deduce information on the links of vertices in the simplicial complex $\Sigma$. The following lemma shows what happens to the successors of a single edge when it is removed. 

\begin{lemma}
\label{lem:successors_after_removal}
Let $a$ be an edge in $\Tilde{\Sigma}$ with a pair of successors $b^{-1}\neq a^{-1}$ and $c^{-1}\neq a^{-1}$. After removing the edge $a$ with the rules in Section \ref{sec:remove_simplices}, the edges $b$ and $c^{-1}$ will be successors to one another. 
\end{lemma}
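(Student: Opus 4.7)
The plan is to proceed by case analysis on how $a$ is situated in $\Tilde{\Sigma}$, using the case breakdown for edge removal from \cref{sec:remove_edges}. First I would observe that the hypothesis immediately restricts which cases can occur. Case 7 there applies only when $a$ appears once and so has a single successor; it is excluded because we are given a pair of successors. Case 6 requires $a$ and $a^{-1}$ to appear consecutively on the boundary of a face, which would force $a^{-1}$ itself to be a successor of $a$, contradicting the hypothesis $b^{-1},c^{-1}\neq a^{-1}$. So the edge $a$ must fall under one of Cases 1--5.

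For each of Cases 1--5 the strategy is the same. I would write out the two appearances of $a$ explicitly as substrings $ab^{-1}$ and $ac^{-1}$ in boundaries of faces of $\Tilde{\Sigma}$ (replacing a boundary by that of its inverse face if needed, and, in Case 1, first inverting one face so that $a$ appears on one face and $a^{-1}$ on the other, exactly as in \cref{sec:remove_edges}). I would then apply the equivalence-preserving move prescribed by the case and cyclically re-read the resulting boundary. In each case the portion of boundary separating $b^{-1}$ and $c^{-1}$ from one another either disappears or is inverted into the complementary position, and the resulting face contains the substring $\overbar{cb^{-1}}$ (equivalently $\overbar{bc^{-1}}$ on its inverse). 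This is precisely the statement that $b^{-1}$ is a successor of $c$ and that $c^{-1}$ is a successor of $b$, so $b$ and $c^{-1}$ are successors to one another.

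The main obstacle I anticipate is the bookkeeping in Cases 3 and 5, where both appearances of $a$ occur as $a$ (rather than as $a$ and $a^{-1}$). There the removal routes through \cref{lem:crosscap_boundary} or \cref{cor:cell_crosshandle}, and these intermediate rewrites reverse one of the flanking substrings, so the final face is not simply a concatenation of what surrounded the two copies of $a$. I would need to keep careful track of which of $X_1, X_2, Y_1, Y_2$ is inverted, and where the replacement crosscap pair is introduced and then removed, in order to confirm that the substring delivered at the end pairs $b$ with $c^{-1}$ rather than, say, with $c$. The geometric intuition guiding every case is that the two successors of $a$ record the two neighbors of $a$ in the rotation around the vertex that $a$ enters, so removing $a$ from the cell complex should splice those two neighbors together in that rotation; this is exactly what $b$ and $c^{-1}$ being successors of one another encodes.
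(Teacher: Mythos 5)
Your proposal takes the same route as the paper's proof: exclude cases 6 and 7 of \cref{sec:remove_edges} using the hypothesis, then verify cases 1--5 by writing the two appearances of $a$ with their flanking successors and applying the prescribed equivalence-preserving move, which is exactly the algebraic computation the paper carries out (the paper also explicitly rules out the vertex-removal cases of \cref{sec:remove_vertices} on analogous grounds --- there $d$ is its own successor, or $d_1,d_2$ each have a single successor --- and you should add this since the lemma quantifies over all rules in \cref{sec:remove_simplices}). The paper also flags a degenerate subcase of case~2 in which $Y$ is a single edge $c$, so that the pair of successors of $a$ is $\{c,c^{-1}\}$ and the removal produces the boundary component $(\overbar{c})$; your plan should explicitly confirm the conclusion there as well.
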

\begin{proof}
    We verify this in each of the cases for removing the edge $a$. We start with the rules in Section \ref{sec:remove_edges} for removing a forgotten edge. We do not need to verify cases 6 and 7 in this section; in case 6, $a^{-1}$ is a successor to $a$, and in case 7, $a$ has a single successor. We also do not need to verify the cases in Section \ref{sec:remove_vertices}. In case 1, $d$ is a successor to $d$. In case 2, both $d_1$ and $d_2$ have a single successor.
    \par 
    We can verify the remaining cases directly, namely
    \begin{enumerate}[font=\bfseries]
        \item $(\overbar{Xa})+(\overbar{a^{-1}Y})=(\overbar{b^{-1}X'a})+(\overbar{a^{-1}Y'^{-1}c})=(\overbar{b^{-1}X'Y'^{-1}c})$,
        \item $(\overbar{XaYa^{-1}})=(\overbar{Xab^{-1}Y'ca^{-1}})=(\overbar{X})(\overbar{b^{-1}Y'c})$,
        \item $(\overbar{XaYa})=(\overbar{c^{-1}X'ab^{-1}Y'a})=(\overbar{Y'^{-1}bc^{-1}X'})$,
        \item $(\overbar{Xa})(\overbar{Ya^{-1}})=(\overbar{c^{-1}X'a})(\overbar{Y'ba^{-1}})=(\overbar{Y'bc^{-1}X'})$,
        \item $(\overbar{Xa})(\overbar{Ya})=(\overbar{c^{-1}X'a})(\overbar{b^{-1}Y'a})=(\overbar{c^{-1}X'Y^{-1}b})$.
    \end{enumerate}
    In case 2 above, note that this holds even if $Y$ only contains a single edge $c$. In this case, $c$ and $c^{-1}$ are a pair of successors to $a$. After removing $a$, then $(c^{-1})^{-1}=c$ is a successor to $c$ in the boundary component $(\overbar{c})$.
\end{proof}

Recall that the goal of this section is to show that sequence of successors in $\Tilde\Sigma$ corresponds to a simple path in $\lk_{\Sigma}v$ and a cyclic sequence of successors in $\Tilde\Sigma$ corresponds to a simple cycle in $\lk_{\Sigma}v$. As a first step, Lemma \ref{lem:surface_face_link} shows that two consecutive edges $(w_1,v)$ and $(v,w_2)$ on a face in $\Tilde\Sigma$ correspond to a simple path or cycle in $\lk_{\Sigma}{v}$, and moreover, that these paths are pairwise disjoint except possibly at their endpoints. The proof of the main lemmas of this section will then show that a sequence of successors corresponds to the concatenation of these individual paths.
\par 
Before we can prove Lemma \ref{lem:surface_face_link}, we need to prove two other lemmas. Lemma \ref{lem:boundary_dummy_edge_path} proves that a boundary dummy edge replaces a path in the input boundary $B$. This lemma should not be surprising, as each boundary dummy edge either replaces an edge in $B$, or replaces two boundary dummy edges, each of which replaced a path in $B$. 

\begin{lemma}
\label{lem:boundary_dummy_edge_path}
    Let $\Sigma$ be a candidate solution at $t$, and let $\Tilde\Sigma$ be the corresponding cell complex at $t$. If $\{w_1,w_2\}\in\Tilde\Sigma$ is a boundary dummy edge, then a simple segment $(w_1,u_1,\ldots,u_{l-1},w_2)$ of $B$ is a subcomplex of $\Sigma$. Moreover, if $\{w_1,w_2\}$ and $\{w_3,w_4\}$ are distinct boundary dummy edges, then the corresponding paths $(w_1,\ldots,w_2)$ and $(w_3,\ldots,w_4)$ are disjoint except possibly at their endpoints.
\end{lemma}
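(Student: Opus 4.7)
The plan is to prove both claims simultaneously by induction on the order in which boundary dummy edges are created during the dynamic program. At any node $t$, I will maintain the invariant that every boundary dummy edge $\{w_1,w_2\}$ in $\Tilde\Sigma$ has an associated simple segment $P(w_1,w_2) = (w_1,u_1,\ldots,u_{l-1},w_2) \subset B \cap \Sigma$, and that for any two distinct boundary dummy edges the associated segments share at most their endpoints.

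For the base case, a boundary dummy edge is first introduced in Case 7 of Section \ref{sec:remove_edges} when a real edge $e = \{w_1,w_2\} \in B$ is forgotten and has a single appearance in $\Tilde\Sigma$; here the associated segment is the single edge $(w_1,w_2)$ itself, which is trivially a simple segment of $B$ contained in $\Sigma$, and two distinct segments of this form correspond to distinct edges of $B$, so they only share endpoints as required.

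For the inductive step, the only other way a boundary dummy edge is created is through the merge operation in Case 2(b) of Section \ref{sec:remove_vertices}, when a vertex $v \in B$ is forgotten and, by Lemma \ref{lem:vertex_link_complete_forget}, is incident to exactly two boundary dummy edges $d_1=\{w_1,v\}$ and $d_2=\{v,w_2\}$ with $d_1 \neq d_2^{-1}$. By the inductive hypothesis, $d_1$ and $d_2$ correspond to disjoint (except at $v$) simple segments $P_1 = (w_1,\ldots,v)$ and $P_2 = (v,\ldots,w_2)$ of $B$ contained in $\Sigma$, so the concatenation $P_1 \cdot P_2 = (w_1,\ldots,v,\ldots,w_2)$ is a simple segment of $B$ contained in $\Sigma$; this will be the segment associated to the newly created merge boundary dummy edge $\{w_1,w_2\}$. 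Since the definition of $B$ as a disjoint union of simple cycles ensures each vertex of $B$ lies on a unique segment at each stage, $P_1 \cup P_2$ is indeed a simple path.

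For the disjointness of the merged edge's path from other surviving boundary dummy edges' paths, note that $v$ was incident to only the two edges $d_1$ and $d_2$; any other boundary dummy edge $d' = \{w_3,w_4\}$ in $\Tilde\Sigma$ has an associated segment that, by the inductive hypothesis, is disjoint from $P_1$ and $P_2$ except possibly at the endpoints $w_1,v,w_2$. Since $v$ cannot be an endpoint of $d'$ (otherwise $v$ would be incident to a third boundary dummy edge), the segment of $d'$ can only meet $P_1 \cdot P_2$ at $w_1$ or $w_2$, which are precisely the endpoints of the new merge edge. The main obstacle is cleanly handling this endpoint-sharing bookkeeping through the inductive hypothesis; the key observation that resolves it is that the ``interior'' vertices $u_1,\ldots,u_{l-1}$ of each segment have all been forgotten by the time the boundary dummy edge exists, so they cannot appear as endpoints of any other boundary dummy edge.
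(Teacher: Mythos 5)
Your proposal is correct and follows essentially the same line as the paper's proof: both argue by induction on the construction of boundary dummy edges (the paper phrases it as induction on segment length, you as induction on creation order, which coincide), with the same base case (a dummy edge replacing a forgotten real edge of $B$) and the same inductive step (the merge at a forgotten vertex $v$). Your treatment of the disjointness claim is a bit more explicit — observing that $v$ and the other interior vertices of a segment have already been forgotten, so they cannot be endpoints of any surviving dummy edge — whereas the paper argues more tersely that a second path hitting $v$ would force three $B$-edges at $v$; these are two ways of expressing the same obstruction and either is fine.
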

\begin{proof}
    We can prove this induction on the length $l$ of the segment. Initially, a dummy edge $\{w_1,w_2\}$ replaces a real edge $\{w_1,w_2\}\in B$ when it is forgotten, so the statement is true for paths of length $l=1$.
    \par 
    Now assume the statement is true for paths of length less than $l$. The only other time a boundary dummy edge is added is when two dummy edges $\{w_1,v\}$ and $\{w_2,v\}$ sharing a common vertex $v$ are merged. The dummy edges replace paths $B_1=(w_1,\ldots,v)$ and $B_2=(v,\ldots,w_2)$ of $B$. As these paths are disjoint except at $v$, then $(w_1\ldots v\ldots w_2)$ is still a segment of $B$. Moreover, the path $(w_1,\ldots,w_2)$ is disjoint from the other paths corresponding to boundary dummy edges except possibly at $v$. If another path $(u_1,\ldots,u_2)$ intersected $(w_1,\ldots,v)$ at $v$, then $v$ would be incident to three edges in $B$, which cannot be the case as $B$ is a collection of simple cycles. 
\end{proof}

Let $A$ be a face in a cell complex $\Tilde\Sigma$. Lemma \ref{lem:face_triangles} tells us that $A$ corresponds to a set of triangles. 

\begin{lemma}
\label{lem:face_triangles}
    Let $\Sigma$ be a candidate solution at $t$, and let $\Tilde\Sigma$ be the corresponding cell complex at $t$. Each face $A$ in $\Tilde{\Sigma}$ corresponds to a set of triangles in $\Sigma$.
\end{lemma}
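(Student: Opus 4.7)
The plan is to prove this by induction on the sequence of equivalence-preserving moves used to transform $\Sigma$ (viewed as its associated cell complex) into the final cell complex $\Tilde\Sigma$ at node $t$. Recall from Section~\ref{sec:cell_surf} that every combinatorial surface has a natural cell complex representation in which each face corresponds to a single triangle. In particular, the cell complex initially associated to $\Sigma$ itself admits an obvious bijection between its faces and the triangles of $\Sigma$, so the statement holds before any simplex is removed. This will serve as the base case.

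For the inductive step, I would fix a surjective map from faces of the current cell complex to subsets of triangles of $\Sigma$ and show that it can be updated to such a map after each of the moves performed in Section~\ref{sec:remove_simplices}. Going case by case: in Case~1 of edge removal (edge or inverse on different faces) two faces are merged into a single face, and the resulting face is assigned the union of the two triangle sets. In Cases~2--5 of edge removal, exactly one face is modified, either by partitioning its boundary into multiple boundary components (Case~2), or by removing substrings and updating annotations of the face (Cases~3--5); in each case the underlying face persists, so we keep its triangle set unchanged. In Cases~6--7 and in both vertex-removal subcases, a face is again modified in place (by inserting or deleting dummy edges or by updating its boundary annotation), so the associated triangle set is simply inherited. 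In every case, the collection of assigned triangle sets partitions the triangles of $\Sigma$, so the induction is maintained.

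The main technical subtlety — which I would expect to be the only real obstacle — is being careful with the two moves that produce multiple boundary components on one face (Case~2 of edge removal and Case~2a of vertex removal) and with the merging move (Case~1 of edge removal), since an uncritical reading might suggest a face is ``split'' in Case~2. I would emphasize that Case~2 does not partition $F$: the face $(\overbar{X})(\overbar{Y})$ is a single face with two boundary components (as formalized by Move~6 in Section~\ref{sec:moves}), so its triangle set is simply inherited. The merging move of Case~1 is the only operation that decreases the number of faces, and the union is the only natural choice. A brief verification that the triangle sets assigned to distinct faces remain pairwise disjoint after each move — which follows because the collection remains a partition at every step — completes the induction and yields the claim.
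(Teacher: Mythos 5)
Your proof is correct and follows essentially the same inductive strategy as the paper's: triangles form the base case, and the merging move of Case~1 is the only operation that changes the face--triangle correspondence, with the union being the natural update. The paper's own proof is a terse three-sentence version of this (``if $A\notin K$, then $A$ is the merge of two smaller faces''); your case-by-case walkthrough of the other moves, confirming they leave the associated triangle set untouched, makes explicit what the paper leaves implicit.
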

\begin{proof}
    We can see this by induction. If $A$ is a triangle in $K$, then this is obviously true. If $A\notin K$, then $A$ is the merge of two smaller faces that shared an edge, each of which corresponds to a set of triangles.
\end{proof}

For a vertex $w$, let $\lk_{A}{w}$ denote the link of $w$ in $\cl A$, the closure of the set of triangles corresponding to $A$. We would like to say that $\lk_{A}{w}$ is a collection of simple paths or a simple cycle. We prove a lemma in this direction.

\begin{lemma}
\label{lem:surface_face_link}
Let $A_1$ be a face in an annotated cell complex $\Tilde{\Sigma}$. Let $(w_1,w,w_2)$ be a segment on the boundary of $A_1$. There is a simple path $P_1$ in $\lk_{A_1}{w}$ such that
\begin{enumerate}
\item if $\{w_1,w\}$ is a real edge, then one endpoint of $P_1$ is $w_1$;
\item if $\{w_1,w\}$ is an interior dummy edge, the endpoints of $P_1$ are equal;
\item  if $\{w_1,w\}$ is a boundary dummy edge that replaces a segment $B_1=\{w_1,u_{l-1}\ldots,u_{1},w\}$ of $B$, then one endpoint of $P_1$ is $u_1$;
\item if $w$ appears in another segment $(w_3,w,w_4)$ on the boundary of a face $A_2$ in $\Tilde{\Sigma}$, then the path $P_2$ in $\lk_{A_2}{v}$ and $P_1$ are disjoint.
\end{enumerate}
\end{lemma}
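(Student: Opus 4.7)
The plan is to prove this lemma by induction on the number of equivalence-preserving moves used to obtain $\Tilde\Sigma$ from the initial cell complex, in which each face is a single triangle of $\Sigma$.

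For the base case, every face $A_1$ is a triangle $\{u, w, w'\}$ and $\lk_{A_1}{w}$ is the single edge $\{u, w'\}$, a simple path from $u$ to $w'$. Every segment $(w_1, w, w_2)$ on the boundary of $A_1$ has two real-edge sides with $\{w_1, w_2\} = \{u, w'\}$, so condition (1) holds with $P_1 = \{u, w'\}$. Since distinct triangles produce disjoint arcs inside $\lk_\Sigma{w}$, condition (4) also holds.

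For the inductive step I will examine each move from Section~\ref{sec:remove_simplices}. The moves fall into two types. The face-merging move (Case 1 of Section~\ref{sec:remove_edges}) is the only move that enlarges a face's triangle set: a new segment $(w_1, w, w_2)$ at the merger point is the concatenation of one old segment from each merging face, meeting at the vertex $u$ opposite the merged-away edge. The inductive hypothesis gives a simple path through $u$ from each side, and condition (4) from the inductive hypothesis ensures they are otherwise disjoint, so their concatenation is a simple path of the required form, with endpoints inherited from the old segments. The remaining moves (Cases 2--7 of Section~\ref{sec:remove_edges} together with the vertex-forget moves in Section~\ref{sec:remove_vertices}) leave every triangle set unchanged, so $\lk_{A_1}{w}$ is intact as a simplicial complex; here one only re-associates paths to the new segments produced by the rearrangement. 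The crosscap, handle, crosshandle, and boundary-breaking rearrangements each combine two old paths into a new one (possibly with the orientation of one summand reversed) with the appropriate endpoints. The interior-dummy case (Case 6) creates a boundary component $(\overbar{dd^{-1}})$ whose single segment at $v$ will be paired with the entire cycle $\lk_{A_1}{v}$, which is an honest cycle because at the moment of the move the sequence of successors at $v$ was $(\overbar{a})$; this yields condition (2). The boundary-dummy introduction (Case 7) just renames a real edge, so $u_1 = w_1$ and the old path carries over, yielding condition (3); later boundary-dummy merges at forgotten vertices (Case 2b of Section~\ref{sec:remove_vertices}) will update $u_1$ to the correct endpoint of the concatenated $B$-path given by Lemma~\ref{lem:boundary_dummy_edge_path}.

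The main obstacle will be propagating condition (4) uniformly, because merges and reorganizations can bring into the same face paths that previously lived in different faces. To handle this I plan to maintain the stronger global invariant that the collection of all paths $P_1$, ranging over all faces and all segments at $w$, forms a pairwise-disjoint decomposition of the subcomplex of $\lk_\Sigma{w}$ supported on those triangles currently assigned to some face of $\Tilde\Sigma$. Because each equivalence-preserving move only regroups triangles among faces and reshuffles segments among boundary components, this invariant transfers through the move, and condition (4) of the lemma follows as a direct consequence.
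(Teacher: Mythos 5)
Your proposal is correct and follows essentially the same route as the paper's proof: an induction on the sequence of equivalence-preserving moves, with the base case being triangular faces, the key inductive step being the concatenation at the shared endpoint $u$ of the two paths $P_1,P_2$ when the real edge $\{w,u\}$ is removed, and separate handling of the interior-dummy and boundary-dummy cases. The only real difference is cosmetic: you distinguish the face-merging case from the within-face rearrangements and maintain a global disjoint-decomposition invariant for condition~(4), whereas the paper treats all of cases 1--5 uniformly as ``removing a real edge from two (possibly the same) faces'' and verifies condition~(4) directly via the observation that $u$ cannot appear in a third path without $\{w,u\}$ appearing three times.
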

\begin{proof}
We prove that if the lemma is true on an annotated cell complex $\Tilde{\Sigma}$, then it is true after removing an edge from $\Tilde{\Sigma}$. The condition of the lemma is defined facewise, so adding together two triangle-disjoint annotated cell complexes won't violate the condition.
\par
An annotated cell complex defined by our algorithm is initially a simplicial complex before any edges are removed. The base case of our proof is therefore a face $A_1$ that is a single triangle $\{w_1,w,w_2\}\in K$. The link of $w$ in $\{w_1,w,w_2\}$ is the simple path $(w_1,w_2)$. If $w$ appears in a distinct triangle $\{w_3,w,w_4\}$, the link of $w$ in $\{w_3,w,w_4\}$ is the simple path $(w_3,w_4)$. As the triangles are distinct, the paths $(w_1,w_2)$ and $(w_3,w_4)$ can share at most one endpoint.
\par
Now assume the lemma is true for an annotated cell complex $\Tilde{\Sigma}$. We will show the lemma is true after removing a real edge $\{w,u\}$ from two (possibly the same) faces. Let $(w_1,w,u)$ and $(u,w,w_2)$ be segments on the boundary of faces $A_1$ and $A_2$. By the lemma, there is a path $P_1$ in $\lk_{A_1}{w}$ with $u$ as an endpoint. Likewise, there is a path $P_2$ in $\lk_{A_2}{w}$ with $u$ as an endpoint. Merging the faces $A_1$ and $A_2$ at $\{w,v\}$ will create a face $A_3$ with $(w_1,w,w_2)$ on its boundary. The paths $P_1$ and $P_2$ are disjoint except possibly at their endpoints, so the concatenation of $P_1$ and $P_2$ is a simple path with endpoints $w_1$ and $w_2$. Merging $P_1$ and $P_2$ creates a new path with $u$ in its interior. The vertex $u$ does not appear in any other path $P_3$ in $\lk_{\Sigma}{w}$. If it did, it would have to be the endpoint of $P_3$ as $\Tilde{\Sigma}$ satisfied the condition of the lemma before removing $\{w,u\}$; however, such an annotated cell complex would be discarded by our algorithm as $\{w,u\}$ appears three times. The lemma is true for any other segment $(x,y,z)$ of the boundary of $A_1$ (say) because $x$ and $z$ were connected by a path $P\subset\lk_{A_1}{y}\subset\lk_{A}{y}$.
\par
We now verify this is true after replacing a real edge with an interior dummy edges. If the segment of the boundary with the interior dummy edges is $(w_1,w,w_1)$, then previously $w$ was on a segment of the boundary $(u,w,u)$ where $(w,u)$ is a real edge. So by assumption, there was a path $P_1$ in $\lk_{A}{w}$ with both endpoints equal to $u$.
\par
We now verify this is true for boundary dummy edges. We will prove this by induction on the length of the segment of $B$ in $\Sigma$. If $\{w_1,w\}$ replaces the real edge $\{w_1,w\}$, then this is true by condition 1 of this lemma. We now show this is true after merging two edges into a boundary dummy edge. Let $(w_1,w,u)$ and $(w,u,w_2)$ be segments of a face $A$. We replace the boundary dummy edges $\{w,u\}$ and $\{u,w_2\}$ with a merge boundary dummy edge $\{w,w_2\}$. This replacement does not change the triangles that compose $A$ so $\lk_{A}{w}$ is the same before and after the replacement. If $\{w,u\}$ replaced a segment $(w,u_1,\ldots,u)$ of $B$, then $\{w,w_2\}$ replaces some segment of the boundary $(w,u_1,\ldots,u,\ldots,w_2)$ and the lemma holds.
\end{proof}
\par \noindent

We are now ready to prove the main lemmas of this section.

\begin{proof}[Proof of Lemma \ref{lem:vertex_complete_link}]
    We begin with the case that $v\notin B$. In the first subcase, we claim no edges enter $v$ in $\Tilde\Sigma$ if and only if $\lk_{\Sigma}{v}$ is empty. If no edges enter $v$ in $\Tilde\Sigma$, then by Lemma \ref{lem:vertex_neighbors_forgotten}, $v\notin\Sigma$ and $\lk_{\Sigma}{v}$ is empty. Conversely, if $\lk_{\Sigma}{v}=\emptyset$, then obviously no edges in $\Tilde{\Sigma}$ enter $v$. 
    \par
    Now assume the edges entering $v$ form a cyclic sequence of successors $(\overbar{a_1\ldots a_k})$ with $a_{i}=\{v,w_{i}\}$. A pair of consecutive edges $a_{i}$ and $a_{i+1}$ in the sequence of successors corresponds to a segment $(w_{i},v,w_{i+1})$ on the boundary of a face $A_{i}$ in $\Tilde\Sigma$. If $k=2$ and $a_1$ and $a_2$ are interior dummy edges, then the lemma is immediately true by Lemma \ref{lem:surface_face_link}. So assume instead that the edges $a_{i}$ and $a_{i+1}$ are real edges. By \ref{lem:surface_face_link}, the vertices $w_{i}$ and $w_{i+1}$ are connected by a path $P_i$ in $\lk_{A_i}{v}$. Moreover, each pair of paths $P_i$ and $P_j$ are disjoint for $1\leq i< j\leq k$, except possibly at their endpoints. The path $P_i$ shares one endpoint each with $P_{i-1}$ and $P_{i+1}$, so the concatenation of the paths forms a simple cycle. This simple cycle is exactly $\lk_{\Sigma}{v}$, so the lemma is true.
    \par
    Now assume that $\lk_{\Sigma}{v}$ is a simple cycle. By Proposition \ref{prop:inner_vertex}, the edges entering $v$ in $\Sigma$ form a cyclic sequence of successors. The annotated cell complex $\Tilde{\Sigma}$ is obtained from $\Sigma$ by removing edges $e\notin K[X_t]$. We will show that each of these moves preserves the property of the lemma. Let $(a_1\ldots a_k)$ be the cyclic sequence of successors entering $v$. Removing an edge not incident to $v$ does not change $(\overbar{a_1\ldots a_k})$. If we do remove an edge $a_i$, by Lemma \ref{lem:successors_after_removal} the edges entering $v$ form a cyclic sequence of successors $(\overbar{a_1\ldots a_{i-1}a_{i+1}\ldots a_k})$. If there is a single edge $a$ entering $v$, then forgetting $a$ results in the creation of an interior dummy edge $d$. As discussed above, an interior dummy edge forms a cyclic sequence of successors $(\overbar{d})$.
    \par
    The proof is nearly identical if $v\in B$, except we use Lemma \ref{lem:surface_face_link} to show that $P_1$ and $P_k$ share a single endpoint with other paths and the other endpoints of these paths are $v$'s neighbors in $\beta.$ Likewise, the edges entering $v$ form a sequence of successors $(a_1\ldots a_k)$ in $\Sigma$, and forgetting an edge $a_i$ maintains this property. Forgetting the edges $a_1$ or $a_k$ creates a dummy edge $d$ entering $v$ that had the same successors as $a_1$ or $a_k$.
    \par 
    Finally, these conditions can be checked by building the sequence of successors of all those edges that enter $v$. We can define a graph where the vertices are the edges in $\Tilde\Sigma$ that enter $v$ and the edges of the graph connect two edges in $\Tilde\Sigma$ that are successors. For case 1i) of the lemma, we can check if the graph is empty. For case 1ii), we can check that the graph is a cycle. For case 2), we can check if the graph is a path. The graph can be built by iterating once over the $\OO(k)$ edges in $\Tilde\Sigma$, and the conditions can be checked in $\poly(k)$ time.
\end{proof}
\begin{proof}[Proof of Lemma \ref{lem:vertex_admissible_link}]
    The proof of this lemma is almost identical to the proof of Lemma \ref{lem:vertex_complete_link}.
\end{proof}

In the case that $v$ is about to be forgotten (i.e. $\Sigma$ is a candidate solution at $t'$, the child of the node $t$ that forget $v$), then we can say something stronger about the edges entering $v$. In particular, any real edge incident to $v$ must have already been forgotten, so there will only be dummy edges in $\Tilde\Sigma$ entering $v$. 

\vertexlinkcompleteforget*

\begin{proof}
    We make two observations. First, as $\Sigma$ is a candidate solution at $t$, then we know that $\lk_{\Sigma}{v}$ is complete. Lemma \ref{lem:vertex_complete_link} gives a necessary and sufficient condition for how the edges in $\Tilde\Sigma$ that enter $v$ must behave, depending on whether or not $v\in B$. Second, there are no edges $e$ incident to $v$ in $X_{t'}$; otherwise, $X_t$ would not be a simplicial complex, as it would contain $e$ but not contain the face $v$ of $e$. This contradicts the assumption that the tree decomposition is closed. Therefore, the only edges incident to $v$ in $\Tilde\Sigma$ will be dummy edges.
    \par 
    The vertex $v$ can either be incident to no edges, two interior dummy edges, or two boundary dummy edges; this is because interior dummy edges form a cyclic sequence of successors, so $v$ cannot be incident to any other edges if it is incident to interior dummy edges.
\end{proof}

\end{document}